\documentclass[11pt]{article}

\usepackage[utf8]{inputenc}
\usepackage[letterpaper, margin=1in]{geometry}
\usepackage{amsmath,amssymb,amsthm,amsfonts}
\usepackage{mathtools}
\usepackage{graphicx} 
\usepackage{float}
\usepackage{subcaption} 
\usepackage{arydshln}
\usepackage{multirow}
\usepackage{bm}

\usepackage{hyperref}
\hypersetup{colorlinks,linkcolor=blue,urlcolor=blue,citecolor=teal}
\usepackage{url}
\usepackage{xcolor}
\usepackage[title]{appendix}
\usepackage{cite}
\usepackage{tikz}  
\usepackage{tikz-cd}  
\usetikzlibrary{positioning}
\usepackage{tikz-qtree}
\usepackage{mathdots}
\usepackage{authblk}
\usepackage{dsfont}
\usepackage[version=4]{mhchem}

\usepackage{qcircuit}
\newcommand{\ket}[1]{| #1 \rangle}
\newcommand{\bra}[1]{\langle #1|}
\newcommand{\braket}[1]{\langle #1 \rangle}
\newcommand{\bracket}[3]{\langle #1|#2|#3 \rangle}

\DeclareMathOperator{\polylog}{polylog}
\DeclareMathOperator{\tr}{Tr}

\DeclareMathOperator{\ad}{ad}
\renewcommand{\Re}{\operatorname{Re}}

\def \eps {\varepsilon}

\renewcommand{\d}{\mathrm{d}}
\renewcommand{\i}{{\mathrm i}}

\allowdisplaybreaks

\newtheorem{thm}{Theorem}[section]

\newtheorem{prop}[thm]{Proposition}
\newtheorem{cor}[thm]{Corollary}
\newtheorem{lem}[thm]{Lemma}

\newtheorem{prob}[thm]{Problem}

\theoremstyle{definition}

\newtheorem{defn}[thm]{Definition}

\numberwithin{equation}{section}

\newtheorem*{thm*}{Theorem}
\newtheorem*{lem*}{Lemma}
\newtheorem*{prop*}{Proposition}
\newtheorem*{defn*}{Definition}
\newtheorem*{prob*}{Problem}
\newtheorem*{ques*}{Question}

\newcommand{\mb}{\mathbb}

\newcommand{\mc}{\mathcal}

\def\Tr{\mathrm{Tr}}
\def\E{\mathbb{E}}
\def\b{{\bf b}}

\usepackage{algorithm}
\usepackage{algorithmic}

\makeatletter
\newenvironment{breakablealgorithm}
  {
   \begin{center}
     \refstepcounter{algorithm}
     \hrule height1pt depth0pt \kern3pt
     \renewcommand{\caption}[2][\relax]{
       {\raggedright\textbf{\ALG@name~\thealgorithm} ##2\par}%
       \ifx\relax##1\relax 
         \addcontentsline{loa}{algorithm}{\protect\numberline{\thealgorithm}##2}%
       \else 
         \addcontentsline{loa}{algorithm}{\protect\numberline{\thealgorithm}##1}%
       \fi
       \kern3pt\hrule\kern3pt
     }
  }{
     \kern3pt\hrule\relax%
   \end{center}
  }
\makeatother

\newcommand{\be}{\begin{equation}}
\newcommand{\ee}{\end{equation}}
\newcommand{\bea}{\begin{eqnarray}}
\newcommand{\eea}{\end{eqnarray}}
\newcommand{\bes}{\begin{equation*}}
\newcommand{\ees}{\end{equation*}}
\newcommand{\beas}{\begin{eqnarray*}}
\newcommand{\eeas}{\end{eqnarray*}}

\title{Randomized Quantum Singular Value Transformation}

\author[1,2]{Xinzhao Wang\thanks{wangxz@stu.pku.edu.cn}}
\author[3]{Yuxin Zhang\thanks{zhangyuxin@amss.ac.cn}\thanks{The first two authors contributed equally.}}
\author[4]{Soumyabrata Hazra\thanks{soumyabrata.hazra@research.iiit.ac.in}}
\author[1,2]{Tongyang Li\thanks{tongyangli@pku.edu.cn}}
\author[3]{Changpeng Shao\thanks{changpeng.shao@amss.ac.cn}}
\author[4]{Shantanav Chakraborty\thanks{shchakra@iiit.ac.in}}

\affil[1]{\small{Center on Frontiers of Computing Studies, Peking University, Beijing,  China}}
\affil[2]{\small{School of Computer Science, Peking University, Beijing, China}}
\affil[3]{\small{SKLMS, Academy of Mathematics and Systems Science, Chinese Academy of Sciences, Beijing, China}}
\affil[4]{\small{CQST and CSTAR, International Institute of Information Technology Hyderabad, Telangana, India}}

\begin{document}

\maketitle
\thispagestyle{empty}
\begin{abstract}
We introduce the first randomized algorithms for Quantum Singular Value Transformation (QSVT), a unifying framework for many quantum algorithms. Standard implementations of QSVT rely on block encodings of the Hamiltonian, which are costly to construct, requiring a logarithmic number of ancilla qubits, intricate multi-qubit control, and circuit depth scaling linearly with the number of Hamiltonian terms. In contrast, our algorithms use only a single ancilla qubit and entirely avoid block encodings. We develop two methods: (i) a direct randomization of QSVT, where block encodings are replaced by importance sampling, and (ii) an approach that integrates qDRIFT into the generalized quantum signal processing framework, with the dependence on precision exponentially improved through classical extrapolation. Both algorithms achieve gate complexity independent of the number of Hamiltonian terms, a hallmark of randomized methods, while incurring only quadratic dependence on the degree of the target polynomial. We identify natural parameter regimes where our methods outperform even standard QSVT, making them promising for early fault-tolerant quantum devices. We also establish a fundamental lower bound showing that the quadratic dependence on the polynomial degree is optimal within this framework. We apply our framework to two fundamental tasks: solving quantum linear systems and estimating ground-state properties of Hamiltonians, obtaining polynomial advantages over prior randomized algorithms. Finally, we benchmark our ground-state property estimation algorithm on electronic structure Hamiltonians and the transverse-field Ising model with long-range interactions. In both cases, our approach outperforms prior work by several orders of magnitude in circuit depth, establishing randomized QSVT as a practical and resource-efficient alternative for early fault-tolerant quantum devices.
\end{abstract}

\newpage
\pagenumbering{roman}

\tableofcontents

\newpage
\pagenumbering{arabic}
\setcounter{page}{1}

\section{Introduction}

Quantum singular value transformation (QSVT) \cite{gilyen2019quantum} has emerged as a central paradigm in quantum algorithms, unifying and extending many of the most powerful techniques. 
At its core, QSVT enables polynomial transformations of the singular values of an operator, provided the operator is embedded as the top-left block of a unitary, known as a block encoding \cite{low2019hamiltonian, chakraborty_et_al:LIPIcs.ICALP.2019.33}. 
This simple but versatile abstraction subsumes a broad class of prior methods and provides a systematic framework for designing new algorithms. 
Consequently, QSVT underlies state-of-the-art approaches to widely studied problems of practical interest, including Hamiltonian simulation \cite{low2017qsp, low2019hamiltonian}, solving quantum linear systems \cite{gilyen2019quantum,lin2020optimal}, linear regression and quantum machine learning \cite{chakraborty_et_al:LIPIcs.ICALP.2019.33, chakraborty2023quantum}, quantum walks \cite{gilyen2019quantum, apers2021unified}, and optimization \cite{brandao2017SDP, vanapeldoorn2019improvedsdp, lin2020nearoptimalground}. 
It further serves as the key subroutine in many recent advances in quantum algorithms \cite{martyn2021grand}.

Despite its power, QSVT suffers from two major bottlenecks that limit its near-term implementability, even on early fault-tolerant quantum devices. 
The first is its reliance on block encodings, which can be costly to construct. 
For instance, consider an $n$-qubit Hamiltonian of the form
\begin{equation}
    H=\sum_{k=1}^{L}\lambda_k H_k,
\end{equation}
where each $H_k$ is unitary (or Hermitian) with $\|H_k\|=1$, and the total weight of the coefficients $\lambda=\sum_{k}|\lambda_k|$. 
Such Hamiltonians are ubiquitous in condensed matter \cite{sachdev2011quantum}, high-energy physics \cite{sachdev1993gapless, maldacena2016remarks, babbush2019quantum}, and quantum chemistry \cite{cao2019quantum, campbell2019random, mcclean2020openfermion, su2021fault, babbush2018low}. 
A block encoding of any such $H$ is typically obtained via the Linear Combination of Unitaries (LCU) method, using $O(\log L)$ ancilla qubits and a series of complicated controlled operations. 
Moreover, it was recently shown that this logarithmic ancilla overhead is unavoidable in general \cite{chakraborty2025quantum}. Consequently, implementing a degree-$d$ polynomial transformation via QSVT results in circuit depth scaling as $\widetilde{O}(L \lambda d)$, while also incurring an $O(\log L)$ ancilla overhead. 

The second bottleneck is the explicit linear dependence on $L$, which can be prohibitively large for many Hamiltonians of practical interest, particularly those arising in quantum chemistry \cite{cao2019quantum, campbell2019random, mcclean2020openfermion, su2021fault, babbush2018low} and high-energy physics \cite{sachdev1993gapless, maldacena2016remarks}. This challenge has motivated growing interest in randomized quantum algorithms, which can eliminate the dependence on $L$. 
The key idea is to replace deterministic constructions, such as Trotter decompositions \cite{lloyd1996universal, childs2021theory} or LCU \cite{childs2012hamiltonian}, with randomized procedures that reproduce the desired evolution in expectation.
Notable examples include qDRIFT \cite{campbell2019random, watson2024randomly} and qSWIFT \cite{nakaji2024qswift} for Hamiltonian simulation, randomized LCU methods \cite{chakraborty2024implementing, wang2024qubit} for more general transformations, and randomized implementations of linear combination of Hamiltonian simulation \cite{yang2025circuit}.

There has been growing interest in designing algorithms tailored to early fault-tolerant quantum devices, where the number of logical qubits is limited, ancilla qubits are scarce, and multi-qubit controlled gates remain expensive or unavailable \cite{dong2022ground, zhang2022computingground, lin2022heisenberg, Wang2023quantumalgorithm, katabarwa2024early}. 
In this regime, resource-efficient methods for broad algorithmic paradigms are of critical importance. Recent progress has been achieved for the LCU framework, yielding algorithms with very low hardware overhead and applications to problems of practical interest \cite{chakraborty2024implementing, wang2024qubit}. However, these approaches achieve only sub-optimal total complexity. 
Since QSVT is a more general paradigm, encompassing LCU and many other techniques, developing early fault-tolerant algorithms for QSVT is even more pressing. A recent advance in this direction is the work of \cite{chakraborty2025quantum}, which introduced a QSVT algorithm based on higher-order Trotterization \cite{childs2021theory} combined with classical extrapolation. Their method achieves circuit depth $\widetilde{O}(L\lambda d^{1+o(1)})$, which is nearly optimal, while avoiding block encodings and using only a single ancilla qubit. This resolves the first bottleneck (block-encoding and ancilla overhead), but the second (linear dependence on $L$) remains. 
Moreover, the use of higher-order Trotter–Suzuki decompositions introduces an additional challenge: the prefactor in the gate complexity grows exponentially with the order $2k$. In practice, only second-order $(k=1)$ and fourth-order $(k=2)$ formulas are typically feasible \cite{campbell2019random}, leading to complexities scaling as $\sim L d^{3/2}$ and $\sim L d^{5/4}$, respectively.

In this work, we develop the first randomized quantum algorithms for QSVT, simultaneously overcoming both bottlenecks described above. We introduce two distinct approaches. 
The first is a direct randomization of the standard QSVT framework. For Hamiltonians with an LCU decomposition, we incorporate stochastic sampling into the construction, thereby preserving the alternating-phase structure of QSVT while replacing costly block encodings with randomly sampled unitaries. Conceptually, this establishes the feasibility of randomized block encoding. However, it also inherits certain inefficiencies of prior randomized approaches \cite{zhang2022computingground, chakraborty2025quantum, wang2024qubit}, most notably the need for a suboptimal number of classical repetitions in regimes where block-encoding-based QSVT benefits from amplitude amplification.

Our second algorithm overcomes these limitations by embedding qDRIFT-based Hamiltonian simulation \cite{campbell2019random} into the framework of generalized quantum signal processing (GQSP) \cite{wang2023quantum, motlagh2024generalized}. This formulation applies to any Hamiltonian expressible as a linear combination of Hermitian operators (LCH).\footnote{Since Pauli operators are both unitary and Hermitian, for any Hamiltonian written in terms of its Pauli decomposition, i.e.\ $H=\sum_{k=1}^{L}\lambda_k P_k$, where $P_k\in \{I,X,Y,Z\}$, the notions of LCU and LCH coincide.} 
Building on the interleaved-sequence circuit framework of \cite{chakraborty2025quantum}, we consider general quantum circuits that alternate between arbitrary unitaries and Hamiltonian evolution operators, with GQSP (and consequently, QSVT) arising as a special case. 
Our second algorithm estimates the expectation value of any observable with respect to the output of such circuits, substituting qDRIFT in place of exact time-evolutions.
A naive substitution, however, would inherit the intrinsic $1/\eps$ precision scaling of qDRIFT \cite{campbell2019random}, resulting in prohibitively large circuit depths. To overcome this, we incorporate Richardson extrapolation \cite{low2019well} in a novel way, suppressing the error and yielding an end-to-end procedure with exponentially improved precision scaling.

Importantly, this result does not follow directly from prior work on Hamiltonian simulation with classical extrapolation \cite{watson2024exponentially, watkins2024time-dependent}. 
Our main technical contribution is to establish an upper bound on the diamond-norm distance between two states: one obtained from any interleaved sequence of qDRIFT-based Hamiltonian simulation and arbitrary unitaries, and the other from the corresponding ideal interleaved circuit with exact time evolutions. 
We show that this distance can be expressed explicitly in terms of the parameters of the qDRIFT channel, which in turn enables us to bound the time-dependent error in expectation values of arbitrary observables. By judiciously tuning these parameters through Richardson extrapolation, we reduce the precision dependence from $1/\eps$ to polylog$(1/\eps)$, thereby achieving substantially shorter circuit depths.

As a result, our second randomized algorithm is both more powerful and more versatile in practice, capable of implementing polynomial transformations with asymptotically better performance than prior randomized approaches. Crucially, both of our algorithms preserve the same core advantages: they require only a single ancilla qubit, achieve circuit depth $\widetilde{O}(\lambda^2 d^2)$ independent of the number of Hamiltonian terms $L$, and completely eliminate the need for block encodings.

We also establish a matching lower bound, proving that a quadratic dependence on the polynomial degree is unavoidable for any randomized procedure that samples each $H_k$ with probability $|\lambda_k|/\lambda$. This lower bound is fundamental: it applies not only to algorithms for implementing generic polynomial transformations within this access model, but also extends naturally to other prominent randomized approaches, including Hamiltonian simulation via qDRIFT \cite{campbell2019random, chen2021concentration} and qSWIFT \cite{nakaji2024qswift}, as well as randomized LCU approaches \cite{chakraborty2024implementing, wang2024qubit}.

The advantages of our second algorithm are demonstrated through two broad applications: quantum linear systems and ground-state property estimation. In both cases, it delivers significant polynomial improvements over randomized LCU-based methods \cite{chakraborty2024implementing, wang2024qubit}: the linear systems algorithm yields a fourth-power improvement in the condition number of $H$, while the ground-state property estimation algorithm achieves a quadratic improvement in the dependence on the overlap between the initial ``guess” state and the true ground state. 
More broadly, both of our algorithms are resource-efficient, and there exist natural parameter regimes where they even outperform standard, block encoding-based QSVT. 

To complement our theoretical results, we benchmark randomized QSVT for ground-state property estimation on two classes of Hamiltonians: electronic structure problems \cite{mcardle2020quantum} and spin models with long-range interactions \cite{defenu2021longrange}. In both settings, our methods exhibit substantial practical advantages as compared to block encoding-based standard QSVT \cite{lin2020nearoptimalground}, recent early fault-tolerant quantum algorithms \cite{dong2022ground}, as well as other randomized approaches \cite{chakraborty2024implementing, wang2024qubit}. For molecular Hamiltonians such as propane, carbon dioxide, and ethane, randomized QSVT achieves orders-of-magnitude reductions in circuit depth compared to prior work. Similarly, for one-dimensional transverse-field Ising models with long-range or hybrid interactions, we observe improvements in scaling with the system size, resulting in concrete reductions in gate complexity as observed in numerical simulations. These results underscore the practical relevance of randomized QSVT for early fault-tolerant quantum devices, where reducing circuit depth is paramount. Taken together, our results establish a resource-efficient framework for randomized quantum algorithms that is optimal within this access model and broadly applicable to problems of practical interest.

The remainder of this article is organized as follows. In Sec.~\ref{sec:prelim}, we introduce the notation used throughout the paper and review preliminary concepts that form the basis of our algorithms. Sec.~\ref{sec:randomized qsvt} presents our randomized quantum algorithms for QSVT, and Sec.~\ref{sec:lower bound} establishes matching lower bounds. Applications to quantum linear systems and ground-state property estimation are discussed in Sec.~\ref{sec:applications}, and we provide numerical benchmarks of our algorithm for the latter task on Hamiltonians from quantum chemistry and condensed matter physics in Sec.~\ref{sec:numerics}. Finally, Sec.~\ref{sec:conclusion} summarizes our contributions and outlines promising directions for future research.

\section{Preliminaries}
\label{sec:prelim}

In this section, we introduce the notation used throughout this work and the preliminary techniques and algorithms that underlie our results.

\subsection{Notations}
\label{sec:notations}

We use $\mb N=\mb Z_{>0}$ and $\mathbb{T}= \{ z\in \mathbb{C} : |z|=1 \}$. 
We use $\widetilde{O}(\cdot)$ to hide polylogarithmic factors, i.e., $\widetilde{O}(f(n))=O(f(n)\polylog(f(n)))$. 
Bold lowercase letters denote vectors, for example, $\bm{a}$ denotes $(a_1, \ldots, a_n)$. 
$\|\cdot\|_1$ denotes the vector $\ell_1$-norm, i.e., $\|\bm{a}\|_1 = \sum_{i=1}^n |a_i|$.
Unless otherwise specified, $\|A\|$ denotes the spectral norm of an operator $A$. 
For operators $A,B$, we write $A\approx_\varepsilon B$ if they are $\eps$-close in spectral norm, i.e.,\ $\|A-B\|\leq \varepsilon$. 
$M_n(\mathbb{C})$ denotes the space of $n \times n$ complex matrices. 
For $A,B \in M_n(\mathbb{C})$, we define $\ad_B(A)=[B,A]=BA-AB$, so that $e^{\ad_B}(A) = e^B A e^{-B}$. 
The trace of an operator $A$ is denoted by $\Tr[A]$, and $\|A\|_1:=\Tr[\sqrt{A^\dagger A}]$ denotes the Schatten $1$-norm, also called the trace norm.
For a quantum channel $\mc E\colon M_n(\mathbb{C}) \rightarrow M_n(\mathbb{C})$, its diamond norm is defined as
\[
\|\mc E\|_{\diamond}:=  \max_{\rho: \|\rho\|_1 \leq 1} \left\|\left(\mathcal{E} \otimes I_{n}\right)(\rho)\right\|_1,
\]
where $\rho \in M_{n^2}(\mathbb{C})$.
For $f(x):[a,b] \rightarrow  \mathbb{C}$, we denote $\|f\|_{[a,b]} = \max_{x\in[a,b]} |f(x)|.$  

The expectation value of a random variable $V$ is denoted by $\mathbb{E}[V]$. 
The probability of an event $E$ is denoted by $\Pr[E]$.
The Pauli matrices are
\[
I = \begin{bmatrix}
    1 & 0 \\ 0 & 1
\end{bmatrix}, \quad
X = \begin{bmatrix}
     0 & 1 \\ 1 & 0 
\end{bmatrix}, \quad
Y = \begin{bmatrix}
     0 & -\i \\ \i & 0 
\end{bmatrix}, \quad
Z = \begin{bmatrix}
    1 & 0 \\ 0 & -1
\end{bmatrix}.
\]
Any matrix $A \in M_{2^n}(\mathbb{C})$ can be uniquely expressed as a linear combination of tensor products of Pauli matrices:
\[
A = \sum_{P\in\{I,X,Y,Z\}^{\otimes n}}\lambda_P P,
\qquad 
\lambda_P = \frac{1}{2^n} \Tr[A^\dag P].
\] 
If $A$ is Hermitian, then $\lambda_P\in\mathbb{R}$.
For any unitary matrix $U$, there exists a unique skew-Hermitian matrix $V$ with eigenvalues in $ (-\i\pi, \i\pi]$ such that $e^{V}=U$, and we denote $V$ by $\log(U)$.

Throughout this article, the \emph{time complexity} of a quantum algorithm refers to the number of elementary gates used to implement the algorithm.

\subsection{Brief overview of QSVT}

We begin by recalling the standard framework of quantum singular value transformation (QSVT). At its core, QSVT provides a systematic way to implement polynomial transformations of the singular values of a matrix, provided the matrix is embedded as the top-left block of a larger unitary, known as a block encoding. 
This idea generalizes the earlier framework of quantum signal processing (QSP) \cite{low2017qsp}, which achieves polynomial transformations of single-qubit rotations. Formally, QSP is summarized by the following lemma:

\begin{lem}[QSP, Corollary 8 of \cite{gilyen2019quantum}]
\label{lem_complex quantum signal processing}
    Let $P(x) \in \mathbb{C}[x]$ be a degree-$d$ polynomial of parity-$(d \bmod 2)$, satisfying $\|P(x)\|_{[-1,1]} \leq 1$ and $|P(x)| > 1$ for all $x \notin [-1,1]$.
    Then there exists  $\Phi=(\phi_1,\ldots,\phi_d) \in \mathbb{R}^d$ such that  
    \[
    \prod_{j=1}^d\left(R(x) e^{\i  \phi_j Z}\right)=\left[\begin{array}{cc}
    P(x) & * \\
    * & *
    \end{array}\right].
    \]
    where 
    \be \label{single-qubit reflection}
    R(x)=\left[\begin{array}{cc}
    x & \sqrt{1-x^2} \\
    \sqrt{1-x^2} & -x
    \end{array}\right]
    \ee
    is the single-qubit reflection operator defined for $x\in [-1,1]$.
    Moreover, if $P(x)$ is real, then there exists a complex polynomial $\tilde{P}(x)$ with $\Re(\tilde{P}(x))=P(x)$, so that the interleaved gate product can also implement real polynomials.
\end{lem}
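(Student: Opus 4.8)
The plan is to follow the classical two-stage proof of quantum signal processing: first construct a \emph{complementary polynomial} $Q$ so that $P$ and $Q$ satisfy the unitarity identity $P P^* + (1-x^2)\,Q Q^* = 1$, and then reconstruct the phase sequence by stripping off one factor $R(x)e^{\i\phi_j Z}$ at a time in an induction on $d$. Throughout, $P^*(x):=\overline{P(\bar x)}$ has the same degree and parity as $P$, agrees with $\overline P$ on $\mathbb R$, and satisfies $PP^*\ge0$ there; $p_d\ne0$ denotes the degree-$d$ coefficient of $P$.

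\emph{Step 1 (completion).} The hypotheses force $|P(\pm1)|=1$ by continuity (since $|P|\le1$ on $[-1,1]$ but $|P|>1$ just outside), so $1-PP^*$ vanishes at $\pm1$ and factors as $1-PP^*=(1-x^2)B$. The same hypotheses make $B$ nonnegative on all of $\mathbb R$ (the factors $1-PP^*$ and $1-x^2$ have the same sign everywhere), and $\deg B=2d-2$ because $\deg(PP^*)=2d$ with positive leading coefficient $|p_d|^2$. I would then apply the Fej\'er--Riesz factorization of a nonnegative real polynomial---pair each complex root with its conjugate, take square roots of the real double roots---to obtain $Q$ with $QQ^*=B$, $\deg Q=d-1$, and (using the evenness of $B$ together with the freedom to rescale $Q$ by a unit complex number) the correct parity $(d-1)\bmod2$ and a suitably normalized leading coefficient. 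Since $PP^*+(1-x^2)QQ^*-1$ vanishes on $[-1,1]$, it vanishes identically.

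\emph{Step 2 (peeling).} Complete the first row $\bigl(P,\ \sqrt{1-x^2}\,Q\bigr)$ to a $2\times2$ unitary $M(x)$ with $\det M=(-1)^d$; the identity from Step 1 is exactly what makes this possible. I would prove by induction on $d$ that any such $M$ equals $\prod_{j=1}^d\bigl(R(x)e^{\i\phi_j Z}\bigr)$ for a suitable $\Phi$. Right-multiplying $M$ by $e^{-\i\phi_d Z}R(x)$ and using $R(x)^2=I$ produces a matrix $M'=M\,e^{-\i\phi_d Z}R(x)$ whose first row is $\bigl(e^{-\i\phi_d}xP+e^{\i\phi_d}(1-x^2)Q,\ \sqrt{1-x^2}(e^{-\i\phi_d}P-e^{\i\phi_d}xQ)\bigr)$. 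Comparing the degree-$2d$ coefficients in $PP^*+(1-x^2)QQ^*=1$ gives $|q_{d-1}|=|p_d|$, so one may pick $\phi_d$ with $e^{-2\i\phi_d}=q_{d-1}/p_d$, which cancels the leading coefficient of each entry of that row, while the parities of $P,Q$ cancel the next one. Hence $M'$ has first row $\bigl(P',\ \sqrt{1-x^2}\,Q'\bigr)$ with $\deg P'\le d-1$, $\deg Q'\le d-2$, the expected parities, $\det M'=(-1)^{d-1}$, and---being unitary---its first row yields a pair satisfying the analogue $P'(P')^*+(1-x^2)Q'(Q')^*=1$ (together with the normalization of Step 1, which one checks is preserved). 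The induction hypothesis then supplies $\phi_1,\dots,\phi_{d-1}$ for $M'$, and since $M=M'R(x)e^{\i\phi_d Z}$ we obtain $M=\prod_{j=1}^d\bigl(R(x)e^{\i\phi_j Z}\bigr)$; the base case $d=1$ is realized directly by a single factor. Reading off the $(1,1)$ entry of $M$ proves the first claim.

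\emph{Step 3 (real $P$), and the main obstacle.} Given real $P$ of the stated degree and parity with $\|P\|_{[-1,1]}\le1$, I would invoke the classical representation of a polynomial nonnegative on $[-1,1]$ to write $1-P(x)^2=A(x)^2+(1-x^2)C(x)^2$ with real $A,C$ of degrees $\le d$ and $\le d-1$ and matching parities, and set $\widetilde P:=P+\i A$: then $\Re\widetilde P=P$, $\deg\widetilde P=d$, $\widetilde P$ has parity $d\bmod2$, and $|\widetilde P|^2=1-(1-x^2)C^2$ satisfies the hypotheses of the first part, so the product realizing $\widetilde P$---followed by a real-part extraction such as the combination $\tfrac12(U+XUX)$ or a one-ancilla test---implements $P$. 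The step I expect to be hardest is Step 1: producing $Q$ with exactly the right degree, parity, and leading-coefficient phase is where the precise hypotheses on $P$ must be used, and in the even-degree regime one additionally needs a positivity condition on $P$ (equivalently, for the decomposition in Step 3 to exist). Keeping the degree and normalization bookkeeping in the peeling consistent---so the degrees strictly drop and the induction reaches its base case---is the other point that requires care.
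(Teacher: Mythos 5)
First, note that the paper itself gives no proof of this lemma: it is imported verbatim as Corollary 8 of Gily\'en--Su--Low--Wiebe, so the only meaningful comparison is with that reference. Your two-stage plan (construct a complementary $Q$ with $PP^*+(1-x^2)QQ^*=1$, then peel off factors by induction) is exactly the argument used there, and your Step 2 is sound as written: the choice $e^{-2\i\phi_d}=q_{d-1}/p_d$ is justified by $|q_{d-1}|=|p_d|$, the parity bookkeeping makes the degrees drop by one at each peel, and the induction closes.

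The genuine gap is in Step 1, and it is sharper than your hedge suggests. When $d$ is even, $Q$ must be odd, so $Q(0)=0$ forces $B(0)=1-|P(0)|^2=0$, i.e. $|P(0)|=1$; more generally the parity-constrained factorization $QQ^*=B$ exists precisely when $P(\i x)P^*(\i x)\ge 1$ for all real $x$ (one needs $-B(\i x)/x^{2}\ge 0$ to factor $B(x)=x^{2}\,G(x^{2})G^{*}(x^{2})$). That is the fourth hypothesis of the cited Corollary 8, which the paper's transcription drops, and without it your ``evenness of $B$ plus a unit rescaling'' cannot deliver an odd $Q$. Indeed the lemma as literally stated fails for even $d$: $P(x)=x^{2}$ meets every stated hypothesis, yet any length-$2$ product $\prod_j R(x)e^{\i\phi_j Z}$ evaluated at $x=0$ is a product of anti-diagonal unitaries and hence has a $(1,1)$ entry of modulus $1$, while $P(0)=0$. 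So to complete Step 1 you must either add that imaginary-axis condition (as in the original corollary) or restrict to odd $d$, where $P$ odd gives $B(\i x)=(1+|P(\i x)|^{2})/(1+x^{2})>0$ and the even-$Q$ factorization always exists. Separately, Step 3 invokes without proof the representation $1-P^{2}=A^{2}+(1-x^{2})C^{2}$ with matched degrees and parities (a Luk\'acs/Fej\'er--Riesz-type fact); it is true but needs a citation or argument, and for even $d$ one must also verify the auxiliary condition for $\tilde P=P+\i A$ before feeding it back into the first part.
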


Block encoding \cite{chakraborty_et_al:LIPIcs.ICALP.2019.33, low2019hamiltonian, gilyen2019quantum} plays a central role in extending this single-qubit procedure to arbitrary operators:

\begin{defn}[Block encoding]
\label{def:block_encoding}
Let $H$ be an operator acting on $s$ qubits, $\alpha,\varepsilon>0$ and $a\in \mathbb{N}$, then we say that the $(s + a)$-qubit unitary $U$ is an $(\alpha, a, \varepsilon)$-block-encoding of $H$, if
\[
\left\|H - \alpha(\bra{0^a}\otimes I) U (\ket{0^a}\otimes I)\right\| \leq \varepsilon.
\]
That is, 
$$
U \approx_{\eps} \begin{bmatrix}
    H/\alpha & * \\ * & *
\end{bmatrix}.
$$ 
\end{defn}

Given access to such a block encoding, QSVT provides a general procedure to implement polynomial transformations of the singular values of $H$. This is formalized via the notion of singular value transformation:

\begin{defn}[Singular value transformation]
\label{def_svt of matrix}
Let $P\colon \mb{R} \rightarrow \mathbb{C}$ be an even or odd polynomial. Let $H \in \mathbb{C}^{m \times n}$ be a matrix with singular value decomposition $H=\sum_{i=1}^{n} \sigma_i\ket{u_i}\left\langle v_i\right|$ with $\sigma_i:= 0$ when $i> \min(m,n)$. We define the singular value transformation corresponding to $P$ as
\[
P(H):= \begin{cases}  \vspace{.2cm}
\sum_{i=1}^{n}  P\left(\sigma_i\right) \ket{u_i} \bra{v_i} & \text { if } P \text{ is odd, }\\ 
\sum_{i=1}^{n}  P\left(\sigma_i\right) \ket{v_i} \bra{v_i} & \text { if } P \text{ is even. }
\end{cases}
\]
\end{defn}

To achieve such transformations, QSVT employs alternating phase sequences $U_\Phi$, which interleave the block encoding $U$ with controlled single-qubit rotations:

\begin{defn}[Definition 8 of \cite{gilyen2019quantum}] 
\label{def_alternating phase sequence}
    Let $\mathcal{H}_U$ be a finite dimensional Hilbert space and $U, \Pi, \widetilde{\Pi} \in \operatorname{End}\left(\mathcal{H}_U\right)$ be linear operators such that $U$ is unitary and $\Pi, \widetilde{\Pi}$ are orthogonal projectors. Let $\Phi=(\phi_1,\ldots,\phi_d)  \in \mathbb{R}^d$, we define the alternating phase modulation sequences $U_\Phi$ as follows
    \begin{equation}\label{eq_alternating phase circuit}
    U_{\Phi}:= \begin{cases}  \vspace{.2cm}
    e^{\i  \phi_1(2 \widetilde{\Pi}-I)} U 
    \prod_{j=1}^{(d-1) / 2} \left(e^{\i  \phi_{2 j}(2 \Pi-I)} U^{\dagger} e^{\i  \phi_{2 j+1}(2 \widetilde{\Pi}-I)} U\right), & \text { if } d \text { is odd, } \\
    \prod_{j=1}^{d / 2}\left(e^{\i  \phi_{2 j-1}(2 \Pi-I)} U^{\dagger} e^{\i  \phi_{2 j}(2 \widetilde{\Pi}-I)} U\right), 
    & \text { if } d \text { is even.}
    \end{cases}
    \end{equation}
\end{defn}

The key theorem of QSVT shows that such sequences can realize polynomial transformations of the singular values of the block-encoded operator: 

\begin{lem}[QSVT, Theorem 10 of \cite{gilyen2019quantum}]
\label{thm for standard QSVT}
Assume that $P(x)\in \mathbb{C}[x]$ satisfies the conditions in Lemma \ref{lem_complex quantum signal processing}, then there exists $\Phi\in \mathbb{R}^{d}$ such that
\[
P(\widetilde{\Pi} U \Pi) = 
\begin{cases}
\widetilde{\Pi} U_\Phi \Pi & \text{if } d \text{ is odd}, \\
\Pi U_\Phi \Pi & \text{if } d \text{ is even}.
\end{cases}
\]
\end{lem}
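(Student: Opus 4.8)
The plan is to reduce the statement to the single‑qubit QSP lemma (Lemma~\ref{lem_complex quantum signal processing}) by block‑diagonalizing the entire alternating‑phase construction with respect to a family of $U$‑invariant subspaces of dimension at most two. This decomposition is exactly Jordan's lemma (equivalently, the cosine–sine decomposition) applied to the pair of orthogonal projectors $\Pi$ and $\widetilde\Pi$; once it is in place, the claim follows block‑by‑block from QSP.

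First I would fix the singular value decomposition $\widetilde\Pi U \Pi = \sum_i \sigma_i \ket{u_i}\bra{v_i}$, where $\{\ket{v_i}\}$ is an orthonormal basis of $\mathrm{range}(\Pi)$ and $\{\ket{u_i}\}$ an orthonormal basis of $\mathrm{range}(\widetilde\Pi)$. For each index $i$ with $0<\sigma_i<1$ I would introduce $\ket{w_i}:=(U\ket{v_i}-\sigma_i\ket{u_i})/\sqrt{1-\sigma_i^2}$ and $\ket{t_i}:=(U^{\dagger}\ket{u_i}-\sigma_i\ket{v_i})/\sqrt{1-\sigma_i^2}$, and check (using $\langle u_i|U|v_i\rangle=\sigma_i$, which follows since $\langle u_i|\widetilde\Pi=\langle u_i|$ and $\Pi\ket{v_i}=\ket{v_i}$) that $\ket{w_i}$ is a unit vector in $\ker(\widetilde\Pi)$ orthogonal to $\ket{u_i}$, and symmetrically $\ket{t_i}$ is a unit vector in $\ker(\Pi)$ orthogonal to $\ket{v_i}$. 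The key computation is then that $U$ maps $\mathrm{span}\{\ket{v_i},\ket{t_i}\}$ onto $\mathrm{span}\{\ket{u_i},\ket{w_i}\}$ and, in these ordered bases, acts precisely as the matrix $R(\sigma_i)$ of Eq.~\eqref{single-qubit reflection} — the first column is the definition of $\ket{w_i}$, and unitarity of $U$ forces the second column, giving $U\ket{t_i}=\sqrt{1-\sigma_i^2}\ket{u_i}-\sigma_i\ket{w_i}$; since $R(\sigma_i)$ is self‑adjoint, $U^{\dagger}$ maps $\mathrm{span}\{\ket{u_i},\ket{w_i}\}$ back as the same $R(\sigma_i)$. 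Meanwhile $2\Pi-I$ restricts to $\mathrm{diag}(1,-1)=Z$ on $\mathrm{span}\{\ket{v_i},\ket{t_i}\}$ and $2\widetilde\Pi-I$ restricts to $Z$ on $\mathrm{span}\{\ket{u_i},\ket{w_i}\}$, so the reflection exponentials $e^{\i\phi(2\Pi-I)}$ and $e^{\i\phi(2\widetilde\Pi-I)}$ restrict to $e^{\i\phi Z}$.

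With these restrictions established, on each two‑dimensional block the sequence $U_\Phi$ of Definition~\ref{def_alternating phase sequence} collapses (after a standard relabeling of phases and conjugation conventions) to a QSP product of the form in Lemma~\ref{lem_complex quantum signal processing}, whose top‑left entry in the basis $\{\ket{v_i},\ket{t_i}\}$ (odd $d$) or $\{\ket{u_i},\ket{w_i}\}$ (even $d$) equals $P(\sigma_i)$; for real $P$ one first passes to the complex $\tilde P$ supplied by Lemma~\ref{lem_complex quantum signal processing} and takes the real part, exactly as there. Crucially the phase vector $\Phi$ produced by the QSP lemma depends only on $P$, not on the evaluation point $x=\sigma_i$, so the \emph{same} $\Phi$ works in every block. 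Reading off the $(1,1)$ entry gives $\bra{u_i}U_\Phi\ket{v_i}=P(\sigma_i)$ for odd $d$ and $\bra{v_i}U_\Phi\ket{v_i}=P(\sigma_i)$ for even $d$, and block‑invariance kills all cross terms; compressing by $\widetilde\Pi$ (resp.\ $\Pi$) on the left and $\Pi$ on the right and summing over $i$ reproduces $\sum_i P(\sigma_i)\ket{u_i}\bra{v_i}$ (resp.\ $\sum_i P(\sigma_i)\ket{v_i}\bra{v_i}$), which is $P(\widetilde\Pi U\Pi)$ in the sense of Definition~\ref{def_svt of matrix}.

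The remaining work — and what I expect to be the main obstacle — is the careful bookkeeping of the degenerate pieces of the Hilbert space, rather than any single hard estimate. One must treat the one‑dimensional blocks with $\sigma_i\in\{0,1\}$ (where $U\ket{v_i}$ is parallel to $\ket{u_i}$ or lies entirely in $\ker\widetilde\Pi$, and the fixed parity of $P$ is needed — in particular $P(0)=0$ when $d$ is odd — to match Definition~\ref{def_svt of matrix}), the portions of $\mathrm{range}(\widetilde\Pi)$ not reached from $\mathrm{range}(\Pi)$ and vice versa when $\Pi$ and $\widetilde\Pi$ have unequal rank, and the orthogonal complement $(\bigoplus_i \mathrm{span}\{\ket{v_i},\ket{t_i},\ket{u_i},\ket{w_i}\})^{\perp}$, on which $U_\Phi$ acts but the compressions $\widetilde\Pi U_\Phi\Pi$ and $\Pi U_\Phi\Pi$ vanish. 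The delicate part is to verify that all these subspaces are mutually orthogonal, jointly span the whole space, and are each invariant under every operator appearing in $U_\Phi$; granting this, the theorem follows by assembling the per‑block identities above.
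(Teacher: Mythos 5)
This lemma is quoted from Theorem~10 of \cite{gilyen2019quantum} and the paper supplies no proof of its own; your argument is the standard one from that source — Jordan's lemma (cosine–sine decomposition) for the projector pair $(\Pi,\widetilde{\Pi})$, splitting the space into $U$-invariant blocks of dimension at most two on which $U$ restricts to $R(\sigma_i)$ and the exponentials $e^{\i\phi(2\Pi-I)}$, $e^{\i\phi(2\widetilde{\Pi}-I)}$ restrict to $e^{\i\phi Z}$, followed by block-wise application of Lemma~\ref{lem_complex quantum signal processing} with a single $\Phi$ and separate bookkeeping for the degenerate $\sigma_i\in\{0,1\}$ and complementary subspaces. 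This is also essentially the same decomposition the paper itself uses to prove Theorem~\ref{prop_non-unitary SVT circuit of A}, so your approach is correct and matches the intended proof.
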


Intuitively, if $\Pi=\widetilde{\Pi}=\begin{bmatrix}
    I & 0 \\ 0 & 0
\end{bmatrix},$ 
then Lemma \ref{thm for standard QSVT} states that 
$$
U_\Phi = \begin{bmatrix}
   P(H/\alpha)  & * \\ * & *
\end{bmatrix}, \quad
\text{if } U=\begin{bmatrix}
   H/\alpha  & * \\ * & *
\end{bmatrix}.
$$
Despite its generality, QSVT crucially relies on the efficient construction of block encodings, which is often costly. For Hamiltonians of the form
$$
H=\sum_{k=1}^{L} \lambda_k H_k,
$$
with $\|H_k\|=1$, a block encoding of $H/\lambda$ can be built using the LCU technique \cite{berry2015simulating}, typically requiring $O(\log L)$ ancilla qubits. Recent results show that this ancilla requirement is unavoidable \cite{chakraborty2025quantum}. 
Moreover, the LCU procedure entails a circuit depth of $O(L)$, yielding an overall cost of $\widetilde{O}(L\lambda d)$. These overheads (large ancilla requirements and linear dependence on $L$) constitute the main bottlenecks of standard QSVT and severely limit its practicality on near-term and early fault-tolerant devices. Our first algorithm is a direct randomized variant of QSVT, which replaces block encodings with randomly sampled unitaries.

\subsection{Generalized quantum signal processing}
\label{subsec:prelim-gqsp}

In standard QSVT, the implementable polynomials are restricted to being either complex-even or complex-odd. While real polynomials can also be realized, this typically requires additional ancilla qubits. 
The recently introduced framework of quantum phase processing \cite{wang2023quantum}, or equivalently generalized quantum signal processing (GQSP) \cite{motlagh2024generalized}, significantly extends this setting: by allowing access to controlled Hamiltonian evolution, it enables the implementation of arbitrary polynomials bounded within the complex unit circle, including Laurent polynomials (polynomials with both negative and positive powers) of the time evolution operator, without requiring additional ancilla qubits. Next, we outline key results used in this work and refer the reader to \cite{motlagh2024generalized} for further details.

Formally, suppose we can query the controlled Hamiltonian simulation unitaries
\begin{equation} \label{control U:eq}
c_0\text{-}U =
\begin{bmatrix}
U & 0 \\
0 & I
\end{bmatrix},
\quad
c_1\text{-}U^\dag =
\begin{bmatrix}
I & 0 \\
0 & U^\dagger
\end{bmatrix},
\end{equation}
where $U = e^{\i H}$ and $U^\dagger = e^{-\i H}$. Let
\begin{align*}
    R(\theta, \phi, \lambda)=\left[\begin{array}{cc}
        e^{\i (\lambda+\phi)} \cos (\theta) & e^{\i  \phi} \sin (\theta) \\
        e^{\i  \lambda} \sin (\theta) & -\cos (\theta)
        \end{array}\right] \otimes I, 
\end{align*}
be arbitrary $U(2)$ rotation of the single ancilla qubit. 
Motlagh and Wiebe \cite{motlagh2024generalized} showed that for any degree-$d$ Laurent polynomial $P$ satisfying $|P(x)|\le 1$ on $\mathbb{T}:= \{ x\in \mathbb{C} : |x|=1 \}$, there exists an interleaved sequence of $R(\theta_j, \phi_j, 0)$ and $c_0\text{-}U$, $c_1\text{-}U^\dag$, of length $2d+1$, that implements a block encoding of $P(U)$.  
We restate the result below:

\begin{thm}[Combining Corollary 5 and Theorem 6 of \cite{motlagh2024generalized}]
    \label{thm:generalized-QSP}
    For any Laurent polynomial $P(z) = \sum_{j=-d}^{d} a_{j} z^j$ such that $|P(z)| \le 1$ for all $z\in \mathbb{T}$, there exist $\Theta=(\theta_j)_j, \Phi=(\phi_j)_j \in \mathbb{R}^{2d+1}, \lambda \in \mathbb{R}$ such that 
    \begin{equation}
    \label{eq:gqsp}
        \begin{bmatrix}
            P(U) & * \\
            * & * 
        \end{bmatrix} = \Big(\prod_{j=1}^d R(\theta_{d+j}, \phi_{d+j}, 0) c_1\text{-}U^\dag \Big)\Big(\prod_{j=1}^{d} R(\theta_j, \phi_j, 0) c_0\text{-}U \Big) R(\theta_0, \phi_0, \lambda).
    \end{equation}
\end{thm}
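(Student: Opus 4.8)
The plan is to follow the strategy of \cite{motlagh2024generalized}, in three moves: diagonalize $U$ to reduce the operator identity \eqref{eq:gqsp} to a single-variable statement about $2\times 2$ matrices, construct a complementary Laurent polynomial by spectral factorization, and then recover the $2d+1$ rotation angles by a layer-stripping recursion. For the first move, write $U=e^{\i H}=\sum_k z_k\ket{w_k}\bra{w_k}$ with $z_k=e^{\i \mu_k}\in\mathbb{T}$, which is possible since $H$ is Hermitian. On the two-dimensional subspace spanned by $\ket{0}\otimes\ket{w_k}$ and $\ket{1}\otimes\ket{w_k}$, the controlled evolutions act as the diagonal ``signal'' gates $c_0\text{-}U=\diag(z_k,1)$ and $c_1\text{-}U^\dag=\diag(1,z_k^{-1})$, while every $R(\theta_j,\phi_j,\lambda_j)$ is a fixed element of $U(2)$. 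Hence the right-hand side of \eqref{eq:gqsp} restricts, on each such subspace, to a product of $U(2)$ matrices alternating with diagonal signal gates, and it suffices to find angles for which this product equals $\left[\begin{smallmatrix} P(z) & * \\ * & *\end{smallmatrix}\right]$ for all $z\in\mathbb{T}$; the same angles then reassemble \eqref{eq:gqsp} on all of $\mathcal{H}$.

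An induction on $d$ shows that any such interleaved product of length $2d+1$ has the block form $\left[\begin{smallmatrix} P(z) & Q(z) \\ \widetilde Q(z) & \widetilde P(z)\end{smallmatrix}\right]$, where $P,Q$ are Laurent polynomials with powers ranging over $-d,\dots,d$ (the $d$ copies of $c_0\text{-}U$ contribute the nonnegative powers and the $d$ copies of $c_1\text{-}U^\dag$ the negative ones), subject to the unitarity constraint $|P(z)|^2+|Q(z)|^2=1$ for $z\in\mathbb{T}$; the base case $d=0$ is an arbitrary $U(2)$ element. The theorem asserts the converse: given $P$ with $\|P\|_{\mathbb{T}}\le 1$, produce matching angles. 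The key preparatory step is to manufacture the complement $Q$. Since $z\mapsto 1-|P(z)|^2$ is a nonnegative Laurent (trigonometric) polynomial of degree at most $d$ on $\mathbb{T}$, the Fej\'er--Riesz theorem yields a Laurent polynomial $Q$ of degree at most $d$ with $|Q(z)|^2=1-|P(z)|^2$; this is Corollary 5 of \cite{motlagh2024generalized}.

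With the pair $(P,Q)$ fixed, I run the layer-stripping recursion of Theorem 6 of \cite{motlagh2024generalized}. Comparing the extremal coefficients --- those of $z^{d}$ and of $z^{-d}$ --- in $P$ and $Q$ determines the outermost rotation parameters $(\theta_{2d},\phi_{2d})$ and $(\theta_{d},\phi_{d})$; multiplying the target $2\times 2$ matrix on the appropriate sides by the inverse rotations and the inverse signal gates yields a new pair of Laurent polynomials of degree $d-1$ that still satisfies the unitarity constraint. Iterating $d$ times reduces to degree $0$, i.e.\ to a single $U(2)$ matrix, which fixes the innermost triple $(\theta_0,\phi_0,\lambda)$. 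Undoing the eigenspace reduction then gives the operator identity \eqref{eq:gqsp}.

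I expect the main obstacle to be the two nondegeneracy facts that make the last two steps go through: first, that the Fej\'er--Riesz completion can be chosen so that the combined degrees are exactly matched and the sequence length is precisely $2d+1$ (a $P$ whose top coefficient vanishes is handled by producing a shorter effective sequence and then padding it with trivial layers); and second, that at every stage of the recursion the extremal coefficients are arranged so that the angle equations are solvable and the degree strictly decreases. Both points are established in \cite{motlagh2024generalized}, so the present argument amounts to invoking their Corollary 5 and Theorem 6 once the eigenspace reduction above has been carried out.
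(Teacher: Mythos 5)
Your proposal is sound, and it is worth noting that the paper itself offers no proof of this statement: Theorem~\ref{thm:generalized-QSP} is imported verbatim as a preliminary, with the burden carried entirely by Corollary~5 and Theorem~6 of \cite{motlagh2024generalized}, exactly the two results you end up invoking. Your eigenspace reduction is the right (and implicit) first step: on each subspace $\mathrm{span}\{\ket{0}\otimes\ket{w_k},\ket{1}\otimes\ket{w_k}\}$ the controlled gates become $\diag(z_k,1)$ and $\diag(1,z_k^{-1})$ and the rotations are fixed $U(2)$ elements, so the operator identity reduces to a single-variable $2\times 2$ statement, and Fej\'er--Riesz complementation of $1-|P|^2$ supplies the partner polynomial $Q$. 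The one place where your reconstruction diverges mildly from the original argument is the treatment of negative powers: rather than running a layer-stripping recursion directly on the Laurent pair $(P,Q)$ by matching both extremal coefficients $z^{\pm d}$ at each stage, Motlagh and Wiebe observe that on each eigenspace $c_1\text{-}U^\dag = z^{-1}\cdot\diag(z,1)$, so the $d$ inverse-controlled gates contribute an overall factor $z^{-d}$; this reduces the Laurent case to the ordinary GQSP statement for the degree-$2d$ polynomial $z^d P(z)$, for which the coefficient-matching recursion is the standard one. Your direct Laurent recursion can be made to work, but the nondegeneracy bookkeeping you flag (vanishing top coefficients, strict degree decrease at both ends) is precisely what the $z^{-d}$ reduction sidesteps, so if you want a self-contained writeup the latter route is cleaner; as a justification of the theorem as cited, your argument is complete.
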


Recently, it was shown that many commonly used functions can be approximated by Laurent polynomials in $e^{\i kx}$, bounded on $\mathbb{T}$, for some constant $k$ \cite{chakraborty2025quantum}. 
Standard QSVT implements functions expressed as polynomials in $x$ that are bounded on $[-1,1]$. Ref.~\cite{chakraborty2025quantum} further establishes the precise conditions under which a polynomial \( Q(x) \) can be approximated by a Laurent polynomial \( P(e^{\i k x}) \), bounded on $\mathbb{T}$, with $\deg(P)$ scaling linearly with $\deg(Q)$. Moreover, they prove that near-optimal complexity can be achieved by implementing the controlled Hamiltonian evolution operators in GQSP using higher-order Trotterization, combined with classical interpolation to exponentially reduce the quantum circuit depth.

Our second method extends this line of ideas by developing a fully randomized implementation of QSVT within the GQSP framework. Specifically, we use qDRIFT (see Sec.~\ref{subsec:prelim-qdrift})\cite{campbell2019random} for implementing Hamiltonian evolution, combined with a novel use of classical extrapolation. 
The resulting end-to-end randomized QSVT algorithm (i) avoids block encodings entirely, (ii) requires only a single ancilla qubit, and (iii) achieves circuit depth independent of the number of Hamiltonian terms, thereby improving on the complexity achieved in \cite{chakraborty2025quantum}.

\subsection{Time-ordered evolution in the interaction picture}
\label{subsec:prelim-interaction-picture}

Let $\mc H(\tau) = \mc A(\tau) + \mc B(\tau)$ be a continuous operator-valued function. The time-evolution of $\mc H(\tau)$ in the interaction picture of quantum mechanics, can be formally expressed via the following lemma.

\begin{lem}[\kern-1.5mm{\cite[Page 21]{dollard1984product}}]
\label{lem:interaction-pic}
    Let $\mc H(\tau) = \mc A(\tau) + \mc B(\tau)$ be an operator-valued function defined for $\tau \in \mathbb{R}$ with continuous summands $\mc A(\tau)$ and $\mc B(\tau)$. Then
    {\small\begin{align*}
    \exp_{\mc T}\left(\int_{0}^{t}  \mc H(\tau)\, \d \tau\right) = & \exp_{\mc T}\left(\int_{0}^{t}  \mc A(\tau)\, \d \tau\right) \\
    & \cdot \exp_{\mc T}\left(\int_{0}^{t}\exp_{\mc T}^{-1}\left(\int_{0}^{\tau_1}  \mc A(\tau_2)\, \d \tau_2\right)\mc  B(\tau_1) \exp_{\mc T}\left(\int_{0}^{\tau_1}\mc  A(\tau_2)\, \d \tau_2\right)\, \d \tau_1\right).
    \end{align*}}
\end{lem}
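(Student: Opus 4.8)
The plan is to prove the identity by recognizing both sides as solutions of the same linear initial-value problem and then invoking uniqueness. Throughout I adopt the convention that $\exp_{\mathcal{T}}\!\left(\int_0^t X(\tau)\,\mathrm{d}\tau\right)$ denotes the unique operator-valued function $V(t)$ with $V(0)=I$ and $V'(t)=X(t)V(t)$; existence and uniqueness of such a $V$ follow from the standard Picard iteration for linear ODEs with a continuous (hence locally bounded) generator, which applies here since $\mathcal{A}$, $\mathcal{B}$, and therefore $\mathcal{H}$, are continuous. Set $W_A(t):=\exp_{\mathcal{T}}\!\left(\int_0^t \mathcal{A}(\tau)\,\mathrm{d}\tau\right)$ and $U(t):=\exp_{\mathcal{T}}\!\left(\int_0^t \mathcal{H}(\tau)\,\mathrm{d}\tau\right)$, so that $W_A'=\mathcal{A}W_A$, $U'=\mathcal{H}U$, and $W_A(0)=U(0)=I$.

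First I would record two preliminary facts. (i) $W_A(t)$ is invertible for every $t$, with $W_A(t)^{-1}$ continuous in $t$ and $\tfrac{\mathrm{d}}{\mathrm{d}t}W_A(t)^{-1}=-W_A(t)^{-1}\mathcal{A}(t)$. This is seen by solving the adjoint IVP $Y'=-Y\mathcal{A}$, $Y(0)=I$, and checking $\tfrac{\mathrm{d}}{\mathrm{d}t}\bigl(Y(t)W_A(t)\bigr)=0$, so $Y W_A\equiv I$, and symmetrically $W_A Y\equiv I$; thus $Y=W_A^{-1}$. (ii) Consequently $\widetilde{\mathcal{B}}(\tau):=W_A(\tau)^{-1}\mathcal{B}(\tau)W_A(\tau)$ — precisely the integrand of the second time-ordered exponential on the right-hand side — is continuous, so $W_{\widetilde{B}}(t):=\exp_{\mathcal{T}}\!\left(\int_0^t \widetilde{\mathcal{B}}(\tau)\,\mathrm{d}\tau\right)$ is well defined and satisfies $W_{\widetilde{B}}'=\widetilde{\mathcal{B}}\,W_{\widetilde{B}}$, $W_{\widetilde{B}}(0)=I$. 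The right-hand side of the claimed identity is then exactly $W_A(t)\,W_{\widetilde{B}}(t)$.

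The core computation is a one-line product rule: differentiating $t\mapsto W_A(t)W_{\widetilde{B}}(t)$ gives $\mathcal{A}(t)W_A(t)W_{\widetilde{B}}(t)+W_A(t)\widetilde{\mathcal{B}}(t)W_{\widetilde{B}}(t)$, and since $W_A(t)\widetilde{\mathcal{B}}(t)=\mathcal{B}(t)W_A(t)$ by the definition of $\widetilde{\mathcal{B}}$, this equals $\bigl(\mathcal{A}(t)+\mathcal{B}(t)\bigr)W_A(t)W_{\widetilde{B}}(t)=\mathcal{H}(t)\,W_A(t)W_{\widetilde{B}}(t)$. Hence $W_A W_{\widetilde{B}}$ solves $Z'=\mathcal{H}Z$, $Z(0)=I$ — the same IVP as $U$ — and uniqueness forces $U(t)=W_A(t)W_{\widetilde{B}}(t)$, which is the assertion.

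I expect the only genuine obstacle to be the functional-analytic bookkeeping rather than the algebra: pinning down in what topology the time-ordered exponentials exist and are differentiable (the norm topology on a Banach algebra of bounded operators suffices, given continuity of the summands on the compact interval $[0,t]$), and carefully justifying the invertibility and inverse-differentiation claims of step (i). An alternative route is to expand both sides as Dyson series and match the nested time-ordered integrals term by term; this trades the ODE uniqueness theorem for a somewhat fiddly combinatorial reindexing of simplices, so I would present the ODE/uniqueness argument as the main proof and relegate the series expansion to a remark at most.
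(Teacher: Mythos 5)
The paper does not prove this lemma at all: it is imported verbatim from Dollard--Friedman \cite[p.~21]{dollard1984product}, so there is no in-paper argument to compare against. Your proof is correct and is the standard interaction-picture/Duhamel derivation (and essentially the argument behind the cited product-integration identity): with the convention $V'(t)=X(t)V(t)$, $V(0)=I$, the right-hand side is $W_A(t)W_{\widetilde B}(t)$ with $\widetilde{\mathcal B}=W_A^{-1}\mathcal B W_A$, the product rule gives $(W_AW_{\widetilde B})'=\mathcal H\,W_AW_{\widetilde B}$, and uniqueness of the linear IVP forces equality with the left-hand side. The only spot where you are a bit glib is ``and symmetrically $W_AY\equiv I$'': the naive product-rule computation does not vanish there, since $\frac{\mathrm{d}}{\mathrm{d}t}(W_AY)=\mathcal A\,W_AY-W_AY\,\mathcal A$. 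It does follow in one extra line, because $Z:=W_AY$ solves $Z'=\mathcal A Z-Z\mathcal A$, $Z(0)=I$, and the constant $I$ is also a solution, so uniqueness gives $Z\equiv I$ (alternatively, in the paper's application the operators are finite-dimensional superoperators, where a left inverse is automatically two-sided). With that line added, the argument is complete; your closing remark that one could instead match Dyson series term by term is accurate but, as you say, unnecessary.
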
 

The lemma, while stated for continuous functions, also holds for piecewise continuous functions, such as the ones we use in the derivations of the correctness of our second randomized quantum algorithm for QSVT (see Lemma \ref{lem_QSP+qDRIFT}).

\subsection{Richardson extrapolation}
\label{subsec:prelim-richardson}

Richardson extrapolation is a classical technique for estimating limits of the form $f(0)=\lim_{x\to 0} f(x)$ by evaluating $f$ at a finite set of nonzero points $x\in\{s_1,s_2,\ldots,s_m\}$ and combining these values with carefully chosen coefficients.
The key idea is to systematically cancel the leading error terms in the expansion of $f(x)$, thereby achieving higher-order accuracy without a corresponding increase in computational effort.
Originally developed for numerical analysis, Richardson extrapolation has recently found applications in quantum computing, in particular for suppressing discretization errors in Hamiltonian simulation via Trotter methods and qDRIFT \cite{watson2024randomly,watson2024exponentially}.

\begin{lem}[Richardson extrapolation, see \cite{watson2024randomly,low2019well}]
    Suppose that function $f\colon \mb R \to \mathbb{R}$ has a series expansion
    \begin{align*}
        f(x) = f(0) + \sum_{i=1}^{\infty} c_i x^{i}.
    \end{align*}
    Choose an initial point $s\in (0,1)$ and set points $s_i=s r_m/r_i$ with positive integers $r_i$ for $i\in [m]$. Then, using these $m$ sample points $\{s_i: i\in[m]\}$, we can derive an $m$-term Richardson extrapolation
    \be
        F^{(m)}(s)=\sum_{i=1}^m b_i f(s_i)
    \ee
    by choosing appropriate coefficients $b_i$ to cancel the first $m-1$ terms in the series expansion,
    such that
    \bes
        |F^{(m)}(s)-f(0)|\le \|\b\|_1 \left|R_{m}(s)\right|=O(s^{m}).
    \ees
    where $\b=(b_1,\ldots,b_m)$, and $R_{m}(x)$ is a function that only has terms of order $O(x^{m})$ and above.
    In particular, let
    \be 
        r_{i}=\left\lceil\frac{\sqrt{8} m}{\pi \sin (\pi(2 i-1) / 8 m)}\right\rceil^2,
    \qquad 
        b_i=\prod_{\ell \neq i} \frac{1}{1-r_\ell/r_i},
    \ee 
    then $\|\b\|_1=O(\log m), \max_i\{r_i\}=O(m^4)$, and $\max_i \{r_i/r_m\}=O(m^2)$.
\end{lem}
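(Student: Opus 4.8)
The plan is to reduce the construction of the extrapolation weights $b_i$ to a small linear‑algebra problem and then obtain the error estimate by a clean cancellation argument. Writing $f(x)=f(0)+\sum_{k\ge 1}c_k x^k$, demanding that $F^{(m)}(s)=\sum_{i=1}^m b_i f(s_i)$ agree with $f(0)$ up to order $s^{m}$ forces $\sum_{i=1}^m b_i=1$ together with $\sum_{i=1}^m b_i s_i^{k}=0$ for $k=1,\dots,m-1$. Introducing the rescaled nodes $u_i:=s_i/s=r_m/r_i$, these conditions read $\sum_i b_i u_i^{k}=\delta_{k,0}$ for $0\le k\le m-1$, i.e.\ $\sum_i b_i\,q(u_i)=q(0)$ for every polynomial $q$ of degree less than $m$. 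Since the $r_i$ are distinct the $u_i$ are distinct, the associated Vandermonde matrix is invertible, and the unique solution is the Lagrange interpolation weight evaluated at the origin, $b_i=\prod_{\ell\ne i}\frac{-u_\ell}{u_i-u_\ell}=\prod_{\ell\ne i}\frac{u_\ell}{u_\ell-u_i}=\prod_{\ell\ne i}\frac{1}{1-r_\ell/r_i}$, using $u_i/u_\ell=r_\ell/r_i$; this is precisely the stated formula.

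For the error bound, substitute the series into $F^{(m)}(s)=\sum_i b_i\big(f(0)+\sum_{k\ge1}c_k s_i^{k}\big)=f(0)\sum_i b_i+\sum_{k\ge1}c_k\sum_i b_i s_i^{k}$. By construction the first $m-1$ inner sums vanish and $\sum_i b_i=1$, so $F^{(m)}(s)-f(0)=\sum_{k\ge m}c_k s^{k}\sum_i b_i(r_m/r_i)^{k}$. Since $r_m=\min_i r_i$ (verified below), we have $0<r_m/r_i\le 1$, hence $\big|\sum_i b_i(r_m/r_i)^{k}\big|\le\|\mathbf{b}\|_1$, and therefore $|F^{(m)}(s)-f(0)|\le\|\mathbf{b}\|_1\sum_{k\ge m}|c_k|s^{k}=\|\mathbf{b}\|_1\,R_m(s)=O(s^{m})$, where $R_m(s):=\sum_{k\ge m}|c_k|s^{k}$ is the tail of the series and contains only terms of order $s^m$ and above.

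It remains to verify the quantitative claims for the explicit nodes. On the range $i\in[m]$ the argument $\pi(2i-1)/8m$ lies in $(0,\pi/4)$, where $\sin$ is strictly increasing, so $r_i=\lceil \sqrt8\,m/(\pi\sin(\pi(2i-1)/8m))\rceil^2$ is strictly decreasing in $i$; thus $r_m=\min_i r_i$ and $r_1=\max_i r_i$. Using $\tfrac{2}{\pi}x\le\sin x\le x$ on $(0,\pi/2)$, one gets $r_1=\Theta\!\big((m/\sin(\pi/8m))^2\big)=O(m^{4})$ and $r_m=\Theta\!\big((m/\sin(\pi(2m-1)/8m))^2\big)=\Theta(m^{2})$, hence $\max_i r_i/r_m=r_1/r_m=O(m^{2})$. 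The bound $\|\mathbf{b}\|_1=O(\log m)$ is the one substantive estimate: the $|b_i|$ are the absolute Lagrange weights at the origin for the nodes $u_i=r_m/r_i$, whose (rounded) square roots $\sqrt{u_i}\propto\sin(\pi(2i-1)/8m)$ are Chebyshev‑type nodes, and the sum of absolute Lagrange weights evaluated at a point separated from the node cluster is a Lebesgue‑function‑type quantity known to grow only logarithmically in $m$ for such nodes; this is exactly the estimate established in \cite{low2019well} and reused in \cite{watson2024randomly}, which we invoke.

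\textbf{Main obstacle.} Everything except the $\|\mathbf{b}\|_1=O(\log m)$ bound is elementary bookkeeping — the Vandermonde/Lagrange identification and the monotonicity and size estimates for the $r_i$ are routine. The genuinely delicate point is the logarithmic control of the $\ell_1$‑norm of the Lagrange weights for the Chebyshev‑like nodes; here I would rely on the existing analysis of \cite{low2019well} rather than reproduce the Lebesgue‑constant argument from scratch.
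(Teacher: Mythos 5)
Your reconstruction is correct in all of its elementary parts, and it is worth noting that the paper gives no proof of this lemma at all: it is imported from \cite{watson2024randomly,low2019well}, and the paper's own detailed coefficient analysis appears only later, for the modified nodes of Lemma~\ref{lem_richardson extrapolation}. So your approach — prove the Vandermonde/Lagrange identification of the weights $b_i$, the cancellation bound $|F^{(m)}(s)-f(0)|\le\|\b\|_1\sum_{k\ge m}|c_k|s^k$ (valid under the tacit assumption that the tail converges at $s$), and the node-size estimates $\max_i r_i=O(m^4)$, $r_m=\Theta(m^2)$, while deferring $\|\b\|_1=O(\log m)$ to the cited works — is exactly in the spirit of how the paper treats this statement.

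Two caveats on the one step you defer. First, the bound of \cite[Theorem 1]{low2019well}, $|b_j|\le\frac1m\bigl|\cot\bigl(\pi(2j-1)/8m\bigr)\bigr|=O(1/j)$, is proved for the \emph{exact} Chebyshev-type nodes proportional to $1/\sin^2(\pi(2j-1)/8m)$; the $r_i$ in the statement carry a ceiling, so the nodes are perturbed, and strictly speaking concluding $\|\b\|_1=O(\log m)$ requires either the rounded-node analysis of \cite{watson2024randomly} or a stability argument propagating the relative perturbation of the ratios $r_q/r_j$ through $\partial b_j/\partial\gamma_q$ — precisely the computation the paper carries out in its proof of Lemma~\ref{lem_richardson extrapolation}, and which is the part you cannot simply read off the unrounded Lebesgue-constant bound. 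Second, your formula $b_i=\prod_{\ell\ne i}(1-r_\ell/r_i)^{-1}$ is well defined only if the rounded $r_i$ are pairwise distinct; this is true but deserves a one-line check (by the mean value theorem the consecutive unrounded values $\sqrt{8}m/(\pi\sin(\pi(2i-1)/8m))$ differ by more than $\cos x/(\sqrt{2}\sin^2 x)>1$ on $(0,\pi/4)$, so the ceilings are strictly decreasing), analogous to the distinctness argument the paper gives explicitly in Lemma~\ref{lem_richardson extrapolation}. With those two points either cited to \cite{watson2024randomly} or patched by the perturbation argument, your proof is complete.
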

Here, we modify $r_i$ by first squaring it and then applying the ceiling function, using a rescaling technique similar to that in \cite{mohammadipour2025reducing}. This reduces the rounding error and ensures $t/s_j\in \mb Z$ for any given $t>0$.

\begin{lem}
\label{lem_richardson extrapolation}
    Suppose that function $f\colon \mb R \to \mathbb{R}$ can be decomposed as 
    \begin{align*}
        f(x) = P_m(x)+R_m(x),
    \end{align*} 
    where $P_m(x)$ is a degree-$(m-1)$ polynomial and $R_m(x)$ is the remainder. Given any $t > 0$ and $s\in (0,1)$, we can choose 
    \begin{align*}
        K = \max\Big\{\frac{m}{\pi}, \frac{2t}{s}\Big\}, \quad r_j =\left\lceil \frac{K}{\sin^2(\pi(2j-1)/8m)}\right\rceil,
    \end{align*}
    and
    \begin{align*}
        s_j = \frac{t}{r_j}, \quad b_j=\prod_{\ell \neq j} \frac{1}{1-r_\ell/r_j}
    \end{align*}
    so that $\|\b\|_1 \le C\log m$ for some absolute constant $C$, $r_j=O(\max\{m^3, m^2t/s\}/j^2)$, and $\max_{j}\{ s_j \}\le s$. Then define the $m$-term Richardson extrapolation \begin{align}
        \label{eq_richardson extrapolation}
        F^{(m)}(s)=\sum_{j=1}^m b_j f(s_j),
    \end{align}
    which satisfies \begin{align}
    \label{eq:error-F}
        |F^{(m)}(s)-f(0)|\le \|\b\|_1 \max_j|R_{m}(s_j)|.
    \end{align}
\end{lem}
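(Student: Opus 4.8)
The plan is to reduce the entire statement to one algebraic fact — that $b_j$ is a Lagrange interpolation weight — together with three elementary estimates on the nodes $r_j$ and $s_j=t/r_j$. First I would record the cancellation identity underlying all Richardson schemes. Since $b_j=\prod_{\ell\neq j}\tfrac{r_j}{r_j-r_\ell}=\prod_{\ell\neq j}\tfrac{0-1/r_\ell}{1/r_j-1/r_\ell}$, the $b_j$ are precisely the values at $x=0$ of the Lagrange basis polynomials for the (distinct) nodes $\{1/r_j\}_{j=1}^{m}$, so $\sum_{j=1}^{m}b_j\,g(1/r_j)=g(0)$ for every polynomial $g$ with $\deg g\le m-1$. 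In particular $\sum_j b_j=1$, and applying this to $g(y)=P_m(ty)-P_m(0)$ (degree $\le m-1$, vanishing at $y=0$) together with $s_j=t/r_j$ gives $\sum_j b_j\big(P_m(s_j)-P_m(0)\big)=0$. Writing $f=P_m+R_m$ with the remainder normalized so that $R_m(0)=0$, hence $P_m(0)=f(0)$, these two identities yield
\[
F^{(m)}(s)-f(0)=\sum_{j=1}^{m}b_j\big(f(s_j)-f(0)\big)=\sum_{j=1}^{m}b_j\big(R_m(s_j)-R_m(0)\big)=\sum_{j=1}^{m}b_j R_m(s_j),
\]
and the triangle inequality gives \eqref{eq:error-F}.

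Next I would dispatch the size bounds. Because $\sin^2(\cdot)\le 1$ we have $r_j\ge\lceil K\rceil\ge K\ge 2t/s$, so $s_j=t/r_j\le s/2\le s$, which is the claim $\max_j s_j\le s$. For the upper bound on $r_j$, Jordan's inequality $\sin x\ge 2x/\pi$ on $(0,\pi/2]$ gives, for $1\le j\le m$ (so that $\pi(2j-1)/8m\in(0,\pi/4)$), $\sin^2\!\big(\pi(2j-1)/8m\big)\ge\big((2j-1)/4m\big)^2\ge j^2/(16m^2)$, whence $r_j\le 16Km^2/j^2+1=O(Km^2/j^2)$; substituting $K=\max\{m/\pi,2t/s\}$ yields $r_j=O(\max\{m^3,m^2t/s\}/j^2)$.

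The remaining claim, $\|\b\|_1\le C\log m$, is the one I expect to be genuinely delicate. The idea is that $b_j$ is invariant under a common rescaling of all the $r_j$, so $\|\b\|_1$ depends only on the geometry of the node set, and up to the ceilings this set is proportional to $\{\csc^2(\pi(2j-1)/8m)\}_j$ — exactly the well-conditioned configuration for which $\|\b\|_1=O(\log m)$ is established in \cite{low2019well} and restated in the preceding lemma. The work lies in showing the ceilings do not spoil this: since $K\ge m/\pi$, each ideal consecutive gap $K\big(\csc^2\theta_j-\csc^2\theta_{j+1}\big)$ exceeds $1$ (using $\big|\tfrac{\d}{\d\theta}\csc^2\theta\big|\ge 4$ on $(0,\pi/4]$ and spacing $\pi/4m$), so the integers $r_j=\lceil K\csc^2\theta_j\rceil$ stay distinct and all pairwise gaps $|r_j-r_\ell|$ remain within a constant factor of the ideal ones; propagating this comparison through the $m$-fold products defining $b_j$ — following the rescaling device of \cite{mohammadipour2025reducing} — changes the bound only by a constant factor. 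Carrying out this last step without losing the logarithmic bound is the main obstacle; the rest is routine bookkeeping.
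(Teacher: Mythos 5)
Your treatment of the first three claims is correct and in fact supplies details the paper leaves implicit: the Lagrange-weight identity $\sum_j b_j\,g(1/r_j)=g(0)$ for $\deg g\le m-1$ (with the natural normalization $R_m(0)=0$, which is also what the paper implicitly assumes) gives Eq.~\eqref{eq:error-F}, Jordan's inequality gives $r_j=O(\max\{m^3,m^2t/s\}/j^2)$, and $r_j\ge K\ge 2t/s$ gives $\max_j s_j\le s$. The genuine gap is in the one claim you yourself flag as delicate, $\|\b\|_1\le C\log m$, and the mechanism you propose for it would fail. Scale-invariance of $b_j$ and distinctness of the rounded nodes are fine (your bound $|\tfrac{\d}{\d\theta}\csc^2\theta|\ge 4$ on $(0,\pi/4]$ together with $K\ge m/\pi$ is exactly the paper's distinctness argument), but the step ``all pairwise gaps $|r_j-r_\ell|$ remain within a constant factor of the ideal ones, and propagating this through the $m$-fold products changes the bound only by a constant factor'' is not valid: since $|b_j|=\prod_{\ell\ne j} r_j/|r_j-r_\ell|$ has $m-1$ factors, a constant multiplicative distortion per factor compounds to $c^{m-1}$, which is exponential in $m$ and destroys the logarithmic bound rather than preserving it.

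What is actually needed, and what the paper does, is a first-order perturbation analysis in which the per-factor distortion is not merely bounded but small and summable. Writing $r_j'=\csc^2(\pi(2j-1)/8m)=\Theta(m^2/j^2)$, one has $r_j=Kr_j'+O(1)$, hence $r_q/r_j=(r_q'/r_j')\bigl(1+O((j^2+q^2)/m^3)\bigr)$ because $K\ge m/\pi$; propagating this through the product via $\partial b_j/\partial\gamma_q=b_j/(1-\gamma_q)$ and controlling the amplification $1/|1-r_q'/r_j'|=O\bigl(j^2/|q^2-j^2|\bigr)$, the paper bounds the relative error of each $b_j$ by $O\bigl(\sum_{q\ne j}\tfrac{j^2}{|q-j|}\tfrac{j+q}{m^3}\bigr)$, sums this against $|b_j'|=O(1/j)$ from Theorem~1 of \cite{low2019well} to get $\|\b-\b'\|_1=O(\log m)$, and concludes $\|\b\|_1\le\|\b-\b'\|_1+\|\b'\|_1=O(\log m)$. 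Your proposal stops short of, and as sketched would not recover, this summable relative-error estimate; until that step is carried out the central conditioning claim of the lemma remains unproven.
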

\begin{proof}
    The choice of $b_j$ ensures to cancel all terms in $P_m(x)$, which directly implies Eq.~\eqref{eq:error-F}.
    Also, we have \begin{align*}
        r_j =  O(m^2K/j^2) = O(\max\{m^3, m^2t/s\}/j^2),
    \end{align*}
    and \begin{align*}
        \max_j \{s_j\} = s_m \le t\left\lceil \frac{K}{\sin^2(\pi/4)}\right\rceil^{-1} \le \frac{t}{2K} \le s.
    \end{align*}
    
    Now, we upper bound the coefficients $b_i$. 
    Let \begin{align*}
        r_j' :=  \frac{1}{\sin^2(\pi(2j-1)/8m)}= \Theta(m^2/j^2),
    \end{align*} and then by \cite[Theorem 1]{low2019well}, the corresponding coefficients $b_j'$ satisfy \begin{align*}
        |b_j'| \le \bigg|\frac{(-1)^{j+1}}{m} \cot\Big(\frac{\pi(2 j-1)}{8 m}\Big)\bigg| = O(1/j).
    \end{align*} 
    Note that $r_j$ satisfies \begin{align*}
        r_j = {\lceil K r_j'\rceil}.
    \end{align*}
    We first prove that for any $j\neq j'$, $r_j\neq r_{j'}$. For any $j\neq j'$, we have \begin{align*}
        |Kr_{j}'-Kr_{j'}'|\ge K \min_{x\in[1,m]}\bigg|\frac{\pi}{4m}\frac{2\cos(\pi(2 x-1)/8m)}{\sin^3(\pi(2 x-1)/8m)}\bigg|>\frac{\pi}{m}K >1.
    \end{align*}
    Therefore, for any $j\neq j'$, $r_j\neq r_j'$, and \begin{align*}
        r_j = Kr_j'+O(1),
    \end{align*}
    for any $j\in [m]$. 
    The ratio $\gamma_q' := r_q'/r_j'$ satisfies \begin{align*}
        \frac{r_q}{r_j} = \frac{Kr_q'+O(1)}{Kr_j'+O(1)} = \frac{r_q'}{r_j'}(1+O(1/(mr_j')+1/(mr_q')) = \frac{r_q'}{r_j'}\Big(1+O\Big(\frac{j^2+q^2}{m^3}\Big)\Big).
    \end{align*}
    Then we propagate the error to $b_j$ using \begin{align*}
        \frac{\partial b_j'}{\partial \gamma_q'} = b_j'\frac{1}{1-\gamma_q'}.
    \end{align*}
    The relative error of $b_j$ satisfies \begin{align*}
        \frac{|b_j-b_j'|}{|b_j'|} &= O\Big(\sum_{q\neq j} \frac{1}{(1-r_q'/r_j')}\frac{r_q'(j^2+q^2)}{r_j'm^3}\Big)\\
        &=O\Big(\sum_{q\neq j} \frac{1/r_j'}{1/r_q'-1/r_j'}\frac{j^2+q^2}{m^3}\Big) \\
        &=O\Big(\sum_{q\neq j} \frac{j^2}{|q^2-j^2|}\frac{j^2+q^2}{m^3}\Big)\\
        &=O\Big(\sum_{q\neq j} \frac{j^2}{|q-j|}\frac{j+q}{m^3}\Big).
    \end{align*}
    Therefore, the additive error of $\|\b\|_1$ can be bounded as \begin{align*}
        \|\b-\b'\|_1  &= O\Big(\sum_{j=1}^m |b_j'|\Big(\sum_{q=1}^{j-1} \frac{j^2}{j-q}\frac{j}{m^3}+\sum_{q=j+1}^m \frac{j^2}{q-j}\frac{q}{m^3}\Big)\Big)\\
        &=O\Big(\sum_{j=1}^m \frac{1}{j}\Big(\sum_{q=1}^{j-1} \frac{j^3}{(j-q)m^3}+\sum_{q=j+1}^m \frac{j^2q}{(q-j)m^3}\Big)\Big)\\
        &=O\Big(\sum_{j=1}^m \frac{1}{j}\Big(\frac{j^3}{m^3}\log j+\sum_{\ell=1}^{m-j} \frac{j^2(j+\ell)}{\ell m^3}\Big)\Big)\\
        & = O\Big(\sum_{j=1}^m \frac{j^2}{m^3}\log j+ \sum_{j=1}^{m}\frac{j^2}{m^3}\log(m-j+1)+\sum_{j=1}^{m}\frac{j(m-j)}{m^3}\Big)\\
        & = O(\log m).
    \end{align*}
    In conclusion, we have \begin{align*}
        \|\b\|_1 \le \|\b-\b'\|_1+\|\b '\|_1 = O\Big(\log m+\sum_{j=1}^m \frac{1}{j}\Big) = O(\log m),
    \end{align*}
    and there is an absolute constant $C$ such that $\|\b \|_1\le \log m$.
\end{proof}

\subsection{Hamiltonian simulation by qDRIFT}
\label{subsec:prelim-qdrift}

qDRIFT is a randomized algorithm for Hamiltonian simulation that uses importance sampling to achieve a circuit depth independent of the number of Hamiltonian terms \cite{campbell2019random}. This is in sharp contrast to Trotter–Suzuki methods \cite{lloyd1996universal, childs2021theory}, whose circuit depth always scales linearly with the number of Hamiltonian terms, regardless of the approximation order. 
We provide a brief overview of qDRIFT below, and refer the reader to \cite{campbell2019random, chen2021concentration} for further details.

Suppose
$$
H = \sum_{k=1}^L \lambda_k H_k
$$ 
is a Hermitian matrix with $\|H_k\|=1$, for all $k\in[L]$. For $\lambda=\sum_{k=1}^L |\lambda_k|$ and the distribution
$$
\mathcal{D} = 
\left\{\left(|\lambda_k|/{\lambda}, H_k\right): k\in[L]\right\},
$$
the qDRIFT procedure (outlined in Algorithm \ref{alg: qDRIFT}) randomly and independently samples matrices $H_{j_1},\ldots,H_{j_\ell}$ from $\mathcal{D}$, and applies
$
\prod_{r=1}^\ell e^{\i \lambda H_{j_r} t },
$
to an arbitrary input density operator $\rho$. With a suitable choice of $t$, this randomized evolution approximates the target evolution $e^{-\i HT}\rho e^{\i HT}$.

\begin{breakablealgorithm}
\caption{qDRIFT protocol}
\begin{algorithmic}[1]
\REQUIRE A Hermitian operator $H=\sum_{k=1}^{L} \lambda_k H_{k}$ with $\|H_k\|=1$. 

\qquad Classical sample access to the distribution $\mathcal{D} = \{(|\lambda_k|/\lambda, H_k): k\in[L]\}$, where $\lambda=\sum_k |\lambda_k|$.

\qquad A time-step parameter $t<1/2\lambda$.

\qquad An initial density operator $\rho$.

\ENSURE $\mc E(\rho)$, where $\mathcal{E}:= \sum_k p_k e^{-\i t  \lambda \mathrm{ad}_{H_k}}$ with $p_k=|\lambda_k|/\lambda$.

\STATE Compute $N=\lceil T/t \rceil$.

\STATE For $r=1,\ldots,N$

\quad Sample $H_{j_r}$ from the distribution $\mc D$.

\quad Let $\rho \leftarrow e^{-\i t \lambda H_{j_r}} \rho e^{\i t \lambda H_{j_r}}$

\STATE Return $\rho$.

\end{algorithmic}
\label{alg: qDRIFT}
\end{breakablealgorithm}

Mathematically, qDRIFT defines the quantum channel
\[
\mathcal{E}(\rho)=\sum_{k=1}^L p_k e^{-\i \lambda H_k t} \rho e^{\i \lambda H_k t} 
= \sum_{k=1}^L p_k e^{-\i t \lambda \mathrm{ad}_{H_k}}(\rho)
\]
for a time-step $t$, where $p_k=\lambda_k/\lambda$. Campbell proved that \cite[(B12) in the appendix]{campbell2019random} 
\[
\left\|\mathcal{E} - e^{-\i  {\rm ad}_H t} \right\|_{\diamond}
= O(t^2).
\]
Moreover, if we take $t=T/N$ and apply the qDRIFT channel $N$ times, the total error is bounded by
\[
\left\|\mathcal{E}^N - e^{-\i  {\rm ad}_H T} \right\|_{\diamond} 
= O\left({(\lambda T)^2}/{N}\right).
\]
This leads to a circuit depth of $O(\lambda^2t^2/\eps)$, which is independent of $L$. This independence makes qDRIFT particularly advantageous (over Trotter methods) in settings where $\lambda \ll L$, as is often the case in quantum chemistry \cite{campbell2019random, mcclean2020openfermion, babbush2018low}. 

The recent work of \cite{watson2024randomly} shows that if the goal is to measure the expectation value of any observable with respect to the time-evolved state, this dependence on precision can be exponentially improved from $1/\eps$ to $\mathrm{polylog}(1/\eps)$ with the help of classical extrapolation techniques. More precisely, Ref.~\cite{watson2024randomly} considered the parametrized qDRIFT channels with free parameter $s$ and step size $sT$:\footnote{Here, we use channel $\mc E_s$ to approximate $e^{\i s\ad_H}$ instead of $e^{-\i s\ad_H}$ for our convenience.}
\be
\label{eq_para qDRIFT channel}
    \mathcal{E}_s(\rho):= \sum_k p_k e^{\i s T \lambda \mathrm{ad}_{H_k}}(\rho).
\ee
It was shown in \cite{watson2024randomly} that for $sT < 1/2\lambda$, the channel $\mc E_s$ has the following form.

\begin{lem}[Section 4.2 and Lemma 10 of \cite{watson2024randomly}]
\label{lem:para qdrift channel}
If $s T<1/2\lambda$, then the channel $\mc E_s$ can be written as
\begin{equation}\label{exp form}
        \mathcal{E}_s=e^{\i s T \mathcal{G}(s)}, \quad \text{ where } \; \mathcal{G}(s)=\operatorname{ad}_H+\sum_{j=1}^{\infty} (sT)^j \mc D_{j+1},
\end{equation}
where $\left\{\mc D_{j+1}\right\}_{j=1}^{\infty}$ are superoperators satisfying $\left\|\mc D_j\right\|_{\diamond} \leq(4 \lambda)^j.$
\end{lem}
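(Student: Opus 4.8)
The plan is to establish the exponential form $\mathcal{E}_s = e^{\i s T \mathcal{G}(s)}$ by taking the logarithm of the qDRIFT channel and controlling the resulting series. I would begin by writing $\mathcal{E}_s = \sum_k p_k e^{\i s T \lambda \ad_{H_k}}$ and observing that this is a convex combination of unitary (on the level of superoperators) channels, each close to the identity when $sT\lambda$ is small. Specifically, expanding each exponential as $e^{\i s T \lambda \ad_{H_k}} = I + \sum_{m\geq 1} \frac{(\i s T \lambda)^m}{m!}\ad_{H_k}^m$, and using $\|\ad_{H_k}\|_\diamond \leq 2\|H_k\| = 2$, the deviation $\|\mathcal{E}_s - I\|_\diamond$ is bounded by $\sum_{m\geq 1}\frac{(2 s T \lambda)^m}{m!} = e^{2sT\lambda}-1 < 1$ precisely in the regime $sT < 1/2\lambda$ (since then $2sT\lambda < 1$, and $e^{x}-1 < e-1$... one needs a slightly sharper bookkeeping, but $e^{2sT\lambda}-1 \leq e-1 < 2$ is not quite $<1$; I would instead note $\|\mathcal{E}_s-I\|_\diamond \leq e^{1}-1$ only gives $<2$, so the cleaner route is to bound directly and use $sT\lambda < 1/2 \Rightarrow \|\mathcal E_s - I\|_\diamond \le e^{2sT\lambda}-1$, then extract a clean constant; see below). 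The key point is that $\|\mathcal{E}_s - I\|_\diamond$ is strictly less than $1$, so the principal logarithm $\mathcal{G}(s) := \frac{1}{\i s T}\log(\mathcal{E}_s)$ is well defined via the absolutely convergent series $\log(I + \mathcal{X}) = \sum_{n\geq 1} \frac{(-1)^{n+1}}{n}\mathcal{X}^n$ with $\mathcal{X} = \mathcal{E}_s - I$.

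Next I would identify the leading term and the structure of the correction. Writing $\mathcal{E}_s = I + \i s T \lambda \sum_k p_k \ad_{H_k} + O((sT\lambda)^2) = I + \i s T \ad_H + O((sT)^2)$ — using $\sum_k p_k \lambda \ad_{H_k} = \ad_{\sum_k \lambda_k H_k} = \ad_H$ — and substituting into the logarithm series, the $\i s T$ coefficient of $\mathcal{G}(s)$ is exactly $\ad_H$, and all higher-order contributions collect into $\sum_{j\geq 1} (sT)^j \mathcal{D}_{j+1}$ for suitable superoperators $\mathcal{D}_{j+1}$ (the index shift reflects that $\mathcal{D}_{j+1}$ multiplies $(sT)^{j+1}$ inside $sT\mathcal{G}(s)$). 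The superoperators $\mathcal{D}_{j+1}$ are universal polynomials in the $\ad_{H_k}$'s weighted by the $p_k$'s and $\lambda$.

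The main work — and the main obstacle — is the diamond-norm bound $\|\mathcal{D}_j\|_\diamond \leq (4\lambda)^j$. I would obtain this by a generating-function / majorization argument: replace each superoperator norm by a scalar majorant, i.e., bound $\|\mathcal{E}_s - I\|_\diamond$ and all its "Taylor data" by the corresponding scalar series with $\ad_{H_k}$ replaced by $2\lambda$ (since $\|\lambda \ad_{H_k}\|_\diamond \le 2\lambda$ and $\sum_k p_k = 1$). Concretely, the scalar surrogate is $f(u) = e^{2\lambda u} - 1$ in place of $\mathcal{E}_s - I$ with $u = sT$, its logarithm $g(u) = \frac{1}{u}\log(1 + f(u)) = 2\lambda$ exactly in the scalar case — which is too degenerate — so instead I would track the non-commutativity by the cruder but robust bound: each $\mathcal{D}_j$ arises from products of at most $j$ factors each of norm $\le 2\lambda$ together with combinatorial coefficients from the exponential and logarithm series, and a standard counting (or an application of the Baker–Campbell–Hausdorff-type estimates, or directly bounding the Cauchy product coefficients of $\frac{1}{u}\log(I + (\mathcal{E}_{u/T}-I))$) shows the total combinatorial weight at order $j$ is at most $2^j$, giving $\|\mathcal{D}_j\|_\diamond \le (2\lambda)^j \cdot 2^j = (4\lambda)^j$. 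I expect the delicate part to be making this combinatorial bookkeeping rigorous — tracking how the factorials in the exponential series and the $1/n$ weights in the logarithm series combine across the Cauchy products — and in verifying the radius-of-convergence claim tightly enough that the series defining $\mathcal{G}(s)$ genuinely converges throughout $sT < 1/2\lambda$. Since the statement is quoted verbatim from \cite{watson2024randomly} (Section 4.2 and Lemma 10), I would ultimately cite their argument for the detailed constants, presenting here only the structural derivation of \eqref{exp form} and the source of the geometric bound on $\|\mathcal{D}_j\|_\diamond$.
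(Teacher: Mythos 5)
The paper does not actually prove this lemma --- it is imported verbatim from \cite{watson2024randomly} (Section 4.2 and Lemma 10) --- so the relevant comparison is with the cited argument, and your channel-logarithm strategy is indeed the right reconstruction of it: expand $\mathcal{E}_s$, take $\mathcal{G}(s)=\frac{1}{\i sT}\log \mathcal{E}_s$ via the Mercator series, identify the $\ad_H$ term, and bound the higher coefficients. The structural outline is sound.

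There is, however, a concrete gap that you noticed and then left unrepaired: the convergence of $\log(\mathcal{I}+\mathcal{X})$ with $\mathcal{X}=\mathcal{E}_s-\mathcal{I}$ requires $\|\mathcal{X}\|_\diamond<1$, but your bound $\|\mathcal{X}\|_\diamond\le e^{2sT\lambda}-1$ only gives a number below $e-1\approx 1.72$ in the regime $sT<1/2\lambda$, which is not enough; writing ``see below'' and moving on does not close this. The standard repair, and the actual reason the hypothesis reads $sT<1/2\lambda$, is to bound the distance of each unitary conjugation channel to the identity through the unitary itself rather than through the exponential series of $\ad_{H_k}$: for $U_k=e^{\i sT\lambda H_k}$ one has $\|U_k-I\|\le sT\lambda\|H_k\|=sT\lambda$, hence $\|\,U_k\,\cdot\,U_k^\dagger-\mathcal{I}\,\|_\diamond\le 2\|U_k-I\|\le 2sT\lambda$, and by convexity $\|\mathcal{E}_s-\mathcal{I}\|_\diamond\le 2sT\lambda<1$. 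With that fixed, your identification of the leading term is fine (modulo the small point that $p_k\lambda=|\lambda_k|$, so one must absorb the signs of the $\lambda_k$ into the $H_k$ for $\sum_k p_k\lambda\,\ad_{H_k}=\ad_H$ to hold). The second weak point is the bound $\|\mathcal{D}_j\|_\diamond\le(4\lambda)^j$: your claim that ``the total combinatorial weight at order $j$ is at most $2^j$'' is asserted rather than derived. A clean way to make it rigorous is a scalar majorant: every coefficient of $(\mathcal{E}_s-\mathcal{I})^n$ is dominated in diamond norm by the corresponding coefficient of $(e^{2\lambda u}-1)^n$, so the coefficients of $\frac{1}{u}\log(\mathcal{I}+\mathcal{X})$ are dominated by those of $-\frac{1}{u}\log\bigl(2-e^{2\lambda u}\bigr)$, whose $j$-th Taylor coefficient is bounded by $(2\lambda/\ln 2)^{j+1}\le(4\lambda)^{j+1}$ up to the harmless $1/n$ weights --- this is essentially the computation in the cited reference, and deferring the exact constant to \cite{watson2024randomly} is acceptable, but the convergence step above must be stated correctly for the derivation you do present to stand.
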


Then the difference between the true time evolved density matrix, and the one obtained by applying the qDRIFT channel can be written as a power of the step size as shown below (we set $T=1$):

\begin{lem}[Lemma 3 of \cite{watson2024randomly}]
\label{lem:watson qdrift}
    Let $K\geq 3$ be an integer.
    For any density operator $\rho$, we have
    \[
    \mathcal{E}_s^{1 / s}(\rho)- e^{\i H } \rho e^{-\i H } =\sum_{j \geq 1} s^j \tilde{\mc D}_{j+1, K}(\rho)
    +\tilde{\mc D}_{K,s} (\rho),
    \]
    where the diamond norm of the superoperators satisfies
    \[
    \begin{aligned}
    \big\|\tilde{\mc D}_{j+1, K}\big\|_{\diamond} & \leq(8 \lambda )^j \sum_{\ell=1}^{\min \{K-1, n\}} \frac{(8 \lambda )^\ell}{\ell!} , \\
    \big\|\tilde{\mc D}_{K,s} \big\|_{\diamond} & \leq \frac{(8 \lambda )^K}{K!} \sum_{j=1}^{\infty}(8 \lambda s )^j.
    \end{aligned}
    \]    
\end{lem}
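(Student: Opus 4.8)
The plan is to reduce the claim to estimating the difference of two superoperator exponentials and then expand that difference with a Dyson (iterated–Duhamel) series, collecting terms by powers of $s$. First apply Lemma~\ref{lem:para qdrift channel} with $T=1$ to write $\mc E_s = e^{\i s \mc G(s)}$, where $\mc G(s) = \ad_H + \mc R(s)$ with $\mc R(s) := \sum_{j\ge 1} s^j \mc D_{j+1}$ and $\|\mc D_{j}\|_\diamond \le (4\lambda)^j$. Using the one-parameter-group property of $r\mapsto e^{\i r \mc G(s)}$ (so that the $1/s$-fold object, whether read as a composition or a power, equals $e^{\i \mc G(s)}$), we get $\mc E_s^{1/s} = e^{\i \mc G(s)}$. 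Since $e^{\i H}\rho e^{-\i H} = e^{\i\,\ad_H}(\rho)$, the left-hand side of the lemma equals $\bigl(e^{\i \mc G(s)} - e^{\i\,\ad_H}\bigr)(\rho)$, so it suffices to analyze this fixed-$s$ superoperator difference.

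Next, treat $\ad_H$ as the free generator and $\mc R(s)$ as a perturbation. Setting $\Psi(r) := e^{-\i r\,\ad_H} e^{\i r \mc G(s)}$ one finds $\Psi'(r) = \i\, \widehat{\mc R}_r(s)\,\Psi(r)$ with $\widehat{\mc R}_r(s) := e^{-\i r\,\ad_H}\mc R(s) e^{\i r\,\ad_H}$, whence, after multiplying by $e^{\i\,\ad_H}$ and relabelling the simplex variables,
\begin{equation*}
  e^{\i \mc G(s)} - e^{\i\,\ad_H} \;=\; \sum_{n\ge 1}\mc I_n, \quad \mc I_n := \i^n \int_{\Delta_n} e^{\i(1-r_n)\ad_H}\, \mc R(s)\, e^{\i(r_n-r_{n-1})\ad_H}\cdots \mc R(s)\, e^{\i r_1 \ad_H}\, \d r_1\cdots \d r_n,
\end{equation*}
where $\Delta_n := \{0\le r_1\le\cdots\le r_n\le 1\}$ and the propagator exponents $1-r_n,\, r_n-r_{n-1},\dots,r_1$ are nonnegative and sum to $1$. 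This converges in diamond norm because each $e^{\i r\,\ad_H}$ is conjugation by the unitary $e^{\i r H}$ (so $\|e^{\i r\,\ad_H}\|_\diamond = 1$) and $\|\mc R(s)\|_\diamond \le \sum_{j\ge 1}s^j(4\lambda)^{j+1}<\infty$ when $4\lambda s<1$; submultiplicativity and $\mathrm{vol}(\Delta_n)=1/n!$ give $\|\mc I_n\|_\diamond \le \|\mc R(s)\|_\diamond^n/n!$.

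Now substitute $\mc R(s)=\sum_j s^j \mc D_{j+1}$ into each $\mc I_n$ and expand multilinearly; the summand indexed by $(j_1,\dots,j_n)$ carries $s^{j_1+\cdots+j_n}$, so $\mc I_n$ has minimal degree $n$ in $s$. Let $[\mc I_n]_j$ be its degree-$j$ part, which is $s^j$ times the $s$-independent superoperator $\i^n \int_{\Delta_n}(\cdots \mc D_{j_i+1}\cdots)$ summed over compositions $j_1+\cdots+j_n=j$. Define $\widetilde{\mc D}_{j+1,K}$ by collecting $[\mc I_n]_j/s^j$ over $n=1,\dots,\min\{j,K-1\}$, and set $\widetilde{\mc D}_{K,s}:=\sum_{n\ge K}\mc I_n$, which is $s$-dependent and, since each such $\mc I_n$ has minimal degree $n\ge K$, consists only of terms of order $s^K$ and above. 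Then $\sum_{n\ge1}\mc I_n = \sum_{j\ge 1} s^j \widetilde{\mc D}_{j+1,K} + \widetilde{\mc D}_{K,s}$, which is exactly the claimed decomposition. For the norm bounds, use $\|\mc D_{j_i+1}\|_\diamond\le(4\lambda)^{j_i+1}$ and that the number of compositions of $j$ into $n$ positive parts is $\binom{j-1}{n-1}\le 2^{j-1}$:
\begin{equation*}
  \|\widetilde{\mc D}_{j+1,K}\|_\diamond \le \sum_{n=1}^{\min\{j,K-1\}}\frac{1}{n!}\binom{j-1}{n-1}(4\lambda)^{j+n} \le (8\lambda)^j \sum_{n=1}^{\min\{K-1,j\}}\frac{(8\lambda)^n}{n!},
\end{equation*}
using $2^{j-1}(4\lambda)^j\le(8\lambda)^j$; this gives the stated bound (reading the cap in the inner sum of the claim as $j$). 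For the remainder, $\|\widetilde{\mc D}_{K,s}\|_\diamond \le \sum_{n\ge K}\|\mc R(s)\|_\diamond^n/n! \le \tfrac{\|\mc R(s)\|_\diamond^K}{K!}\,e^{\|\mc R(s)\|_\diamond}$, and combining this with $\|\mc R(s)\|_\diamond \le 4\lambda\sum_{j\ge1}(4\lambda s)^j\le 32\lambda^2 s$ (valid once $4\lambda s\le 1/2$, which also makes $e^{\|\mc R(s)\|_\diamond}\le 2$) and coarsening constants yields $\|\widetilde{\mc D}_{K,s}\|_\diamond \le \tfrac{(8\lambda)^K}{K!}\sum_{j\ge 1}(8\lambda s)^j$, under the hypothesis $8\lambda s<1$, consistent with the step-size restriction $sT<1/2\lambda$. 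The main obstacle is this last bookkeeping: organizing the multi-index sum so that the single parameter $K$ cleanly separates the collected superoperators $\widetilde{\mc D}_{j+1,K}$ from the remainder $\widetilde{\mc D}_{K,s}$, and being careful about which constants to coarsen ($4\to 8$, and the $e^{\|\mc R(s)\|_\diamond}\le 2$ step) so the final factors emerge as exactly $(8\lambda)$ and $8\lambda s$; everything else is a routine Dyson-series estimate.
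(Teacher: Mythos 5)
Your overall strategy---write $\mc E_s^{1/s}=e^{\i\mc G(s)}$ via Lemma~\ref{lem:para qdrift channel}, expand $e^{\i\mc G(s)}-e^{\i\,\ad_H}$ as an interaction-picture Dyson series in the perturbation $\mc R(s)=\sum_{j\ge1}s^j\mc D_{j+1}$, and collect powers of $s$---is sound, and it is essentially the same mechanism this paper uses to prove its own generalization (Lemma~\ref{lem_QSP+qDRIFT}); note the paper does not prove Lemma~\ref{lem:watson qdrift} itself but imports it from \cite{watson2024randomly}. Your handling of the series part is correct: the degree-$j$ coefficients collected from the terms with $n\le\min\{j,K-1\}$ insertions are $s$-independent, and the counting (simplex volume $1/n!$, at most $\binom{j-1}{n-1}\le 2^{j-1}$ compositions, $\prod_i(4\lambda)^{j_i+1}=(4\lambda)^{j+n}$, unit diamond norm of the free propagators) does yield the stated bound on $\tilde{\mc D}_{j+1,K}$ (reading the cap in the statement as $\min\{K-1,j\}$ is reasonable).

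The genuine gap is in the remainder bound. The parenthetical claim that $4\lambda s\le 1/2$ ``also makes $e^{\|\mc R(s)\|_\diamond}\le 2$'' is false: your own estimate gives only $\|\mc R(s)\|_\diamond\le 32\lambda^2 s=8\lambda\,(4\lambda s)\le 4\lambda$, and since the paper works in the regime $\lambda\ge 1$, the factor $e^{\|\mc R(s)\|_\diamond}$ can be $e^{\Theta(\lambda)}$. Concretely, take $\lambda=10$, $s=1/160$ (so $4\lambda s=1/4$, $8\lambda s=1/2$) and $K=3$: your chain bounds $\|\tilde{\mc D}_{K,s}\|_\diamond$ by $\tfrac{(32\lambda^2 s)^K}{K!}e^{32\lambda^2 s}\approx\tfrac{20^3}{6}e^{20}$, which exceeds the target $\tfrac{(8\lambda)^K}{K!}\cdot\tfrac{8\lambda s}{1-8\lambda s}=\tfrac{80^3}{6}$ by several orders of magnitude. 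The problem is structural for your choice of split: with $\tilde{\mc D}_{K,s}=\sum_{n\ge K}\mc I_n$, the only estimate available from $\|\mc D_j\|_\diamond\le(4\lambda)^j$ is $\sum_{n\ge K}\|\mc R(s)\|_\diamond^n/n!$, and because $\|\mc R(s)\|_\diamond$ can be of order $\lambda$ throughout the allowed range of step sizes (and in particular can exceed $K$ for small fixed $K$), this tail is exponentially large in $\lambda$, whereas the claimed bound is only polynomial in $\lambda$. So as written the final inequality is not established; you would need either an additional smallness condition on $s$ (e.g.\ $32\lambda^2 s=O(1)$, which is the regime in which the lemma is eventually used via the Richardson sample points, but is not implied by $4\lambda s\le1/2$) or a different organization of the remainder. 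A smaller point: absolute convergence of $\mc R(s)$ and of your rearrangement needs $4\lambda s<1$, which does not follow from the ambient restriction $s<1/2\lambda$ alone, so the range of $s$ for which your identity is proved should be stated explicitly.
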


Parametrizing the error as a polynomial in $s$ enables the use of Richardson extrapolation to exponentially reduce the circuit depth.

\section{Randomized algorithms for QSVT}
\label{sec:randomized qsvt}
In this section, we present two randomized quantum algorithms for QSVT.
Consider an $n$-qubit Hamiltonian expressed as a linear combination of local terms,
$$
H=\sum_{k=1}^{L}\lambda_k H_k, \quad \lambda=\sum_{k=1}^L |\lambda_k| .
$$
For the first algorithm, we assume that each $H_k$ is unitary and can be implemented efficiently on a quantum computer, using constant-depth circuits over elementary gates. This setting captures the vast majority of physically relevant Hamiltonians; in particular, in quantum chemistry and condensed matter physics, the terms $H_k$ are typically tensor products of Pauli operators, $H_k\in\{I,X,Y,Z\}^{\otimes n}$. The second algorithm applies to a more general setting where each $H_k$ is an arbitrary Hermitian operator of unit norm, provided that $e^{\i H_k}$ can be implemented efficiently.

Our key assumption is sampling access to the decomposition: namely, we can efficiently obtain i.i.d.\ samples from the distribution
$$
\mathcal{D} = 
\left\{\left(\left |\lambda_k\right |/\lambda, H_k\right): k\in[L]\right\}.
$$
This sampling model is what makes our algorithms inherently randomized. Given this access model, our goal is to implement a target function $f(H)$, approximated by a bounded polynomial of degree $d$, which is the central task of QSVT. 
Concretely, for any observable $O$ and initial state $\ket{\psi_0}$, we design randomized procedures that output an estimate $\mu$ such that
$$
\big|\mu-\braket{\psi_0|f(H)^{\dag} O f(H)|\psi_0}\big| \leq \eps\|O\|.
$$
Our challenge is to realize this while (i) avoiding block encodings, (ii) minimizing the number of ancilla qubits, and (iii) achieving circuit depth independent of the number of terms $L$.

\subsection{A direct randomization of QSVT}
\label{subsec:randomized-qsvt-1}
We begin by developing a method that can be viewed as a direct randomization of QSVT. Recall that in the standard framework, given a block encoding $U$ (see Definition~\ref{def:block_encoding}) of a Hermitian operator $H$ with $\|H\|\leq 1$ and a degree-$d$ even or odd polynomial $P(x)$, there exists a phase vector $\Phi=(\phi_1,\ldots,\phi_d)\in \mathbb{R}^d$ such that the alternating phase sequence
\be \label{phase sequence}
U_{\Phi} := 
\begin{cases}  \vspace{.2cm}
    e^{\i  \phi_1 Z} U 
    \prod_{j=1}^{(d-1) / 2} \left(e^{\i  \phi_{2 j}Z} U^{\dagger} e^{\i  \phi_{2 j+1}Z} U\right), & \text { if } d \text { is odd, } \\
    \prod_{j=1}^{d / 2}\left(e^{\i  \phi_{2 j-1}Z} U^{\dagger} e^{\i  \phi_{2 j}Z} U\right), 
    & \text { if } d \text { is even,}
\end{cases}
\ee
is itself a block encoding of $P(H)$. The phase sequence $\Phi$ is determined entirely by the target polynomial $P(x)$ and can be computed efficiently on a classical computer, e.g., see \cite{chao2020finding, dong2021efficient}.

However, in the sampling access model considered in this work, constructing a block encoding of $H$ is infeasible, since we only have access to individual terms $H_k$ sampled according to their weights $|\lambda_k|/\lambda$. This motivates a different approach. Instead of a block encoding, we introduce a simpler operator of the form
\be
U = \begin{bmatrix}
    cI & sH \\ sH^\dag & -cI
\end{bmatrix}, \qquad c,s\in (0,1),~c^2+s^2=1.
\ee
and study its behavior within the alternating phase sequence in Eq.~\eqref{phase sequence}. 
Although $U$ is not unitary in general, it becomes unitary whenever $H$ is unitary, and more importantly, it serves as a randomized analogue of a block encoding of $H$.
Concretely, by sampling a random term $H_k$ from $\mc D$, we can efficiently construct the unitary
\be \label{unitary Uk}
U_k=\begin{bmatrix}
    cI & sH_k \\ sH_k^\dag & -cI
\end{bmatrix},
\ee
satisfying $\mathbb{E}[U_k]=U$. Each run of the algorithm thus implements a randomized circuit: an interleaved sequence of single-qubit rotations and unitaries $U_k$, which in expectation reproduces the effect of the sequence $U_\Phi$. 

With this setup, we first provide a general characterization of alternating phase-modulation sequences $U_\Phi$ that incorporate the operator $U$, and demonstrate how such sequences can be used to implement polynomial transformations of $H$. In fact, the following theorem, which holds for arbitrary matrices (not necessarily Hermitian or square), can be viewed as the analogue of Lemma~\ref{thm for standard QSVT} in the standard QSVT framework.

\begin{thm}
\label{prop_non-unitary SVT circuit of A}
    Let $P(x)\in \mathbb{C}[x]$ be a degree-$d$ polynomial 
    such that $|P(x)| \leq 1/2$ for all $x\in[-1,1]$ and has parity-$(d \mod 2)$. Let $m, c,s,\alpha \in \mathbb{R}$ be such that
    \be
    \label{eq: some parameters}
    m = \Theta\!\left(d\cdot \log^2\!\left(d/\eps\right) \log^2\!\left(\log\left(d/\eps\right)\right)\right),  
    \quad  
    s = 1/\sqrt{m},  
    \quad  
    c = \sqrt{1-1/m},  
    \quad  
    \alpha = \sqrt{m-1}\,\log\!\left(d^2/\varepsilon\right).
\ee
Let $H\in\mathbb{C}^{k\times \ell}$ be a matrix with $\|H\|\leq 1$.
    Define
    \[
    U:= \begin{bmatrix}
        cI_k & sH \\ 
        sH^\dagger & -cI_\ell
    \end{bmatrix} \qquad 
    \widetilde{\Pi}:= \begin{bmatrix}
    I_k & 0 \\ 0 & 0
    \end{bmatrix}  \qquad  
    \Pi:= \begin{bmatrix}
    0 & 0 \\ 0 & I_{\ell}
    \end{bmatrix}.
    \]
    Then there exists $\Phi\in \mathbb{R}^m$ such that
    \[
    \begin{aligned}
        \ket{0}\bra{1}\otimes P(H/\alpha )& \approx_\varepsilon \widetilde{\Pi} U_{\Phi} \Pi &&\;\text{ if } m \text { is odd,} \\
        \ket{1}\bra{1} \otimes P(H/\alpha) & \approx_\varepsilon \Pi U_{\Phi} \Pi &&\;\text{ if } m \text { is even,} 
    \end{aligned}
    \]
    where $U_\Phi$ is the alternating phase sequence defined as

\begin{equation}
    U_{\Phi}:= \begin{cases}  \vspace{.2cm}
    e^{\i  \phi_1(2 \widetilde{\Pi}-I)} U 
    \prod_{j=1}^{(m-1) / 2} \left(e^{\i  \phi_{2 j}(2 \Pi-I)} U^{\dagger} e^{\i  \phi_{2 j+1}(2 \widetilde{\Pi}-I)} U\right), & \text { if } m \text { is odd, } \\
    \prod_{j=1}^{m / 2}\left(e^{\i  \phi_{2 j-1}(2 \Pi-I)} U^{\dagger} e^{\i  \phi_{2 j}(2 \widetilde{\Pi}-I)} U\right), 
    & \text { if } m \text { is even.}
    \end{cases}
    \end{equation}
\end{thm}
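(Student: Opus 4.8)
The plan is to reduce Theorem~\ref{prop_non-unitary SVT circuit of A} to the standard QSVT machinery (Lemma~\ref{thm for standard QSVT}) by exhibiting the operator $U$ as an (approximate) block encoding of a rescaled version of $H$, and then choosing the phase sequence $\Phi$ to implement an appropriate polynomial. The key observation is that $U$ is \emph{not} unitary when $H$ is not unitary, so one cannot literally apply Lemma~\ref{thm for standard QSVT} to it. The strategy is therefore to relate $U$ to the genuine singular-value-transformation operator $\widetilde\Pi U_0 \Pi$ for the unitary dilation $U_0$ that would be the true block encoding, and to control the discrepancy. More concretely, I would first diagonalize the problem along the singular value decomposition $H=\sum_i \sigma_i \ket{u_i}\bra{v_i}$: the operator $U$ decouples into $2\times 2$ blocks $\begin{bmatrix} c & s\sigma_i \\ s\sigma_i & -c\end{bmatrix}$ (up to the signs/conjugates in the non-Hermitian case), each acting on $\mathrm{span}\{\ket{0}\ket{u_i},\ket{1}\ket{v_i}\}$, plus a trivial part. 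Within each such block the alternating phase sequence $U_\Phi$ reduces to a product $\prod_j (R(\cdot) e^{\i\phi_j Z})$ of exactly the QSP form in Lemma~\ref{lem_complex quantum signal processing}, but with the reflection operator $R(x)$ in Eq.~\eqref{single-qubit reflection} replaced by the non-unitary $2\times2$ matrix $\begin{bmatrix} c & s\sigma_i \\ s\sigma_i & -c\end{bmatrix}$.

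The core of the argument is thus a single-variable statement: for the matrix $M(\sigma) = \begin{bmatrix} c & s\sigma \\ s\sigma & -c\end{bmatrix}$ with $\sigma\in[0,1]$, one has $M(\sigma) = \sqrt{c^2+s^2\sigma^2}\, R\!\big(s\sigma/\sqrt{c^2+s^2\sigma^2}\big)$, i.e.\ $M(\sigma)$ is a contraction times a genuine QSP reflection at the renormalized argument $y(\sigma) := s\sigma/\sqrt{c^2+s^2\sigma^2}$. Because $c^2+s^2\sigma^2 \le 1$, the product of $m$ such matrices interleaved with unit-modulus phases is a contraction, and applying the QSP phases for a target polynomial $\widetilde P$ yields the $(1,1)$-entry (resp.\ $(0,1)$-entry) equal to $\widetilde P(y(\sigma))$ \emph{up to the contraction factors that get absorbed into the polynomial itself} — the cleanest way to see this is that the product $\prod_{j=1}^{m}(M(\sigma)e^{\i\phi_j Z})$ is \emph{exactly} $\big(\prod_j \sqrt{c^2+s^2\sigma^2}\big)\cdot \prod_j(R(y(\sigma))e^{\i\phi_j Z})$ only if all the renormalization factors were equal, which they are \emph{not} after the first application since the effective argument changes. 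I would handle this by instead noting that $M(\sigma) = c\,(2\widetilde\Pi-I) + s\sigma\,(\text{off-diag})$ and recognizing the product directly as QSP applied to the matrix whose block entry is $s\sigma$: this is the standard trick that the non-unitarity only costs a global scalar since $M(\sigma)$ has the form "$c$ on the diagonal, $s\sigma$ off-diagonal" which is precisely the QSP transfer matrix for the value $s\sigma$ evaluated on a sub-unit scale; the resulting $(1,1)$ entry is a polynomial in $s\sigma$, and the conditions $s=1/\sqrt m$, $\alpha = \sqrt{m-1}\log(d^2/\eps)$ are exactly what is needed so that the map $\sigma \mapsto s\sigma$ composed with the degree-$m$ QSP polynomial reproduces $P(\sigma/\alpha)$ on $[-1,1]$ to within $\eps$. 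The design of $\widetilde P$: one wants $\widetilde P(y) \approx P(\text{something})$, and since $y\approx s\sigma = \sigma/\sqrt m$ for small $\sigma$ while one needs the full range, the renormalization $\sqrt{c^2+s^2\sigma^2}$ near $\sigma\approx 1$ forces the $\log$ factor in $\alpha$ — this is where the polynomial-approximation lemma for $P$ being $\le 1/2$ on $[-1,1]$ and the choice $m=\Theta(d\log^2(d/\eps)\log^2\log(d/\eps))$ enters, to build a degree-$m$ QSP polynomial that agrees with $P(x/\alpha)$ after the nonlinear reparametrization.

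I would then assemble the pieces: (1) expand $U_\Phi$ in the SVD basis into $2\times2$ blocks plus a remainder that acts trivially (or vanishes under the projectors); (2) invoke the single-variable QSP/composition analysis to show each block's relevant entry equals $P(\sigma_i/\alpha)$ up to $\eps$; (3) repackage as $\widetilde\Pi U_\Phi \Pi \approx_\eps \ket{0}\bra{1}\otimes P(H/\alpha)$ (odd $m$) or $\Pi U_\Phi\Pi \approx_\eps \ket1\bra1\otimes P(H/\alpha)$ (even $m$), reading off the parity cases exactly as in Lemma~\ref{thm for standard QSVT}. The main obstacle I anticipate is step (2): the nonlinear reparametrization $\sigma\mapsto s\sigma/\sqrt{c^2+s^2\sigma^2}$ is \emph{not} a polynomial, so one cannot directly compose a polynomial with it; the resolution must be that the QSP product in $M(\sigma)$ is genuinely a polynomial in the \emph{entry} $s\sigma$ (not in $y(\sigma)$), because $M(\sigma)$ literally has that bilinear form — so the correct framing is "QSP applied to a sub-normalized block encoding $sH$ of scale $1/s$", and then the work is purely to choose the degree-$m$ polynomial $\widetilde P$ with $\widetilde P(sx) \approx P(x/\alpha)$ for $x\in[-1,1]$, satisfying the boundedness hypotheses of Lemma~\ref{lem_complex quantum signal processing} ($\|\widetilde P\|_{[-1,1]}\le 1$ and growth outside), which is a Chebyshev-truncation / polynomial-approximation estimate controlled by the stated parameters. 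Getting the constants in $m$ and $\alpha$ to line up with the $\eps$-accuracy of this approximation is the quantitative heart of the proof; everything else is bookkeeping around the projectors and the odd/even cases.
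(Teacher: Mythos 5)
Your setup (SVD block decomposition of $U$ into $2\times 2$ blocks, the identity $\tilde R(\sigma)=\sqrt{c^2+s^2\sigma^2}\,R\bigl(s\sigma/\sqrt{c^2+s^2\sigma^2}\bigr)$, then QSP within each block) is exactly the paper's route, but you derail at the decisive step. The obstruction you raise --- that the scalar factors cannot be pulled out of the product because ``the effective argument changes after the first application'' --- is spurious. Within a fixed invariant block the singular value $\sigma_i$ is the same in every one of the $m$ factors, so each factor contributes the identical scalar $\sqrt{c^2+s^2\sigma_i^2}$ and the identical reflection $R\bigl(y(\sigma_i)\bigr)$ with $y(\sigma)=s\sigma/\sqrt{c^2+s^2\sigma^2}$; scalars commute with everything, so
$\prod_{j}\bigl(e^{\i\phi_j Z}\tilde R(\sigma_i)\bigr)=(c^2+s^2\sigma_i^2)^{m/2}\prod_j\bigl(e^{\i\phi_j Z}R(y(\sigma_i))\bigr)$ holds \emph{exactly}. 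This exact factorization is precisely the crux of the paper's proof, and by dismissing it you abandon the one mechanism that lets you invoke the unitary QSP characterization (Lemma~\ref{lem_complex quantum signal processing}) at all.

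Your substitute argument does not close the gap. Framing $U_\Phi$ as ``QSP applied to a sub-normalized block encoding $sH$'' and asking for $\widetilde P(sx)\approx P(x/\alpha)$ is not valid: $U$ is not unitary, so Lemma~\ref{thm for standard QSVT} does not characterize which entry you get, and the relevant entry of the block product is not $\widetilde P(s\sigma)$ but $(c^2+s^2\sigma^2)^{m/2}\,\widetilde P\bigl(y(\sigma)\bigr)$ --- a polynomial in $\sigma$ only because $\widetilde P$ has parity $m\bmod 2$, and constrained quite differently from ``a QSP polynomial evaluated at $s\sigma$.'' The quantitative heart of the theorem is therefore to construct $\widetilde P$ with $|\widetilde P|\le 1$, correct parity, and
$\bigl|(\sqrt{c^2+s^2x^2})^{m}\,\widetilde P\bigl(sx/\sqrt{c^2+s^2x^2}\bigr)-P(x/\alpha)\bigr|\le\varepsilon$ on $[-1,1]$, which is the paper's Proposition~\ref{prop_implement f(sx)}: it approximates $(1-z^2)^{m/2}$ and $z/\sqrt{1-z^2}$ by bounded polynomials, and this is exactly where the choices $s=1/\sqrt m$, $c=\sqrt{1-1/m}$, $\alpha=\sqrt{m-1}\log(d^2/\varepsilon)$, and $m=\Theta\bigl(d\log^2(d/\varepsilon)\log^2\log(d/\varepsilon)\bigr)$ are forced. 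Your sketch never confronts the multiplicative factor $(c^2+s^2x^2)^{m/2}$, so neither the $\varepsilon$-accuracy claim nor the stated parameter values is justified as written.
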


The above result realizes polynomial transformations of $H/\alpha$, in analogy with standard QSVT. Specifically,
$$
U_\Phi \approx_{\eps} \begin{bmatrix}
         * & P(H/\alpha ) \\ * & *
     \end{bmatrix},
$$
if $m$ is odd and
$$
U_\Phi \approx_{\eps} \begin{bmatrix}
         * & * \\ * & P(H/\alpha )
     \end{bmatrix},
$$ 
if $m$ is even. 
By conjugating $U_\Phi$ with appropriate swap operators, we can place $P(H/\alpha )$ in the top-left block in both cases.

\begin{proof}
We consider the case when $d$ is even. The odd case can be proved similarly.
Let $H=\sum_{i=1}^{\min(k,\ell)} \sigma_i \ket{u_i} \bra{v_i}$ be the singular value decomposition of $H$. 
    Denote 
    $$\ket{\psi_i}=\begin{bmatrix}
    0 \\ \ket{v_i}
    \end{bmatrix}, \qquad\ket{\tilde{\psi_i}}=\begin{bmatrix}
    \ket{u_i} \\ 0
    \end{bmatrix},$$ then we have $\Pi \ket{\psi_i}=\ket{\psi_i}, \widetilde{\Pi} \ket{\tilde{\psi_i}}=\ket{\tilde{\psi}_i}$, and $\widetilde{\Pi} U \Pi = s \sum_{i=1}^{\min(k,\ell)} \sigma_i \ket{\tilde{\psi}_i} \bra{\psi_i}$.
    For each $i$, define two vectors 
    \[
    \ket{\psi_i^{\perp}}:= \frac{(I-\Pi) U^{\dagger}\ket{\tilde{\psi}_i}}{\|(I-\Pi) U^{\dagger}\ket{\tilde{\psi}_i}\|}=\begin{bmatrix}
    \ket{u_i} \\ 0
    \end{bmatrix},
    \qquad
    \ket{\tilde{\psi}_i^{\perp}}:= \frac{(I-\widetilde{\Pi}) U\left|\psi_i\right\rangle}{\|(I-\widetilde{\Pi}) U\left|\psi_i\right\rangle \|}=\begin{bmatrix}
    0 \\ -\ket{v_i}
    \end{bmatrix}.
    \]
    Observe that\footnote{Note that $U$ is not unitary generally, hence the output quantum states are no longer normalized.}
    \beas
    U \ket{\psi_i} &=& s\sigma_i \ket{\tilde{\psi_i}} + c\ket{\tilde{\psi}_i^\perp} , \\
    U \ket{\psi_i^\perp} &=& c\ket{\tilde{\psi_i}} - s\sigma_i \ket{\tilde{\psi}_i^\perp} .
    \eeas
    Let $\mathcal{H}_i:= \operatorname{Span}\{\left|\psi_i\right\rangle,\left|\psi_i^{\perp}\right\rangle\}$ and $\tilde{\mathcal{H}}_i:= \operatorname{Span}\{\ket{\tilde{\psi}_i},\ket{\tilde{\psi}_i^{\perp}}\}$ for $i\in [\min(k,\ell)]$, then
    \begin{equation}\label{eq_matrix1}
    U=\bigoplus_{i\in [\min(k,\ell)]}  \left[\begin{array}{cc}
    s\sigma_i & c \\
    c & -s\sigma_i
    \end{array}\right]_{\tilde{\mathcal{H}}_i}^{\mathcal{H}_i} .
    \end{equation}
    We also have
    \begin{equation}\label{eq_matrix2}
    e^{\i  \phi(2 \Pi-I)}=\bigoplus_{i\in [\min(k,\ell)]} \left[\begin{array}{cc}
    e^{\i  \phi} & 0 \\
    0 & e^{-\i  \phi}
    \end{array}\right]_{\mathcal{H}_i}^{\mathcal{H}_i},
    \qquad 
    e^{\i  \phi(2 \widetilde{\Pi}-I)}=\bigoplus_{i\in [\min(k,\ell)]} \left[\begin{array}{cc}
    e^{\i  \phi} & 0 \\
    0 & e^{-\i  \phi}
    \end{array}\right]_{\tilde{\mathcal{H}}_i}^{\tilde{\mathcal{H}}_i}.
    \end{equation}  
    Note that 
    \[
    \begin{aligned}
        \tilde{R}(x)& :=  \begin{bmatrix}
        sx & c \\ c & -sx
        \end{bmatrix}\\
        &=\sqrt{c^2+s^2x^2} \cdot \begin{bmatrix}
        sx/\sqrt{c^2+s^2x^2} & c/\sqrt{c^2+s^2x^2} \\ c/\sqrt{c^2+s^2x^2} & -sx/\sqrt{c^2+s^2x^2}
        \end{bmatrix} \\
        &=\sqrt{c^2+s^2x^2} \cdot R\Big(\frac{sx}{\sqrt{c^2+s^2x^2}}\Big) ,
        \end{aligned}
    \]
where $R$ is the single-qubit reflection operator, see Eq.~\eqref{single-qubit reflection}.
For the given polynomial $P(x)$, we can derive a polynomial $\tilde{P}(z)$ according to Proposition \ref{prop_implement f(sx)} such that
\[
\left|\left(\sqrt{c^2+s^2 x^2}\right)^m \tilde{P}\Big(\frac{sx}{\sqrt{c^2+s^2 x^2}}\Big)-P(x/\alpha)\right|\le \varepsilon.
\]
Moreover, $|\tilde{P}(z)| \leq 1$ for $z=sx/\sqrt{c^2+s^2x^2}\in [-s,s]$, and $\tilde{P}$ has the same parity as $P(x)$. 
We may consider $\tilde{P}(z)$ as a polynomial in $z/s \in [-1,1]$. 
Then, by Lemma \ref{lem_complex quantum signal processing}, there exists a sequence of phase angles $\Phi=(\phi_1,\ldots,\phi_m)\in \mathbb{R}^m$ such that
\be \label{eq_matrix3}
\prod_{j=1}^{m}\left(e^{\i  \phi_j Z} R\Big(\frac{sx}{\sqrt{c^2+s^2x^2}}\Big)\right)=
\left[\begin{array}{cc}
\tilde{P}\left(\frac{sx}{\sqrt{c^2+s^2x^2}}\right) & * \\
* & *
\end{array}\right].
\ee
Combining Eqs.~\eqref{eq_matrix1}, \eqref{eq_matrix2} and \eqref{eq_matrix3}, we obtain
\begin{equation*}
\begin{aligned}
    U_{\Phi} &=\bigoplus_{i\in [\min(k,\ell)]}  \left[\prod_{j=1}^{m/2}\left( e^{\i  \phi_{2j-1} Z} \tilde{R}(\sigma_i) e^{\i  \phi_{2j} Z} \tilde{R}(\sigma_i) \right) \right]_{\mathcal{H}_i}^{\mathcal{H}_i} \\
    &= \bigoplus_{i\in [\min(k,\ell)]}  \left[\prod_{j=1}^{m}\left(e^{\i  \phi_{j} Z} \sqrt{c^2+s^2{\sigma_i}^2} \cdot R\Big(\frac{s\sigma_i}{\sqrt{c^2+s^2{\sigma_i}^2}}\Big) \right) \right]_{\mathcal{H}_i}^{\mathcal{H}_i} \\
    &= \bigoplus_{i\in [\min(k,\ell)]} \left[ \left(\sqrt{c^2+s^2{\sigma_i}^2}\right)^{m} \prod_{j=1}^{m} \left(e^{\i  \phi_{j} Z} R\Big(\frac{s\sigma_i}{\sqrt{c^2+s^2{\sigma_i}^2}}\Big) \right) \right]_{\mathcal{H}_i}^{\mathcal{H}_i}\\
    &\approx_\varepsilon \bigoplus_{i\in [\min(k,\ell)]} \left[\begin{array}{cc}
    P(\sigma_i/\alpha) & * \\
    * & *
    \end{array}\right]_{\mathcal{H}_i}^{\mathcal{H}_i}.
\end{aligned}
\end{equation*} 
Hence
\[
    \Pi U_\Phi \Pi \approx_\varepsilon \bigoplus_{i\in [\min(k,\ell)]} \left[\begin{array}{cc}
    P(\sigma_i/\alpha) & 0 \\
    0 & 0
\end{array}\right]_{\mathcal{H}_i}^{\mathcal{H}_i} 
= \sum_{i=1}^{\min(k,\ell)} P(\sigma_i/\alpha) \ket{\psi_i} \bra{\psi_i} 
= \ket{1} \bra{1} \otimes P(H/\alpha).
\]
This completes the proof.
\end{proof}

We first note that Theorem~\ref{prop_non-unitary SVT circuit of A} implements polynomial transformations on the singular values of an arbitrary operator 
$H$. 
In the special case where 
$H$ is Hermitian, these transformations act directly on its eigenvalues. 
From the proof, we see that the phase $\Phi$ can be computed using any of the previous algorithms for QSVT (e.g., \cite{chao2020finding, dong2021efficient}), applied to the polynomial $\tilde{P}(z)$ viewed as a polynomial in $z/s \in [-1,1]$.
The parameter $s$ in our construction plays a role analogous to the normalization factor $1/\alpha$ in block encodings (see Definition~\ref{def:block_encoding}), and directly impacts the overall complexity of the algorithm. 
Its choice is constrained by QSVT requirements: specifically, we need $(c^2+s^2x^2)^{m/2}$ to remain bounded by a constant for all $x \in [-1,1]$. Under this constraint, our choice of $s$ is essentially optimal. 
One might ask whether an alternative construction of $U$ could enable a significantly larger $s$ and hence reduce complexity further. However, our lower bound analysis shows that this is impossible in general, establishing the optimality of our approach.   

Naturally, the operators $U$ and $U_\Phi$ from Theorem~\ref{prop_non-unitary SVT circuit of A} are not unitary in general. 
To overcome this, we use randomized constructions of $U$ to recover the desired transformation. 
Specifically, by sampling terms $H_k$ of $H$ from the distribution $\mc D$ via importance sampling, we construct the unitary matrices $U_k$ defined in Eq.~\eqref{unitary Uk}, which satisfy $\mathbb{E}[U_k]=U$. 
We then prove that substituting these $U_k$ into the alternating phase sequence yields, in expectation, the exact operator $U_\Phi$. 
This key observation is formalized in the following lemma, stated for the normalized operator $H/\lambda$.

\begin{lem}\label{lem_expectation of Pk}
    Let $H$ be an operator
    written as a linear combination of unitaries 
    $$H=\sum_{k=1}^L \lambda_k H_k,$$ 
    with $\sum_k |\lambda_k|=1$. 
    Let $\mathcal{D}=\{(|\lambda_k| , H_k):k\in[L]\}$ denote the probability distribution that outputs $H_k$ with probability $|\lambda_k|$.
    Define unitary
    \[
    U_k:= \begin{bmatrix}
        cI & sH_k \\ s H_k^\dagger & -cI
    \end{bmatrix}.
    \]
    Let $U_\Phi^{(R)}$ be the random circuit obtained by replacing each $U$ in Eq.~\eqref{eq_alternating phase circuit} with an i.i.d. sample $U_k$ from $\mathcal{D}$. 
    Then $\mathbb{E}[U_\Phi^{(R)}]=U_\Phi$. 
\end{lem}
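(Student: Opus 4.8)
The plan is to expand both the target sequence $U_\Phi$ and the randomized sequence $U_\Phi^{(R)}$ into the same product pattern — deterministic phase reflections interleaved with $d$ \emph{operator slots} — and then push the expectation through this noncommuting product one slot at a time, using that the slots are filled by mutually independent samples. Two elementary facts do all the work. First, by linearity of expectation applied blockwise (and the fact that entrywise expectation commutes with the adjoint, so the $(2,1)$-block stays the adjoint of the $(1,2)$-block),
\[
\mathbb{E}_{k\sim\mathcal D}[U_k]=\begin{bmatrix} cI & s\,\mathbb{E}[H_k] \\ s\,\mathbb{E}[H_k]^\dagger & -cI\end{bmatrix}=\begin{bmatrix} cI & sH \\ sH^\dagger & -cI\end{bmatrix}=U,
\]
using $\mathbb{E}[H_k]=\sum_k|\lambda_k|H_k=H$, and likewise $\mathbb{E}[U_k^\dagger]=U^\dagger$. (One may note that $U$ and every $U_k$ are in fact Hermitian, being of the form $\begin{bmatrix} A & B \\ B^\dagger & -A\end{bmatrix}$ with $A$ Hermitian, so the distinction between a $U$-slot and a $U^\dagger$-slot is cosmetic; this is convenient but not needed.) Second, if $V,W$ are independent matrix-valued random variables and $A$ is a fixed matrix, then $\mathbb{E}[VAW]=\mathbb{E}[V]\,A\,\mathbb{E}[W]$, since each entry of $VAW$ is an $A$-weighted sum of products (entry of $V$)(entry of $W$) and independence lets each such expectation factor.

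Next I would fix the word. Reading Definition~\ref{def_alternating phase sequence} for either parity, write
\[
U_\Phi=A_0\,U^{(\epsilon_1)}\,A_1\,U^{(\epsilon_2)}\,A_2\cdots U^{(\epsilon_d)}\,A_d,
\]
where each $\epsilon_i\in\{1,\dagger\}$ records whether that slot carries $U$ or $U^\dagger$, and each $A_i$ is a deterministic factor — one of $e^{\i\phi(2\Pi-I)}$, $e^{\i\phi(2\widetilde{\Pi}-I)}$, or the identity (the latter absorbing the missing boundary factor). By the definition of the randomized circuit, $U_\Phi^{(R)}$ is this same word with the $i$-th slot replaced by $V_i:=U_{k_i}^{(\epsilon_i)}$, where $k_1,\dots,k_d$ are i.i.d.\ draws from $\mathcal D$; hence the $V_i$ are mutually independent, the $A_i$ are deterministic, and $\mathbb{E}[V_i]=U^{(\epsilon_i)}$ by the first fact above.

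It then remains to peel off the random factors one at a time. Conditioning on $V_1,\dots,V_{d-1}$, the only remaining randomness is $V_d$, which is independent of the conditioning, so $\mathbb{E}\!\left[U_\Phi^{(R)}\mid V_1,\dots,V_{d-1}\right]=A_0V_1A_1\cdots V_{d-1}A_{d-1}\,\mathbb{E}[V_d]\,A_d=A_0V_1A_1\cdots V_{d-1}A_{d-1}\,U^{(\epsilon_d)}\,A_d$; taking a further expectation and iterating (equivalently, a short induction using the second fact with the deterministic $A_i$ carried along) strips $V_{d-1},V_{d-2},\dots,V_1$ in turn and yields $\mathbb{E}[U_\Phi^{(R)}]=A_0U^{(\epsilon_1)}A_1\cdots U^{(\epsilon_d)}A_d=U_\Phi$, as claimed.

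I do not anticipate a real obstacle: the argument is pure linearity of expectation. The one point that must be handled carefully is that the factors do not commute, so one cannot naively move $\mathbb{E}$ past the product — but mutual independence of the slot-fillers $V_i$ together with multilinearity of matrix multiplication (the second fact, applied repeatedly with the deterministic $A_i$ inert) is precisely what licenses the factorization. The only other bookkeeping is to treat the odd- and even-$d$ cases of Eq.~\eqref{eq_alternating phase circuit} uniformly, which the $A_0,\dots,A_d$ notation does; note that unitarity of $U$ (or of $U_\Phi$) is never used.
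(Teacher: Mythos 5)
Your proposal is correct and follows essentially the same route as the paper: observe $\mathbb{E}[U_k]=U$ (and $\mathbb{E}[U_k^\dagger]=U^\dagger$) by blockwise linearity, then use independence of the samples filling the slots to factorize the expectation through the noncommuting product, which the paper does in one line and you justify more carefully via iterated conditioning. The only implicit point — that $\sum_k|\lambda_k|H_k=H$, i.e.\ signs/phases of the $\lambda_k$ are absorbed into the sampled unitaries — is treated exactly as loosely in the paper's own proof, so it is not a gap relative to it.
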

\begin{proof}
    Note that $\mb{E}[U_k]=U$ for $k\in [L]$. Since $H_k$ is sampled independently, for even $m$ we have
    \[
    \begin{aligned}
    \mathbb{E}[U_\Phi^{(R)}]&=\prod_{j=1}^{m/2} \left(e^{\i  \phi_{2 j-1}(2 \Pi-I)} \mathbb{E}[U_{k_j}^\dag] e^{\i  \phi_{2 j}(2 \widetilde{\Pi}-I)} \mathbb{E}[U_{k_j'}] \right) \\
    &=\prod_{j=1}^{m/2} \left(e^{\i  \phi_{2 j-1}(2 \Pi-I)} U^\dag e^{\i  \phi_{2 j}(2 \widetilde{\Pi}-I)} U \right)\\
    &=U_\Phi.
    \end{aligned}
    \]
    The case for odd $m$ is similar.
\end{proof}

As a direct corollary of the above lemma, we have the following result.

\begin{prop}\label{thm_SVT of non-singular A}
Let $P(x)\in \mathbb{C}[x]$ be a degree-$d$ polynomial such that $|P(x)| \leq 1/2$ for all $x\in[-1,1]$ and has parity-$(d\mod 2)$. Let $m,c,s,\alpha$ be described in \eqref{eq: some parameters}.
    Let $H=\sum_{k=1}^L \lambda_k H_k$ be an LCU decomposition of matrix $H\in \mb{C}^{\ell \times \ell}$ with $\sum_{k=1}^L |\lambda_k|= 1$. 
    Then, there exists a random quantum circuit $U_\Phi^{(R)}$, using only one ancilla qubit and $m$ applications of 
    $$
    \left\{U_k=\begin{bmatrix}
        cI & s H_k \\ s H_k^\dagger & -cI
    \end{bmatrix}: \quad k\in [L] \right\},
    $$ whose expectation is the circuit $U_\Phi$ in Theorem \ref{prop_non-unitary SVT circuit of A}.
\end{prop}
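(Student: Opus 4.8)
The plan is to read the statement off from Theorem~\ref{prop_non-unitary SVT circuit of A} and Lemma~\ref{lem_expectation of Pk}, so the argument is essentially bookkeeping. First I would check the hypothesis of Theorem~\ref{prop_non-unitary SVT circuit of A}: since each $H_k$ is unitary, $\|H_k\|=1$, and $\sum_{k=1}^L|\lambda_k|=1$, the triangle inequality gives $\|H\|\le\sum_k|\lambda_k|\,\|H_k\|=1$. Hence Theorem~\ref{prop_non-unitary SVT circuit of A}, applied to this $H$, the polynomial $P$, and the parameters $m,c,s,\alpha$ of \eqref{eq: some parameters}, produces a phase vector $\Phi\in\mathbb{R}^m$ together with the alternating phase sequence $U_\Phi$ built from $U=\begin{bmatrix} cI & sH\\ sH^\dagger & -cI\end{bmatrix}$ and the reflections $e^{\i\phi_j(2\Pi-I)}$, $e^{\i\phi_j(2\widetilde\Pi-I)}$, satisfying the singular-value-transformation identity of that theorem.

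Next I would define the random circuit: sampling i.i.d.\ terms $H_{k_1},H_{k_2},\dots$ from $\mathcal{D}$ and forming the unitaries $U_{k_j}$ of Eq.~\eqref{unitary Uk} (absorbing, if needed, the phase $\lambda_{k_j}/|\lambda_{k_j}|$ into $H_{k_j}$, which keeps it unitary), I replace each occurrence of $U$ (respectively $U^\dagger$) in $U_\Phi$ by a freshly drawn $U_{k_j}$ (respectively $U_{k_j}^\dagger$), and call the resulting random operator $U_\Phi^{(R)}$. Two elementary facts make this a legitimate quantum circuit with the stated resources. First, each $U_k$ is unitary: using $c^2+s^2=1$ and $H_kH_k^\dagger=H_k^\dagger H_k=I$, block multiplication gives $U_kU_k^\dagger=U_k^\dagger U_k=I$; moreover $U_k$ acts on exactly one qubit more than the register carrying $H$, and since $H_k$ admits a constant-depth implementation, so does $U_k$ (e.g.\ a single ancilla-controlled application of $H_k$ plus $O(1)$ single-qubit gates), while the reflections $e^{\i\phi_j(2\Pi-I)}$ and $e^{\i\phi_j(2\widetilde\Pi-I)}$ act only on that ancilla — so $U_\Phi^{(R)}$ uses a single ancilla qubit. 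Second, counting the factors in Eq.~\eqref{eq_alternating phase circuit} shows that $U_\Phi$, hence $U_\Phi^{(R)}$, contains exactly $m$ applications from $\{U_k,U_k^\dagger\}$: for even $m$, $m/2$ copies of $U$ and $m/2$ of $U^\dagger$; for odd $m$, $(m+1)/2$ copies of $U$ and $(m-1)/2$ of $U^\dagger$.

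Finally, the expectation is supplied directly by Lemma~\ref{lem_expectation of Pk}: since the $U_{k_j}$ are independent and $\mathbb{E}[U_{k_j}]=U$ — this is exactly where the phase-absorption is used, so that $s\,\mathbb{E}[H_{k_j}]=s\sum_k\lambda_kH_k=sH$ — the expectation passes through the product term by term, yielding $\mathbb{E}[U_\Phi^{(R)}]=U_\Phi$, which is the claim. There is no genuine obstacle here; the only points demanding a moment's care are the normalization $\|H\|\le 1$, the phase-absorption needed to get $\mathbb{E}[U_k]=U$ when the $\lambda_k$ are signed or complex, and the factor count giving exactly $m$ queries — all routine, which is precisely why the statement is presented as a corollary of Lemma~\ref{lem_expectation of Pk}.
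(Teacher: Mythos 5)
Your proposal is correct and follows essentially the same route as the paper, which simply cites Theorem~\ref{prop_non-unitary SVT circuit of A} together with Lemma~\ref{lem_expectation of Pk}; you merely spell out the details (the bound $\|H\|\le 1$, unitarity of each $U_k$, the count of $m$ applications, and absorbing the phases of the $\lambda_k$ into the $H_k$ so that $\mathbb{E}[U_k]=U$) that the paper leaves implicit. The phase-absorption remark is a nice touch, since the paper's statement of Lemma~\ref{lem_expectation of Pk} tacitly assumes it.
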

\begin{proof}
    This follows directly from Theorem \ref{prop_non-unitary SVT circuit of A} and Lemma \ref{lem_expectation of Pk}.
\end{proof}

With these ingredients, we now state our first randomized QSVT algorithm. To implement a target polynomial $P(H)$, we construct a randomized alternating-phase circuit $U^{(R)}_{\Phi}$ by drawing $m=\widetilde{O}(d)$ i.i.d. samples $H_{j_1}, \ldots, H_{j_m}$ from $\mc D$, and substituting the corresponding unitaries $U_{j_1}, \ldots, U_{j_m}$. On input $\ket{\psi_0}$, we measure $O$ at the end of each coherent run. Since $\E[U^{(R)}_{\Phi}]=U_{\Phi}$, the expected measurement outcome is 
$$
\braket{\psi_0|P(H')^{\dag}OP(H')|\psi_0},
$$ 
where $H'=H/(\lambda\alpha)$. 
As each outcome is bounded by $\|O\|$, Hoeffding's inequality implies that $O(1/\eps^2)$ independent repetitions suffice to obtain an $\eps\|O\|$-additive estimate.
  
\begin{breakablealgorithm}
\caption{Direct randomization of quantum singular value transformation [Theorem \ref{thm: randomised QSVT}]}
\begin{algorithmic}[1]
\REQUIRE An LCU decomposition of $H=\sum_{k=1}^L \lambda_k H_k$ with $\lambda=\sum_{k=1}^L |\lambda_k|$.

\qquad A quantum state $\ket{\psi_0}$ and an observable $O$.

\qquad A polynomial $P(x)$ satisfying conditions stated in Theorem \ref{thm: randomised QSVT}. 

\qquad Let $\alpha=\sqrt{m-1} \log (d^2/\varepsilon)$, $H'=\frac{H}{\lambda\alpha}$, and set $
T = ({2}/{\varepsilon^2}) \log(2/\delta).
$

\ENSURE An estimate $\mu$ such that
$$
\left |\mu-\braket{\psi_0|P(H')^\dag O P(H')|\psi_0}\right | \leq \varepsilon \|O\|
$$

\STATE Generate the distribution $\mathcal{D}=\{(|\lambda_k|/\lambda, H_k): k\in [L]\}$.

\STATE Construct polynomial $\tilde{P}(z)$ via Proposition \ref{prop_implement f(sx)} and generate phase $\Phi\in \mathbb{R}^m$ 
via Lemma \ref{lem_complex quantum signal processing} (see Eq.~\eqref{eq_matrix3}).

\STATE For $j=1,2,\ldots,T$ 

Generate a random quantum circuit $U_\Phi^{(R)}$ according to $\mathcal{D}$.
\hfill [Proposition \ref{thm_SVT of non-singular A}]

\begin{itemize}
    \item If $P(x)$ is even, measure $\ket{1}\bra{1}\otimes O$ in the state $U^{(R)}_\Phi \ket{1,\psi_0}$ and denote the measurement outcome as $\mu_j$.
    \item If $P(x)$ is odd, measure $\ket{0}\bra{0}\otimes O$ in the state $U^{(R)}_\Phi \ket{1,\psi_0}$ and denote the measurement outcome as $\mu_j$.
\end{itemize}

\STATE Output $\mu = \frac{1}{T} \sum_{j=1}^T \mu_j$. 
\end{algorithmic}
\label{alg: randomised QSVT}
\end{breakablealgorithm}

The full procedure is given in Algorithm~\ref{alg: randomised QSVT} and its correctness and complexity are analyzed in the following theorem.

\begin{thm}
\label{thm: randomised QSVT}
Let $P(x)\in \mathbb{C}[x]$ be a degree-$d$ polynomial 
such that $|P(x)| \leq 1/2$ for all $x\in[-1,1]$ and has parity-$(d\mod 2)$. Let $m,c,s,\alpha$ be as defined in Eq.~\eqref{eq: some parameters}.
    Let $H=\sum_{k=1}^L \lambda_k H_k$ be the LCU decomposition of matrix $H\in \mb{C}^{\ell \times \ell}$ with $\lambda=\sum_k |\lambda_k|$. 
    Define
    $$
    H'={H}/{\alpha\lambda}, 
    $$
    let $\ket{\psi_0}$  a quantum state and $O$ be an observable. Then Algorithm \ref{alg: randomised QSVT}, using 
    $$
    T=O(\varepsilon^{-2} \log(1/\delta)),
    $$ 
    repetitions of the quantum circuit in Proposition \ref{thm_SVT of non-singular A}, outputs an estimate $\mu$ such that
    \[
    \left| \mu- 
     \bra{\psi_0}  P(H')^\dag O P(H')
     \ket{\psi_0} 
     \right| \leq \varepsilon \|O\|
    \]
    with probability at least $1-\delta$. Moreover, each coherent run requires circuit depth $\widetilde{O}(d)$ and uses only a single ancilla qubit.
\end{thm}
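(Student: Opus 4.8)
The plan is to assemble Theorem~\ref{thm: randomised QSVT} from the ingredients already laid out: Proposition~\ref{thm_SVT of non-singular A} (which gives a random circuit $U_\Phi^{(R)}$ with $\mathbb{E}[U_\Phi^{(R)}]=U_\Phi$), Theorem~\ref{prop_non-unitary SVT circuit of A} (which identifies $U_\Phi$ as an $\eps$-approximate block encoding of $P(H/\alpha)$ for the normalized operator, so here of $P(H')$ with $H'=H/(\alpha\lambda)$), and a Hoeffding/median-of-means-type concentration argument to control the $T$ classical repetitions. First I would normalize: set $H\leftarrow H/\lambda$ so that $\sum_k|\lambda_k|=1$, matching the hypotheses of Lemma~\ref{lem_expectation of Pk} and Proposition~\ref{thm_SVT of non-singular A}, and note that the target then becomes $P(H/\alpha)=P(H')$. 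Then I would record the key linearity fact: because each $U_{j_r}$ is drawn i.i.d.\ and the alternating phase sequence is multilinear in the occurrences of $U$ (with the fixed single-qubit phase gates in between), $\mathbb{E}[U_\Phi^{(R)}]=U_\Phi$ exactly, and by Theorem~\ref{prop_non-unitary SVT circuit of A}, $\widetilde\Pi U_\Phi\Pi$ (or $\Pi U_\Phi\Pi$) is $\eps$-close in operator norm to $\ket{0}\!\bra{1}\otimes P(H')$ (resp.\ $\ket{1}\!\bra{1}\otimes P(H')$).

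Next I would carry out the estimator analysis. Each coherent run prepares $U_\Phi^{(R)}\ket{1,\psi_0}$ and measures the observable $\ket{b}\!\bra{b}\otimes O$ (with $b=0$ for odd parity, $b=1$ for even), recording a bounded outcome $\mu_j$ with $|\mu_j|\le\|O\|$. The crucial identity is
\[
\mathbb{E}[\mu_j] \;=\; \bra{1,\psi_0}\,\mathbb{E}\!\big[(U_\Phi^{(R)})^\dagger (\ket{b}\!\bra{b}\otimes O)\, U_\Phi^{(R)}\big]\,\ket{1,\psi_0},
\]
and here there is a subtlety I would flag as the main technical point: the measured observable is \emph{quadratic} in $U_\Phi^{(R)}$, so one cannot naively pull the expectation inside to get $U_\Phi^\dagger(\cdots)U_\Phi$. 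The resolution (this is where I expect the real work to be) is that one runs \emph{two independent copies} of the random circuit — or equivalently, exploits that the measurement of $\ket{b}\!\bra{b}\otimes O$ on $U_\Phi^{(R)}\ket{1,\psi_0}$ can be realized via a circuit linear in a single random draw on each of a "bra side" and "ket side" register (a Hadamard-test or SWAP-test style construction), so that $\mathbb{E}[\mu_j]=\bra{1,\psi_0}U_\Phi^\dagger(\ket{b}\!\bra{b}\otimes O)U_\Phi\ket{1,\psi_0}$. I would then invoke Theorem~\ref{prop_non-unitary SVT circuit of A} to replace $\widetilde\Pi U_\Phi\Pi$ (resp.\ $\Pi U_\Phi\Pi$) by $\ket{b}\!\bra{1}\otimes P(H')$ up to operator-norm error $\eps$, which translates (using $\|O\|$ and $\|\ket{\psi_0}\|=1$, and absorbing a constant into $\eps$) into $\big|\mathbb{E}[\mu_j]-\bra{\psi_0}P(H')^\dagger O P(H')\ket{\psi_0}\big|\le O(\eps)\|O\|$. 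Finally, since the $\mu_j$ are i.i.d., bounded in magnitude by $\|O\|$, Hoeffding's inequality gives that the empirical mean $\mu=\frac1T\sum_j\mu_j$ satisfies $|\mu-\mathbb{E}[\mu_1]|\le\eps\|O\|$ with probability $\ge 1-\delta$ once $T=O(\eps^{-2}\log(1/\delta))$, as chosen in the algorithm; combining the two bounds via the triangle inequality and rescaling $\eps$ by a constant yields the claimed guarantee.

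For the resource accounting: the circuit $U_\Phi^{(R)}$ consists of $m=\Theta\!\big(d\log^2(d/\eps)\log^2(\log(d/\eps))\big)=\widetilde O(d)$ applications of the unitaries $U_{j_r}$ interleaved with single-qubit $Z$-rotations $e^{\i\phi_k(2\Pi-I)}$ and $e^{\i\phi_k(2\widetilde\Pi-I)}$. Each $U_k=\begin{bmatrix}cI & sH_k\\ sH_k^\dagger & -cI\end{bmatrix}$ acts on one ancilla qubit tensored with the $n$-qubit system; since each $H_k$ is unitary and implementable in constant depth (the standing assumption, e.g.\ $H_k$ a Pauli string), $U_k$ is a single-ancilla-controlled unitary up to a fixed single-qubit rotation generating the $c,s$ mixing, hence constant depth. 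The phase gates act only on the ancilla. Therefore each coherent run has circuit depth $O(m)=\widetilde O(d)$ and uses exactly one ancilla qubit; if the Hadamard-test construction for the quadratic observable requires a second ancilla for the control/test bit, I would either note that it is still $O(1)$ ancillas or describe the standard trick that reuses the block-encoding ancilla, so the "single ancilla qubit" claim survives. The phase vector $\Phi$ is computed classically via Lemma~\ref{lem_complex quantum signal processing} applied to $\tilde P$ (Proposition~\ref{prop_implement f(sx)}), contributing only classical preprocessing. This completes the complexity claims and hence the theorem.

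The step I expect to be the genuine obstacle is the one flagged above: rigorously justifying that the \emph{quadratic} quantity $\bra{1,\psi_0}(U_\Phi^{(R)})^\dagger(\ket{b}\!\bra{b}\otimes O)U_\Phi^{(R)}\ket{1,\psi_0}$ has expectation equal to $\bra{\psi_0}P(H')^\dagger O P(H')\ket{\psi_0}$ (up to $\eps$) — since $\mathbb{E}[X^\dagger A X]\ne \mathbb{E}[X]^\dagger A\,\mathbb{E}[X]$ in general, this forces either a two-independent-copies argument with a Hadamard/SWAP-test readout (which changes the interpretation of the $\mu_j$ and must be reconciled with the algorithm as stated) or a more careful observation that the specific block structure makes the relevant matrix element effectively linear. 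Everything else — linearity of expectation through the alternating sequence, the operator-norm approximation from Theorem~\ref{prop_non-unitary SVT circuit of A}, Hoeffding, and the gate count — is routine.
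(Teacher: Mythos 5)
Your proposal follows the same skeleton as the paper's own proof: normalize so $\sum_k|\lambda_k|=1$, use Lemma~\ref{lem_expectation of Pk}/Proposition~\ref{thm_SVT of non-singular A} to get $\mathbb{E}[U_\Phi^{(R)}]=U_\Phi$, use Theorem~\ref{prop_non-unitary SVT circuit of A} to identify the relevant block of $U_\Phi$ with $P(H')$ up to $\varepsilon$ in operator norm, and finish with Hoeffding over $T=O(\varepsilon^{-2}\log(1/\delta))$ bounded outcomes, with the $\widetilde O(d)$ depth and ancilla count read off from the construction of $U_\Phi^{(R)}$. All of that is exactly what the paper does.

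The obstacle you flag is genuine, and it is worth knowing that the paper's main-text proof does not actually resolve it: the proof writes $\mathbb{E}\bigl[U_\Phi^{(R)}\ket{1,\psi_0}\bigr]=U_\Phi\ket{1,\psi_0}$ and then applies Hoeffding centered at $\bra{\psi_0}P(H')^\dagger O P(H')\ket{\psi_0}$, which implicitly treats $\mathbb{E}[\mu_j]=\mathbb{E}\bigl[\bra{1,\psi_0}(U_\Phi^{(R)})^\dagger(\ket{1}\bra{1}\otimes O)U_\Phi^{(R)}\ket{1,\psi_0}\bigr]$ as the quadratic form of the mean circuit. As you note, $\mathbb{E}[X^\dagger A X]\neq\mathbb{E}[X]^\dagger A\,\mathbb{E}[X]$; concretely, a single layer already produces terms like $s^2\,\mathbb{E}_k[H_k O H_k^\dagger]$ in place of $s^2\,\mathbb{E}_k[H_k]\,O\,\mathbb{E}_k[H_k]^\dagger$, a discrepancy of order $s^2=1/m$ per layer that can accumulate to constant order over $m$ layers, so the bias of the single-copy estimator cannot be dismissed as higher order. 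The resolution you sketch (two independent draws, one effectively on the bra side and one on the ket side, read out via a Hadamard-test-style circuit) is precisely what the paper does in Appendix~C (Theorem~\ref{thm: randomised QSVT in the general case}): controlled on an extra ancilla prepared in $\ket{+}$, it applies $U_\Phi^{(R_0)}$ or $U_\Phi^{(R_1)}$ with independent samples $R_0,R_1$ and measures $X\otimes\ket{1}\bra{1}\otimes O$, so the relevant cross term is linear in each independent draw and the expectation factorizes to $U_\Phi(\cdot)U_\Phi^\dagger$. Your caveat about reconciling this with the algorithm and the ancilla count is also on target: that corrected estimator uses two ancilla qubits, so the ``single ancilla'' claim of Theorem~\ref{thm: randomised QSVT} strictly applies only to the biased single-copy protocol as stated in Algorithm~\ref{alg: randomised QSVT}. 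In short, your write-up is essentially the paper's argument, but with the key unjustified step correctly identified; completing it rigorously requires importing the two-copy construction of Theorem~\ref{thm: randomised QSVT in the general case} (or an equivalent de-biasing argument) into the main theorem.
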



\begin{proof}
We consider the case when $d$ is even. The odd case can be proved similarly.  
Note that
\[
\bra{\psi_0}  P(H')^\dag O 
     P(H')\ket{\psi_0} 
= \Tr\left[O ~ P(H')\ket{\psi_0} \bra{\psi_0}  P(H')^\dag\right],
\]
which can be viewed as the expectation value of $O$ with respect to the unnormalized state $P(H')\ket{\psi_0}$. By Theorem \ref{prop_non-unitary SVT circuit of A} and Proposition \ref{thm_SVT of non-singular A}, we have 
\[
\mb{E} \left[ U^{(R)}_\Phi \ket{1,\psi_0}\right] = 
U_\Phi \ket{1,\psi_0} \approx_\varepsilon
\ket{0} \otimes \ket{G} +
\ket{1} \otimes P(H')\ket{\psi_0},
\]
for some garbage state $\ket{G}$. 
To estimate the desired quantity, randomly generate $T$ quantum circuits $U^{(R_1)}_\Phi,\ldots,U^{(R_{T})}_\Phi$ and measure $\ket{1}\bra{1}\otimes O$ in the state $U^{(R_j)}_\Phi \ket{1,\psi_0}$ for all $j\in[T]$. 
Denote the measurement outcome as $\mu_j$, and let $\mu:= \sum_{j=1}^T \mu_j /T$. Since $\mu_j \in [-\|O\|,\|O\|]$, Hoeffding's inequality yields
    \[
    \Pr\left[ \,\left|\mu- \bra{\psi_0}  P(H')^\dag O P(H')\ket{\psi_0} \right| \geq \varepsilon \|O\| \, \right] 
    \leq 2 \exp(-T\varepsilon^2/2).
    \]
    To ensure failure probability at most $\delta$, it suffices to choose
    \[
    T \geq \frac{2}{\varepsilon^2} \log \left(\frac{2}{\delta}\right).
    \]
    This method uses only a single ancilla qubit, namely the one used in the construction of $U_\Phi$.
\end{proof}

Note that in Algorithm~\ref{alg: randomised QSVT}, $H$ does not need to be Hermitian.
In many cases, we are ultimately interested in estimating expectation values of the form
$$
\braket{\psi_0|f(H)^{\dag} O f(H)|\psi_0},
$$
where $f(x)$ is a target function that can be approximated by a bounded polynomial of degree $d$ and definite parity. 
We now show that each coherent run of our algorithm has circuit depth $\widetilde{O}(\lambda^2d^2)$.

Let $f_d(x)$ be a target function we want to approximate, where $d$ denotes its approximate polynomial degree, i.e., the minimum degree of the polynomial approximating $f_d(x)$. 
By Theorem~\ref{thm: randomised QSVT}, our randomized construction does not directly implement $f_d(H)$, but rather a rescaled version $f_{d'}(H')$ for some $d', H'$. 
Thus, to recover $f_d(H)$, one must enlarge the degree parameter so that $f_{d'}(H')=f_d(H)$. 
For many functions of interest, this requirement amounts to choosing
$$
\dfrac{d'}{\alpha\lambda}=d.
$$
Since the scaling factor satisfies $\alpha=\widetilde{O}(\sqrt{d'})$, it follows that $d'=\widetilde{O}(\lambda^2d^2)$. 
Hence, Algorithm \ref{alg: randomised QSVT} estimates the desired expectation value using $O(1/\eps^2)$ independent repetitions of a circuit of depth
$$
\widetilde{O}(\lambda^2d^2).
$$
This quadratic dependence holds quite generally, as most target functions of interest satisfy this property. 
As an illustration, consider Hamiltonian simulation $e^{\i Ht}$. 
It is known that $\sin(xt)$ and $\cos(xt)$ admit polynomial approximation of degree $d=\widetilde{\Theta}(t)$ \cite{low2019hamiltonian, gilyen2019quantum}. 
In our setting, the rescaling factor is $\alpha=O(\sqrt{t}\log(t/\eps))$, so the algorithm only approximately implements $e^{\i Ht/(\alpha\lambda)}$. 
To recover the true evolution, we must instead simulate with an effective parameter $t'$ satisfying 
$$
\dfrac{\sqrt{t'}}{\lambda\log(t'/\eps)}=O(t),
$$
which implies $t'=\widetilde{O}(\lambda^2 t^2)$. 
Thus, Theorem \ref{thm: randomised QSVT} yields a randomized quantum algorithm for Hamiltonian simulation: if $\ket{\psi_t}=e^{-\i Ht}\ket{\psi_0}$, then Algorithm \ref{alg: randomised QSVT} outputs $\mu$ such that
    $$
    \left|\mu-\braket{\psi_t|O|\psi_t}\right|\leq \eps\|O\|,
    $$
using $O(1/\eps^2)$ repetitions of a quantum circuit of depth $\widetilde{O}(\lambda^2t^2)$. The same quadratic overhead extends to other important tasks. 
For instance, in quantum linear systems, the target function  $f(x)=1/\kappa x$ can be approximated by a polynomial of degree $d=\widetilde{\Theta}(\kappa)$ \cite{gilyen2019quantum}. 
For matrix powers $f(x)=x^{t}$, the approximate degree is $d=\widetilde{\Theta}(\sqrt{t})$ \cite{sachdeva2014faster}. In all such cases, setting $d'=\widetilde{O}(\lambda^2 d^2)$ suffices to implement the desired transformation.

In contrast, standard QSVT requires circuit depth $\widetilde{O}(L\lambda d)$ for the same task, together with $O(\log L)$ ancilla qubits. 
Our randomized approach, by comparison, requires only a single ancilla qubit and eliminates the linear dependence on $L$. 
Consequently, it is not only more resource-efficient, but also provably faster whenever the polynomial degree is relatively small, i.e., when $\lambda d \ll L$. 
This regime naturally arises in many near-term applications, making our method particularly well-suited for early fault-tolerant quantum devices.

Although Algorithm~\ref{alg: randomised QSVT} can be viewed as a direct randomization of standard QSVT, it also inherits some of its drawbacks. For example, as in the deterministic setting, implementing a real polynomial requires two additional ancilla qubits. 
More importantly, in many applications the ultimate goal is not just to apply $f(H)$, but to estimate the expectation value of observables with respect to the normalized state
$$
\ket{\psi}=\dfrac{f(H)\ket{\psi_0}}{\|f(H)\ket{\psi_0}\|}, 
$$
that is, the quantity $\braket{\psi|O|\psi}$. 
If the normalization factor is bounded as $\|f(H)\ket{\psi_0}\|\geq \eta$, standard QSVT prepares $\ket{\psi}$ using $\widetilde{O}(1/\eta)$ rounds of fixed-point amplitude amplification \cite{yoder2014fixed}, followed by $O(1/\eps^2)$ repetitions of the resulting circuit to estimate the observable.   

In contrast, Algorithm~\ref{alg: randomised QSVT} cannot benefit from amplitude amplification. 
The only option is to estimate $\braket{\psi|O|\psi}$ through repeated classical sampling. 
If we let $\ell=\|f(H)\ket{\psi_0}\|$, the algorithm effectively outputs $\mu/\ell^2$. 
However, the measurement outcomes are now random variables bounded in  
$$\left[-\|O\|/\eta^2, \|O\|/\eta^2\right],$$
which significantly increases the variance. 
As a result, although each run still has circuit depth $\widetilde{O}(\lambda^2 d^2)$, the number of classical repetitions required increases to
$$
T=\widetilde{O}\left(\dfrac{1}{\eps^2\eta^4}\right).
$$
Hence, the overall complexity becomes
$$
\widetilde{O}\left(\dfrac{\lambda^2d^2}{\eps^2\eta^4}\right),
$$
which is highly suboptimal compared to what can be achieved by amplitude amplification or estimation in standard QSVT. 
Notably, this matches the complexity reported in \cite{chakraborty2025quantum, wang2024qubit}. 
Thus, although our randomized construction avoids block encodings and reduces hardware overhead, this inefficiency is the main bottleneck of the first approach, and overcoming it is the key motivation for our second randomized algorithm. 

For our second algorithm, we take inspiration from \cite{chakraborty2025quantum} and integrate qDRIFT into the generalized QSP framework, considering linear combinations of Hermitian operators. 
This yields two key advantages. First, unlike the direct randomization approach, the number of ancilla qubits is fixed at one, regardless of whether the target polynomial is real or complex \cite{motlagh2024generalized}. 
Second, combining qDRIFT with classical extrapolation enables circuits of shorter depth. 
More importantly, this approach naturally supports amplitude amplification, allowing us to compute $\braket{\psi|O|\psi}$ with the same number of classical repetitions as standard QSVT, thereby eliminating the main inefficiency of Algorithm~\ref{alg: randomised QSVT}. 
Consequently, our second algorithm provides a more powerful and resource-efficient framework for randomized QSVT.

\subsection{Randomized QSVT with qDRIFT}
\label{sec:qsvt-with-qdrift}

Our second randomized algorithm embeds qDRIFT within the framework of generalized quantum signal processing (GQSP) \cite{motlagh2024generalized, wang2023quantum}. For any Hamiltonian $H$, GQSP (see Sec.~\ref{subsec:prelim-gqsp}) implements a Laurent polynomial $P(e^{\i H})$ (bounded on the complex unit circle $\mathbb{T}$) using circuits that alternate between parametrized $U(2)$ rotations and controlled $e^{\i H}$ (or $e^{-\i H}$). 
Unlike standard QSVT, this framework natively handles both real and complex polynomials without additional ancilla qubits. Moreover, since Laurent polynomials can approximate (i) most functions of interest and (ii) nearly any polynomial achievable by QSVT, GQSP has emerged as a versatile framework for polynomial transformations \cite{chakraborty2025quantum}. Although the procedure requires $\widetilde{O}(d)$ queries to $e^{\i H}$ (or $e^{-\i H}$) and only one ancilla qubit, its end-to-end efficiency depends critically on how $e^{\i H}$ is realized. Prior work \cite{motlagh2024generalized} assumed block-encoding access to $H$, enabling the use of optimal Hamiltonian simulation methods \cite{low2019hamiltonian}, but at the cost of substantial ancilla overhead, intricate controlled operations, and circuit depth scaling with $L$.

In our randomized setting, we replace exact Hamiltonian evolutions with qDRIFT \cite{campbell2019random}, thereby eliminating both the ancilla overhead and the explicit dependence on $L$. 
A direct substitution, however, would inherit qDRIFT’s $1/\eps$ depth scaling (see Sec.~\ref{subsec:prelim-qdrift}), which substantially increases the overall cost. To address this, we introduce a novel use of Richardson extrapolation that reduces the scaling to $\mathrm{polylog}(1/\eps)$, dramatically improving efficiency while being resource-efficient.

Building on the framework of \cite{chakraborty2025quantum}, we consider a general class of interleaved quantum circuits composed of Hamiltonian evolutions (possibly of different Hamiltonians) interspersed with arbitrary unitaries. Formally, let
$$
H^{(1)}, H^{(2)}, \dots, H^{(M)}$$ 
be a family of $M$ Hermitian operators of the same dimension. Each $H^{(\ell)}$ admits a linear combination of Hermitian
decomposition 
$$
H^{(\ell)} = \sum_{k=1}^{L_\ell} \lambda_k^{(\ell)} H_k^{(\ell)}, \quad \|H_k^{(\ell)}\|=1, \quad \lambda^{(\ell)}=\sum_{k=1}^{L_\ell} |\lambda_k^{(\ell)}|,
$$
and let $$\lambda := \max_\ell \{\lambda^{(\ell)}\}.$$
Then, the interleaved sequence circuit $W$ is written as follows
\be
\label{eq:inteleaved-circuit-definition}
W = V_0\cdot e^{\i H^{(1)}}\cdot V_1 \cdot e^{\i H^{(2)}}\cdots e^{\i H^{(M)}}\cdot V_M = V_0 \prod_{\ell=1}^M e^{\i H^{(\ell)}} V_{\ell},    
\ee
where $\{V_\ell\}_{\ell=0}^{M}$ are arbitrary unitary operators. 
Observe that $W$ subsumes GQSP: set $V_j=R_j(\theta_j,\phi_j,\gamma_j)\otimes I$, and $H^{(j)}=\widetilde{H}$, where $\widetilde{H}=\mathrm{diag}(H,0)$ or $\mathrm{diag}(0,-H)$ depending on whether control qubit is $0$ or $1$. For an $n$-qubit Hamiltonian, the circuit $W$ uses $n+1$ qubits overall. 

We establish a general result: for any initial state $\ket{\psi_0}$ and observable $O$, one can estimate
$$
\braket{\psi_0|W^{\dag}OW|\psi_0},
$$
with an additive error at most $\eps\|O\|$, using no block encodings, only a single ancilla qubit, and a circuit depth that scales polylogarithmically in $1/\eps$. Specifically, our method requires $\widetilde{O}(\eps^{-2})$ repetitions of a quantum circuit of depth $\widetilde{O}(\lambda^2 M^2+M)$.

The key idea is to replace the exact Hamiltonian evolutions in $W$ with qDRIFT-based simulations. Our main technical contribution is to show that the time-dependent error in observable expectation values between this randomized circuit and the exact $W$ can be expressed as a polynomial in the parameters of the qDRIFT channel. This insight enables us to apply Richardson extrapolation \cite{low2019well} in a novel way to mitigate the error, thereby improving the circuit depth scaling.

Importantly, this analysis does not follow directly from prior extrapolation-based error reduction methods for randomized Hamiltonian simulation \cite{watson2024randomly}, nor from \cite{chakraborty2025quantum}, where Trotterization was combined with extrapolation. Instead, we introduce two time-dependent superoperators, one for the exact $W$ and one for its qDRIFT implementation, and analyze their difference in the interaction picture. This formulation is crucial for deriving a tight upper bound on the error in expectation values. For clarity of exposition, we present the arguments below in terms of density operators, even though our main theorems are stated for pure states.

The error series between the density operator evolved under the parametrized qDRIFT channel $\mathcal{E}_s^{1 / s}(\rho)$ and the exact evolution $e^{\i \ad_H}( \rho)$ was established in \cite{watson2024randomly} (restated in Lemma \ref{lem:watson qdrift}). Our subsequent results can be seen as a nontrivial generalization of this difference to arbitrary interleaved sequence $W$. For any density operator $\rho$, define 
\be 
\label{eq:exact qdrift state}
 \quad \rho_{2M+1} =  \mc  N_{2M+1}(\rho):=W\rho W^{\dag}
\ee
to be the output state of the exact interleaved circuit $W$. On the other hand, for a family of Hamiltonians 
$\{H^{(\ell)}\}_{\ell=1}^{M}$, define the corresponding parametrized qDRIFT channel as 
\[
\mathcal{E}^{(\ell)}_s(\rho):= \sum_k p_k^{(\ell)} e^{\i s \lambda^{(\ell)} \mathrm{ad}_{H_k^{(\ell)}}}(\rho),
\]
where $p_k^{(\ell)}=|\lambda_k^{(\ell)}|/\lambda^{(\ell)}$.
The output quantum state from the approximate interleaved circuit (with arbitrary unitaries and qDRIFT) is given by
\begin{equation}
\label{eq:appro qdrift state}
\tilde{\rho}_{s,2M+1} :=  \Big(\mc V_0\prod_{\ell=1}^M \big(\mathcal{E}^{(\ell)}_{sM}\big)^{1/(sM)}\mc V_\ell\Big)(\rho),
\end{equation}
where $\mc V_\ell$ is the unitary channel defined via $\mc V_\ell(\rho) := V_\ell \rho V_\ell^\dagger$. Then we have the following result on the difference between density operators output by the two circuits.

\begin{lem}
\label{lem_QSP+qDRIFT}
Let $K\in \mb Z_{>3}$ and assume $s \in(0,1/2\lambda M)$, then we have
\begin{equation}
        \tilde{\rho}_{s,2M+1}-\rho_{2M+1}=\sum_{j \geq 1} s^j \tilde{\mc D}_j(\rho) ,
\end{equation}
where the diamond norm of the superoperators $\tilde{\mc D}_j$ satisfies
\[
    \begin{aligned}
    \big\|\tilde{\mc D}_j\big\|_{\diamond} & \leq(8 \lambda M)^j \sum_{\ell=1}^{j} \frac{(8 \lambda M)^\ell}{\ell!}.
    \end{aligned}
\]
\end{lem}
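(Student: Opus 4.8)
\emph{Approach and Step 1 (per-block normal form).} The plan is to compare the two circuits one Hamiltonian block at a time via a hybrid (telescoping) argument, thereby reducing the statement to the single-block error expansion for parametrized qDRIFT already recorded in Lemma~\ref{lem:watson qdrift}, and then to recombine the per-block series while tracking the factor $M$. Fix $\ell$. The channel $\mathcal E^{(\ell)}_{sM}(\rho)=\sum_k p_k^{(\ell)}e^{\i sM\lambda^{(\ell)}\ad_{H_k^{(\ell)}}}(\rho)$ is exactly the parametrized qDRIFT channel of Lemma~\ref{lem:para qdrift channel} with step parameter $sM$ and total weight $\lambda^{(\ell)}\le\lambda$. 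Since $s<1/2\lambda M\le 1/2\lambda^{(\ell)}M$ we have $sM<1/2\lambda^{(\ell)}$, so Lemma~\ref{lem:watson qdrift} applies with the substitutions $s\mapsto sM$, $H\mapsto H^{(\ell)}$; taking its cutoff $K\to\infty$ yields superoperators $\tilde{\mc D}^{(\ell)}_{j+1}$ with
\[
\bigl(\mathcal E^{(\ell)}_{sM}\bigr)^{1/(sM)}(\rho)-e^{\i\ad_{H^{(\ell)}}}(\rho)=\sum_{j\ge 1}(sM)^j\,\tilde{\mc D}^{(\ell)}_{j+1}(\rho),
\qquad
\|\tilde{\mc D}^{(\ell)}_{j+1}\|_\diamond\le (8\lambda)^j\sum_{m=1}^{j}\frac{(8\lambda)^m}{m!}.
\]
Throughout, $1/(sM)$ is taken to be a positive integer (consistent with the Richardson sample points of Lemma~\ref{lem_richardson extrapolation}), so each $(\mathcal E^{(\ell)}_{sM})^{1/(sM)}$ is a composition of CPTP channels, hence CPTP.

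\emph{Step 2 (telescoping, then recombining).} Put $\Phi_\ell:=e^{\i\ad_{H^{(\ell)}}}$, $\tilde\Phi_\ell:=(\mathcal E^{(\ell)}_{sM})^{1/(sM)}$ and $\mc V_\ell(\cdot):=V_\ell(\cdot)V_\ell^\dagger$, so that $\rho_{2M+1}=\mc V_0\circ\Phi_1\circ\mc V_1\circ\cdots\circ\Phi_M\circ\mc V_M(\rho)$ and $\tilde\rho_{s,2M+1}=\mc V_0\circ\tilde\Phi_1\circ\mc V_1\circ\cdots\circ\tilde\Phi_M\circ\mc V_M(\rho)$. Replacing $\Phi$ by $\tilde\Phi$ one block at a time and telescoping gives $\tilde\rho_{s,2M+1}-\rho_{2M+1}=\sum_{k=1}^{M}\mc A_k\circ(\tilde\Phi_k-\Phi_k)\circ\mc B_k(\rho)$, where the prefix $\mc A_k$ (built from $\mc V_0,\dots,\mc V_{k-1},\tilde\Phi_1,\dots,\tilde\Phi_{k-1}$) and the suffix $\mc B_k$ (built from $\mc V_k,\dots,\mc V_M,\Phi_{k+1},\dots,\Phi_M$) are compositions of CPTP channels, so $\|\mc A_k\|_\diamond=\|\mc B_k\|_\diamond=1$. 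Substituting the Step~1 expansion and collecting powers of $s$ yields $\tilde\rho_{s,2M+1}-\rho_{2M+1}=\sum_{j\ge1}s^j\,\tilde{\mc D}_j(\rho)$ with $\tilde{\mc D}_j:=M^j\sum_{k=1}^{M}\mc A_k\circ\tilde{\mc D}^{(k)}_{j+1}\circ\mc B_k$, and, by submultiplicativity of the diamond norm,
\[
\|\tilde{\mc D}_j\|_\diamond\le M^j\sum_{k=1}^{M}\|\tilde{\mc D}^{(k)}_{j+1}\|_\diamond\le M^j\cdot M\cdot(8\lambda)^j\sum_{m=1}^{j}\frac{(8\lambda)^m}{m!}=(8\lambda M)^j\,M\sum_{m=1}^{j}\frac{(8\lambda)^m}{m!}\le (8\lambda M)^j\sum_{m=1}^{j}\frac{(8\lambda M)^m}{m!},
\]
the last inequality because $M\le M^m$ for every $m\ge1$ (as $M\ge1$). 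This is exactly the claimed estimate.

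\emph{Main obstacle.} Essentially all the content sits in Step~1, which we can import wholesale from Lemma~\ref{lem:watson qdrift}; the only nontrivial part, if one re-derives it, is the interaction-picture Dyson-series bookkeeping. Concretely, by Lemma~\ref{lem:para qdrift channel} one has $(\mathcal E^{(\ell)}_{sM})^{1/(sM)}=e^{\i\ad_{H^{(\ell)}}+\i\Delta^{(\ell)}}$ with $\Delta^{(\ell)}=\sum_{j\ge1}(sM)^j\mc D^{(\ell)}_{j+1}$ and $\|\mc D^{(\ell)}_j\|_\diamond\le(4\lambda)^j$; factoring out the exact evolution via Lemma~\ref{lem:interaction-pic} leaves a time-ordered exponential of $e^{-\i\tau\ad_{H^{(\ell)}}}\Delta^{(\ell)}e^{\i\tau\ad_{H^{(\ell)}}}$, whose Dyson expansion must be reorganized by total $s$-degree, with the simplex volume $1/n!$ producing the factorial suppression and a crude count of the compositions of fixed order matching the stated bound. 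The remaining points are routine: $\lambda^{(\ell)}\le\lambda$ makes the hypothesis usable per block, the prefix/suffix maps are diamond-norm contractions because they are CPTP, the $K\to\infty$ limit is legitimate in the stated range of $s$, and the extra factor $M$ from the $M$-block sum is absorbed into the exponential sum via $M\cdot(8\lambda)^m\le(8\lambda M)^m$.
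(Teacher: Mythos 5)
There is a genuine gap in Step~2, and it is essential rather than cosmetic. In your telescoping identity the prefix $\mc A_k$ contains the approximate channels $\tilde\Phi_1,\dots,\tilde\Phi_{k-1}=(\mathcal E^{(i)}_{sM})^{1/(sM)}$, which themselves depend on $s$. Consequently your definition $\tilde{\mc D}_j:=M^j\sum_{k}\mc A_k\circ\tilde{\mc D}^{(k)}_{j+1}\circ\mc B_k$ produces $s$-dependent ``coefficients,'' so the identity $\tilde\rho_{s,2M+1}-\rho_{2M+1}=\sum_{j\ge1}s^j\tilde{\mc D}_j(\rho)$ you obtain is not a power series in $s$ with fixed superoperators. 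That property is precisely what the lemma is for: Lemma~\ref{lem_qsp, qDRIFT, richardson} applies Richardson extrapolation to $f(x)=f(0)+\sum_j x^j\Tr[O\tilde{\mc D}_j(\rho)]$ and cancels the first $m-1$ terms, which only works if the $\tilde{\mc D}_j$ do not vary with the sample point. In the paper's proof the coefficients are manifestly $s$-independent because the expansion is organized globally: one writes the whole interleaved circuit in the interaction picture (Lemma~\ref{lem:interaction-pic}) with a single piecewise-defined generator, and the Dyson series is re-indexed by total power of $\delta=sM$, with coefficients built only from the exact evolutions $\mc N_\tau$ and the fixed superoperators $\mc D^{(\ell)}_{k+1}$ of Lemma~\ref{lem:para qdrift channel}.

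The natural repair of your argument --- expand \emph{every} block via Step~1 and collect the total $s$-degree over all blocks --- restores $s$-independence but no longer yields the claimed bound from the per-block Watson estimates. Each per-block bound $(8\lambda)^{j_i}\sum_{m}(8\lambda)^m/m!$ already aggregates the number of insertions inside that block; multiplying such bounds over $r$ contributing blocks and summing over block choices replaces the paper's inner factor $\sum_{\ell\le j}(8\lambda M)^\ell/\ell!$ by something like $\sum_{r\le j}\bigl(M(e^{8\lambda}-1)\bigr)^r/r!$ (plus a $2^{j}$ from compositions of $j$), which is exponentially worse in $\lambda$ and would force a much smaller extrapolation step $s$, degrading the final depth from $\widetilde O(\lambda^2M^2)$. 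The paper avoids this double counting because the global time-simplex over $[0,2M+1]$ produces the combined factor $M^{\ell}/\ell!$ for $\ell$ insertions anywhere in the circuit; this tightness is exactly what is lost when one composes per-block norm bounds. (A smaller point: your ``$K\to\infty$'' step is only justified by the stated remainder bound when $8\lambda sM<1$, which the hypothesis $sM<1/2\lambda$ does not guarantee.) So the single-block reduction to Lemma~\ref{lem:watson qdrift} does not suffice; you need the global interaction-picture expansion, or at least per-block information finer than its norm bounds.
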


\begin{proof}
Let $\delta=sM < 1/2\lambda$, then as stated in Lemma \ref{lem:para qdrift channel}, each channel $\mathcal{E}^{(\ell)}_\delta$ has the form
\[
\mathcal{E}^{(\ell)}_\delta=e^{\i s \mc G^{(\ell)}(\delta)}, \quad \text{ where } \; \mc G^{(\ell)}(\delta)=\ad_{H^{(\ell)}}+\sum_{j=1}^{\infty} \delta^j \mc D_{j+1}^{(\ell)},
\]
with the superoperators bounded as $\|\mc D_j^{(\ell)}\|_{\diamond} \leq(4 \lambda)^j$.

Define two time-dependent superoperators $\mc K(t)$ and $\tilde{\mc K}(t)$ as follows:
\begin{equation*}        
\mc K(t)=\begin{cases} \mc \ad_{\log( V_{\lfloor t\rfloor/2})} & \text{ if } \lfloor t\rfloor \text{ is even}, \\ \i \ad_{H^{(\lceil t/2\rceil)}}  & \text{  if } \lfloor t\rfloor \text{ is odd}, \end{cases}  
\quad 
\tilde{\mc K}(t) =\begin{cases} \ad_{\log( V_{\lfloor t\rfloor/2})} & \text{ if } \lfloor t\rfloor \text{ is even}, \\ \i \mc G^{(\lceil t/2 \rceil)}(\delta) & \text{  if } \lfloor t\rfloor \text{ is odd}, \end{cases} 
\end{equation*}
for $t\in[0,2M+1]$. Let $\mathds{1}_A$ be the indicator function of $A=\{t\in [0,2M+1]: \lfloor t\rfloor \text{ is odd}\}$, then we have \begin{align*}
    \tilde{\mc K}(t)-\mc K(t) = \i \mathds{1}_{A}(t) \sum_{j=1}^{\infty} \delta^j \mc D_{j+1}^{(\lceil t/2\rceil)}.
\end{align*} 
Define the corresponding evolution operators $\mc N_t, \tilde{\mc N}_{s,t}$ as \begin{align*}
    \partial_t \mc N_t = \mc K(t)\mc N_t, \quad \partial_t \tilde{\mc N}_{s,t}=\tilde{\mc K}(t) \tilde{\mc N}_{s,t},
\end{align*}
which is consistent with the previous definition of ${\mc N}_{2M+1}$. 
For density operator $\rho$, define $\rho_t, \tilde{\rho}_{s, t}$ as \[
\rho_t = \mc N_t(\rho)=\exp_{\mc{T}}\left( \int_{0}^{t} \mc K(\tau)\,\d \tau \right)(\rho), \quad \tilde{\rho}_{s,t}= \tilde{\mc N}_{s,t}(\rho)=\exp_{\mc{T}}\left( \int_{0}^{t} \tilde{\mc K}(\tau)\,\d \tau \right)(\rho).\]
Then we have
\[
\rho_{2M+1}=\Big(\mc V_0\prod_{\ell=1}^M e^{\i \ad_{H^{(\ell)}}} \mc V_\ell\Big)(\rho),     \qquad \tilde{\rho}_{s,2M+1} =  \Big(\mc V_0\prod_{\ell=1}^M \big(\mathcal{E}^{(\ell)}_{sM}\big)^{1/(sM)}\mc V_\ell\Big)(\rho),
\]
as in Eqs.~\eqref{eq:exact qdrift state} and~\eqref{eq:appro qdrift state}. We can use Lemma \ref{lem:interaction-pic} to provide a series expansion in $\delta$ of the difference between $\tilde{\rho}_{s,t}$ and $\rho_t$. Indeed, we have 
{\small\begin{align*}
    \tilde{\rho}_{s,2M+1} & = \mc N_{2M+1}\exp_{\mc T}\left(\int_{0}^{2M+1}\mc N_{\tau}^{-1} (\tilde{\mc K}(\tau)-\mc K(\tau))\mc N_{\tau}\, \d \tau\right)(\rho) \\
    & = \mc N_{2M+1}\bigg(\sum_{j = 0}^{\infty} \int_{0}^{2M+1} \d \tau_1\int_{0}^{\tau_1}\d \tau_2 \cdots\int_{0}^{\tau_{j-1}}\d \tau_j \prod_{\ell=j}^{1} \Big(\mc N_{\tau_\ell}^{-1} (\tilde{\mc K}(\tau_\ell)-\mc K(\tau_\ell))\mc N_{\tau_\ell}\Big)\bigg)(\rho)\\
    & =  \rho_{2M+1}+\bigg(\sum_{j = 1}^{\infty} \int_{0}^{2M+1} \d \tau_1\int_{0}^{\tau_1}\d \tau_2 \cdots\int_{0}^{\tau_{j-1}}\d \tau_j \prod_{\ell=j}^{1} \Big(\sum_{k=1}^{\infty}\i\mathds{1}_A(\tau_\ell)\delta^{k}\mc N_{\tau_\ell}^{-1}\mc D_{k+1}^{(\lceil \tau_{\ell}/2\rceil)}\mc N_{\tau_\ell}\Big)\bigg)(\rho) \\
    & = \rho_{2M+1}+ \bigg(\sum_{k = 1}^{\infty}  \delta^{k} \sum_{j=1}^{\infty} \sum_{\substack{k_1,\ldots, k_j\ge 1\\ k_1+\cdots k_j=k}}\int_{0}^{2M+1} \d \tau_1\int_{0}^{\tau_1}\d \tau_2 \cdots\int_{0}^{\tau_{j-1}}\d \tau_j \prod_{\ell=j}^1 \Big(\i \mathds{1}_A(\tau_\ell)\mc N_{\tau_\ell}^{-1} \mc D_{k_\ell+1}^{(\lceil\tau_\ell/2\rceil)}\mc N_{\tau_l}\Big)\bigg)(\rho)\\
    & =: \rho_{2M+1}+\sum_{j=1}^{\infty}\tilde{\mc D}_{k}(\rho).
\end{align*} }The diamond norm of $\tilde{\mc D}_{k}$ can be bounded by
{\small\begin{align*}
    \|\tilde{\mc D}_{k}\|_{\diamond} & \le \sum_{j=1}^{\infty} \sum_{\substack{k_1,\ldots, k_j\ge 1\\ k_1+\cdots k_j=k}}\int_{0}^{2M+1} \d \tau_1\int_{0}^{\tau_1}\d \tau_2 \cdots\int_{0}^{\tau_{j-1}}\d \tau_j \prod_{\ell=j}^1 \Big( \mathds{1}_A(\tau_\ell)\mc \|\mc N_{\tau_\ell}^{-1}\|_{\diamond} \|\mc D_{k_\ell+1}^{(\lceil\tau_\ell/2\rceil)}\|_{\diamond}\mc \|\mc N_{\tau_l}\|_{\diamond}\Big) \\
    &\le \sum_{j=1}^{\infty} \sum_{\substack{k_1,\ldots, k_j\ge 1\\ k_1+\cdots k_j=k}}\int_{0}^{2M+1} \d \tau_1\int_{0}^{\tau_1}\d \tau_2 \cdots\int_{0}^{\tau_{j-1}}\d \tau_j \prod_{\ell=j}^1 \Big( \mathds{1}_A(\tau_\ell)(4\lambda)^{k_{\ell}+1}\Big)
    \\
    &=\sum_{j=1}^{k} (4\lambda)^{k}\frac{(4\lambda M)^{j}}{j!} \sum_{\substack{k_1,\ldots, k_j\ge 1\\ k_1+\cdots +k_j=k}}1\\
    &\le  (8\lambda)^{k}\sum_{j=1}^{k}\frac{(8\lambda M)^{j}}{j!},
\end{align*}}where the second line follows from that $\mc N_t$ and $\mc N_t^{-1}$ are unitary channels preserving diamond norm, the third line follows from  \[
\int_{0}^{2M+1} \d \tau_1\int_{0}^{\tau_1}\d \tau_2 \cdots\int_{0}^{\tau_{j-1}}\d \tau_j \prod_{\ell=j}^1  \mathds{1}_A(\tau_\ell)= \frac{\mathrm{vol}(A^{\otimes \ell})}{\ell !}=\frac{M^\ell}{\ell!},
\]
and the fourth line follows from \[
\sum_{\substack{k_1,\ldots, k_j\ge 1\\ k_1+\cdots +k_j=k}}1=\binom{k+j-1}{j-1}\le 2^{k+j}.
\]
\end{proof}

We can apply the bound obtained in Lemma \ref{lem_QSP+qDRIFT} to choose an appropriate parameter $s$ used in the Richardson extrapolation procedure to obtain the desired accuracy. Formally, we have

\begin{lem}
\label{lem_qsp, qDRIFT, richardson}
    Suppose that $8\lambda M \ge 1$. For any $\eps\in (0,1)$, set $m = \lceil \log(1/\eps)\rceil$ and  \[
    s=(8 \lambda M)^{-2} ({4C\log m}/{\varepsilon})^{-1 / m},\]
    with the constant $C$ defined in Lemma~\ref{lem_richardson extrapolation}.
    Let $O$ be an observable, define the $m$-th order Richardson estimator as
    \begin{equation*}
        \langle \tilde{O}_{m,s}\rangle:= \sum_{i=1}^m b_i \Tr\left[ O  \tilde{\rho}_{s_i, 2M+1}\right],
    \end{equation*}
    where $b_i, r_i$, and $s_i=1/(Mr_i)$ are given as
    in Lemma \ref{lem_richardson extrapolation} with $t =1/M$. 
    Then, the extrapolation error is bounded by
    \[
    \left|\langle \tilde{O}_{m,s}\rangle-\Tr\left[O \rho_{2M+1}\right]\right| \leq \varepsilon \|O\|/2,
    \]
\end{lem}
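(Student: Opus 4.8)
The plan is to read off $\langle\tilde O_{m,s}\rangle$ as the Richardson extrapolation of Lemma~\ref{lem_richardson extrapolation} applied to the scalar function $f(x):=\Tr[O\,\tilde\rho_{x,2M+1}]$, and then check that $f$ meets that lemma's hypotheses with a remainder small enough for the prescribed $s$. By Lemma~\ref{lem_QSP+qDRIFT}, for $x\in(0,1/2\lambda M)$ we have $\tilde\rho_{x,2M+1}-\rho_{2M+1}=\sum_{k\ge1}x^k\tilde{\mc D}_k(\rho)$, hence
\[
f(x)=\Tr[O\rho_{2M+1}]+\sum_{k\ge1}c_k x^k,\qquad c_k:=\Tr[O\tilde{\mc D}_k(\rho)],
\]
so $f(0)=\Tr[O\rho_{2M+1}]$ is exactly the target, and $f$ is real-valued (as $O$ is Hermitian and each $\tilde\rho_{x,2M+1}$ is a density operator). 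Hölder's inequality and the diamond-norm bound of Lemma~\ref{lem_QSP+qDRIFT} give $|c_k|\le\|O\|\,\|\tilde{\mc D}_k\|_\diamond\le\|O\|\,(8\lambda M)^k\sum_{\ell=1}^k(8\lambda M)^\ell/\ell!$. Since $\langle\tilde O_{m,s}\rangle=\sum_i b_i f(s_i)$, this is precisely the setup of Lemma~\ref{lem_richardson extrapolation} with $t=1/M$.

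Writing $f=P_m+R_m$ with $P_m(x)=f(0)+\sum_{k=1}^{m-1}c_k x^k$ and $R_m(x)=\sum_{k\ge m}c_k x^k$, Lemma~\ref{lem_richardson extrapolation} gives $\langle\tilde O_{m,s}\rangle=F^{(m)}(s)$ and $|\langle\tilde O_{m,s}\rangle-\Tr[O\rho_{2M+1}]|\le\|\b\|_1\max_j|R_m(s_j)|\le C\log m\cdot\sum_{k\ge m}|c_k|s_m^k$, where $s_m=\max_j s_j$. The crux is bounding this remainder series, and this is exactly where the scaling $s\propto(8\lambda M)^{-2}$ is used: since $8\lambda M\ge1$, for $\ell\le k$ one has $(8\lambda M)^k\cdot\tfrac{(8\lambda M)^\ell}{\ell!}\cdot(8\lambda M)^{-2k}=\tfrac{(8\lambda M)^{\ell-k}}{\ell!}\le\tfrac1{\ell!}$, so with $q:=(4C\log m/\varepsilon)^{-1/m}<1$ and the sharper estimate $s_m\le s/4$ available from the explicit $r_j$ in Lemma~\ref{lem_richardson extrapolation},
\[
|c_k|\,s_m^k\le\|O\|\,(q/4)^k\sum_{\ell=1}^k\tfrac1{\ell!}\le 2\|O\|\,(q/4)^k .
\]
Summing the geometric series, using $q^m=\varepsilon/(4C\log m)$, and multiplying by $\|\b\|_1\le C\log m$ cancels the $\log m$ and leaves a bound $\le\varepsilon\|O\|/6\le\varepsilon\|O\|/2$. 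Along the way one checks the validity conditions: $s=(8\lambda M)^{-2}(4C\log m/\varepsilon)^{-1/m}<(8\lambda M)^{-1}\le1/(2\lambda M)$, and $s_j\le s_m\le s/4<1/(2\lambda M)$ with $1/(s_jM)=r_j\in\mb Z$, so the series expansion of $f$ is legitimate at every sample point $s_j$ and each $\tilde\rho_{s_j,2M+1}$ is a well-defined qDRIFT-interleaved state.

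The main obstacle is precisely this last estimate. A naive bound $\sum_{\ell=1}^k(8\lambda M)^\ell/\ell!\le\mathrm{e}^{8\lambda M}$ would leave an uncontrolled factor $\mathrm{e}^{8\lambda M}$ in the remainder and ruin the $\varepsilon\|O\|$ guarantee; the resolution is that the prescribed $s$ carries two factors of $(8\lambda M)^{-1}$, one cancelling the $(8\lambda M)^k$ prefactor in $|c_k|$ and the other collapsing the exponential sum into $\sum_{\ell\ge1}1/\ell!<\mathrm{e}$. Everything else is routine: confirming $q<1$ (which holds once $m\ge2$, i.e.\ for the relevant range of $\varepsilon$) so the geometric series converges with a constant prefactor, checking $s_j<1/(2\lambda M)$ and $1/(s_jM)\in\mb Z$ so Lemma~\ref{lem_QSP+qDRIFT} applies at each $s_j$, and tracking the absolute constants down to $1/2$ — which is comfortable thanks to the extra factor $4^{-m}$ coming from $s_m\le s/4$.
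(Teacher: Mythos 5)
Your proposal is correct and follows essentially the same route as the paper's proof: define $f(x)=\Tr[O\,\tilde\rho_{x,2M+1}]$, expand it via Lemma~\ref{lem_QSP+qDRIFT}, invoke the Richardson remainder bound of Lemma~\ref{lem_richardson extrapolation}, and use the factor $(8\lambda M)^{-2}$ in $s$ (with $8\lambda M\ge 1$) to absorb both the $(8\lambda M)^j$ prefactor and the factorial sum before summing the geometric tail and cancelling $\log m$ against $\|\b\|_1\le C\log m$. Your extra use of $s_m\le s/4$ (which indeed follows from the choice $K\ge 2t/s$ inside Lemma~\ref{lem_richardson extrapolation}) is a harmless refinement of the paper's simpler bound $s_i\le s$.
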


\begin{proof}
    Define the function $f(x)$ as
    \[\begin{aligned}
        f(x)&:= \Tr\left[O \tilde{\rho}_{x,2M+1}\right] \quad \text{ for }x\in(0,1) ,\\
        f(0)&:= \Tr\left[O \rho_{2M+1}\right].
    \end{aligned}
    \]
    Then for $x\in(0,1/2\lambda M)$, by Lemma \ref{lem_QSP+qDRIFT}, we have 
    \begin{equation*}
        f(x)=f(0)+\sum_{j \geq 1} x^j \Tr\left[O \tilde{\mc D}_{j}(\rho)\right].
    \end{equation*}
    The $m$-th order Richardson extrapolation procedure removes all terms up to $O(s^{m-1})$ in the series as shown in Lemma \ref{lem_richardson extrapolation}, which yields
    \begin{equation*}
        \bigg|\sum_{i=1}^m b_i \Tr\left[ O  \tilde{\rho}_{s_i, 2M+1}\right]- \Tr\left[O \rho_{2M+1}\right]\bigg|\leq \|\b\|_1\max_{i} \left|R_m(s_i) \right|.
\end{equation*} By Lemma~\ref{lem_richardson extrapolation}, we have $s_i\le s$ for any $i\in[m]$.
   Thus, by Lemma~\ref{lem_QSP+qDRIFT},
    \begin{align*}
         \max_{i} \left|R_m(s_i)\right| 
        &\leq \|O\|  \sum_{j \geq m} s^j (8 \lambda M)^j \sum_{\ell=1}^{j} \frac{(8 \lambda M)^\ell}{\ell!} \\
        &\leq \|O\|  \sum_{j \geq m} s^j (8 \lambda M)^{2j} \sum_{\ell=1}^{j} \frac{1}{\ell!} \\
        & = \|O\|(s (8\lambda M)^2)^{m}\frac{1}{1-s(8\lambda M)^2 }(e-1) \\
        & \le 2\|O\| \frac{\eps}{8\|\b\|_1} \frac{1}{1-(\eps/8\|\b\|_1)^{1/m}} \\
        & \le\frac{\eps \|O\| }{4\|\b\|_1}\frac{1}{1-(\eps/8)^{1/\log(1/\eps)}} \le  \frac{\eps\|O\|}{2\|\b\|_1} 
    \end{align*}
   Combining all the above, we obtain
    \[
    \left|\langle \tilde{O}_{m,s}\rangle-\Tr\left[O \rho_{2M+1}\right] \right| \leq \varepsilon \|O\|/2.
    \]
    This completes the proof.
\end{proof}

Combining Lemma \ref{lem_qsp, qDRIFT, richardson} with the parameter choice specified in Lemma \ref{lem_richardson extrapolation}, we can estimate the complexity for computing $\langle \tilde{O}_{m,s}\rangle$. 
We summarize this as the main theorem of this section. In the following, we assume that each $V_{\ell}$  can be implemented in constant time. 

\begin{thm}[Interleaved Hamiltonian evolution with qDRIFT]
\label{thm_HSVT with qDRIFT}
For $\ell\in[M]$, let 
$$
H^{(\ell)}=\sum_{k=1}^{L_\ell} \lambda_k^{(\ell)} H_{k}^{(\ell)}
$$ 
be Hermitian decompositions of Hermitian operators $H^{(\ell)}$ of the same dimension, where $\lambda^{(\ell)}=\sum_k |\lambda_k^{(\ell)}|$ and $\|H_k^{(\ell)}\|=1$ for all $\ell,k$. Let $\lambda=\max_\ell\{\lambda^{(\ell)}\}$, and $V_0,\ldots,V_M$ be unitaries. Assume that $\lambda M \ge 1$. Define
\[
W := V_0 \prod_{\ell=1}^M e^{\i H^{(\ell)}} V_{\ell}.    
\] 
Suppose $\varepsilon\in(0,1)$ is the precision parameter and each evolution $e^{\i H^{(\ell)}_k\tau}$ can be implemented in $\widetilde{O}(1)$ time for any $\tau\in \mathbb{R}$. Then there is an algorithm (see Algorithm \ref{alg: HSVT with qDRIFT}) that, for any observable $O$ and initial state $\ket{\psi_0}$, estimates 
$$ \bra{\psi_0} W^{\dagger} O W  \ket{\psi_0}, $$ 
with an additive error at most $\varepsilon\|O\|$, with time complexity $$
\widetilde{O}\left((\lambda^2 M^2+M)/\varepsilon^2\right)
$$ and maximum quantum circuit depth $$\widetilde{O}\left(\lambda^2 M^2+M\right).$$
\end{thm}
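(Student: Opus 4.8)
The strategy is to combine the extrapolation estimator of Lemma~\ref{lem_qsp, qDRIFT, richardson} with the randomized qDRIFT primitive of Algorithm~\ref{alg: qDRIFT} and a final Hoeffding-type averaging, then tally the gate count carefully. First I would set up the algorithm: choose $m=\lceil\log(1/\eps)\rceil$, the sample points $s_i=1/(Mr_i)$ and coefficients $b_i$ exactly as in Lemma~\ref{lem_richardson extrapolation} with $t=1/M$, and the initial point $s=(8\lambda M)^{-2}(4C\log m/\eps)^{-1/m}$ as in Lemma~\ref{lem_qsp, qDRIFT, richardson}. For each $i\in[m]$ the procedure runs the interleaved circuit in which every exact evolution $e^{\i H^{(\ell)}}$ is replaced by $(\mathcal E^{(\ell)}_{s_iM})^{1/(s_iM)}$, i.e.\ $1/(s_iM)=r_i$ rounds of qDRIFT with time-step $s_iM\lambda^{(\ell)}$, interleaved with the fixed unitaries $V_\ell$; it then measures $O$ on the output of this random circuit, repeats to average out both the qDRIFT randomness and the measurement shot noise, and finally forms $\langle\tilde O_{m,s}\rangle=\sum_i b_i\,\widehat{\Tr[O\tilde\rho_{s_i,2M+1}]}$.

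Next I would split the total error into two pieces. The \emph{bias} term, $|\langle\tilde O_{m,s}\rangle_{\mathrm{exact}}-\Tr[OW\rho W^\dagger]|\le\eps\|O\|/2$, is exactly Lemma~\ref{lem_qsp, qDRIFT, richardson}; here one must check the hypothesis $s_i<1/(2\lambda M)$ needed for Lemma~\ref{lem_QSP+qDRIFT}, which follows from $\max_i s_i\le s$ and $s\le(8\lambda M)^{-2}\le 1/(2\lambda M)$ since $\lambda M\ge1$. The \emph{statistical} term comes from estimating each $\Tr[O\tilde\rho_{s_i,2M+1}]$ by the empirical mean of the random variable "run the $i$-th qDRIFT circuit once, measure $O$", which lies in $[-\|O\|,\|O\|]$ and has the correct expectation because the qDRIFT channel is the expectation over the sampled unitaries. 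Since $\langle\tilde O_{m,s}\rangle=\sum_i b_i(\cdots)$ with $\|\mathbf b\|_1=O(\log m)$, a union bound plus Hoeffding shows that $N=\widetilde O(\|\mathbf b\|_1^2/\eps^2)=\widetilde O(1/\eps^2)$ repetitions per index $i$ bring the statistical error below $\eps\|O\|/2$ with the desired success probability; adding the two halves gives total error $\eps\|O\|$.

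Finally I would count resources. Each coherent run for index $i$ consists of $M$ blocks, where block $\ell$ applies $r_i$ qDRIFT steps, each an $\widetilde O(1)$-gate evolution $e^{\i s_iM\lambda^{(\ell)}H^{(\ell)}_{k}}$, plus the $M+1$ constant-cost unitaries $V_\ell$; hence the circuit depth for index $i$ is $\widetilde O(Mr_i+M)$. By Lemma~\ref{lem_richardson extrapolation}, $r_i=O(\max\{m^3,m^2t/s\}/i^2)=O(\max\{m^3,m^2(8\lambda M)^2(4C\log m/\eps)^{1/m}/M\}/i^2)$; using $t=1/M$, $(4C\log m/\eps)^{1/m}=O(1)$ for $m=\lceil\log(1/\eps)\rceil$, and $m=\widetilde O(1)$, this gives $\max_i r_i=\widetilde O(1+\lambda^2M)$, so the maximum circuit depth over all blocks is $Mr_{\max}+M=\widetilde O(\lambda^2M^2+M)$, with a single ancilla qubit inherited from the GQSP/interleaved structure. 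Multiplying by the $\widetilde O(1/\eps^2)$ repetitions (the $m=\widetilde O(1)$ indices only contribute logarithmically) yields total time complexity $\widetilde O((\lambda^2M^2+M)/\eps^2)$. The main obstacle is the resource accounting for $r_i$: one must verify that the extrapolation parameter $s$ is large enough that $t/s=O(\lambda^2M)$ rather than something $\eps$-dependent — this is precisely why the exponent $-1/m$ with $m=\Theta(\log(1/\eps))$ is chosen, so that $(\log(1/\eps)/\eps)^{1/m}$ collapses to a constant and the $1/\eps$ scaling of plain qDRIFT is replaced by $\polylog(1/\eps)$ in the depth.
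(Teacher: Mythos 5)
Your proposal is correct and follows essentially the same route as the paper's proof: invoke the Richardson estimator of Lemma~\ref{lem_qsp, qDRIFT, richardson} (with $t=1/M$ and $\rho=\ket{\psi_0}\bra{\psi_0}$) for the bias, estimate each $\Tr[O\tilde\rho_{s_i,2M+1}]$ by Hoeffding-averaged measurements to accuracy $\eps\|O\|/(2\|\b\|_1)$ with a union bound, and use $r_i=t/s_i\in\mathbb{Z}$ together with $(\log(1/\eps)/\eps)^{1/m}=O(1)$ to get $1/s=O(\lambda^2M^2)$ and hence depth $\widetilde O(\lambda^2M^2+M)$ and total cost $\widetilde O((\lambda^2M^2+M)/\eps^2)$. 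Your explicit check that $\max_i s_i\le s\le 1/(2\lambda M)$ (so Lemma~\ref{lem_QSP+qDRIFT} applies) and that each single-shot outcome is unbiased because the qDRIFT channel is an expectation over sampled unitaries only makes explicit what the paper leaves implicit.
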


\begin{proof}
Let $\langle \tilde{O}_{m,s}\rangle$ be the Richardson estimator defined in Lemma~\ref{lem_qsp, qDRIFT, richardson} with $\rho=\ket{\psi_0} \bra{\psi_0}$ and $t = 1/M$. As $8\lambda M \ge 1$, by Lemma~\ref{lem_qsp, qDRIFT, richardson}, we have \[
    \bigl| \langle \tilde{O}_{m,s}\rangle-\Tr[O (W\ket{\psi_0}\bra{\psi_0}W^{\dagger})] \bigr|  = \bigl| \langle \tilde{O}_{m,s}\rangle-\bra{\psi_0} W^{\dagger} O W  \ket{\psi_0}  \bigr| \leq \varepsilon \|O\|/2.
    \]
The Richardson estimator is a linear combination of several terms, $\langle \tilde{O}_{m,s}\rangle = \sum_i b_i \Tr[O\tilde{\rho}_{s_i, 2M+1}]$, where the coefficient vector $\b$ satisfies $\|\b\|_1=O(\log m) = O(\log\log(1/\eps))$ by Lemma \ref{lem_richardson extrapolation}. To achieve a total additive error of $\varepsilon \|O\|/2$ for $\langle \tilde{O}_{m,s}\rangle$, we can estimate each individual term \begin{align*}
    \tr[O\tilde{\rho}_{s_i, 2M+1}] = \tr\bigg[O\Big(\mc V_0\prod_{\ell=1}^M \big(\mathcal{E}^{(\ell)}_{s_iM}\big)^{1/(s_iM)}\mc V_\ell\Big)(\rho) \bigg],
\end{align*}
with an additive error at most $\eps\|O\|/(2\|\b\|_1)=O(\eps\|O\|/\log\log(1/\eps))$. 
From Lemma \ref{lem_richardson extrapolation}, we have \begin{align*}
    \frac{1}{s_i M} = \frac{t}{s_i} = r_i\in \mathbb{Z}. 
\end{align*} Therefore, we can prepare $\tilde{\rho}_{s_i, 2M+1}$ by a quantum circuit of depth $O(M+M/(s_iM)) = O(M+Mr_i)$. 
As $\|\b\|_1=\Theta(\log (m))$, $(\|\b\|_1)^{1 / m}=\Theta(1)$. 
Then, we have \[
1/s =  O\left(\lambda^2 M^2\dfrac{(\|\b\|_1)^{1/m}}{\eps^{1/\log(1/\eps)}}\right) = {O}\left(\lambda^2 M^2\right).\]  Then we can estimate  $\tr[O\tilde{\rho}_{s_i, 2M+1}]$ with an additive error at most $\eps\|O\|/(2\|\b\|_1)$, with a success probability at least $1-O(1/m)$, by \begin{align*}
    O\left(\frac{\|\b\|_1^2}{\eps^2}\log m\right)=O\left(\frac{(\log\log(1/\eps))^3}{\eps^2}\right)
\end{align*} repeated running of the quantum circuit of depth $O(M+Mr_i)$ and then measure the observable $O$. By the union bound, the overall success probability is at least a constant.

From Lemma \ref{lem_richardson extrapolation}, we also have $r_1=O(\max\{m^3, m^2t/s\})$, and hence the maximum quantum circuit depth is 
\begin{align*}
    O(M+M\max_i \{r_i\}) 
    &= O(M+Mr_1)\\
    &= O\left(\max\{Mm^3, m^2/s\}\right)\\
    &=O\left(\max\{\lambda^2 M^2 \log^2(1/\varepsilon),M\log^3(1/\eps)\}\right).
\end{align*}
Constructing the Richardson estimator $\langle \tilde{O}_{m,s}\rangle$ requires estimating all $m$ sample points. Each point requires $O((\log\log(1/\eps))^3/\eps^2)$ runs of a circuit with depth up to $O(M+Mr_i)$. This implies that the total number of gates is 
\begin{align*}
    O\left(\frac{(\log\log(1/\eps))^3}{\eps^2}\sum_{i=1}^m(M+Mr_i)\right) 
    &=O\left(\frac{(\log\log(1/\eps))^3}{\eps^2}\sum_{i=1}^mM\max\{m^3, m^2t/s\}/i^2\right)\\
    &=O\left(\frac{(\log\log(1/\eps))^3}{\eps^2}\max\{\lambda^2 M^2\log^2(1/\eps),M \log^3(1/\eps)\}\right)\\
    & =\widetilde{O}\left(\dfrac{\lambda^2 M^2 +M}{\eps^2}\right),
\end{align*}
where the second line follows from $\sum_{i=1}^m i^{-2}\le \pi^2/6 = O(1)$.
\end{proof}

To accurately estimate the target expectation value, the algorithm repeatedly executes the circuit $W$, each time using a different qDRIFT step size $s_i$ chosen according to Lemma~\ref{lem_richardson extrapolation}. Each run yields an estimate $f(s_i)$, the expectation value of $O$ with respect to the corresponding output state. The final estimate is then obtained as the linear combination $\sum_i b_i f(s_i)$, where the coefficients $b_i$ are specified by Lemma~\ref{lem_richardson extrapolation}. The complete procedure in the proof of Theorem~\ref{thm_HSVT with qDRIFT} is summarized via Algorithm~\ref{alg: HSVT with qDRIFT}.

\begin{breakablealgorithm}
\caption{Interleaved Hamiltonian evolution with qDRIFT [Theorem \ref{thm_HSVT with qDRIFT}]}
\begin{algorithmic}[1]
\REQUIRE $M$ Hermitian operators $H^{(\ell)}=\sum_{k=1}^{L_\ell} \lambda_k^{(\ell)} H_{k}^{(\ell)}$ in Hermitian decomposition, 

\qquad with $\|H_k^{(\ell)}\|=1,~\lambda^{(\ell)}=\sum_k |\lambda_k^{(\ell)}|,~\text{and}~\lambda=\max_j\{\lambda^{(\ell)}\}$ for all $\ell\in[M]$. 

\qquad qDRIFT protocol that implements channel $\mathcal{E}^{(\ell)}_s:= \sum_k p_k^{(\ell)} e^{-\i s  \lambda^{(\ell)} \mathrm{ad}_{H_k^{(\ell)}}}$.
\hfill [See Algorithm \ref{alg: qDRIFT}]

\qquad A quantum state $\ket{\psi_0}$ and an observable $O$.

\qquad $M+1$ unitaries $V_0,\ldots,V_M$, with channel $\mc V_\ell$ defined via $\mc V_\ell(\rho) := V_\ell^\dagger \rho V_\ell$.

\qquad $W := V_0 \prod_{\ell=1}^M e^{\i H^{(\ell)}} V_{\ell}$.\\~\\

\ENSURE $ \bra{\psi_0} W^{\dagger} O W  \ket{\psi_0}  \pm \varepsilon\|O\|$.\\~\\

\STATE Choose $m = \lceil \log(1/\eps)\rceil$, $s=(8 \lambda M)^{-2} ({4C\log m}/{\varepsilon})^{-1 / m}$ with $C$ defined in Lemma~\ref{lem_richardson extrapolation}, and compute $b_1,\ldots,b_m$ and $s_1,\ldots,s_m$ via Lemma~\ref{lem_richardson extrapolation} with $t=1/M$.

\FOR{$i=1, \ldots, m$}
    \STATE Estimate $f(s_i) := \Tr\left[O \left(\mc V_0\prod_{\ell=1}^M (\mathcal{E}_{s_i M}^{(\ell)})^{1/(s_i M)}\mc V_\ell\right)\left(\ket{\psi_0} \bra{\psi_0} \right) \right]$ with an additive error at most $\frac{\|O\|\varepsilon}{2\|\b\|_1}$ and denote the result by $\hat{f}_i$.
\ENDFOR
\STATE \textbf{return} $\sum_{i=1}^m b_i \hat{f}_i$.

\end{algorithmic}
\label{alg: HSVT with qDRIFT}
\end{breakablealgorithm}

As noted earlier, the generalized QSP circuit in Eq.\eqref{eq:gqsp} is a special case of the interleaved structure $W$. For any Hamiltonian $H$, Theorem~\ref{thm:generalized-QSP} guarantees that this interleaved circuit of controlled Hamiltonian evolutions and single-qubit rotations implements a Laurent polynomial $P(e^{\i H})$, with $|P(x)|\leq 1$ on the unit circle $\mathbb{T}$. By Theorem~\ref{thm_HSVT with qDRIFT}, we therefore obtain a procedure which, given an initial state $\ket{\psi_0}$ and observable $O$, estimates
$$
\braket{\psi_0|P(e^{\i H})^{\dag} O P(e^{\i H})|\psi_0},
$$
to additive accuracy $\eps\|O\|$. In this setting, the circuit length satisfies $M=\widetilde{O}(d)$, where $d$ is the degree of the Laurent polynomial. Formally, we have the following result:

\begin{thm}[Generalized QSP with qDRIFT]
\label{thm_gqsp with qdrift}
    Let $H=\sum_{k=1}^L \lambda_k H_k$ be a Hermitian decomposition of $H$, with $\|H_k\|=1$, and $\lambda=\max\{\sum_{k}|\lambda_k|,1\}$. Also, let $\varepsilon\in(0,1/2)$ be the precision parameter. 
    Furthermore, suppose $P(z)=\sum_{j=-d}^d a_j z^j$ is a $d$-degree Laurent polynomial bounded by $1$ on $\mathbb{T}:=\{ x\in \mathbb{C} : |x|=1 \}$. 
    Then there exists a quantum algorithm that, for any observable $O$ and initial state $\ket{\psi_0}$, estimates 
    $$
    \bra{\psi_0} (P(e^{\i H})^\dagger O P(e^{\i H}) \ket{\psi_0},$$ 
    with an additive error at most $\varepsilon \|O\|$ and constant success probability, using only a single ancilla qubit. 
    The maximum quantum circuit depth is
    \be
    \widetilde{O}\left(\lambda^2 d^2\right),
    \ee
    and the total time complexity is
    \be
    \widetilde{O}\left(\lambda^2 d^2/\varepsilon^2\right).
    \ee
\end{thm}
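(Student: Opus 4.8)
The plan is to deduce this as a corollary of Theorem~\ref{thm_HSVT with qDRIFT} by recognizing the generalized QSP circuit of Theorem~\ref{thm:generalized-QSP} as a particular instance of the interleaved sequence $W$ in Eq.~\eqref{eq:inteleaved-circuit-definition}. Work on $n+1$ qubits ($n$ system qubits plus the single GQSP control qubit) and set $\widetilde H_0 := \mathrm{diag}(H,0)$ and $\widetilde H_1 := \mathrm{diag}(0,-H)$, so that $e^{\i\widetilde H_0}=c_0\text{-}U$ and $e^{\i\widetilde H_1}=c_1\text{-}U^\dagger$ in the notation of Eq.~\eqref{control U:eq}. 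Expanding the products on the right-hand side of Eq.~\eqref{eq:gqsp} and comparing with Eq.~\eqref{eq:inteleaved-circuit-definition}, one sees that the GQSP circuit is exactly $W = V_0\prod_{\ell=1}^{M} e^{\i H^{(\ell)}}V_\ell$ with $M=2d$, where the $V_\ell$ are the $2d+1$ single-qubit $U(2)$ rotations $R(\theta_j,\phi_j,\cdot)\otimes I$ (each of $O(1)$ cost, as required in Theorem~\ref{thm_HSVT with qDRIFT}), and $H^{(\ell)}=\widetilde H_1$ for $1\le\ell\le d$ while $H^{(\ell)}=\widetilde H_0$ for $d<\ell\le 2d$.

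Next I would check that $\widetilde H_0$ and $\widetilde H_1$ meet the hypotheses of Theorem~\ref{thm_HSVT with qDRIFT}. Each inherits a linear-combination-of-Hermitians decomposition from $H=\sum_k\lambda_k H_k$, namely $\widetilde H_0=\sum_k\lambda_k\,\mathrm{diag}(H_k,0)$ and $\widetilde H_1=-\sum_k\lambda_k\,\mathrm{diag}(0,H_k)$, whose summands are unit-norm Hermitian operators with the same $\ell_1$-weight $\sum_k|\lambda_k|$. The per-term evolution $e^{\i\tau\,\mathrm{diag}(H_k,0)}$ is the controlled gate $\mathrm{diag}(e^{\i\tau H_k},I)$, which costs $\widetilde O(1)$ and uses no ancilla beyond the $n+1$ qubits already present (given the standing assumption that $e^{\i\tau H_k}$ is implementable in $\widetilde O(1)$ time), and likewise for $\widetilde H_1$. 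Finally, Eq.~\eqref{eq:gqsp} says $W$ is a block encoding of $P(e^{\i H})$ in the ancilla-$\ket0$ block, so taking the observable $O':=\ket0\bra0\otimes O$ on $n+1$ qubits (with $\|O'\|=\|O\|$) and initial state $\ket{\psi_0'}:=\ket0\otimes\ket{\psi_0}$ gives $\braket{\psi_0'|W^\dagger O' W|\psi_0'}=\braket{\psi_0|P(e^{\i H})^\dagger O\,P(e^{\i H})|\psi_0}$.

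It then remains to apply Theorem~\ref{thm_HSVT with qDRIFT} to this $W$ with observable $O'$, initial state $\ket{\psi_0'}$, $M=2d$, and weight $\lambda=\max\{\sum_k|\lambda_k|,1\}$; if $\sum_k|\lambda_k|<1$ I would first inflate the $\ell_1$-weight of the decomposition to $1$ by adjoining a pair of commuting terms $\pm cI$ (which leave $H$ and hence $P(e^{\i H})$ unchanged and contribute only an ancilla phase gate under the controlled embedding), so that the hypothesis $\lambda M\ge1$ holds. The theorem then yields an estimate of $\braket{\psi_0|P(e^{\i H})^\dagger O\,P(e^{\i H})|\psi_0}$ with additive error at most $\eps\|O'\|=\eps\|O\|$, maximum circuit depth $\widetilde O(\lambda^2 M^2+M)=\widetilde O(\lambda^2 d^2)$, and total time $\widetilde O((\lambda^2 M^2+M)/\eps^2)=\widetilde O(\lambda^2 d^2/\eps^2)$, using only the single GQSP ancilla qubit.

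Since Theorem~\ref{thm_HSVT with qDRIFT} already carries the substantive content---the interaction-picture error series of Lemma~\ref{lem_QSP+qDRIFT} together with the Richardson extrapolation of Lemma~\ref{lem_qsp, qDRIFT, richardson}---I do not expect a genuine obstacle; the remaining work is purely structural. The points that need care are bookkeeping: verifying that reading Eq.~\eqref{eq:gqsp} in circuit order reproduces the pattern of Eq.~\eqref{eq:inteleaved-circuit-definition} with $M=2d$ and the stated assignment of $\widetilde H_0,\widetilde H_1$; confirming that controlling each $e^{\i\tau H_k}$ adds only $\widetilde O(1)$ gates and no new ancilla; noting $\|\ket0\bra0\otimes O\|=\|O\|$ so the accuracy transfers losslessly; and disposing of the edge case $\sum_k|\lambda_k|<1$ against the $\lambda M\ge1$ hypothesis.
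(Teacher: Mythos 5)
Your proposal is correct and follows essentially the same route as the paper's proof: instantiate the GQSP circuit of Theorem~\ref{thm:generalized-QSP} as the interleaved sequence $W$ with $M=O(d)$ and the controlled embeddings $\mathrm{diag}(H,0)$, $\mathrm{diag}(0,-H)$, then invoke Theorem~\ref{thm_HSVT with qDRIFT}. The extra bookkeeping you supply (unit-norm term-wise decompositions of the controlled Hamiltonians, $\|\ket{0}\bra{0}\otimes O\|=\|O\|$, and padding the $\ell_1$-weight with $\pm cI$ when $\sum_k|\lambda_k|<1$) is consistent with, and slightly more explicit than, what the paper writes.
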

\begin{proof}
    Combining Theorems \ref{thm:generalized-QSP} and \ref{thm_HSVT with qDRIFT}, set $U=e^{\i H}$ in the GQSP circuit of Eq.~\eqref{eq:gqsp}, and use the parametrized qDRIFT channel to approximate the Hamiltonian evolution. Then the stated results follow by observing $M=O(d)$ and \[
    \widetilde{O}\left(\lambda^2 M^2+M\right) = \widetilde{O}(\lambda^2M^2) = \widetilde{O}(\lambda^2d^2),
    \] 
    as $\lambda,M \ge 1$.
\end{proof}

Following Theorem~\ref{thm_gqsp with qdrift}, we now discuss its application to more general functions and Hamiltonians. While the theorem is stated for a Laurent polynomial $P(e^{\i H})$, which is periodic in the spectrum of $H$, our method can be applied to approximate a general, potentially non-periodic function $Q(H)$ for a Hamiltonian $H$ with a norm bounded by $\|H\| \le B$ for some $B > 0$. We further suppose that $\lambda \ge B$.  A practical choice of $B$ is $B=\lambda$.

The strategy is to employ a rescaling of the Hamiltonian. Let $H' = H/B$, such that the rescaled Hamiltonian has a norm $\|H'\| \le 1$. To implement $Q(H) = Q(B H')$, we need to find a suitable degree-$d'$ Laurent polynomial $P(z)$ that approximates the rescaled function $Q(Bx)$ for $x \in [-1, 1]$. That is, we seek $P$ such that $P(e^{\i x}) \approx Q(Bx)$ for $x \in [-1, 1]$. This rescaling maps the spectral domain of interest of $H$, namely $[-B, B]$, onto the interval $[-1, 1]$, which can be embedded within a single $2\pi$ period of $P(e^{\i x})$. 
We can then apply Theorem~\ref{thm_gqsp with qdrift} to the rescaled Hamiltonian $H'$ and the polynomial $P(z)$. The complexity of this procedure is determined by two competing factors arising from the rescaling:
\begin{enumerate}
    \item The one-norm of the coefficients of the rescaled Hamiltonian $H' = \sum_k (\lambda_k/B) H_k$ becomes $\lambda' = \sum_k |\lambda_k/B| = \lambda/B$.
    
    \item   The degree $d'$ of the Laurent polynomial $P(e^{\i x})$ required to approximate $Q(Bx)$ on the interval $[-1,1]$. As $B$ increases, the function $Q(Bx)$ effectively ``compresses'' the behavior of $Q(x)$ from the larger interval $[-B,B]$ into the fixed interval $[-1,1]$. This causes the rescaled function to become more oscillatory or rapidly varying. To capture these features, a higher-degree polynomial is needed. The required degree $d'$ typically scales linearly with $B$ (see, e.g., the examples in \cite[Appendix C]{chakraborty2025quantum}). We therefore assume $d' \propto d \cdot B$, where $d$ is the base degree for approximating $Q$ on a constant-sized interval.
\end{enumerate}

By applying the complexity result from Theorem~\ref{thm_gqsp with qdrift} to $H'$ and $P(z)$, the total time complexity scales as:
\[
\widetilde{O}\left((\lambda')^2 (d')^2/\varepsilon^2\right) = \widetilde{O}\left( (\lambda/B)^2 (d \cdot B)^2/\varepsilon^2 \right) = \widetilde{O}\left(\lambda^2 d^2/\varepsilon^2\right).
\]
The dependence on the norm bound $B$ cancels out, and we recover the same complexity scaling with respect to the one-norm of the coefficients in the decomposition of the original Hamiltonian, $\lambda$, and the base degree $d$ required for the function approximation. 
\\

\noindent
\textbf{Comparison with prior works:~}From Ref.~\cite{chakraborty2025quantum}, it is known that Laurent polynomials can approximate most functions of interest, and moreover, any $d$-degree polynomial achievable by QSVT \cite{gilyen2019quantum} can also be approximated by a Laurent polynomial of the same degree. This means our framework effectively implements QSVT: given an initial state $\ket{\psi_0}$ and an observable $O$, we can estimate
$$
\braket{\psi_0|f(H)^{\dag} \, O \, f(H)|\psi_0},
$$ 
for any function $f$ that admits a bounded polynomial approximation of degree $d$. Standard QSVT \cite{gilyen2019quantum} requires a circuit depth of $\widetilde{O}(L\lambda d)$, and $\widetilde{O}(\eps^{-2})$ classical repetitions, while using $O(\log L)$ ancilla qubits, and sophisticated multi-qubit controlled operations. In contrast, both our algorithms exhibit no dependence on $L$, use only a single ancilla qubit, and has a shorter circuit depth whenever $\lambda d\ll L$. We summarize the ancilla requirement, circuit depth per coherent run, as well as the number of classical repetitions required for both our randomized algorithms, and the standard QSVT, in Table \ref{table:comparison-qsvt}. 
Finally, the recent method of \cite{chakraborty2025quantum} implements QSVT without block encodings, using only a single ancilla qubit. This approach requires circuit depth $\widetilde{O}(L(d\lambda)^{1+\frac{1}{2k}})$, using $2k$-th order Suzuki–Trotter formulas combined with classical extrapolation. By contrast, our method eliminates the dependence on $L$ and achieves shorter depth whenever $(d\lambda)^{1-1/(2k)} \ll L$. Since only low-order formulas ($k=1,2$) are feasible in near-term devices, this translates into broad parameter regimes where our randomized method offers an advantage, while using the same ancilla resources. We also note that the depth of the algorithm in \cite{chakraborty2025quantum} scales with the nested commutator norm $\lambda_{\mathrm{comm}}=O(\lambda)$; for certain Hamiltonians, it is possible that $\lambda_{\mathrm{comm}}\ll \lambda$, in which case that approach may be preferable.
\begin{table}[ht!!]
\begin{center}
    \resizebox{\columnwidth}{!}{
    \renewcommand{\arraystretch}{1.5} 
    \begin{tabular}{cccc}
    \hline
    Algorithm & Ancilla & Circuit depth per coherent run & Classical repetitions \\ \hline\hline

    Standard QSVT \cite{gilyen2019quantum} & $O(\log L) $ & $\widetilde{O}\left(Ld\lambda\right)$ & $\widetilde{O}\left(\varepsilon^{-2}\right)$ \\
    QSVT with Trotterization\cite{chakraborty2025quantum} & 1 & $\widetilde{O}\left(L(d\lambda)^{1+o(1)}\right)$ & $\widetilde{O}\left(\varepsilon^{-2}\right)$ \\ 
    \hline
     This work (direct randomized QSVT) & $1~\mathrm{or}~3$ & $\widetilde{O}\left(\lambda^2d^2\right)$ & $\widetilde{O}\left(\varepsilon^{-2}\right)$ \\
    This work (QSVT with qDRIFT) & 1 & $\widetilde{O}(\lambda^2d^2)$ & $\widetilde{O}(\eps^{-2}) $ \\
    \hline
\end{tabular}}
\caption{\small{Comparison with the complexities of different approaches to implement QSVT. Consider a Hermitian operator $H=\sum_{k=1}^{L}\lambda_k H_k$, with $\lambda=\max\{\sum_{k}|\lambda_k|,1\}$, and $\|H_k\|=1$. Then, for any initial state $\ket{\psi_0}$, and any observable $O$, each of these algorithms estimates $\braket{\psi_0|f(H)^{\dag}Of(H)|\psi_0}$ to additive accuracy $\eps\|O\|$, where $f(H)$ can be approximated by a $d$-degree polynomial bounded in the interval $[-1,1]$. For our first randomized algorithm, the number of ancilla qubits required increases when $f$ is approximated by a real polynomial, much like standard QSVT. Overall, for each procedure, we compare the total number of ancilla qubits required, the circuit depth per coherent run, and the total number of classical repetitions needed.
\label{table:comparison-qsvt}}}
\end{center}
\end{table}

From Theorem~\ref{thm_gqsp with qdrift}, our second randomized algorithm based on qDRIFT requires only a single ancilla qubit whenever the target function admits a Laurent polynomial approximation. Hence, it uses just one ancilla qubit regardless of whether the polynomial is real or complex. By contrast, our first randomized algorithm, structurally similar to standard QSVT, requires three ancilla qubits to implement real polynomials.
 
\subsubsection{Expectation value with respect to the normalized quantum state}

Often, for many applications of QSVT, we are interested in estimating the expectation value $\braket{\psi|O|\psi}$, where
\begin{align*}
    \ket{\psi}:=\frac{P(e^{\i H})\ket{\psi_0}}{\|P(e^{\i H}) \ket{\psi_0}\|}.
\end{align*}
is the normalized state. When $\|P(e^{\i H})\ket{\psi_0}\|\geq \eta$, standard QSVT \cite{gilyen2019quantum} prepares $\ket{\psi}$ using $\widetilde{O}(1/\eta)$ rounds of fixed-point amplitude amplification \cite{yoder2014fixed}. This increases the circuit depth to $\widetilde{O}(L\lambda d/(\eta\eps))$. By contrast, as discussed in Sec.~\ref{subsec:randomized-qsvt-1}, our first randomized algorithm cannot accommodate this procedure, and estimating the target expectation value instead requires $O(\eta^{-4}\eps^{-2})$ classical repetitions.

Our second algorithm, however, integrates fixed-point amplitude amplification seamlessly. Applying this procedure to the output of any interleaved circuit $W$ yields a modified circuit
$$
\widetilde{W}=\prod_{j=1}^{K/2} W e^{\i\phi_j\ket{0,\psi_0}\bra{0,\psi_0}}W^{\dag} e^{\i\theta_j (\ket{0}\bra{0}\otimes I)},
$$
where $\theta_j,\phi_j \in [0,2\pi]$ are QSP phase angles prescribed by the fixed point amplitude amplification algorithm, and $K=O(\eta^{-1}\log(1/\eps))$. Here, $e^{\i\phi_j\ket{0,\psi_0}\bra{0,\psi_0}}$ is a reflection about the initial state $\ket{\psi_0}$, controlled by a single ancilla qubit. Crucially, $\widetilde{W}$ retains the same interleaved structure as $W$, with only an increased circuit depth.

Thus, Theorem~\ref{thm_HSVT with qDRIFT} can be extended to incorporate the modified sequence $\widetilde{W}$, enabling the preparation of $\ket{\psi}$ and the subsequent estimation of $\bra{\psi}O\ket{\psi}$. Formally we have

\begin{thm}[QSVT with qDRIFT]
    \label{thm_gqsp with qdrift and aa}
    Let $H=\sum_{k=1}^L \lambda_k H_k$ be the Hermitian decomposition of $H$, with $\|H_k\|=1$, and $\lambda=\max\{\sum_{k}|\lambda_k|,1\}$. Also let $\varepsilon\in(0,1/2)$ be the precision parameter. Suppose $P(z)=\sum_{j=-d}^d a_j z^j$ is a Laurent polynomial bounded by $1$ on $\mathbb{T}:=\{ x\in \mathbb{C} : |x|=1 \}$ and $\|P(e^{\i H})\ket{\psi_0}\|\ge \eta$ for some $\eta\in (0,1)$. Furthermore, define 
    $$
    \ket{\psi}=P(e^{\i H})\ket{\psi_0}/\|P(e^{\i H})\ket{\psi_0}\|.
    $$ 
    Then there exists a quantum algorithm that, for any observable $O$ and initial state $\ket{\psi_0}$, estimates $$\bra{\psi} O \ket{\psi},$$ 
    with an additive error at most $\varepsilon \|O\|$ and constant success probability, using two ancilla qubits. 
    The maximum quantum circuit depth is
    \be
    \widetilde{O}\left(\dfrac{\lambda^2 d^2}{\eta^{2}}\right),
    \ee
    and the total time complexity is
    \be
    \widetilde{O}\left(\dfrac{\lambda^2d^2}{\eta^2\eps^2}\right).
    \ee
\end{thm}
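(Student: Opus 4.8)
The plan is to combine the generalized QSP realization of $P(e^{\i H})$ from Theorem~\ref{thm:generalized-QSP} with fixed-point amplitude amplification (FPAA) \cite{yoder2014fixed}, and then invoke Theorem~\ref{thm_HSVT with qDRIFT} on the enlarged interleaved circuit. First I would recall that Theorem~\ref{thm:generalized-QSP} gives an interleaved sequence $W$ of length $O(d)$, alternating controlled $e^{\pm\i H}$ with single-qubit $U(2)$ rotations and using one ancilla qubit, such that $W\ket{0,\psi_0} = \ket{0}\otimes P(e^{\i H})\ket{\psi_0} + \ket{1}\otimes\ket{G}$ for a subnormalized garbage state $\ket{G}$. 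Writing $\ket{0,\psi_0} = \sin\vartheta\,\ket{\mathrm{good}} + \cos\vartheta\,\ket{\mathrm{bad}}$ with $\ket{\mathrm{good}} = \ket{0}\otimes\ket{\psi}$ and $\sin\vartheta = \|P(e^{\i H})\ket{\psi_0}\|\ge\eta$, FPAA produces QSP phases $\{\theta_j,\phi_j\}_{j=1}^{K/2}$, with $K=O(\eta^{-1}\log(1/\eps))$, so that the modified circuit
\[
\widetilde W = \prod_{j=1}^{K/2} W\, e^{\i\phi_j\ket{0,\psi_0}\bra{0,\psi_0}}\, W^{\dag}\, e^{\i\theta_j(\ket{0}\bra{0}\otimes I)}
\]
satisfies $\widetilde W\ket{0,\psi_0} = \ket{0}\otimes\ket{\psi'} + \ket{1}\otimes\ket{G'}$, where $\ket{\psi'}$ is parallel to $\ket{\psi}$ (fixed-point amplification amplifies the amplitude of the good subspace without rotating its direction) and $\|\ket{\psi'}\|^2 \ge 1-\eps/4$.

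Next I would observe that $\widetilde W$ is itself an admissible interleaved circuit of the form in Eq.~\eqref{eq:inteleaved-circuit-definition}: each block $W$ or $W^{\dag}$ contributes $O(d)$ Hamiltonian evolutions $e^{\pm\i H}$, while the remaining operators --- the $U(2)$ rotations inside $W$, the ancilla phase $e^{\i\theta_j(\ket{0}\bra{0}\otimes I)}$, and the reflection $e^{\i\phi_j\ket{0,\psi_0}\bra{0,\psi_0}}$ (implementable with one extra ancilla using the inverse state-preparation unitary for $\ket{\psi_0}$ together with a multiply-controlled phase) --- are all fixed unitaries. Hence $\widetilde W$ consists of $M' = O(Kd) = \widetilde O(d/\eta)$ Hamiltonian evolutions interspersed with unitaries, and Theorem~\ref{thm_HSVT with qDRIFT} applies with $M=M'$ (note $\lambda M'\ge 1$ since $\lambda\ge 1$ and $d/\eta\ge 1$). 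The one subtlety here is that $W^{\dag}$ contains $e^{-\i H}$; this is handled exactly as in the proof of Lemma~\ref{lem_QSP+qDRIFT} by choosing the sign of the qDRIFT step-size parameter appropriately for each evolution, so the same error series and diamond-norm bound hold verbatim.

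Finally I would use the identity
\[
\bra{0,\psi_0}\widetilde W^{\dag}(\ket{0}\bra{0}\otimes O)\widetilde W\ket{0,\psi_0} = \bra{\psi'}O\ket{\psi'} = \|\ket{\psi'}\|^2\,\bra{\psi}O\ket{\psi},
\]
so that, since $\big|\,\|\ket{\psi'}\|^2-1\,\big|\le\eps/4$ and $|\bra{\psi}O\ket{\psi}|\le\|O\|$, this quantity is within $(\eps/4)\|O\|$ of the target $\bra{\psi}O\ket{\psi}$. Applying Theorem~\ref{thm_HSVT with qDRIFT} to $\widetilde W$ with observable $\ket{0}\bra{0}\otimes O$ (whose norm is at most $\|O\|$) and precision $\eps/2$ yields an estimate with additive error $(\eps/2)\|O\|$ and constant success probability, using $\widetilde O(\eps^{-2})$ repetitions of a circuit of depth $\widetilde O(\lambda^2(M')^2+M') = \widetilde O(\lambda^2 d^2/\eta^2)$; by the triangle inequality the total error is at most $\eps\|O\|$. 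The total time complexity is $\widetilde O((\lambda^2(M')^2+M')/\eps^2) = \widetilde O(\lambda^2 d^2/(\eta^2\eps^2))$, and the ancilla count is one (for the GQSP block) plus one (for the FPAA reflection), giving two in total. The main obstacle I expect is the bookkeeping of the second step: carefully verifying that $\widetilde W$, including the FPAA reflections and the $W^{\dag}$ blocks, fits the interleaved template of Theorem~\ref{thm_HSVT with qDRIFT} with the claimed circuit length $M'$, and confirming that fixed-point amplification preserves the direction of the good component (so that $\ket{\psi'}\propto\ket{\psi}$ and no separate estimate of the normalization is required). Once this is established, the Richardson-extrapolation and concentration analysis carries over unchanged from Theorem~\ref{thm_HSVT with qDRIFT}.
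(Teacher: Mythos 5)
Your proposal is correct and follows essentially the same route as the paper's proof: realize $P(e^{\i H})$ via the GQSP interleaved circuit, wrap it in fixed-point amplitude amplification with $O(\eta^{-1}\log(1/\eps))$ rounds so that the whole construction is again an interleaved sequence of length $M'=\widetilde{O}(d/\eta)$, and then invoke Theorem~\ref{thm_HSVT with qDRIFT} to get depth $\widetilde{O}(\lambda^2 d^2/\eta^2)$, complexity $\widetilde{O}(\lambda^2 d^2/(\eta^2\eps^2))$, and two ancillas. Your extra bookkeeping (the $e^{-\i H}$ blocks inside $W^{\dag}$ being absorbed into the framework's choice of $H^{(\ell)}$, and the explicit handling of the subnormalization $\|\ket{\psi'}\|^2$ when measuring $\ket{0}\bra{0}\otimes O$) is consistent with, and slightly more detailed than, the paper's argument.
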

\begin{proof}
By Theorem~\ref{thm:generalized-QSP}, there exists a GQSP circuit, which we denote $U_f$, that prepares the unnormalized state. This circuit takes the form of an interleaved sequence of Hamiltonian evolutions and single-qubit rotations:
\begin{align*}
    (\bra{0} \otimes I) U_f \ket{0}\ket{\psi_0} = P(e^{\i H})\ket{\psi_0}.
\end{align*}
The length of this interleaved sequence is $M = 2d+1$, involving $d$ applications of (controlled) $e^{\i H}$ and $d$ applications of (controlled) $e^{-\i H}$.

We employ fixed-point amplitude amplification \cite{yoder2014fixed} to prepare the normalized state $\ket{\psi}$ with an additive error at most $\eps/4$. This ensures that the expectation value of $O$ is with an additive error at most $\|O\|\eps/2$. Given that the success probability the initial preparation is $\|P(e^{\i H})\ket{\psi_0}\|^2 \ge \eta^2$, the amplification procedure requires $O(
\log(1/\eps)/\eta)$ rounds of $U_f$ and its inverse, interleaved with reflection operators of the form $e^{\i\phi_j\ket{0,\psi_0}\bra{0,\psi_0}}$ and $ e^{\i\theta_j (\ket{0}\bra{0}\otimes I)}$. 

The complete quantum circuit for preparing $\ket{\psi}$ can be viewed as a single interleaved Hamiltonian evolution sequence. The length of this new sequence, denoted by $M'$, is determined by the length of the GQSP circuit ($\widetilde{O}(d)$) and the number of amplification rounds ($O(\log(1/\eps)/\eta)$). Therefore, the total length is $M' = \widetilde{O}(d/\eta)$. The Hamiltonians involved are still simply $H$, so the parameter $\lambda$ remains unchanged.

We can now directly apply Theorem~\ref{thm_HSVT with qDRIFT} to estimate the expectation value $\bra{\psi} O \ket{\psi}$.  Substituting $M'$ yields the maximum quantum circuit depth of $$
\widetilde{O}\left(\lambda^2 d^2\eta^{-2}\right),
$$ 
and a total time complexity of $$
\widetilde{O}\left(\lambda^2 d^2\eta^{-2}/\varepsilon^2\right).
$$
Let us now analyze the number of ancilla qubits our procedure requires. The GQSP protocol itself requires a single ancilla qubit. Now, one additional ancilla qubit is needed to implement the multi-controlled reflection operators required for amplitude amplification. So overall, the entire procedure requires two ancilla qubits. This completes the proof.
\end{proof}
~\\
\textbf{Comparison with prior works:~}A wide range of quantum algorithms have been developed for estimating expectation values of the form $\braket{\psi|O|\psi}$, where
$$
\ket{\psi}=\dfrac{f(H)\ket{\psi_0}}{\|f(H)\ket{\psi_0}\|},
$$
and $f(H)$ is some function of a given Hamiltonian $H$. These algorithms fall under the umbrella of quantum linear algebra, since $f(H)$ typically represents a transformation of the eigenvalues of $H$, such as $H^{-1}$, $e^{-H}$, or $H^t$. 
\begin{table}[ht!!]
\begin{center}
    \resizebox{\columnwidth}{!}{
    \renewcommand{\arraystretch}{1.5} 
    \begin{tabular}{cccc}
    \hline
    Algorithm & Ancilla & Circuit depth per coherent run & Classical repetitions \\ \hline\hline

    Standard QSVT \cite{gilyen2019quantum} & $O(\log L) $ & $\widetilde{O}\left(Ld\lambda/\eta\right)$ & $\widetilde{O}\left(\varepsilon^{-2}\right)$ \\
    QSVT with Trotterization\cite{chakraborty2025quantum} & 2 & $\widetilde{O}\left(L(d\lambda/\eta)^{1+o(1)}\right)$ & $\widetilde{O}\left(\varepsilon^{-2}\right)$ \\ 
    \hline
     This work (direct randomized QSVT) & $1~\mathrm{or}~3$ & $\widetilde{O}\left(\lambda^2d^2\right)$ & $\widetilde{O}\left(\varepsilon^{-2}\eta^{-4}\right)$ \\
    This work (QSVT with qDRIFT) & 2 & $\widetilde{O}(\lambda^2d^2)$ & $\widetilde{O}(\eps^{-2}) $ \\
    \hline
\end{tabular}}
\caption{\small{Comparison with the complexities of different approaches to implement QSVT. Consider a Hermitian operator $H=\sum_{k=1}^{L}\lambda_k H_k$, with $\lambda=\max\{\sum_{k}|\lambda_k|,1\}$, and $\|H_k\|=1$. Then, for initial $\ket{\psi_0}$, and any observable $O$, each of these algorithms estimates $\braket{\psi|O|\psi}$ to additive accuracy $\eps\|O\|$, where $f(H)$ can be approximated by a $d$-degree polynomial bounded in the interval $[-1,1]$, and $\ket{\psi}=f(H)\ket{\psi_0}/\|f(H)\ket{\psi_0}\|$. For our first randomized algorithm, the number of ancilla qubits required increases when $f$ is approximated by a real polynomial, much like standard QSVT. Overall, for each procedure, we compare the total number of ancilla qubits required, the circuit depth per coherent run, and the total number of classical repetitions needed.
\label{table:comparison-qsvt-aa}}}
\end{center}
\end{table}
For standard QSVT, given a block encoding of $H$ with a Hermitian decomposition into $L$ local terms, any $f(H)$ that admits a bounded polynomial approximation of degree $d$ can be implemented. The expectation value $\braket{\psi|O|\psi}$ can then be estimated in three ways, each of which requires a block encoding of $H$, using $O(\log L)$ ancilla qubits, sophisticated controlled operations, and a circuit depth depending on $L$. These variants are as follows:

\begin{itemize}
    \item \emph{Without quantum amplitude amplification or estimation:~}The QSVT circuit prepares the state
    $$
    \approx \ket{\bar{0}} \otimes f(H)\ket{\psi_0}+\ket{\Phi}^{\perp},
    $$
    where $(\ket{\bar{0}}\bra{\bar{0}}\otimes I)\ket{\Phi}^{\perp}=0$. This requires a circuit depth of $\widetilde{O}(L\lambda d)$ and $O(\log L)$ ancillas. The estimation proceeds by measuring the first register at the end of each run; whenever it is in $\ket{\bar{0}}$, one measures $O$ on the second register. Since this occurs with probability $\Theta(1/\eta^2)$, one needs $O(\eta^{-2}\eps^{-2})$ repetitions. By comparison, our randomized algorithm avoids block encodings and uses only a single ancilla qubit, with circuit depth $\widetilde{O}(\lambda^2 d^2)$ per run. However, it requires $\widetilde{O}(\eta^{-4}\eps^{-2})$ repetitions. Thus, we achieve shorter circuit depth whenever $\lambda d \ll L$, although our total complexity is higher.

    \item \emph{With quantum amplitude amplification and classical repetitions:~}Here, fixed-point amplitude amplification is used to directly prepare $\ket{\psi}$, followed by $\widetilde{O}(\eps^{-2})$ classical repetitions. The circuit depth per coherent run increases to $\widetilde{O}(L\lambda d / \eta)$, with $O(\log L)$ ancillas.
    In contrast, from Theorem~\ref{thm_HSVT with qDRIFT}, our randomized algorithm achieves circuit depth $\widetilde{O}(\lambda^2 d^2/\eta^2)$ with only two ancillas. This yields a depth advantage whenever $\lambda d / \eta \ll L$. A detailed side-by-side comparison is given in Table~\ref{table:comparison-qsvt-aa}.

    \item \emph{Coherent estimation using quantum amplitude estimation (QAE):~}If block encodings of both $H$ and $O$ are available, one may use QAE to coherently estimate $\braket{\psi|O|\psi}$, reducing the dependence on precision from $1/\eps^2$ to $1/\eps$. However, this comes at the price of exponentially larger circuit depth. Moreover, for observables of the form
    $$
    O=\sum_{j=1}^{L_O} h_j O_j,
    $$
    the block encoding of $O$ incurs additional cost: ancilla requirements increase to $O(\log(L\cdot L_O))$, and the total depth becomes
    $$
    \widetilde{O}\left(\dfrac{L\lambda d}{\eps\eta}+\dfrac{L_O}{\eps}\right).
    $$
    These overheads make QAE-based approaches impractical for early fault-tolerant architectures, so we do not benchmark against them.
\end{itemize}

Finally, we compare our randomized algorithm with the recent block-encoding-free QSVT approach based on $2k$-th order Suzuki–Trotter formulas \cite{chakraborty2025quantum}. Their method achieves circuit depth scaling as $\widetilde{O}(L(d\lambda/\eta)^{1+{1}/{(2k)}})$, while our randomized algorithm removes the dependence on $L$. Consequently, our approach achieves a shorter depth per coherent run whenever 
$$(d\lambda/\eta)^{1-1/(2k)}\ll L.
$$ 
This advantage is particularly pronounced for small Trotter orders ($k=1,2$), which are the only practically feasible choices in early fault-tolerant architectures.

Overall, both our randomized algorithms require a circuit depth that has a quadratic dependence on the degree of the underlying polynomial. In the next section, we show that this is quite fundamental by proving a matching lower bound for any generic quantum algorithm implementing polynomial transformations within this access model.

\section{Lower bounds}
\label{sec:lower bound}

The circuit depth of standard QSVT scales linearly with the degree of the target polynomial, and this linear dependence is known to be optimal \cite{gilyen2019quantum, montanaro2024quantum}. 
In contrast, both of our randomized algorithms exhibit a quadratic dependence on the polynomial degree. 
This naturally raises the question: can this scaling be improved? In this section, we show that the answer is negative. 
Within the sampling-access model considered here, a quadratic dependence is unavoidable. 
More precisely, we establish lower bounds on both the sample complexity and circuit depth of any randomized quantum algorithm in this model that aims to implement a target function $f(x)$. 
These bounds demonstrate that our algorithms already achieve optimal scaling in this setting. 
Interestingly, the same limitations extend beyond our constructions, applying equally to other randomized approaches such as Hamiltonian simulation via qDRIFT \cite{campbell2019random} and qSWIFT \cite{nakaji2024qswift}, as well as the more recent randomized LCU methods \cite{wang2024qubit, chakraborty2024implementing}.

We focus on the canonical target function $f(x)=e^{\i xt}$, for which it is well known that a polynomial of degree $\widetilde{\Theta}(t)$ suffices to approximate the evolution. 
In the sampling-access model, we prove that both the sample complexity and the circuit depth per round of any randomized quantum algorithm must scale as $\Omega(t^2)$. 
This shows that a quadratic dependence on the polynomial degree is unavoidable for any method, including QSVT, that aims to implement general polynomial transformations of a Hermitian operator.

To make this concrete, we restrict to operators $H$ that are linear combinations of Pauli strings, a special but representative subclass of the Hamiltonians considered in this work. 
In this setting, we do not have access to the full description of $H$. Instead, only sampling access to individual Pauli terms according to their probability weights is available. 
We refer to this setting as the \emph{Pauli sample access model}, formally defined below.

\begin{defn}[Pauli sample access model]
Suppose $H = \sum_{k=1}^L \lambda_k P_k$ is the Pauli decomposition of the Hermitian operator $H$. In the \textit{Pauli sample access model}, the algorithm does not have access to the full description of $H$. Instead, it can only access $H$ by sampling from the distribution
$$\mc D = \left\{ ( |\lambda_k|/\lambda,~e^{\i \theta_k} P_k) : k\in[L] \right\},$$ where $\lambda = \sum_k |\lambda_k|$ and $\lambda_k=e^{\i \theta_k} |\lambda_k|$.
\end{defn}

As discussed, each query to the distribution $\mc D$ yields a Pauli operator $e^{\i \theta_k} P_k$ up to some phase $e^{\i \theta_k}$ with probability $|\lambda_k|/\lambda$. 
We treat this distribution as a black box, meaning that the algorithm may sample from it, but has no direct knowledge of the probabilities themselves. Given a target function $f(x)$ and two states $\ket{\psi_0}, \ket{\psi_1}$, the task is to estimate 
$$
\bra{\psi_1} f(H) \ket{\psi_0},
$$
with an error at most $\varepsilon$, while minimizing the number of queries to $\mc D$. 
To prove our lower bound, we establish a connection between this estimation task and a tomography problem, formulated as follows.

\begin{lem}[Theorem 9.2 of \cite{van2023quantum}]
\label{lemma for state tomography}
Let $\ket{\psi}=\sum_{k=1}^{L} x_k \ket{k}$ be a unknown quantum state. 
Suppose we are given only classical samples obtained from computational-basis measurements of $\ket{\psi}$. 
Then, in order to estimate all amplitudes up to their magnitudes, i.e.\ to compute $|x_k|\pm \delta$ for every $k\in[L]$, the sample complexity is lower bounded by $\Omega(1/\delta^2)$.
\end{lem}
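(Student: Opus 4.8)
The plan is to prove this lower bound by a two-point (Le Cam) argument: exhibit a one-parameter family of states whose computational-basis statistics are nearly indistinguishable, yet whose amplitudes differ by $\Theta(\delta)$, so that any estimator meeting the stated accuracy would have to distinguish the two and therefore cannot use too few samples.

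First I would reduce to the case $L = 2$; having to output $|x_k| \pm \delta$ for \emph{every} $k$ makes the task only harder, and any instance over two basis states embeds trivially into $L$ dimensions (with the remaining amplitudes equal to zero, so those outcomes never occur). Consider the two states $\ket{\psi_\pm} = \sqrt{\tfrac12 \pm 2\delta}\,\ket{1} + \sqrt{\tfrac12 \mp 2\delta}\,\ket{2}$, valid for $\delta < 1/4$. A computational-basis measurement of $\ket{\psi_\pm}$ yields an i.i.d.\ sample from the Bernoulli law $P_\pm = \mathrm{Ber}(\tfrac12 \pm 2\delta)$. Since the function $g(\delta) = \sqrt{\tfrac12+2\delta} - \sqrt{\tfrac12-2\delta}$ has $g(0)=0$ and nondecreasing derivative with $g'(0) = 2\sqrt2$, we get $g(\delta) \ge 2\sqrt2\,\delta > 2\delta$, so the value $|x_1|$ differs by more than $2\delta$ between the two instances. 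Hence an algorithm that outputs $|x_1| \pm \delta$ with probability $\ge 2/3$ can, by thresholding its estimate at the midpoint of the two true values, correctly identify which of $\ket{\psi_+},\ket{\psi_-}$ generated its samples with probability $\ge 2/3$; that is, it distinguishes $P_+^{\otimes n}$ from $P_-^{\otimes n}$.

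Next I would bound the sample budget such a test requires. A Taylor expansion around $\delta = 0$ gives $\mathrm{KL}(P_+ \,\|\, P_-) = O(\delta^2)$, whence by additivity $\mathrm{KL}(P_+^{\otimes n} \,\|\, P_-^{\otimes n}) = O(n\delta^2)$, and Pinsker's inequality yields $\mathrm{TV}(P_+^{\otimes n}, P_-^{\otimes n}) \le \sqrt{\tfrac12\,\mathrm{KL}(P_+^{\otimes n}\|P_-^{\otimes n})} = O(\sqrt n\,\delta)$. Any test with constant distinguishing advantage needs $\mathrm{TV} = \Omega(1)$, forcing $\sqrt n\,\delta = \Omega(1)$, i.e.\ $n = \Omega(1/\delta^2)$, which is the claimed bound. (One could equally invoke the Hellinger-distance form of Le Cam's method, or the Cramér–Rao bound via the Fisher information $I(p) = 1/(p(1-p)) = \Theta(1)$ near $p = 1/2$, to reach the same conclusion.)

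The individual ingredients are standard; the only points needing care are (i) calibrating the perturbation size ($2\delta$ above) so that the gap in $|x_1|$ strictly exceeds $2\delta$ while $\tfrac12 \pm 2\delta$ stays a valid probability, and (ii) making the reduction robust to randomized or adaptive post-processing — handled by noting that the estimator's output is a (possibly randomized) function of the $n$ i.i.d.\ classical samples, so on input $\ket{\psi_\pm}$ its distribution depends only on $P_\pm^{\otimes n}$, and the data-processing inequality preserves the total-variation bound. I do not anticipate a substantive obstacle beyond these bookkeeping steps.
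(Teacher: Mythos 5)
Your proof is correct, but note that the paper itself never proves this lemma: it is imported verbatim as Theorem 9.2 of \cite{van2023quantum}, and the authors additionally lean on the \emph{proof} of Lemma 9.3 there to know that the bound survives when the amplitudes are restricted to the structured family $|x_k|\in\{\sqrt{(1+\eta)/L},\sqrt{(1-\eta)/L}\}$ with $\delta=\Theta(\eta/\sqrt{L})$. Your two-point Le Cam argument is a clean, self-contained substitute for the quoted statement: the reduction to $L=2$, the pair $\ket{\psi_\pm}=\sqrt{\tfrac12\pm2\delta}\,\ket{1}+\sqrt{\tfrac12\mp2\delta}\,\ket{2}$, the convexity estimate $g(\delta)\ge 2\sqrt2\,\delta>2\delta$ guaranteeing that a $\pm\delta$-accurate estimate of $|x_1|$ distinguishes the two instances, and the $\mathrm{KL}$--Pinsker step giving $n=\Omega(1/\delta^2)$ are all sound, and the data-processing remark correctly covers randomized post-processing. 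Pleasingly, your hard instance is itself of the structured form the paper needs ($L=2$, $\eta=4\delta$, i.e.\ $\delta=\Theta(\eta/\sqrt{L})$), so your argument also justifies the way the lemma is actually invoked in Theorems \ref{thm_sample lower bound} and \ref{thm_sample lower bound 2}, where $L=O(1)$. The difference from the cited source is one of scope rather than correctness: the tomography lower bounds in \cite{van2023quantum} are proved via a multi-hypothesis (many-state) information-theoretic argument because they target stronger, $L$-dependent statements, whereas your two-point construction suffices for the $\Omega(1/\delta^2)$ form quoted here and is more elementary. The only bookkeeping caveat is that the lemma as stated is silent about the success probability; your reading (constant success probability, e.g.\ $2/3$) is the standard one and matches how the paper uses it.
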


From the proof of Lemma 9.3 of \cite{van2023quantum}, this lower bound continues to hold even in the case where half of the amplitudes are
$$
x_k=\sqrt{(1 + \eta)/L},
$$ 
and the other half are 
$$
x_k=\sqrt{(1 - \eta)/L}
$$ 
for some small $\eta$.
Note that the difference between these two quantities is $\Theta(\eta/\sqrt{L})$. So the error in Lemma \ref{lemma for state tomography} should be set as $\delta=\Theta(\eta/\sqrt{L})$. In what follows, we use this structured setting to derive two distinct lower bounds on the sample complexity, depending on the target accuracy of the estimate. 
In particular, we consider the case $L=O(1)$, in which the sample complexity in Lemma \ref{lemma for state tomography} is lower bounded by $\Omega(1/\eta^2)$.

\begin{thm}[Sample complexity lower bound: Arbitrary accuracy]
\label{thm_sample lower bound}
In the Pauli sample access model, the sample complexity of estimating $\bra{\psi_1}e^{\i H t} \ket{\psi_0}$ up to arbitrary accuracy is lower bounded by $\Omega(t^2)$.
\end{thm}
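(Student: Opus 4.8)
The plan is to reduce the estimation task to the structured state–tomography lower bound of Lemma~\ref{lemma for state tomography}: I would exhibit two candidate Hamiltonians that are statistically indistinguishable under $o(t^2)$ samples from $\mathcal D$, yet for which the target quantity $\bra{\psi_1}e^{\i Ht}\ket{\psi_0}$ differs by an absolute constant. Any estimator achieving constant accuracy would then decide which Hamiltonian is present, which forces $\Omega(t^2)$ samples.

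Concretely, I would work on two qubits with the two candidates
\[
H_{\pm}=(1\pm\eta)\,Z_1+(1\mp\eta)\,Z_2,\qquad Z_1=Z\otimes I,\quad Z_2=I\otimes Z,
\]
and fix the parameter $\eta=\Theta(1/t)$ later. Both have $\lambda=2$ and positive coefficients, so in either case the Pauli sample access model hands the algorithm an i.i.d.\ stream of labels in $\{Z_1,Z_2\}$ drawn with probabilities $\{(1\pm\eta)/2,\,(1\mp\eta)/2\}$, with trivial phases. Hence deciding whether $H=H_+$ or $H=H_-$ is exactly the structured tomography problem of Lemma~\ref{lemma for state tomography} with $L=2=O(1)$ and gap parameter $\eta$, which (taking $\delta=\Theta(\eta)$) requires $\Omega(1/\eta^2)$ samples; this bound is information-theoretic and so applies to any adaptive quantum algorithm, since the only information about $H$ it ever receives is this classical i.i.d.\ stream. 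For the states I would take $\ket{\psi_0}=\ket{\psi_1}=\ket{+}\otimes\ket{0}$. Because the two terms commute, $e^{\i H_{\pm}t}=e^{\i(1\pm\eta)t\,Z_1}\,e^{\i(1\mp\eta)t\,Z_2}$, and using $\bra{+}e^{\i\theta Z}\ket{+}=\cos\theta$ and $\bra{0}e^{\i\theta Z}\ket{0}=e^{\i\theta}$ one gets
\[
\bra{\psi_1}e^{\i H_{\pm}t}\ket{\psi_0}=\cos\!\big((1\pm\eta)t\big)\,e^{\i(1\mp\eta)t}.
\]
The real parts of the two values coincide, while the imaginary parts differ by $|\sin(2\eta t)|$ (via $\cos A\sin B-\cos B\sin A=\sin(B-A)$); choosing $\eta=\pi/(4t)$ makes this gap exactly $1$ and keeps $\eta\in(0,1)$ for $t\ge 1$.

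To finish, I would argue that any algorithm estimating $\bra{\psi_1}e^{\i Ht}\ket{\psi_0}$ to additive accuracy $\varepsilon<1/2$ with constant success probability, using $N$ samples from $\mathcal D$, also decides $H_+$ versus $H_-$ with constant success probability: just compare the estimate against the two possible true values, which differ by $1>2\varepsilon$. Combining this with the $\Omega(1/\eta^2)$ distinguishing bound and $\eta=\Theta(1/t)$ yields $N=\Omega(t^2)$. Since the target accuracy enters only through the mild requirement $\varepsilon<1/2$, the bound holds for any target accuracy, establishing the claim.

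The main obstacle is making the reduction from Pauli sample access to plain Bernoulli distinguishing fully rigorous: one must verify that the Pauli labels and phases carry nothing beyond a $\Theta(\eta)$-biased coin, so that Lemma~\ref{lemma for state tomography} applies verbatim with $L=O(1)$, and that the chosen $\ket{\psi_0},\ket{\psi_1}$ amplify the $\Theta(\eta)$ statistical gap up to a constant-sized gap in the target overlap. The commuting-$Z$ construction is precisely what makes this amplification both achievable and transparent to compute; the remaining steps—the trigonometric identity for the overlaps and invoking the cited lemma—are routine.
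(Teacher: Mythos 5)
Your proof is correct, and it rests on the same two pillars as the paper's: reduce to the biased-sample lower bound of Lemma~\ref{lemma for state tomography} with $L=O(1)$, and choose $t=\Theta(1/\eta)$ so that the time evolution amplifies the $\Theta(\eta)$ difference in the sampling distribution into an $\Omega(1)$ difference in the target quantity. The concrete instantiation differs, though. The paper takes $H=\sum_k\alpha_k X_k$ on $L$ qubits with $\alpha_k\in\{(1\pm\eta)/L\}$ and $t=\tfrac{L}{\eta}\tfrac{\pi}{4}$, and argues that exact knowledge of $e^{\i Ht}\ket{0^L}$ reveals every $\alpha_k$ because the two possible single-qubit columns $\ket{\alpha_\pm}$ are orthogonal; the overlap version is then obtained by checking all product states in $\{\ket{\alpha_+},\ket{\alpha_-}\}^{\otimes L}$. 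You instead use two commuting $Z$-terms, $H_\pm=(1\pm\eta)Z_1+(1\mp\eta)Z_2$ with $\eta=\pi/(4t)$, and a single fixed pair $\ket{\psi_0}=\ket{\psi_1}=\ket{+}\otimes\ket{0}$, for which the two candidate overlaps differ by exactly $|\sin(2\eta t)|=1$; this is in spirit closer to the paper's proof of the fixed-accuracy bound (Theorem~\ref{thm_sample lower bound 2}), which also uses an $L=2$ two-instance promise. What your route buys is a very explicit quantification of ``arbitrary accuracy'' (accuracy below $1/2$ already forces the distinguishing task) and a single overlap to estimate rather than a family of basis overlaps; what the paper's route buys is a statement directly about recovering the full evolved state, at the cost of a slightly more delicate final step. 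One small presentational caveat: Lemma~\ref{lemma for state tomography} as stated is an amplitude-estimation bound, and the two-instance distinguishing form you invoke comes from its proof (equivalently, from the elementary $\Omega(1/\eta^2)$ bound for distinguishing $\eta$-biased coins); the paper uses it in exactly the same way in Theorem~\ref{thm_sample lower bound 2}, so this is a matter of citation granularity rather than a gap.
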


\begin{proof}
Consider a quantum state $\ket{\psi}=\sum_{k=1}^{L} x_k \ket{k}$, such that half of its amplitudes are $$x_k=\sqrt{(1 + \eta)/L},$$
and the other half 
$$x_k=\sqrt{(1 - \eta)/L}.$$  
Furthermore, suppose 
$$
H = \sum_{k=1}^L \alpha_k X_k,\quad~\text{with~} \alpha_k = |x_k|^2 \in\left\{ (1+\eta)/L,~(1-\eta)/L\right\},
$$
and $X_k$ being the Pauli $X$ acting on the $k$-th qubit. Consider the distribution $$
\mc D=\{(|x_k|^2, X_k): k\in[L]\}.
$$
Then, measuring $\ket{\psi}$ in the computational basis is equivalent to sampling from $\mc D$. 
Observe that
\[
e^{\i H t} = \prod_{k=1}^L e^{\i \alpha_k X_k t},
\]
furthermore,
\be
\label{eq1-0215}
e^{\i \alpha_k X_k t}=\left[\begin{array}{cc}  \vspace{.2cm}
\cos \left(\alpha_k t\right) & \i \sin \left(\alpha_k t\right) \\
\i \sin \left(\alpha_k t\right) & \cos \left(\alpha_k t\right)
\end{array}\right]. 
\ee
By choosing $t=\frac{L}{\eta}\cdot \frac{\pi}{4} = \Theta(1/\eta)$, we obtain 

\be
e^{\i \alpha_k X_k t}=
\begin{cases}
    \left[\begin{array}{cc} \vspace{.2cm}
    \cos\left(\big(1+\tfrac{1}{\eta}\big)\tfrac{\pi}{4}\right) 
    & \i \sin\left(\big(1+\tfrac{1}{\eta}\big)\tfrac{\pi}{4}\right)  \\
    \i \sin\left(\big(1+\tfrac{1}{\eta}\big)\tfrac{\pi}{4}\right)  
    & \cos\left(\big(1+\tfrac{1}{\eta}\big)\tfrac{\pi}{4}\right)
\end{array}\right], & \text{~if~}\alpha_k={(1+\eta)}/{L} \\~\\
\left[\begin{array}{cc} \vspace{.2cm}
    \cos\left(\big(1-\tfrac{1}{\eta}\big)\tfrac{\pi}{4}\right) 
    & -\i \sin\left(\big(1-\tfrac{1}{\eta}\big)\tfrac{\pi}{4}\right)  \\
    -\i \sin\left(\big(1-\tfrac{1}{\eta}\big)\tfrac{\pi}{4}\right)  
    & \cos\left(\big(1-\tfrac{1}{\eta}\big)\tfrac{\pi}{4}\right)
\end{array}\right], & \text{~if~}\alpha_k={(1-\eta)}/{L}.
\end{cases}
\ee

The first columns of both matrices form an orthogonal basis, written as
\beas
    &\ket{\alpha_+} := \cos\left(\big(1+\tfrac{1}{\eta}\big)\tfrac{\pi}{4}\right) \ket{0}
     + \i \sin\left(\big(1+\tfrac{1}{\eta}\big)\tfrac{\pi}{4}\right)\ket{1}, \\
    &\ket{\alpha_-} := \cos\left(\big(1-\tfrac{1}{\eta}\big)\tfrac{\pi}{4}\right) \ket{0}
     - \i \sin\left(\big(1-\tfrac{1}{\eta}\big)\tfrac{\pi}{4}\right)\ket{1} .
\eeas
Measuring in the $\ket{\alpha_\pm}$ basis tells us whether $\alpha_k={(1+\eta)}/{L}$ or $\alpha_k={(1-\eta)}/{L}$. Therefore, if we could compute $e^{\i Ht}\ket{0^L}$, and measure the resulting state, we could estimate every $\alpha_k$ and consequently every $x_k$, exactly. 

From Lemma \ref{lemma for state tomography}, we know that the sample complexity of computing all $x_k$ with an additive accuracy of $\Theta(\eta)$ is $\Omega(1/\eta^2)$ when $L=O(1)$. So, the sample complexity lower bound for computing $e^{\i Ht}\ket{0^L}$ is $\Omega(1/\eta^2)=\Omega(t^2)$. As $L=O(1)$, this lower bound also holds for the task of computing 
$$
\bra{\psi_1}e^{\i Ht}\ket{\psi_0},$$ 
as we could simply check all $\bra{\psi}e^{\i Ht}\ket{0^L}$ with $\ket{\psi}\in\{\ket{\alpha_+}, \ket{\alpha_-}\}^{\otimes L}$ which would return the information about $x_k$, for all $k\in [L]$.
\end{proof}

It is possible to slightly modify the above proof to show that the sample complexity is lower bounded by $\Omega(t^2/\varepsilon^2)$ when the task is to estimate 
$$
\braket{\psi_1|e^{\i H t}|\psi_0}
$$
with an error at most $\eps$. 

\begin{thm}[Sample complexity lower bound: Fixed accuracy]
\label{thm_sample lower bound 2}
In the Pauli sample access model, the sample complexity of estimating $\bra{\psi_1}e^{\i H t} \ket{\psi_0}$ up to accuracy $\varepsilon$ is lower bounded by $\Omega(t^2/\varepsilon^2)$.
\end{thm}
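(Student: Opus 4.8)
The plan is to adapt the reduction from the proof of Theorem~\ref{thm_sample lower bound} from \emph{exact} state tomography to an $\varepsilon$-robust version, by tuning the ``signal strength'' $\eta$ so that the information about the coefficients of $H$ surfaces in $\braket{\psi_1|e^{\i Ht}|\psi_0}$ at scale $\Theta(\varepsilon)$ rather than $\Theta(1)$. First I would note that we may assume $\varepsilon$ is below a small absolute constant $\varepsilon_0$ and $t\ge 4\varepsilon$; otherwise $t^2/\varepsilon^2=O(t^2)$ and the bound already follows from Theorem~\ref{thm_sample lower bound}. Fix $L=O(1)$, set $\eta:=C\varepsilon/t<1$ for a large enough absolute constant $C$, and write $\alpha_\pm:=(1\pm\eta)/L$. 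As before, consider $\ket\psi=\sum_{k=1}^L x_k\ket k$ with half of the $|x_k|^2$ equal to $\alpha_+$ and half equal to $\alpha_-$, and let $H=\sum_k \alpha_k X_k$ with $\alpha_k=|x_k|^2$, so $\lambda=\sum_k\alpha_k=1$ and Pauli sampling of $H$ coincides with computational-basis measurement of $\ket\psi$. By the structured variant of Lemma~\ref{lemma for state tomography} with $L=O(1)$, recovering all $\alpha_k$ (equivalently, all $|x_k|$ to accuracy $\Theta(\eta)$) requires $\Omega(1/\eta^2)=\Omega(t^2/\varepsilon^2)$ samples.

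Next I would show that an $\varepsilon$-accurate estimator for $\braket{\psi_1|e^{\i Ht}|\psi_0}$ recovers every $\alpha_k$ using $O(1)$ invocations. Since $e^{\i Ht}=\prod_k e^{\i\alpha_k X_k t}$ and each single-qubit factor sends $\ket0$ to $\ket{\alpha_k^{\mathrm{col}}}:=\cos(\alpha_k t)\ket0+\i\sin(\alpha_k t)\ket1$, the two possible column states $\ket{\alpha_\pm^{\mathrm{col}}}$ satisfy $\braket{\alpha_+^{\mathrm{col}}|\alpha_-^{\mathrm{col}}}=\cos\phi$ with $\phi:=(\alpha_+-\alpha_-)t=2\eta t/L=\Theta(\varepsilon)$. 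To probe coordinate $k$, take $\ket{\psi_0^{(k)}}=\ket0_k\otimes\bigotimes_{j\ne k}\ket{+}_j$ and $\ket{\psi_1^{(k)}}=\ket{\xi}_k\otimes\bigotimes_{j\ne k}\ket{+}_j$, where $\ket{+}$ is the $+1$ eigenstate of $X$ and $\ket\xi$ is the fixed, explicitly computable single-qubit state orthogonal to $\ket{\alpha_+^{\mathrm{col}}}$. Because $X_j\ket{+}=\ket{+}$, each qubit $j\ne k$ contributes only the unit-modulus phase $e^{\i\alpha_j t}$, so
\[
\big|\braket{\psi_1^{(k)}|e^{\i Ht}|\psi_0^{(k)}}\big|
= \big|\braket{\xi|\alpha_k^{\mathrm{col}}}\big|
= \begin{cases} 0, & \alpha_k=\alpha_+,\\[2pt] \sin\phi, & \alpha_k=\alpha_-,\end{cases}
\]
using $|\braket{\xi|\alpha_-^{\mathrm{col}}}|^2=1-|\braket{\alpha_+^{\mathrm{col}}|\alpha_-^{\mathrm{col}}}|^2=\sin^2\phi$. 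Choosing $C$ large enough gives $\sin\phi\ge\tfrac2\pi\phi>2\varepsilon$, so by the reverse triangle inequality, thresholding the magnitude of the estimate at $\varepsilon$ recovers $\alpha_k$ correctly.

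Putting it together: given a hypothetical algorithm estimating $\braket{\psi_1|e^{\i Ht}|\psi_0}$ to additive error $\varepsilon$ with constant success probability using $N$ samples of $\mathcal{D}$, I would boost the success probability to $1-1/(3L)$ by $O(\log L)=O(1)$ repetitions and a median, run it on each of the $L$ pairs $(\psi_0^{(k)},\psi_1^{(k)})$, and by a union bound recover all $\alpha_k$ with probability $\ge 2/3$ using $O(L\log L)\cdot N=O(N)$ samples. This solves the structured tomography instance, so $O(N)=\Omega(1/\eta^2)$, i.e.\ $N=\Omega(t^2/\varepsilon^2)$.

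The main obstacle — understated by the word ``slightly'' in the statement — is arranging the probe so that the coefficient information survives at scale exactly $\Theta(\varepsilon)$. The naive probe $\ket{\psi_1}=\bigotimes_k\ket{\alpha_\pm^{\mathrm{col}}}$ used in Theorem~\ref{thm_sample lower bound} produces overlaps differing only by $1-\cos\phi=\Theta(\phi^2)$, which would force $\phi=\Theta(\sqrt\varepsilon)$ and yield the weaker bound $\Omega(t^2/\varepsilon)$. Extracting a first-order ($\Theta(\phi)$) signal requires the orthogonal probe state $\ket\xi$, after which one must neutralize the multiplicative contribution of the remaining $L-1$ qubits; placing them in the $X$-eigenstate $\ket{+}$ converts that contribution into a pure phase of modulus one, which drops out upon passing to magnitudes. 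The rest is routine bookkeeping of constants to make $\sin\phi$ exceed $2\varepsilon$ while keeping $\eta<1$, with the case of constant $\varepsilon$ handled separately via Theorem~\ref{thm_sample lower bound}.
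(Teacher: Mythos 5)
Your proposal is correct, and it rests on the same foundation as the paper's proof: both reduce to the half-and-half structured instance of Lemma~\ref{lemma for state tomography} with $H=\sum_k\alpha_k X_k$, and both extract a signal that is \emph{first order} in the angle difference $\phi=(\alpha_+-\alpha_-)t$ by probing with a state (nearly) orthogonal to one of the two possible column states. Two execution differences are worth recording. First, the paper works with $L=2$ and ties the parameters together: it chooses $t=\Theta(1/\sqrt{\eta})$ so that the required distinguishing accuracy comes out as $\varepsilon=\Theta(\sqrt{\eta})=\Theta(1/t)$; read literally, that argument establishes $\Omega(t^2/\varepsilon^2)$ along the curve $\varepsilon=\Theta(1/t)$, whereas your choice $\eta=C\varepsilon/t$ decouples $t$ from $\varepsilon$ and (after the trivial regimes you dispatch via Theorem~\ref{thm_sample lower bound}) yields the bound for essentially all pairs $(t,\varepsilon)$, which matches the quantification in the theorem statement more closely. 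Second, your probe $\ket{\xi}_k\otimes\bigotimes_{j\neq k}\ket{+}_j$ turns every spectator qubit into a unit-modulus phase, so the per-coordinate signal is exactly $0$ versus $\sin\phi$; the paper instead uses the normalized difference vector $\ket{\tau}\propto\ket{\phi_1}-\ket{\phi_2}$ on one qubit with $\ket{0}$ on the other, which achieves the same first-order effect ($\braket{\tau|\phi_1}=\sin(\phi/2)$) but implicitly relies on the spectator factor $\braket{0|\phi_2}=\cos(\alpha_- t)$ being $\Theta(1)$ for the given $t$ — a caveat your construction sidesteps. The remaining differences (recovering all $L=O(1)$ coordinates with a median-boosted estimator and a union bound, versus distinguishing two promise instances; and the constant bookkeeping ensuring $\eta<1$ and $\sin\phi>2\varepsilon$, where your stated threshold $t\ge 4\varepsilon$ should really be $t\ge C\varepsilon$, with the intermediate regime trivial anyway) are routine and you flag them appropriately.
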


\begin{proof}
We consider the case $L=2$. Let $\eta>0$ be a small parameter. Define
\[
x^{(+)} := \sqrt{\tfrac{1+\eta}{2}}, \qquad
x^{(-)} := \sqrt{\tfrac{1-\eta}{2}}.
\]
As in Lemma~\ref{lemma for state tomography}, we construct two promise instances with amplitudes $(x^{(+)},x^{(-)})$ and $(x^{(-)},x^{(+)})$, respectively. Define the operator
\[
H = \alpha_1 X_1 + \alpha_2 X_2, \qquad
\alpha_1, \alpha_2 \in \Bigl\{\tfrac{1+\eta}{2},\, \tfrac{1-\eta}{2}\Bigr\},
\]
and set $t = \Theta\!\left(1/\sqrt{\eta}\right)$.
Then, for each $k \in \{1,2\}$,
\[
e^{\i \alpha_k X_k t}
= \begin{bmatrix}
\cos(\alpha_k t) & \i \sin(\alpha_k t)\\[4pt]
\i \sin(\alpha_k t) & \cos(\alpha_k t)
\end{bmatrix}.
\]
Let $\ket{\phi_1}$ denote the first column corresponding to $\alpha_k=(1+\eta)/2$, and $\ket{\phi_2}$ the first column for $\alpha_k=(1-\eta)/2$.  
So, we have
\[
\langle \phi_1 | \phi_2 \rangle = 1 - \Theta(\eta),
\]
and hence
\[
\|\ket{\phi_1}-\ket{\phi_2}\| = \Theta(\sqrt{\eta}).
\]
 
Take the input $\ket{\psi_0}=\ket{0}\otimes\ket{0}$. Then
\[
e^{\i Ht}\ket{\psi_0} \in \bigl\{\, \ket{\phi_1}\otimes\ket{\phi_2},\; \ket{\phi_2}\otimes\ket{\phi_1}\,\bigr\}.
\]
Define
\[
\ket{\tau} := \frac{\ket{\phi_1}-\ket{\phi_2}}{\|\ket{\phi_1}-\ket{\phi_2}\|}, 
\qquad
\ket{\psi_1} := \ket{\tau}\otimes\ket{0}.
\]
Consider
\[
a =\bra{\psi_1} e^{\i Ht} \ket{\psi_0}.
\]
If the time evolved state is $\ket{\phi_1}\otimes\ket{\phi_2}$, then
$$
a= \bra{\psi_1}\left(\ket{\phi_1}\otimes\ket{\phi_2}\right)=\braket{\tau|\phi_1}\cdot \braket{0|\phi_2}=\Theta(\|\ket{\phi_1}-\ket{\phi_2}\|)=\Theta(\sqrt{\eta}),
$$  
while if it is $\ket{\phi_2}\otimes\ket{\phi_1}$, then the value of $a$ is the negative of this. Thus, the two promise instances yield values of $a$ differing by
\[
\Delta a = \Theta\left(\left\|\ket{\phi_1}-\ket{\phi_2}\right\|\right) = \Theta(\sqrt{\eta}).
\]

To resolve the two cases, an algorithm must estimate $a$ to additive accuracy
\[
\varepsilon = \Theta(\sqrt{\eta}).
\]
Since $t=\Theta(1/\sqrt{\eta})$, we equivalently have $\eta = \Theta\!\left(1/t^2\right)$, and $ \varepsilon = \Theta\!\left(1/t\right)$.
So,
\[
\frac{t^2}{\varepsilon^2} = \Theta\!\left({1}/{\eta^2}\right).
\]

By Lemma~\ref{lemma for state tomography}, distinguishing the two promise families requires $\Omega(1/\eta^2)$ queries to $\mathcal D$.  
Substituting the relations above gives the claimed lower bound of $\Omega(t^2/\varepsilon^2)$ queries.
\end{proof}

The sample complexity lower bound (Theorem~\ref{thm_sample lower bound 2}) also yields a lower bound on the circuit depth of randomized algorithms in the Pauli sample access model. 
Consider algorithms of the form
\[
W_{Q_1,\ldots,Q_c} \;=\; U_1 Q_1 U_2 Q_2 \cdots U_c Q_c U_{c+1},
\]
where each $Q_i$ is generated independently from a constant number $s=O(1)$ of samples drawn from $\mathcal D$, and $U_1,\ldots,U_{c+1}$ are fixed unitaries independent of the samples. Moreover, we assume that the depth of each $Q_i$ is constant. 

Suppose the algorithm executes $R$ independent runs of this circuit and estimates some classical quantity (e.g.\ the empirical mean) from the outcomes. Then the total number of oracle queries is $N_{\mathrm{tot}}=R \cdot (c s)$. 
By Hoeffding’s inequality, estimation with additive precision $\varepsilon$ requires $R=\Theta(1/\varepsilon^2)$ repetitions \cite{canetti1995lower}, so $N_{\mathrm{tot}}=\Theta(cs/\varepsilon^2)$. 
Since Theorem~\ref{thm_sample lower bound 2} asserts that $N_{\mathrm{tot}}=\Omega(t^2/\varepsilon^2)$, we conclude that
\[
c \cdot s \;=\; \Omega(t^2).
\]
Since each sampled unitary has constant depth, the circuit depth per run is therefore $\Omega(t^2)$. This is formally stated as follows.

\begin{cor}[Circuit complexity lower bound]
\label{Circuit complexity lower bound}
In the Pauli sample access model, assume that the number of classical repetitions is independent of the evolution time $t$.
Then the circuit depth of any randomized quantum algorithm for estimating $\bra{\psi_1}e^{\i H t} \ket{\psi_0}$ is lower bounded by $\Omega(t^2)$.
\end{cor}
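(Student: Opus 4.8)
The plan is to turn the query-counting argument sketched just above the statement into a clean depth lower bound, using Theorem~\ref{thm_sample lower bound 2} as a black box. First I would pin down the model: a single coherent run of a randomized algorithm in the Pauli sample access model is a circuit
\[
W_{Q_1,\ldots,Q_c} = U_1 Q_1 U_2 Q_2 \cdots U_c Q_c U_{c+1},
\]
where $U_1,\ldots,U_{c+1}$ are fixed unitaries that do not depend on the drawn samples, and each $Q_i$ is a constant-depth unitary built from a constant number $s = O(1)$ of gates drawn i.i.d.\ from $\mathcal D$. The algorithm executes $R$ independent runs, measures each output, and classically post-processes the outcomes (e.g.\ forms an empirical mean) to estimate $\braket{\psi_1|e^{\i Ht}|\psi_0}$. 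Since the depth of each $Q_i$ is constant, the depth of a run is $\Theta(c)$ up to the depths of the fixed $U_i$, so it suffices to lower bound $c$.

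Next I would count oracle queries. Each run consumes $cs$ samples from $\mathcal D$, so over all $R$ runs the total is $N_{\mathrm{tot}} = R\,cs$. By Hoeffding's inequality, an $\varepsilon$-accurate estimate from bounded outcomes needs $R = O(1/\varepsilon^2)$ repetitions, and by the matching lower bound for mean estimation from i.i.d.\ samples \cite{canetti1995lower} this is tight, so $R = \Theta(1/\varepsilon^2)$; the hypothesis of the corollary is precisely that this $R$ does not grow with $t$, i.e.\ the hidden constant is $t$-independent. On the other hand, in the hard instance underlying Theorem~\ref{thm_sample lower bound 2} each query to $\mathcal D$ is information-theoretically a single computational-basis sample of the associated tomography state, so $N_{\mathrm{tot}} = \Omega(t^2/\varepsilon^2)$. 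Combining $R\,cs = \Omega(t^2/\varepsilon^2)$ with $R = \Theta(1/\varepsilon^2)$ yields $cs = \Omega(t^2)$, and since $s = O(1)$ we get $c = \Omega(t^2)$; hence the circuit depth per run is $\Omega(t^2)$.

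The main obstacle is not arithmetic but robustness of the model. One must argue that no cleverer use of the sampled gates — adaptively choosing the $U_i$ from earlier outcomes, entangling extra ancillas, or correlating samples across runs — can beat the query bound, and that "circuit depth" cannot be deflated by hiding work in the free fixed unitaries or in extra repetitions. The first point is handled because Theorem~\ref{thm_sample lower bound 2} is a query lower bound that survives arbitrary (even adaptive or coherent) processing of the samples, as each sample is exactly one tomography draw in the hard instance. The second point is exactly why the hypothesis "$R$ independent of $t$" is imposed: without it one could trade depth for repetitions, running shallow circuits but absorbing the $\Omega(t^2)$ cost into $\Omega(t^2/\varepsilon^2)$ runs, so the depth claim is only meaningful once repetitions are decoupled from $t$. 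I would also note for completeness that taking each $Q_i$ to have constant depth and use $O(1)$ samples is without loss of generality, since any fixed-depth building block touches only $O(1)$ sampled gates, so "number of sampled blocks" and "depth" agree up to constants.
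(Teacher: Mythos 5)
Your proposal is correct and follows essentially the same route as the paper: model each coherent run as an interleaved circuit $U_1Q_1\cdots U_cQ_cU_{c+1}$ with constant-depth sampled blocks, count total queries $N_{\mathrm{tot}}=R\,c\,s$, use $R=\Theta(1/\varepsilon^2)$ (Hoeffding plus the mean-estimation lower bound, with the hypothesis that $R$ is $t$-independent), and combine with the $\Omega(t^2/\varepsilon^2)$ query bound of Theorem~\ref{thm_sample lower bound 2} to conclude $c\cdot s=\Omega(t^2)$ and hence depth $\Omega(t^2)$ per run. Your added remarks on adaptivity and on why the $t$-independence hypothesis is needed are consistent with, and slightly more explicit than, the paper's exposition.
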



Our lower bound results are fundamental. They demonstrate that any generic quantum algorithm within this access model that estimates properties of the state $f(H)\ket{\psi_0}$, for any broad class of functions $f$, must incur circuit depth scaling quadratically with the minimum degree of the polynomial approximating $f(H)$. This establishes that both of our randomized algorithms for QSVT are essentially optimal in their dependence on the polynomial degree. Beyond QSVT, these lower bounds also apply to recently introduced randomized LCU methods \cite{chakraborty2024implementing, wang2024qubit}. Moreover, they extend naturally to randomized Hamiltonian simulation techniques such as qDRIFT \cite{campbell2019random}, its higher-order variant qSWIFT \cite{nakaji2024qswift}, and randomized multiproduct formulas \cite{faehrmann2022randomizingmulti}. All of these methods exhibit a quadratic dependence on the evolution time $t$, and our results show that this scaling is in fact optimal within this access model. 
In this sense, our lower bounds delineate the fundamental performance limits of randomized quantum algorithms for implementing $f(H)$ under sample access to  $H$, making explicit which trade-offs are unavoidable and providing a benchmark for what future advances in this framework can realistically achieve. In the next section, we consider concrete applications of our randomized algorithms for QSVT.

\section{Applications}
\label{sec:applications}
We apply our algorithms to two problems of broad practical interest: (i) solving quantum linear systems, and (ii) ground state property estimation.

\vspace{.3cm}

\subsection{Quantum linear systems}
\label{subsec:qls}
The quantum linear systems algorithm \cite{harrow2009quantum, childs2017qls, costa2022optimal} has been a cornerstone of quantum algorithms, with wide-ranging applications to regression \cite{ chakraborty_et_al:LIPIcs.ICALP.2019.33, chakraborty2023quantum}, machine learning, and the solution of differential equations \cite{childs2020quantum, Krovi2023improvedquantum}. Formally, let $A\in \mathbb{C}^{N\times N}$ is a matrix with operator norm $\|A\|=1$, whose singular values lie in the range $[-1,-1/\kappa]\cup[1/\kappa, 1]$, where $\|A^{-1}\|=\kappa$ is the condition number of $A$. 
Given efficient access to a state preparation oracle for $\ket{b}$, the goal is to estimate the expectation value 
$\braket{x|O|x}$ to additive accuracy $\eps$, where $$
\ket{x}=\dfrac{A^{-1}\ket{b}}{\left\|A^{-1}\ket{b}\right\|}.$$ 

The complexity of quantum linear systems algorithms has been refined through successive improvements, with most prior work assuming access to a block encoding $U_A$ of matrix $A$. Algorithms based on LCU or QSVT typically achieve linear dependence on $\kappa$ by employing sophisticated subroutines such as variable-time amplitude amplification (VTAA) \cite{ambainis2012variabletime, childs2017qls, chakraborty_et_al:LIPIcs.ICALP.2019.33}. An alternative line of work, inspired by adiabatic quantum computation \cite{costa2022optimal}, makes only $O(\kappa\log(1/\eps))$ queries to the block encoded operator, which is optimal, without having to use VTAA. 

More recently, Ref.~\cite{chakraborty2025quantum} demonstrated that, given the full description of a Pauli decomposition of $A$, the target expectation value can be estimated using a single interleaved sequence of Hamiltonian evolutions. This sequence, which combines a discretized adiabatic evolution with an eigenstate filtering step, takes the form 
$$
W := V_0 \prod_{\ell=1}^M e^{\i H^{(\ell)}} V_{\ell},
$$ 
with total length of $M = \widetilde{O}(\kappa)$ and acts on a system of $\log(N)+4$ qubits. In this construction, Hamiltonian evolution is approximated using higher-order Trotterization, yielding a quasi-linear dependence on $\kappa$.

In this work, we develop a randomized quantum linear systems algorithm that builds on the above framework. Specifically, we estimate the final expectation value by replacing the Hamiltonian evolution operators in $W$ with qDRIFT. In this way, the entire procedure can be viewed as a particular instance of the operator $W$, and its complexity follows directly by applying the substitutions prescribed in Theorem~\ref{thm_HSVT with qDRIFT}. 

\begin{thm}[Randomized Quantum linear systems algorithm]
\label{thm:qls}
    Let $Ax=b$ be a system of linear equations, where $A\in\mb C^{N \times N}$ with $\|A\|=1$ and $\|A^{-1}\|=\kappa$, and let $\varepsilon\in(0,1/2)$ be the precision parameter.
    Let $A=\sum_{k=1}^L \lambda_k P_k$ be the Pauli decomposition of $A$.
    Given a unitary oracle $U_b$ that prepares the state $\ket{b}$ and an observable $O$.
    There exists a randomized quantum algorithm that estimates 
    $$
   \braket{x|O|x},
    $$ 
    with an additive error at most $\varepsilon\|O\|$, using only $\log(N)+4$ qubits. 
    The maximum quantum circuit depth is $\widetilde{O}(\lambda^2 \kappa^{2})$, while the total time complexity and the number of queries to $U_b$ and $U_b^{\dag}$ are $\widetilde{O}(\lambda^2 \kappa^2/\varepsilon^2)$, where $\lambda = \sum_{k=1}^L |\lambda_k|$.
\end{thm}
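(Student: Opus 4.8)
The plan is to obtain Theorem~\ref{thm:qls} as a direct instantiation of Theorem~\ref{thm_HSVT with qDRIFT}, using the interleaved-sequence reformulation of the quantum linear systems problem from \cite{chakraborty2025quantum}. First I would recall that reduction: given the Pauli decomposition $A=\sum_{k=1}^L \lambda_k P_k$ and the state-preparation oracle $U_b$, \cite{chakraborty2025quantum} constructs an interleaved circuit $W = V_0 \prod_{\ell=1}^M e^{\i H^{(\ell)}} V_\ell$ of length $M=\widetilde O(\kappa)$, acting on $\log N+4$ qubits, such that estimating $\braket{x|O|x}$ reduces to estimating $\braket{\psi_0|W^\dagger \widetilde O W|\psi_0}$ with $\ket{\psi_0}=\ket{0^4}\ket{b}$ (prepared by $U_b$) and $\widetilde O = \Pi\otimes O$, where $\Pi$ projects onto the flag subspace of the ancillas. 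The sequence implements a discretized adiabatic evolution along the optimal vanishing-boundary-derivative schedule (so that $\ket{x}$ is produced with $\Omega(1)$ overlap on the flagged subspace \emph{without} amplitude amplification), followed by an eigenstate-filtering step of degree $\widetilde O(\kappa\log(1/\eps))$. Each $H^{(\ell)}$ is an affine combination of a fixed pair of Hamiltonians built from $A$ and the rank-one term associated with $\ket{b}$, and hence inherits a Hermitian decomposition (into rotated Paulis and $U_b$-conjugated Paulis) with one-norm $\lambda^{(\ell)}=O(\lambda)$, where $\lambda=\sum_k|\lambda_k|$; each term $e^{\i H_k^{(\ell)}\tau}$ is then a (possibly $U_b$-conjugated) Pauli rotation implementable in $\widetilde O(1)$ time using $O(1)$ calls to $U_b,U_b^\dagger$, and the sampling distributions $\mathcal D^{(\ell)}$ are efficiently sampleable from the given Pauli data.

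With this in hand, I would apply Theorem~\ref{thm_HSVT with qDRIFT} to this $W$, this $\ket{\psi_0}$, and the observable $\widetilde O$, replacing each exact evolution $e^{\i H^{(\ell)}}$ by the Richardson-extrapolated qDRIFT channel of Algorithm~\ref{alg: HSVT with qDRIFT}. Since $\|A\|=1$ forces $\lambda\ge 1$, we have $\lambda M=\widetilde O(\lambda\kappa)\ge 1$ and $\max_\ell\lambda^{(\ell)}=O(\lambda)$, so Theorem~\ref{thm_HSVT with qDRIFT} outputs an estimate of $\braket{\psi_0|W^\dagger\widetilde O W|\psi_0}$ to additive error $\eps'\|\widetilde O\|=\eps'\|O\|$ using maximum circuit depth $\widetilde O(\lambda^2M^2+M)=\widetilde O(\lambda^2\kappa^2)$ and total time $\widetilde O((\lambda^2M^2+M)/(\eps')^2)=\widetilde O(\lambda^2\kappa^2/(\eps')^2)$; the number of $U_b,U_b^\dagger$ queries is at most a constant multiple of the gate count, hence also $\widetilde O(\lambda^2\kappa^2/(\eps')^2)$. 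Since Richardson extrapolation is classical post-processing and qDRIFT applies only single Pauli (or $U_b$-conjugated Pauli) rotations, no ancilla is added beyond the $\log N+4$ qubits of $W$.

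Finally I would convert back to the stated quantity. Writing $p:=\braket{\psi_0|W^\dagger(\Pi\otimes I)W|\psi_0}=\Theta(1)$ for the flag-success probability, the adiabatic-plus-filtering guarantee of \cite{chakraborty2025quantum} gives $\braket{\psi_0|W^\dagger(\Pi\otimes O)W|\psi_0}=p\braket{x|O|x}\pm O(\eps\|O\|)$, where the $O(\eps\|O\|)$ term absorbs the adiabatic and filtering infidelity (controlled by $M=\widetilde O(\kappa\log(1/\eps))$). Running the estimator once with $O$ and once with $O=I$, each to accuracy $\eps'=\Theta(\eps)$, and dividing, yields $\braket{x|O|x}\pm\eps\|O\|$ with constant success probability (amplified by a median of $O(\log(1/\delta))$ runs if a failure probability $\delta$ is desired), without changing the asymptotics. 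This gives circuit depth $\widetilde O(\lambda^2\kappa^2)$, total time and query count $\widetilde O(\lambda^2\kappa^2/\eps^2)$, and $\log N+4$ qubits, as claimed.

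The only genuine work beyond invoking Theorem~\ref{thm_HSVT with qDRIFT} is the bookkeeping in the first paragraph: checking that the interleaved circuit of \cite{chakraborty2025quantum} literally satisfies the hypotheses of that theorem—in particular that every $\lambda^{(\ell)}=O(\lambda)$ despite the $\ket{b}$-dependent term and the adiabatic interpolation weights, and that each $\mathcal D^{(\ell)}$ is efficiently sampleable—together with the clean handling of the normalization by $p$ in the last step. I expect this to be routine given the construction in \cite{chakraborty2025quantum}; once it is in place the complexity statement follows immediately from $M=\widetilde O(\kappa)$.
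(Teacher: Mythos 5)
Your proposal is correct and follows essentially the same route as the paper: invoke the interleaved-sequence reduction of the linear systems problem from \cite{chakraborty2025quantum} with $M=\widetilde{O}(\kappa)$, $\lambda^{(\ell)}=O(\lambda)$, and $\widetilde{O}(1)$-time implementable term evolutions (with $O(1)$ calls to $U_b$), then substitute into Theorem~\ref{thm_HSVT with qDRIFT} to get depth $\widetilde{O}(\lambda^2\kappa^2)$ and total cost $\widetilde{O}(\lambda^2\kappa^2/\eps^2)$ on $\log N+4$ qubits. The only difference is that you spell out the flag-probability normalization (estimating with $O$ and with $I$ and dividing), a detail the paper delegates to the cited reduction; this does not change the argument or the asymptotics.
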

\begin{proof}
    Ref.~\cite{chakraborty2025quantum} establishes that the quantum linear systems problem can be reduced to estimating an expectation value for an interleaved Hamiltonian evolution sequence $W$, acting on a space of $\log(N)+4$ qubits. This sequence has a total length of $M = \widetilde{O}(\kappa)$. The Hamiltonians $H^{(\ell)}$ within it are constructed from the Pauli decomposition of $A$ such that the norm-sum parameter $\lambda$ scales as $O(\sum_{k=1}^L |\lambda_k|)$. 

    A key property of this construction is that each Hamiltonian $H^{(\ell)}$ is a sum of $O(L)$ simple terms, $H^{(\ell)} = \sum_{k} H_k^{(\ell)}$, where the evolution $e^{\i H_k^{(\ell)} \tau}$ under each individual term (for any $\tau \in \mathbb{R}$) can be implemented efficiently. Specifically, each such evolution, as well as each unitary $V_\ell$, requires a circuit with $\mathrm{polylog}(N)$ elementary gates and at most two queries to the oracle $U_b$ and its inverse.

    This setup is perfectly suited for the qDRIFT-based framework presented in Theorem \ref{thm_HSVT with qDRIFT}. Substituting the parameters $M = \widetilde{O}(\kappa)$ and $\lambda$ yields a total time complexity and query complexity of $\widetilde{O}(\lambda^2 M^2/\varepsilon^2) = \widetilde{O}((\lambda \kappa/\varepsilon)^2)$, and a maximum circuit depth of $\widetilde{O}(\lambda^2 M^2) = \widetilde{O}(\lambda^2 \kappa^{2})$. 
\end{proof}

Remarkably, our algorithm requires only four ancilla qubits, avoids block encodings entirely, and achieves a circuit depth independent of $L$. We compare this to other works in Table \ref{table:comparison-qls}.

\begin{table}[ht!!]
\begin{center}
    \resizebox{\columnwidth}{!}{
    \renewcommand{\arraystretch}{1.5} 
    \begin{tabular}{cccc}
    \hline
    Algorithm & Ancilla & Circuit depth per coherent run & Classical repetitions \\ \hline\hline

    Standard QSVT \cite{gilyen2019quantum} & $\lceil \log_2 L \rceil+1 $ & $\widetilde{O}\left(L\lambda\kappa^2\right)$ & $\widetilde{O}\left(\varepsilon^{-2}\right)$ \\
    State-of-the-art\cite{costa2022optimal} & $\lceil \log_2 L \rceil+6 $ & $\widetilde{O}\left(L\lambda\kappa\right)$ & $\widetilde{O}\left(\varepsilon^{-2}\right)$ \\ 
    QSVT with Trotterization\cite{chakraborty2025quantum} & 4 & $\widetilde{O}\left(L(\lambda\kappa)^{1+o(1)}\right)$ & $\widetilde{O}\left(\varepsilon^{-2}\right)$ \\ 
    \hline
    Other randomized methods \cite{wang2024qubit, chakraborty2024implementing} & $1$ & $\widetilde{O}(\lambda^2\kappa^2)$ & $\widetilde{O}\left(\kappa^4\eps^{-2}\right)$\\
    This work (Theorem \ref{thm:qls}) & 4 & $\widetilde{O}(\lambda^2\kappa^2)$ & $\widetilde{O}(\eps^{-2}) $ \\
    \hline
\end{tabular}}
\caption{\small{Comparison of the complexities of different quantum linear systems algorithms. Consider a Hermitian operator $A=\sum_{k=1}^{L}\lambda_k P_k$, with $\lambda = \sum_{k=1}^L |\lambda_k|$, $\|A\|=1$, and $\|A^{-1}\|=\kappa$. Then, given a procedure for preparing the state $\ket{b}$, and any observable $O$, each of these algorithms estimates $\braket{x|O|x}$ to additive accuracy $\eps\|O\|$. We compare the total number of ancilla qubits required, the circuit depth per coherent run, and the total number of classical repetitions needed.
\label{table:comparison-qls}}}
\end{center}
\end{table}

Standard QSVT requires access to a block encoding of $H$, which, as discussed earlier, is costly to construct: it demands several ancilla qubits and intricate controlled operations \cite{gilyen2019quantum}. The resulting algorithm has circuit depth $\widetilde{O}(\lambda L \kappa^2)$ and uses $\lceil\log_2 L\rceil+1$ ancilla qubits. Moreover, if the observable $O$ is measured at the end of each run, the entire procedure must be repeated $\widetilde{O}(1/\eps^2)$ times. In contrast, our randomized approach achieves strictly shorter depth whenever $\lambda\ll L$, while using only four ancilla qubits.

For QSVT, the dependence on $\kappa$ can be improved to linear using variable-time amplitude amplification (VTAA), but this comes at the cost of even more ancilla qubits and sophisticated controlled operations, making it impractical for early fault-tolerant devices. Similarly, if a block encoding of $O$ is available, one may employ quantum amplitude estimation to quadratically reduce the dependence on inverse precision from $1/\eps^2$ to $1/\eps$. However, this improvement is offset by an exponential increase, in terms of $\varepsilon$, in circuit depth, to $\widetilde{O}(L\lambda\kappa^2/\varepsilon)$ in each round of repetition, along with the practical difficulty of implementing amplitude estimation using early fault-tolerant architectures.

The state-of-the-art algorithm by Costa et al.~\cite{costa2022optimal} requires a circuit depth of $\widetilde{O}(L\lambda\kappa)$, requiring block encoding access to $H$, while incurring an overhead of $\lceil\log_2 L\rceil+6$ ancilla qubits. Despite being resource-efficient, our randomized algorithm requires shorter circuit depth whenever $\lambda\kappa \ll L$.

Compared to the recent Trotter–extrapolation approach of \cite{chakraborty2025quantum}, which achieves a time complexity of 
$
\widetilde{O}(L(\lambda\kappa)^{1+\frac{1}{2k}})
$
using $2k$-th order Suzuki–Trotter formulas and classical extrapolation, our method requires a shorter depth whenever $(\lambda\kappa)^{1-1/(2k)}\ll L$. Since only low-order formulas ($k=1,~2$) are practical in near-term settings, there are broad parameter regimes where our randomized method provides an advantage, while using the same number of ancilla qubits.

Finally, our method provides a polynomial advantage over randomized LCU-based linear systems algorithms \cite{chakraborty2025quantum, wang2024qubit}. Although the circuit depths are comparable, our approach reduces the required number of classical repetitions by a factor of $\kappa^4$, leading to a substantial overall speedup.

\subsection{Ground state property estimation} 
\label{subsec:gspe}
Consider a Hamiltonian $H$ with spectral decomposition $H=\sum_{i} \xi_i \ket{v_i}\bra{v_i}$, where the eigenvalues $\xi_i$ are ordered increasingly. The eigenstate $\ket{v_0}$ corresponding to the smallest eigenvalue $\xi_0$ is called the ground state, and $\xi_0$ is the ground state energy. Preparing the ground state and estimating its energy are among the most fundamental problems in quantum computing and quantum many-body physics. In full generality, these tasks are QMA-hard \cite{kempe2006complexity}, but they become efficiently solvable under additional assumptions. In particular, one typically assumes (i) knowledge of the spectral gap $\Delta$ separating the ground state from the first excited state, and (ii) access to a guess state $\ket{\phi_0}$ that has at least $\gamma$ overlap with the ground state, i.e.\ $|\braket{\phi_0|v_0}|\geq \gamma$.

Furthermore, we assume that any Hamiltonian $H$ of interest can be expressed as a sum of $L$ local terms, namely,  $H=\sum_{k=1}^L \lambda_k H_k$, with $\|H_k\|=1$, and $\lambda=\max\{\sum_{k}|\lambda_k|,1\}$. We now formally state the ground state property estimation problem:

\begin{prob}[Ground state property estimation]
\label{prob:ground state}
    Let $H=\sum_{k=1}^L \lambda_k H_k$ be a Hamiltonian with $\|H_k\|=1$ and $\lambda=\max\{\sum_{k}|\lambda_k|,1\}$. Suppose $\varepsilon\in(0,1/2)$ and that we can efficiently prepare an initial guess state $\left|\phi_0\right\rangle$ such that it has an overlap of at least $\gamma$ with the ground state $\left|v_0\right\rangle$ of $H$, i.e.\ 
    $\left|\braket{\phi_0|v_0} \right| \geq \gamma$, for $\gamma\in (0,1)$. Furthermore, assume that there is a spectral gap $\Delta$ separating the ground state energy $\xi_0$ from the rest of the spectrum. Then, the goal is to compute $\bracket{v_0}{O}{v_0}$ up to additive error $\varepsilon\|O\|$ for any given observable $O$. 
\end{prob}

We use Theorem \ref{thm_gqsp with qdrift and aa}, to obtain the following result. 

\begin{thm}[Randomized Ground state property estimation]
\label{thm:Ground state property estimation using GQSP with qDRIFT}
    Suppose that there is a spectral gap $\Delta$ of $H$ separating the ground state energy $\xi_0$ from the first excited state energy $\xi_1$ such that
    \[
    \xi_0 \leq \mu-\Delta / 2<\mu+\Delta / 2 \leq \xi_1,
    \]
    for some given $\mu$. 
    Then there is a randomized quantum algorithm for the ground state property estimation Problem~\ref{prob:ground state} using two ancilla qubits. 
    The maximum circuit depth is
    \[
    \widetilde{O}\left((\lambda/\Delta\gamma)^2\right),
    \]
    and the total time complexity is
    \[
    \widetilde{O}\left((\lambda/\Delta\gamma\eps)^2\right).
    \]
\end{thm}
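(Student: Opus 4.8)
The plan is to reduce Problem~\ref{prob:ground state} to a single invocation of Theorem~\ref{thm_gqsp with qdrift and aa}, instantiated with an \emph{eigenstate filtering} Laurent polynomial whose normalized image of $\ket{\phi_0}$ is (approximately) the ground state $\ket{v_0}$, and then to read off $\braket{v_0|O|v_0}$ from the observable-estimation step. First I would rescale: since $\|H\|\le\sum_k|\lambda_k|\le\lambda$, I work with $\tilde H:=H/\lambda=\sum_k(\lambda_k/\lambda)H_k$, whose spectrum lies in $[-1,1]$ and whose coefficient one-norm is $\sum_k|\lambda_k/\lambda|=1$; this is the rescaling discussed after Theorem~\ref{thm_gqsp with qdrift}, and its effect is that the ``$\lambda$''-parameter appearing in Theorem~\ref{thm_gqsp with qdrift and aa} applied to $\tilde H$ collapses to $1$. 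The ground state is unchanged, the gap becomes $\tilde\Delta=\Delta/\lambda$, and the promise rescales to $\tilde\xi_0\le\tilde\mu-\tilde\Delta/2<\tilde\mu+\tilde\Delta/2\le\tilde\xi_1$ with $\tilde\mu=\mu/\lambda$.

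Next I would build the filter. Using the standard polynomial approximation to the shifted Heaviside step (eigenstate filtering; see e.g.\ \cite{lin2020nearoptimalground,gilyen2019quantum}), for any $\eps'\in(0,1)$ there is a polynomial of degree $O(\tilde\Delta^{-1}\log(1/\eps'))=O(\lambda\Delta^{-1}\log(1/\eps'))$ that is bounded by $1$ on $[-1,1]$, is $\eps'$-close to $1$ on $\{x\le\tilde\mu-\tilde\Delta/2\}$, and is $\eps'$-close to $0$ on $\{x\ge\tilde\mu+\tilde\Delta/2\}$. Invoking the polynomial-to-Laurent conversion of \cite{chakraborty2025quantum} (cf.\ Sec.~\ref{subsec:prelim-gqsp}) turns this into a Laurent polynomial $P(z)=\sum_{j=-d}^{d}a_jz^j$ with $|P(z)|\le1$ on $\mathbb{T}$ and $d=\widetilde{O}(\lambda/\Delta)$; since the whole spectrum of $\tilde H$ lies in $[-1,1]\subset(-\pi,\pi)$ there is no aliasing, so $|P(e^{\i\tilde\xi_0})-1|=O(\eps')$ and $|P(e^{\i\tilde\xi_i})|=O(\eps')$ for every $i\ge1$. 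It follows that $P(e^{\i\tilde H})\ket{\phi_0}=\sum_i P(e^{\i\tilde\xi_i})\braket{v_i|\phi_0}\ket{v_i}$ has norm at least $(1-O(\eps'))\gamma\ge\gamma/2$, its part orthogonal to $\ket{v_0}$ has norm $O(\eps')$, and hence the normalized state $\ket{\psi}:=P(e^{\i\tilde H})\ket{\phi_0}/\|P(e^{\i\tilde H})\ket{\phi_0}\|$ is $O(\eps'/\gamma)$-close to $\ket{v_0}$ up to a global phase, so that $|\braket{\psi|O|\psi}-\braket{v_0|O|v_0}|=O(\eps'\|O\|/\gamma)$.

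With these pieces I would apply Theorem~\ref{thm_gqsp with qdrift and aa} to $\tilde H$ (so its $\lambda$-parameter is $1$), the Laurent polynomial $P$ of degree $d=\widetilde{O}(\lambda/\Delta)$, the initial state $\ket{\phi_0}$, the overlap parameter $\eta=\gamma/2$, and a target precision $\eps''$; the controlled evolutions $e^{\pm\i\tilde H}$ inside the GQSP circuit are implemented via qDRIFT exactly as in that theorem, using two ancilla qubits in total (one for GQSP, one for the fixed-point amplitude-amplification reflections). This estimates $\braket{\psi|O|\psi}$ to additive error $\eps''\|O\|$ with maximum circuit depth $\widetilde{O}(d^2/\eta^2)=\widetilde{O}\big((\lambda/(\Delta\gamma))^2\big)$ and total time $\widetilde{O}\big((\lambda/(\Delta\gamma))^2/(\eps'')^2\big)$. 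Taking $\eps'=\Theta(\gamma\eps)$ and $\eps''=\Theta(\eps)$ makes both error contributions at most $\eps\|O\|/2$; because $d$ depends on $\log(1/\eps')$ only inside the $\widetilde{O}(\cdot)$, the depth and time bounds are unchanged up to polylogarithmic factors, giving circuit depth $\widetilde{O}\big((\lambda/(\Delta\gamma))^2\big)$ and total time $\widetilde{O}\big((\lambda/(\Delta\gamma\eps))^2\big)$, as claimed, with two ancilla qubits.

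Most of this is routine: the error bookkeeping linking $\eps',\eps''$ to $\eps$ and $\gamma$, and verifying that the hypotheses of Theorem~\ref{thm_gqsp with qdrift and aa} are met verbatim. The one genuinely delicate point is the filter construction together with the rescaling: one must choose the rescaling so that the entire spectrum fits inside a single $2\pi$-period of $e^{\i x}$ (otherwise the filter's pass and stop bands get corrupted by aliasing), while simultaneously keeping the Laurent-polynomial degree at $\widetilde{O}(\lambda/\Delta)$ and keeping $|P|\le1$ on all of $\mathbb{T}$. It is exactly this careful rescaling that drives the $\lambda$-parameter of Theorem~\ref{thm_gqsp with qdrift and aa} down to $1$ and yields the stated $(\lambda/(\Delta\gamma))^2$ scaling rather than a larger power of $\lambda$.
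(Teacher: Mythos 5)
Your proposal is correct and follows essentially the same route as the paper's proof: rescale to $H/\lambda$ so the coefficient one-norm becomes $1$, build the degree-$\widetilde{O}(\lambda/\Delta)$ eigenstate-filtering Laurent polynomial from \cite{chakraborty2025quantum}, and invoke Theorem~\ref{thm_gqsp with qdrift and aa} with $\eta=\Theta(\gamma)$ to get the stated depth and time bounds with two ancilla qubits. Your extra bookkeeping on $\eps'$, $\eps''$, and aliasing only makes explicit what the paper handles implicitly.
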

\begin{proof}
    We first rescale $H$ to $H'=H/\lambda$ so that $\|H'\|\le 1$. The spectral gap of $H'$ is $\Delta' = \Delta/\lambda$ and the norm-sum parameter $\lambda'=\lambda/\lambda = 1$. The core of the algorithm is to implement an approximate projection onto the ground state. As shown in  \cite{chakraborty2025quantum}, this can be done by constructing a Laurent polynomial $P(z)$ that approximates a shifted sign function, which acts as a filter for the ground state energy. Specifically, there exists a Laurent polynomial $P(z)$ bounded by $1$ on $\mb T$, such that $\|P(e^{\i H'})\ket{\phi_0}-\braket{v_0|\phi_0}\ket{v_0}\|\le \eps/(8\gamma)$, and the degree of $P(z)$ is $$d = O\left(\dfrac{1}{\Delta'}\log \dfrac{1}{\gamma\varepsilon}\right) = \widetilde{O}\left(\dfrac{\lambda}{\Delta}\right).
    $$
    Thus the normalized QSVT state $\ket{\phi}:=P(e^{\i H'})\ket{\phi_0}/\|P(e^{\i H'})\ket{\phi_0}\|$ is an $(\eps/4)$-approximation of the ground state $\ket{v_0}$ up to a global phase factor. This ensures that $|\bra{\phi}O\ket{\phi}-\bra{v_0}O\ket{v_0}|\le \eps\|O\|/2$.

    We can now directly apply Theorem \ref{thm_gqsp with qdrift and aa} to estimate $\bra{\phi}O\ket{\phi}$ with an error at most $\eps\|O\|/2$, which yields a total time complexity of 
    $$\widetilde{O}\bigg(\frac{(\lambda')^2 d^2}{\gamma^2\varepsilon^2}\bigg) = \widetilde{O}\left(\dfrac{\lambda^2}{\Delta^2\gamma^2\varepsilon^2}\right).
    $$ 
    and a maximum circuit depth of $$\widetilde{O}\bigg(\frac{(\lambda')^2 d^2}{\gamma^2}\bigg) = \widetilde{O}\left(\dfrac{\lambda^2}{\Delta^2\gamma^2}\right),
    $$
while using two ancilla qubits. This provides an estimate of $\bra{v_0}O\ket{v_0}$ within an error $\eps\|O\|$.
\end{proof}
We summarize a detailed comparison of our approach with prior methods in Table~\ref{table:ground state problem}. The state-of-the-art algorithm of Lin and Tong \cite{lin2020nearoptimalground} constructs a block encoding of $H$ and then applies QSVT together with fixed-point amplitude amplification to prepare the ground state. This requires $O(\log L)$ ancilla qubits and yields a quantum circuit depth of $\widetilde{O}(L\lambda\Delta^{-1}\gamma^{-1})$. The target expectation value must then be estimated using $\widetilde{O}(1/\eps^{2})$ classical repetitions of this circuit. By contrast, our method operates with only two ancilla qubits, completely avoids block encodings, and achieves a shorter circuit depth whenever $\lambda \ll L\Delta\gamma$, while retaining the same $\widetilde{O}(1/\eps^{2})$ classical repetition cost. 
As noted earlier, quantum amplitude estimation can, in principle, reduce the dependence on the precision parameter from $1/\eps^2$ to $1/\eps$. 
However, this comes at the price of an exponential increase in circuit depth, making it unfavorable for early fault-tolerant devices. Furthermore, implementing amplitude estimation requires block-encoding access to the observable $O$, which introduces additional ancilla overhead and further complicates the procedure.  

\begin{table}[h!]
\centering
\resizebox{1.0\columnwidth}{!}{
\renewcommand{\arraystretch}{1.5} 
\begin{tabular}{c | c c c c} 
 \hline
 Algorithm & Ancilla  & Circuit depth per coherent run & Classical repetitions \\
 \hline\hline
 Lin and Tong~\cite{lin2020nearoptimalground}   & $\lceil \log_2L\rceil+3$ & $\widetilde{O}(L \lambda \Delta^{-1}\gamma^{-1})$ & $\widetilde{O}(\eps^{-2})$ \\
  QETU~\cite{dong2022ground} with Trotter   & 2 & $\widetilde{O}\big(L \big(\lambda \Delta^{-1}\gamma^{-1}\big)^{1+o(1)}\varepsilon^{-o(1)}\big) $ & $\widetilde{O}(\varepsilon^{-2})$ \\ 
 QSVT with Trotter~\cite{chakraborty2025quantum} & 2 & $ \widetilde{O}(L \big(\lambda\Delta^{-1}\gamma^{-1}\big)^{1+o(1)} ) $ & $\widetilde{O}(\varepsilon^{-2})$\\ \hline
   QETU~\cite{dong2022ground} with qDRIFT   & 2 & $\widetilde{O}\left( \lambda^2 \Delta^{-2}\gamma^{-2}\varepsilon^{-1}\right) $ & $\widetilde{O}(\varepsilon^{-2})$ \\ 
    Other randomized methods \cite{wang2024qubit, chakraborty2024implementing} & $1$ & $\widetilde{O}(\lambda^2 \Delta^{-2})$ & $\widetilde{O}\left(\eps^{-2}\gamma^{-4}\right)$\\
 This work (Thm.~\ref{thm:Ground state property estimation using GQSP with qDRIFT}) & 2 & $\widetilde{O}\big((\lambda\Delta^{-1}\gamma^{-1})^2\big)$& $\widetilde{O}(\eps^{-2})$ 
\\\hline 
\end{tabular}
}
\caption{Comparison of the complexities of different algorithms for ground state property estimation. Consider any Hamiltonian $H=\sum_{k=1}^{L}\lambda_k H_k$, with $\|H_k\|=1$, and spectral gap $\Delta$. Here, $\lambda = \max\{\sum_{k}|\lambda_k|,1\}$. Then, given an initial ``guess state'' with overlap at least $\gamma$ with the ground state $\ket{v_0}$ of $H$, and an observable $O$, these algorithms estimate $\braket{v_0|O|v_0}$ with additive accuracy $\eps\|O\|$. For each algorithm, we compare the total number of ancilla qubits required, the circuit depth per coherent run, and the total number of classical repetitions needed.}
\label{table:ground state problem}
\end{table}

Several recent works have proposed alternatives that avoid block encodings by assuming oracle access to the time-evolution operator $U = e^{\i H}$, with complexity measured in terms of the number of queries to $U$ \cite{lin2022heisenberg, dong2022ground, wan2022randomized, Wang2023quantumalgorithm, zhang2022computingground}. These approaches typically require only a few ancilla qubits, making them attractive in principle. However, to implement them as fully end-to-end quantum algorithms without introducing additional ancilla overhead, one must approximate $U$ using either higher-order Trotterization or qDRIFT. Both substitutions substantially increase the circuit depth, resulting in an exponentially worse dependence on the precision parameter compared to our randomized algorithm.

In Ref.~\cite{chakraborty2025quantum}, a complementary end-to-end algorithm for ground-state property estimation was developed. Instead of qDRIFT, the authors employ $2k$-th order Suzuki–Trotter formulas to implement the Hamiltonian evolutions in $W$, together with fixed-point amplitude amplification and classical extrapolation to estimate the target expectation value. This approach uses three ancilla qubits and requires $\widetilde{O}(\eps^{-2})$ independent runs of a quantum circuit with depth
$
\widetilde{O}(L({\lambda}/{\Delta\gamma})^{1+\frac{1}{2k}}).
$
As discussed earlier, only low-order Trotter formulas are practical in the early fault-tolerant setting. For any order $k$, our randomized algorithm achieves strictly shorter circuit depth whenever 
$
({\lambda}/{\Delta\gamma})^{1-\frac{1}{2k}}\ll L.
$
This condition is often satisfied in realistic regimes, particularly for low Trotter orders, allowing our method to provide a clear advantage while also requiring fewer ancilla qubits.

Finally, randomized LCU-based methods for ground-state property estimation \cite{chakraborty2025quantum, wang2024qubit} require only a single ancilla qubit, but incur a total cost of $\widetilde{O}(\eps^{-2}\gamma^{-4})$ runs of a quantum circuit of depth $\widetilde{O}(\lambda^2/\Delta^2)$ to estimate the target expectation value. This leads to a suboptimal dependence on $1/\gamma$, and, moreover, these methods cannot exploit amplitude amplification to reduce the number of classical repetitions. In contrast, our algorithm achieves matching circuit depth and total complexity when amplitude amplification is not used. More importantly, Theorem~\ref{thm:Ground state property estimation using GQSP with qDRIFT} shows that fixed-point amplitude amplification can be seamlessly integrated into $W$, yielding a quadratically improved dependence on $1/\gamma$ in the overall complexity.

Overall, our randomized algorithms achieve shorter circuit depth across broad parameter regimes, while remaining fully independent of the number of terms in $H$. They completely avoid block encodings, require only a small constant number of ancilla qubits, and reduce classical repetition costs relative to randomized LCU methods. Taken altogether, these features make our approach especially well-suited for the early fault-tolerant era, where circuit depth and ancilla resources are the dominant bottlenecks.

\section{Numerical Benchmarking}
\label{sec:numerics}
We evaluate the practical performance of our randomized QSVT algorithms on the ground-state property estimation task introduced in Sec.~\ref{subsec:gspe}. In particular, we compare the asymptotic circuit depth, measured in terms of single- and two-qubit gates, with that of prior approaches. Our benchmarks include the algorithm of Dong, Lin, and Tong~\cite{dong2022ground}, which is suited for early fault-tolerant quantum computers, but assumes black box access to the time evolution operator $e^{\i H}$. To enable a fair comparison, we explicitly implement this oracle using standard Hamiltonian simulation techniques, namely the first- and fourth-order Suzuki–Trotter product formulas and the qDRIFT method. In each case, we report the end-to-end circuit depth required to achieve the target accuracy in ground state property estimation. 

In addition, we compare against recent QSVT algorithms based on Suzuki-Trotter decompositions~\cite{chakraborty2025quantum}. As noted previously, only the second- and fourth-order formulas are considered in practice, since the pre-factor associated with higher-order Trotter methods grows exponentially and quickly outweighs their asymptotic advantages. Finally, we benchmark our algorithm against prior randomized approaches~\cite{chakraborty2024implementing, wang2024qubit}. As established in Sec.~\ref{subsec:gspe}, our randomized QSVT enjoys a quadratic improvement in the overlap parameter $\gamma$, leading to a provable reduction in overall gate complexity. We demonstrate this advantage explicitly by comparing total gate counts across the spin and molecular Hamiltonians studied in the following subsections.

Having outlined the algorithms used for comparison, we now turn to the systems we consider. In Sec.~\ref{subsec:q-chem-system}, we examine quantum chemistry Hamiltonians, in particular, those of elementary molecules such as propane, ethane, and carbon dioxide. Following this, in Sec.~\ref{subsec:spin-hamiltonians}, we also explore quantum spin models with long-range interactions, where the number of terms $L$ grows quadratically with system size, while $\lambda$ grows linearly.

Both these systems are particularly well-suited to randomized approaches: although the number of Pauli terms $L$ is extremely large, the total coefficient sum $\lambda$ remains moderate, making the quadratic dependence on 
$\lambda$ more favorable than the linear dependence on $L\lambda$ that arises in Trotter-based methods and standard QSVT.

\subsection{Quantum chemistry Hamiltonians}
\label{subsec:q-chem-system}
Electronic structure Hamiltonians provide a natural and practically relevant testing ground for ground-state property estimation algorithms. These systems have been extensively studied in the context of Hamiltonian simulation \cite{whitfield2011simulation, babbush2018low, campbell2019random, motta2021low, gunther2025phase}. The Hamiltonian of a molecule in second quantization takes the form
\begin{equation}
    H=\sum_{p q} h_{p q} a_p^{\dagger} a_q+\frac{1}{2} \sum_{p q r s} h_{p q r s} a_p^{\dagger} a_q^{\dagger} a_r a_s, 
\end{equation}
where $a_p^\dagger$ and $a_q$ are fermionic creation and annihilation operators, and the coefficients $h_{pq}, h_{pqrs}$ are determined by the one- and two-electron integrals of the molecular system. To simulate such Hamiltonians on a quantum computer, we map the fermionic operators to qubits using the Jordan–Wigner transformation, yielding a weighted sum of Pauli operators. 
\begin{table}[h!]
\centering
\resizebox{0.9\columnwidth}{!}{
\renewcommand{\arraystretch}{1.5} 
\begin{tabular}{| c | c | c | c |} 
 \hline
Molecule & Number of qubits $(n)$ & Total number of terms $(L)$  & 1-norm of the coefficients $(\lambda)$ \\
 \hline\hline
Propane (STO-3G)  & $46$ & $390,441$ & $435.98$ \\ \hline
Carbon dioxide (6-31G) & $54$  & $182,953$ & $679.04$ \\ \hline 
Ethane (6-31G) & $60$ & $301,718$ & $ 711.67$\\ \hline 
\end{tabular}
}
\caption{Number of Pauli terms $L$ and 1-norm $\lambda$ (in Hartree) of the truncated Hamiltonians for the three molecules.}
\label{table:qchem-Ham}
\end{table}

In our numerical experiments, we study three representative molecules at their ground-state equilibrium geometries: Propane (\ce{C3H8}), carbon dioxide (\ce{CO2}), and ethane (\ce{C2H_6}). For propane, we construct the molecular orbitals using the minimal STO-3G basis set, while for carbon dioxide and ethane we adopt the larger 6-31G basis. These choices specify the set of Gaussian functions used to represent the atomic orbitals, and therefore determine the size and accuracy of the one- and two-electron integrals entering the electronic structure Hamiltonian.

To enable simulations within reasonable resources and to provide a fair comparison across methods, we apply a systematic truncation procedure to the Pauli decompositions of the Hamiltonians. Specifically, the terms are sorted by the magnitude of their coefficients, and those with the smallest magnitudes are discarded, subject to the constraint that the sum of the absolute values of the removed coefficients does not exceed $\eps/(\gamma \Delta)$. This threshold ensures that the truncation error is controlled and enables a fair comparison with Trotter-based methods. After truncation, we subtract the identity contribution from each Hamiltonian, since it merely shifts all eigenvalues by a constant and does not affect ground-state properties. The resulting truncated Hamiltonians have the number of Pauli terms $L$, as well as the total 1-norm of the coefficients $\lambda$, reported in Table~\ref{table:qchem-Ham}. Notably, all three systems exhibit extremely large $L$ (hundreds of thousands of terms) but only moderate $\lambda$ (a few hundred Hartree). This separation between $L$ and $\lambda$, makes them particularly well-suited for randomized approaches since our algorithm uses qDRIFT, which scales with $\lambda^2$ (instead of $L\lambda$ in Trotter methods or standard QSVT). Moreover, a novel use of extrapolation allows us to obtain exponentially improved circuit depth as compared to vanilla qDRIFT \cite{campbell2019random}.
\begin{figure}[htbp]
    \centering
    
    \includegraphics[width=\linewidth]{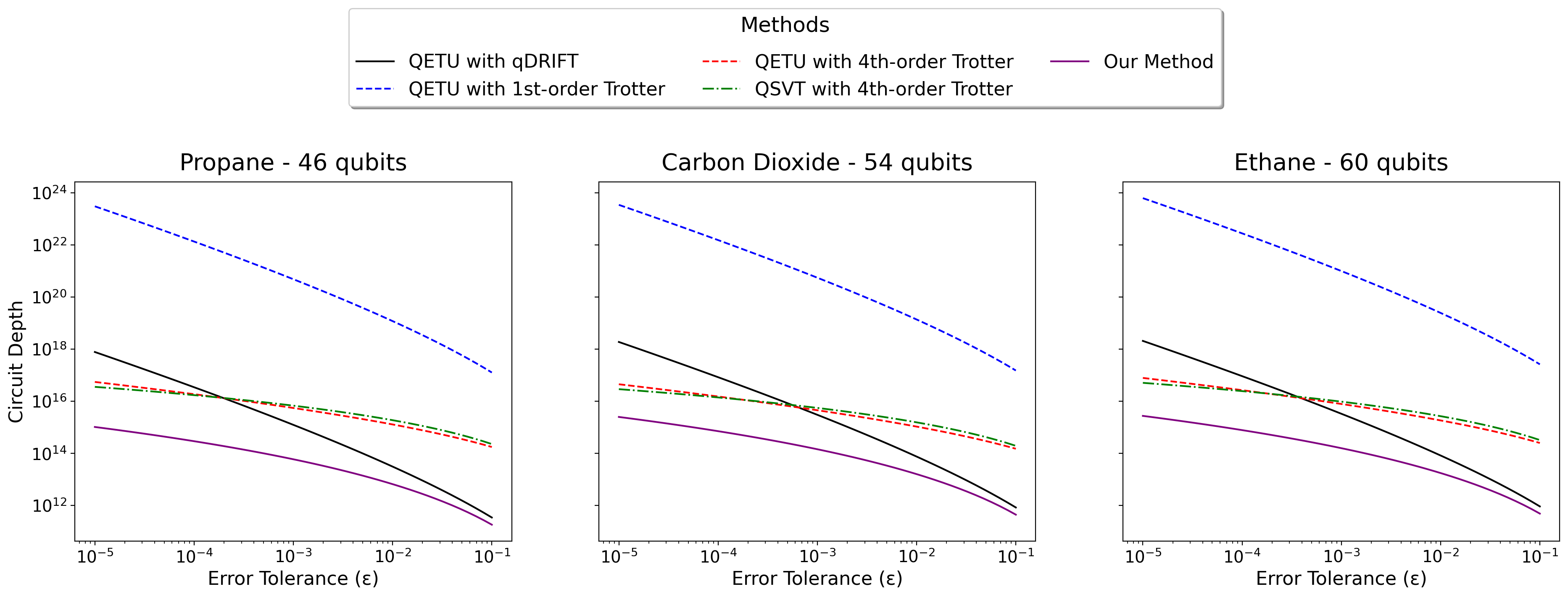}
     \caption{Comparison of the circuit depth of the different quantum algorithms for ground state property estimation when applied to the electronic structure Hamiltonians of propane (in STO-3G basis), carbon dioxide (in 6-31G basis), and ethane (in 6-31G basis). We assume that the initial ``guess" state has an overlap of $\gamma=0.1$ with the ground state of the Hamiltonian. We plot the circuit depth per coherent run as a function of the error tolerance $\varepsilon$. The circuit depth of randomized QSVT is significantly shorter than the other approaches considered here, for all three molecules.}
    \label{fig:eps_chem_error}
\end{figure}
In order to compare with Trotter-based methods, note that these Hamiltonians do not have structure and hence commutator bounds are unavailable. We therefore set the commutator prefactor $\lambda_{\rm comm}'=\lambda$ for QETU with Trotter, and the commutator prefactor $\lambda_{\rm comm}=\lambda$ for QSVT with Trotter. Here, $\lambda_{\rm comm}'$ denotes $C_{\rm Trotter}^{1/(2k+1)}$ defined in \cite[Appendix~C]{dong2022ground} for $2k$-order Trotterization, so that the complexity of QETU with Trotter scales as $O((\lambda_{\rm comm}')^{1+1/2k})$.  In addition, we fix the spectral gap to $\Delta=0.25$ Hartree, which serves as a lower bound for the vertical excitation energies of these molecules, following standard references~\cite{nakatsuji1983cluster, richartz1978calculation, johnson1979electron}. Finally, all Hamiltonians were generated using OpenFermion~\cite{mcclean2020openfermion} and the PySCF package~\cite{sun2015libcint, sun2018pyscf, sun2020recent}, which provide one- and two-electron integral evaluations and fermion-to-qubit mappings.
\begin{figure}[htbp]
    \centering
    
    \includegraphics[width=\linewidth]{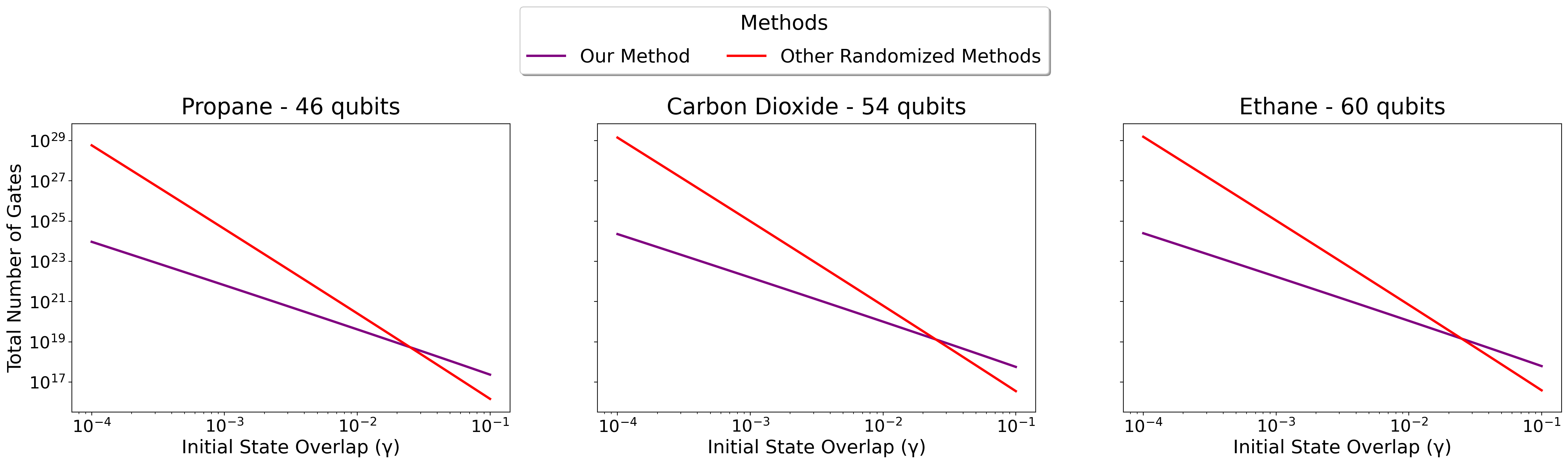}
     \caption{Comparison of the overall gate complexity (total number of single and two-qubit gates needed) of our method with other randomized methods \cite{chakraborty2024implementing, wang2024qubit} for ground state property estimation when applied to the electronic structure Hamiltonians of propane (in STO-3G basis), carbon dioxide (in 6-31G basis), and ethane (in 6-31G basis). We fix the error tolerance $\eps=0.01$, and vary $\gamma$, the overlap of the initial ``guess" state with the ground state of the Hamiltonian. Our method requires substantially fewer gates than prior randomized methods for all three molecules.}
    \label{fig:eps_chem_overlap}
\end{figure}
Figure~\ref{fig:eps_chem_error} compares the circuit depth per coherent run of randomized QSVT with that of competing methods. With the overlap of the initial ``guess" state with the ground states fixed at $\gamma=0.1$, we vary the error tolerance $\eps$. This choice of $\gamma$ is motivated by the fact that for small molecules such as the ones we consider here, there are several classical and quantum approaches to prepare a state with a constant overlap with the target ground state \cite{wecker2014gate,mcardle2020quantum, reiher2017elucidating, babbush2015chemical}. 

Across all three molecular Hamiltonians, our approach consistently achieves significantly lower depths than the QETU algorithm of Dong, Lin, and Tong~\cite{dong2022ground}, whether the required time-evolution oracle is implemented by Trotterization or by qDRIFT. We also outperform the QSVT-with-Trotterization algorithm of Chakraborty et al.~\cite{chakraborty2025quantum}, even when using fourth-order Suzuki–Trotter formulas. 
The quantitative advantage is significant: for $\eps=10^{-5}$, randomized QSVT yields depths shorter by roughly nine orders of magnitude compared to QETU with first-order Trotter, by about three orders of magnitude compared to QETU with qDRIFT, and by about two orders of magnitude even relative to fourth-order Trotterized QSVT. The advantage is similar with respect to QSVT using Trotterization. 

The circuit depth of the ground-state property estimation algorithm based on randomized LCU exhibits scaling behavior similar to the variant of our method that does not employ fixed-point amplitude amplification. However, as discussed in Sec.~\ref{subsec:gspe}, once combined with quantum amplitude amplification, our approach achieves a quadratically better dependence on the overlap parameter $\gamma$ in terms of the overall complexity. This advantage is evident in Fig.~\ref{fig:eps_chem_overlap}, which reports the total gate complexity (i.e., the number of single- and two-qubit gates aggregated over all runs) for the three molecular Hamiltonians. Here we fix the precision to $\eps=0.01$, and vary the overlap $\gamma$ of the initial guess state. As $\gamma$ decreases, the gate complexity of randomized LCU~\cite{chakraborty2024implementing, wang2024qubit} increases sharply, whereas our algorithm scales much more favorably. For example, at $\gamma=10^{-4}$, our method requires fewer gates by roughly five orders of magnitude.

Although we do not explicitly plot the circuit depth of ground-state property estimation using standard block-encoding-based QSVT~\cite{lin2020nearoptimalground}, our method achieves strictly shorter depths for all three molecular Hamiltonians. The key distinction lies in the scaling: standard QSVT requires circuit depth proportional to $L\lambda$, whereas our randomized approach depends only on $\lambda^2$, which is much smaller in this case as $L\gg \lambda$. This comparison already favors our method before accounting for the additional overhead of decomposing the multi-qubit controlled operations inherent to block encodings into elementary single- and two-qubit gates. Furthermore, block-encoding QSVT necessitates access to an explicit block encoding of $H$, which itself is resource-intensive: it involves intricate controlled operations over $O(\log L)$ ancilla qubits. For instance, for propane, this translates to roughly 21 ancilla qubits and gates controlled across each of them, an overhead entirely avoided by our approach.  

Overall, these results demonstrate that randomized QSVT provides a significant practical advantage for ground-state property estimation on realistic molecular Hamiltonians. By eliminating the explicit dependence on $L$, avoiding costly block encodings, obtaining improved circuit depths using classical techniques such as extrapolation, and requiring only a constant number of ancilla qubits, our approach substantially reduces circuit depth and gate overhead. These features make randomized QSVT especially well-suited for early fault-tolerant quantum devices, where qubit counts and coherence times remain limited.

\subsection{Transverse-field Ising model}
\label{subsec:spin-hamiltonians}
As a complementary benchmark to molecular Hamiltonians, we also evaluate the performance of ground state property estimation by randomized QSVT on spin-chain models that are central to quantum many-body physics and quantum simulation experiments. In particular, we study two one-dimensional systems: (i) the transverse-field Ising chain with algebraically decaying ZZ interactions and (ii) a hybrid extension that combines short-range nearest-neighbor XX couplings with additional long-range ZZ terms. 

The first model is a classic setting for studying quantum phase transitions, correlation spread, and anomalous dynamical phenomena in long-range interacting systems \cite{hauke2013spread, defenu2021longrange}; the second is a natural yet less-explored extension that contains both nearest-neighbor as well as long-range interactions. Because both types of interactions can be engineered or approximated in several quantum technological platforms, such as trapped ions \cite{blatt2012quantum, monroe2021programmable}, and Rydberg atom arrays \cite{zeiher2016many, browaeys2020many}, these Hamiltonians serve as meaningful testbeds for benchmarking the performance of ground state property estimation by randomized QSVT, with other methods.

\subsubsection{Long-range Transverse Ising field Ising chain}
\label{subsubsec:TFIM-1}
We consider the $n$-qubit one-dimensional Hamiltonian
\begin{equation}
\label{eq:Ham-TFIM-long-range}
H = -J \sum_{1 \le i < j \le n} \frac{Z_i Z_j}{|i-j|^\alpha} - h \sum_{i=1}^n X_i,  
\end{equation}
with $J, h > 0$ and $\alpha>1$. This is an extension of the paradigmatic nearest-neighbor transverse-field Ising model \cite{sachdev2011quantum}, which is of deep theoretical interest: it interpolates between mean-field behavior ($\alpha\rightarrow 0$), and short-range neighbor ($\alpha\rightarrow\infty$). Their relevance extends beyond theory, as long-range Ising interactions can be engineered in several experimental platforms. Because trapped ions naturally mediate spin–spin interactions via phonon modes, with effective coupling exponents tunable by laser detunings, implementing a long-range Ising Hamiltonian is relatively direct in that platform, with $\alpha$ varying between $0$
 and $3$ \cite{blatt2012quantum, zhang2017observation, monroe2021programmable}. Similarly, Rydberg atom arrays exhibit van der Waals $(\alpha=3)$ or dipolar $(\alpha=6)$ interactions that yield long-range Ising models in one or two dimensions \cite{zeiher2016many, browaeys2020many}. Dipolar quantum gases of magnetic atoms and ultracold polar molecules confined in optical lattices also realize spin Hamiltonians with $\alpha=3$ through magnetic or electric dipole–dipole couplings \cite{yan2013observation, depaz2013nonequilibrium}. Owing to the relevance of such systems, quantum simulation algorithms for Hamiltonians with power law interactions have been extensively studied \cite{tran2019locality, childs2021theory, haah2023quantum}.

 The Hamiltonian $H$ in Eq.~\eqref{eq:Ham-TFIM-long-range} is already in the desired form, i.e.\ $H=\sum_{k=1}^{L}\lambda_k P_k$, with the number of Pauli terms
 \begin{align*}
    L = n+ \binom{n}{2} = \frac{n^2+n}{2} = O(n^2).
\end{align*}
The one-norm coefficient of the coefficients of $H$, 
\begin{align*}
    \lambda = hn+J\sum_{r = 1}^{n-1} \frac{n-r}{r^{\alpha}} \le (h+J~\zeta(\alpha))n = \Theta(n),
\end{align*}
whenever the interaction strength $J,~ h$ are constants. Here, $\zeta(s)=\sum_{r=1}^{\infty} r^{-s}$ is the Riemann zeta function, which is a constant when $\alpha>1$. Thus the Hamiltonian satisfies $\lambda \ll L$, a feature that strongly favors randomized methods.

To benchmark our ground-state property estimation algorithm, it is necessary to assume a nontrivial bound on the spectral gap $\Delta$. Obtaining an exact analytical estimate of $\Delta$ is notoriously difficult, but in the Appendix \ref{sec-app:tfim}, we show evidence that
$$
\Delta \sim 2\left|h-J~\zeta(\alpha)\right|,
$$
which remains constant for a broad range of parameter values. This scaling is further supported by numerical calculations up to $n=26$ qubits (See Fig.~\ref{fig:spectral-gap} in Appendix \ref{sec-app:tfim}).

For small spin systems, classical methods can be used to efficiently prepare an initial “guess” state with nontrivial overlap with the ground state, using tools such as tensor networks (especially matrix product states)~\cite{white1992density,schollwock2005density,verstraete2008matrix} or variational algorithms \cite{huggins2019towards,dborin2022matrix,rudolph2023synergistic}. As a result, we set $\gamma=0.1$ in our experiments. In our setting, we are in fact primarily concerned with how the complexity of our randomized QSVT algorithm scales with the system size $n$. Assuming the availability of a “guess” state with constant (albeit possibly small) overlap with the true ground state, the asymptotic circuit depth of randomized QSVT is therefore $\widetilde{O}(n^2)$. By contrast, the algorithm of Lin and Tong \cite{lin2020nearoptimalground}, which employs block-encoding-based QSVT, scales as $\widetilde{O}(n^3)$, while additionally requiring $O(\log n)$ ancilla qubits and multi-qubit controlled operations. For comparisons against Trotter-based algorithms, specifically, QSVT with Trotterization~\cite{chakraborty2025quantum} and QETU with Trotterization~\cite{dong2022ground}, we must also estimate the commutator pre-factor, $\lambda_{\rm comm}$ and $\lambda_{\rm comm}'$. 

In Appendix~\ref{sec-app:tfim} (Lemma~\ref{lem:commutator-prefactor-scaling}), we establish that
$$
\lambda_{\rm comm}=O\left(n^{1/(2k+1)}\right),\quad \lambda_{\rm comm}'=O\left(n^{1/(2k+1)}\right)
$$
for the $2k$-th order Suzuki–Trotter formula. Our proof provides both upper and lower bounds through a fine-grained analysis that explicitly tracks constant factors. This level of precision is essential for the subsequent numerical benchmarking, where we compare Trotter-based approaches with randomized QSVT in regimes where the system size $n$ is only moderately large (on the order of a few thousand qubits). Importantly, although the asymptotic scaling is similar to prior work on spin Hamiltonians with power law decay \cite{childs2021theory, tran2019locality, haah2023quantum}, our derivation does not follow from existing techniques in these references. To the best of our knowledge, ours is a new method for obtaining tight commutator pre-factor scalings for spin Hamiltonians with a combination of short and long-range interactions.

Consequently, the circuit depth of the algorithm in~\cite{chakraborty2025quantum} scales as $\widetilde{O}(n^{2+1/(2k)})$, while that of~\cite{dong2022ground} scales as $O(n^2 (n/\eps)^{1/(2k)})$. In practice, only low-order Trotter methods $(k=1,2)$ are feasible, making the asymptotic advantage of randomized QSVT especially significant. Finally, we validate these theoretical findings numerically. As shown in Fig.~\ref{fig:combined-n} (left), for representative parameter values of $H$, randomized QSVT consistently achieves lower circuit depth than all prior approaches, thereby demonstrating concrete practical advantages in realistic settings.
\begin{figure}[htbp]
    \centering
    \includegraphics[width=\linewidth]{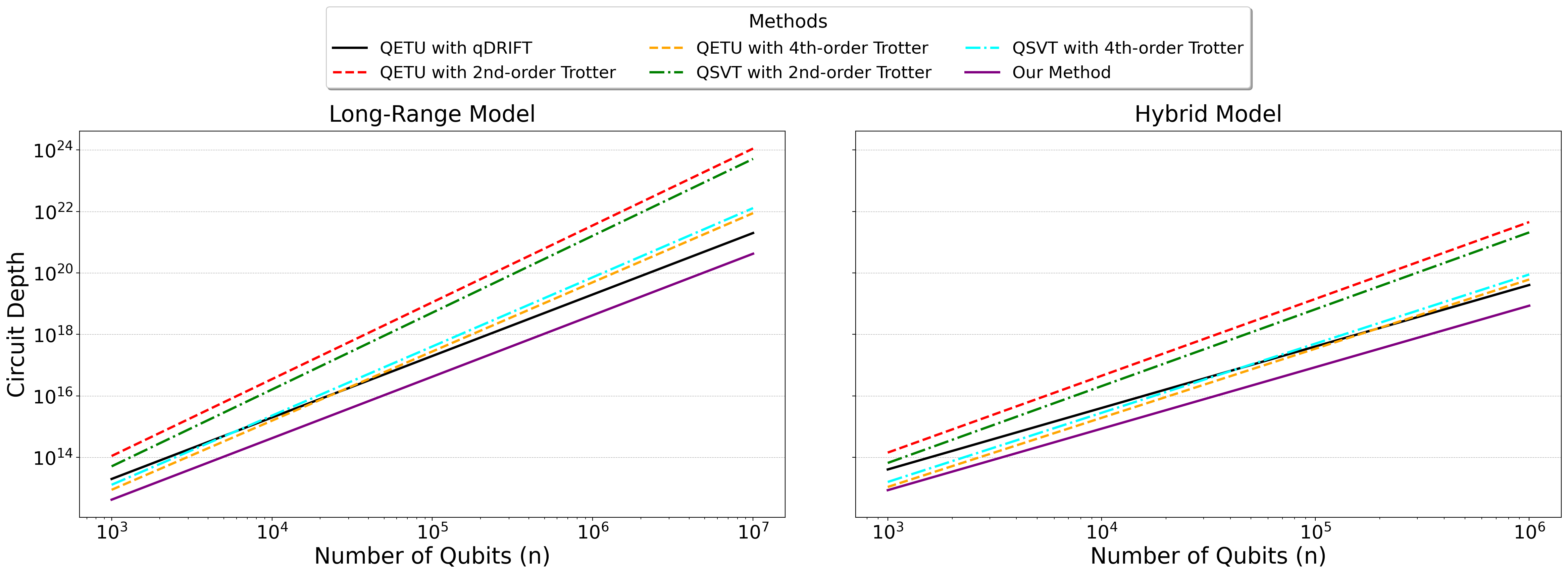}
     \caption{Comparison of the circuit depth of different quantum algorithms for ground state property estimation as a function of the number of qubits $n$ for the two Transverse Field Ising Model Hamiltonians considered in this work. (To the left) Transverse-field Ising model with long-range interactions, with the Hamiltonian parameters set to $h=3$, $J=1$, $\alpha=3$, $\eps = 0.01$, $\gamma=0.1$ (See Eq.~\eqref{eq:Ham-TFIM-long-range}). The spectral gap is fixed to $\Delta=3$. (To the right) Transverse-field Ising model with hybrid interactions, with the Hamiltonian parameters $h=3$, $J=1$, $g=0.1$, $\alpha=3$, $\eps = 0.01$, $\gamma=0.1$ (See Eq.~\eqref{eq:Ham-TFIM-hybrid}). In both cases, the circuit depth of ground state property estimation by randomized QSVT is shorter as compared to the other methods.}
    \label{fig:combined-n}
\end{figure}

\subsubsection{Transverse field Ising model with hybrid XX-ZZ interactions}
\label{subsubsec:TFIM-2}
The second model we consider is a one-dimensional hybrid spin chain that combines short- and long-range Ising interactions. Consider $n$ qubits on a one-dimensional chain with periodic boundary conditions. For $\alpha>1$ and couplings $h>0$, $J>0$, $g\ge 0$, we define the following $n$-qubit Hamiltonian
\begin{equation}
\label{eq:Ham-TFIM-hybrid}
H=-h\sum_{i=1}^{n} Z_i -J\sum_{i=1}^{n} X_i X_{i+1}
-\dfrac{g}{n}\sum_{1\leq i<j\leq n}\dfrac{1}{|i-j|^{\alpha}}~Z_i Z_j.
\end{equation}
This ``hybrid'' Hamiltonian merges ingredients that are each well established in quantum many-body physics: the nearest-neighbor transverse-field Ising chain, which is exactly solvable and serves as a paradigmatic model of quantum phase transitions~\cite{sachdev2011quantum}, and the long-range Ising model with algebraically decaying couplings, which exhibits anomalous criticality and dynamics~\cite{defenu2021longrange}. While the exact combination of nearest neighbor XX interactions, transverse field, and long-range ZZ terms has not been studied as extensively as these limiting cases, it represents a natural and experimentally motivated extension. Indeed, both short and long-range interactions can be engineered in a variety of technological platforms, such as ion traps and Rydberg atoms. Even theoretically, a (normalized) long-range interaction model has been studied \cite{kastner2011diverging}.

Beyond their physical relevance, our motivation for studying this hybrid model is that it inherits the desirable structural features of the long-range transverse-field Ising Hamiltonian discussed in Sec.~\ref{subsubsec:TFIM-1}, while at the same time permitting a rigorous lower bound on the spectral gap. In particular, the number of Pauli terms scales quadratically with system size, $L=\Theta(n^2)$, and  
\begin{align}
\lambda
&= h n + J n+ \frac{g}{n}\sum_{r=1}^{n-1}\frac{n-r}{r^{\alpha}}
\leq (h+J)n + g\zeta(\alpha)= \Theta(n),
\label{eq:lambda-upper-bound-hybrid}
\end{align}
for constant interaction strengths $J,~g,~h$. Thus, as in the long-range TFIM, the hybrid Hamiltonian lies in the favorable regime where $\lambda \ll L$, making it suitable for randomized QSVT.

For the spectral gap, we exploit the fact that the Hamiltonian can be decomposed as $H_0+V$, where
$$
H_0=-h\sum_{i=1}^{n} Z_i -J\sum_{i=1}^{n} X_i X_{i+1},
$$
is the standard one-dimensional transverse-field Ising chain, and $V$ contains long-range interactions. 
The spectral gap of $H_0$ is well-known: diagonalizing via a Jordan–Wigner transformation and minimizing the single-quasiparticle dispersion relation yields $\Delta_0=2|h-J|$. The second part, $V$ satisfies
$$
\|V\|
\le\dfrac{g}{n}\sum_{i<j}\dfrac{1}{|i-j|^{\alpha}}
=\dfrac{g}{n}\sum_{r=1}^{n-1}\dfrac{n-r}{r^{\alpha}}\leq g~\zeta(\alpha).
$$
By the min--max (Weyl) inequalities for Hermitian matrices,
$$
\Delta\geq \Delta_0-2\|V\|=2|h-J|-2g~\zeta(\alpha).
$$
In particular, if 
$$
g< \dfrac{|h-J|}{2~\zeta(\alpha)},
$$
we have $\Delta \geq |h-J|$, which is constant whenever $|h-J|$ is. Thus, provided one can prepare a “guess” state with constant overlap with the ground state, the overall circuit depth of our randomized QSVT algorithm scales as $\widetilde{O}(n^2)$, identical to the scaling in the long-range Ising model analyzed in the previous subsection. The asymptotic advantage over standard block-encoding-based QSVT therefore carries over unchanged to this hybrid setting.  

Finally, in Lemma~\ref{lem:commutator-prefactor-scaling} of Appendix~\ref{sec-app:tfim}, we establish that the commutator prefactor scales as $\lambda_{\rm comm}=\Theta(n^{1/(2k+1)})$, $\lambda_{\rm comm}'=\Theta(n^{1/(2k+1)})$ for a $2k$-th order Trotter decomposition. Consequently, even for the hybrid Hamiltonian considered here, randomized QSVT achieves a concrete asymptotic advantage over the Trotterized QSVT algorithm of~\cite{chakraborty2025quantum} and the QETU-based approach of~\cite{dong2021efficient}.

\begin{table}[h!]
\centering
\resizebox{0.6\columnwidth}{!}{
\renewcommand{\arraystretch}{1.5} 
\begin{tabular}{| c | c | c |} 
 \hline
Algorithm & Ancilla qubits & Circuit depth \\
 \hline\hline
Standard QSVT \cite{lin2020nearoptimalground} & $O(\log n)$ & $\widetilde{O}(n^3)$\\ \hline
QETU with $2k$-order Trotter \cite{dong2022ground} & $2$  & $\widetilde{O}\left(n^{2}(n/\eps)^{1/(2k)}\right)$ \\ \hline 
QETU with qDRIFT \cite{dong2022ground} & $2$  & $\widetilde{O}(n^2/\eps)$ \\ \hline
QSVT with $2k$-order Trotter \cite{chakraborty2025quantum} & $2$ & $\widetilde{O}(n^{2+1/(2k)})$\\ \hline 
This work (Theorem \ref{thm:Ground state property estimation using GQSP with qDRIFT}) & $2$ & $\widetilde{O}(n^2)$\\ \hline 
\end{tabular}
}
\caption{The circuit depth of different algorithms for estimating the ground state properties of $H$ in Eq.~\eqref{eq:Ham-TFIM-hybrid}, for parameter regimes where the spectral gap is constant. We assume that the initial guess state has a constant overlap with the ground state.}
\label{table:qspin-Ham}
\end{table}

The asymptotic circuit depth of randomized QSVT for the hybrid Hamiltonian is summarized in Table~\ref{table:qspin-Ham}, and the corresponding numerical benchmarking is presented in Fig.~\ref{fig:combined-n} (right). For the simulations, we fix the parameters to $h=3,~J=1,~g=0.1,~\alpha=3,~\eps=0.01$ and $\gamma=0.1$. In this parameter regime, the spectral gap of $H$ remains constant. 
The results clearly show that randomized QSVT achieves concrete reductions in circuit depth compared to prior approaches, and that these advantages persist across a practically relevant range of system sizes.

\section{Discussion}
\label{sec:conclusion}

In this work, we introduced the first randomized quantum algorithms for quantum singular value transformation (QSVT). Our methods simultaneously address two central bottlenecks of standard QSVT: the reliance on costly block encodings and the explicit linear dependence on the number of Hamiltonian terms $L$. Concretely, given a Hamiltonian $H=\sum_{k=1}^{L}\lambda_k H_k$ written as a linear combination of unitary (or normalized Hermitian) operators, and an initial state $\ket{\psi_0}$, our algorithms estimate expectation values with respect to $f(H)\ket{\psi_0}$ for any function $f$ that admits a bounded polynomial approximation of degree $d$. Both algorithms use a single ancilla qubit and achieve circuit depth scaling as $\widetilde{O}(\lambda^2d^2)$, independent of $L$. These features make them particularly well suited to early fault-tolerant devices, where ancilla qubits are scarce, multi-qubit controls remain expensive, and circuit depth is the most critical resource.

Our first algorithm can be viewed as a direct randomization of standard QSVT~\cite{gilyen2019quantum}, preserving its structure but also inheriting inefficiencies of prior randomized methods~\cite{chakraborty2025quantum, wang2024qubit}. An interesting open question is whether one can design a randomized variant of QSVT that still permits amplitude amplification, thereby improving asymptotic performance. 

The second algorithm embeds qDRIFT into generalized quantum signal processing (GQSP) and uses Richardson extrapolation to achieve improved precision scaling. This approach is more flexible, exhibits superior asymptotic performance, and yields strong advantages in concrete applications. For quantum linear systems and ground-state property estimation, it achieves substantial polynomial improvements over prior randomized LCU-based methods~\cite{chakraborty2024implementing, wang2024qubit}. Beyond these specific tasks, our techniques open the door to new randomized algorithms for the wide range of problems accessible via QSVT. One shortcoming of randomized algorithms~\cite{campbell2019random, nakaji2024qswift, faehrmann2022randomizingmulti, wan2022randomized, chakraborty2025quantum, wang2024qubit} is their incompatibility with existing variants of quantum amplitude estimation~\cite{grinko2021iterative, rall2023amplitude}, making it impossible to achieve Heisenberg scaling. Reconciling this gap remains an interesting direction for future work.

To complement our theoretical analysis, we numerically benchmarked randomized QSVT on ground-state property estimation for Hamiltonians from both quantum chemistry and condensed matter physics. For electronic-structure Hamiltonians of propane, carbon dioxide, and ethane, as well as Ising Hamiltonians with long-range and hybrid interactions, our algorithm consistently outperforms standard block-encoding-based QSVT, early fault-tolerant methods, and prior randomized algorithms, often by several orders of magnitude. These simulations confirm that the asymptotic advantages of randomized QSVT translate directly into meaningful reductions in resource requirements across realistic parameter regimes. An important next step will be to refine these benchmarks into explicit hardware-level resource estimates, including optimized CNOT and T gate counts once our algorithms are compiled into fault-tolerant circuits. This would facilitate a more direct comparison with different experimental platforms and provide a clearer assessment of the practicality of randomized QSVT for near-term applications in quantum chemistry and many-body physics.

Alongside these algorithmic contributions, we established a matching lower bound in the Pauli sample access model. Specifically, we proved that any randomized algorithm for Hamiltonian simulation requires circuit depth $\Omega(t^2)$. This shows that generic randomized algorithms implementing polynomial transformations must necessarily incur a quadratic dependence on the polynomial degree. This lower bound establishes the essential optimality of not only our algorithms but also other randomized Hamiltonian simulation methods~\cite{campbell2019random, nakaji2024qswift, wan2022randomized} and randomized LCU approaches~\cite{chakraborty2025quantum, wang2024qubit}. A key open question in this direction is whether stronger lower bounds can be obtained for broader classes of functions. One might also ask whether, for any continuous function $f(x)$, randomized quantum algorithms require complexity at least $\Omega(\widetilde{\deg}(f)^2)$, where $\widetilde{\deg}(f)$ is the approximate degree of $f(x)$. Such a result would establish a fundamental complexity-theoretic limitation for randomized algorithms, in contrast to the linear lower bound $\Omega(\widetilde{\deg}(f))$ known for standard QSVT~\cite{montanaro2024quantum}.

Another promising avenue is to explore structured Hamiltonians. Our bounds apply in the worst case, and it remains unclear whether properties such as commutativity, sparsity, or geometric locality can be leveraged to circumvent the quadratic barrier. One intriguing possibility is the development of hybrid methods that interpolate between block encodings and randomized sampling, potentially combining the best of both frameworks. Finally, motivated by recent progress showing that Trotterization and randomization can be fruitfully combined for Hamiltonian simulation~\cite{childs2019faster}, it would be natural to investigate analogous hybrid strategies within QSVT.

\section*{Acknowledgments}
CS and YZ are supported by the National Key Research and Development Project of China under Grant No. 2020YFA0712300. SC and SH acknowledge funding from the Ministry of Electronics and Information Technology (MeitY), Government of India, under Grant No. 4(3)/2024-ITEA. SC also acknowledges support from Fujitsu Ltd, Japan, and IIIT Hyderabad via the Faculty Seed Grant. TL and XW were supported by the National Natural Science Foundation of China (Grant Numbers 62372006 and 92365117). XW thanks the University of California, Berkeley, for its hospitality during his visit, where a part of this work was conducted. We thank Andrew M. Childs and Ronald de Wolf for valuable feedback on this work.

\begin{appendices}

\section{Some polynomial approximation results}
\label{sec:poly-approx}

\begin{lem}[Rectangle function, Lemma 29 of \cite{gilyen2019quantum}]
\label{lem:rec-approx}
    Let $\delta^{\prime}, \varepsilon^{\prime} \in\left(0, \frac{1}{2}\right)$ and $t \in$ $[-1,1]$. There exist an even polynomial $P^{\prime} \in \mathbb{R}[x]$ of degree $\mathcal{O}\left(\log \left(\frac{1}{\varepsilon^{\prime}}\right) / \delta^{\prime}\right)$, such that $\left|P^{\prime}(x)\right| \leq 1$ for all $x \in[-1,1]$, and
    \[
    \begin{cases}
    \;P^{\prime}(x) \in  {\left[0, \varepsilon^{\prime}\right]}  &\text { for all } x \in\left[-1,-t-\delta^{\prime}\right] \cup\left[t+\delta^{\prime}, 1\right] \\
    \;P^{\prime}(x) \in[1-\varepsilon',1]  & \text { for all } x \in\left[-t+\delta^{\prime}, t-\delta^{\prime}\right] \\ \end{cases}.
    \]
\end{lem}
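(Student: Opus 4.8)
The plan is to realize $P'$ as a difference of two shifted copies of a polynomial approximating the sign function, via the elementary identity $\mathds{1}_{[-t,t]}(x)=\tfrac12\big(\sgn(x+t)-\sgn(x-t)\big)$; we may assume $t\ge 0$ since the statement depends only on $|t|$.

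\textbf{Step 1 (sign approximation).} First I would recall the standard polynomial approximation of $\sgn$, obtained by truncating the Chebyshev (or Taylor) expansion of the Gaussian error function $\operatorname{erf}(ky)$: for every $\delta,\varepsilon\in(0,\tfrac12)$ there is an odd polynomial $s_{\delta,\varepsilon}$ of degree $O\!\big(\tfrac1\delta\log\tfrac1\varepsilon\big)$ with $|s_{\delta,\varepsilon}(y)|\le 1$ on $[-1,1]$ and $|s_{\delta,\varepsilon}(y)-\sgn(y)|\le\varepsilon$ for all $y\in[-1,1]\setminus(-\delta,\delta)$ (one takes the scale $k=\Theta(\delta^{-1}\sqrt{\log(1/\varepsilon)})$ and needs a Chernoff-type tail bound for the truncation error, plus a bound on the sup-norm of the truncated series on $[-1,1]$). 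Because the arguments $x\pm t$ lie in $[-2,2]$, I would rescale and set $\sigma(y):=s_{\delta'/2,\,\varepsilon'/4}(y/2)$; then $\sigma$ is odd, $|\sigma|\le 1$ on $[-2,2]$, one has $|\sigma(y)-\sgn(y)|\le\varepsilon'/4$ for all $y\in[-2,2]\setminus(-\delta',\delta')$, and $\deg\sigma=O(\delta'^{-1}\log(1/\varepsilon'))$.

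\textbf{Step 2 (construction and verification).} Define
\[
Q(x):=\tfrac12\big(\sigma(x+t)-\sigma(x-t)\big),\qquad P'(x):=\Big(1-\tfrac{\varepsilon'}{4}\Big)Q(x)+\tfrac{\varepsilon'}{4}.
\]
Since $\sigma$ is odd, $Q$ — and hence $P'$ — is even, and $\deg P'=\deg\sigma=O(\delta'^{-1}\log(1/\varepsilon'))$. I would then verify the required inequalities by locating $x\pm t$ relative to $\pm\delta'$. If $x\in[-t+\delta',\,t-\delta']$ then $x+t\ge\delta'$ and $x-t\le-\delta'$, so $\sigma(x+t)\in[1-\tfrac{\varepsilon'}{4},1]$ and $\sigma(x-t)\in[-1,-1+\tfrac{\varepsilon'}{4}]$, giving $Q(x)\in[1-\tfrac{\varepsilon'}{4},1]$ and therefore $P'(x)\in[1-\varepsilon',1]$ (this interval is empty when $t<\delta'$, in which case there is nothing to check). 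If $x\in[t+\delta',1]$ — and, by evenness of $P'$, also for $x\in[-1,-t-\delta']$ — then both $x+t$ and $x-t$ lie in $[\delta',2]$, so $\sigma(x+t),\sigma(x-t)\in[1-\tfrac{\varepsilon'}{4},1]$, hence $Q(x)\in[-\tfrac{\varepsilon'}{4},\tfrac{\varepsilon'}{4}]$ and $P'(x)\in[0,\varepsilon']$. Finally $|Q(x)|\le 1$ on $[-1,1]$ forces $|P'(x)|\le(1-\tfrac{\varepsilon'}{4})+\tfrac{\varepsilon'}{4}=1$. The affine shift by $\tfrac{\varepsilon'}{4}$ is exactly what promotes the symmetric bound $[-\tfrac{\varepsilon'}{4},\tfrac{\varepsilon'}{4}]$ on the outer region to the one-sided bound $[0,\varepsilon']$ demanded in the statement, costing nothing in degree and only $\tfrac{\varepsilon'}{2}$ of accuracy on the inner region (which is why I approximate $\sgn$ to accuracy $\varepsilon'/4$ rather than $\varepsilon'$).

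\textbf{Main obstacle.} Everything past Step 1 is elementary interval bookkeeping; the real content is Step 1, namely the quantitative $O(\delta^{-1}\log(1/\varepsilon))$-degree polynomial approximation of $\sgn$ away from a $\delta$-window at the origin. Reproving this from scratch requires the analysis of the truncated $\operatorname{erf}$ series — choice of $k$, bounding the remainder, and controlling the polynomial's sup-norm on $[-1,1]$. If one instead cites this ingredient as a black box (the $\operatorname{erf}$/sign-approximation lemmas of Low–Chuang or of Gilyén et al.), the proof collapses to the two-line construction of $Q$ and $P'$ above.
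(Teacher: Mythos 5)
The paper does not prove this statement at all—it imports it verbatim as Lemma~29 of Gilyén et al.—and your reconstruction (half-difference of two shifted, erf-based sign approximations rescaled so the arguments $x\pm t\in[-2,2]$ stay in range, followed by the affine shift that turns the symmetric outer bound into the one-sided bound $[0,\varepsilon']$) is correct and is essentially the proof given in that cited reference, with the $O(\delta'^{-1}\log(1/\varepsilon'))$-degree sign approximation legitimately used as a black box. One small imprecision: the statement is not symmetric under $t\mapsto -t$, so ``WLOG $t\ge 0$'' is not literally valid; however, for $t\le 0$ the inner interval $[-t+\delta',t-\delta']$ is empty and the zero polynomial already meets every remaining requirement, so that case is degenerate and your construction for $t\ge 0$ carries all the content (note also that the leading terms of $\sigma(x+t)$ and $\sigma(x-t)$ cancel, so $\deg P'$ may drop below $\deg\sigma$, which is harmless for the stated degree bound).
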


\begin{lem}
\label{lem:poly-approx-1}
Let \( f(z) = (1 - z^2)^{n/2} \) with \( n \in \mathbb{N} \), \( s \in (0, 1/\sqrt{n}) \), and \( \varepsilon > 0 \). Define the truncated even polynomial \( P_k(z) \) by keeping the first \( k \) terms of the power series expansion of \( f(z) \). Then:
\begin{enumerate}
    \item To ensure the approximation error \( | f(z) - P_k(z) | \leq \varepsilon \) for all \( z \in [-s, s] \), the truncation order \( k \) must satisfy:
    \[
    k=\Omega(\log(1/\varepsilon)) .
    \]
    \item The maximum of \( | P_k(z) | \) on \( z \in [-1, 1] \) satisfies:
    \[
    M=O(n^k) .
    \]
    \item Moreover, there is an even polynomial $S_1(z)$ of degree $O(\sqrt{n}\log(1/\varepsilon)\log(n))$ such that 
    $|f(z)-S_1(z)|\leq \varepsilon$ for all $z\in[-s,s]$ and $|S_1(z)| \leq 1$ for all $z\in[-1,1]$.
\end{enumerate}
\end{lem}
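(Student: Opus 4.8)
\textbf{Proof proposal for Lemma~\ref{lem:poly-approx-1}.}

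The plan is to analyze the three claims separately, treating the binomial-type series expansion of $f(z)=(1-z^2)^{n/2}$ carefully. First I would write down the power series explicitly: $f(z) = \sum_{j=0}^{\infty} \binom{n/2}{j} (-1)^j z^{2j}$, so that $P_k(z)$ is the partial sum through the $z^{2(k-1)}$ term. The tail on $[-s,s]$ is controlled by the size of the coefficients $\binom{n/2}{j}$ multiplied by $s^{2j}$. Since $|\binom{n/2}{j}| \le (n/2)^j/j! \le (n/2)^j$ (crudely) or more sharply $\le (en/(2j))^j$, and $s^{2j} < n^{-j}$, each tail term is bounded by something like $(1/2)^j/j!$ or at worst a geometric series with ratio $< 1/2$. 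Summing the tail $\sum_{j \ge k}$ then gives a bound decaying like $2^{-k}$ (up to constants), so demanding this be $\le \varepsilon$ forces $k = \Omega(\log(1/\varepsilon))$; conversely $k = O(\log(1/\varepsilon))$ suffices. This handles claim~1. For claim~2, I would bound $|P_k(z)|$ on $[-1,1]$ simply by the sum of absolute values of its coefficients, $\sum_{j=0}^{k-1} |\binom{n/2}{j}| \le \sum_{j=0}^{k-1} (n/2)^j = O(n^{k-1})$ when $n$ is large, which is $O(n^k)$; here the dominant term is the highest one, $|\binom{n/2}{k-1}| \le (n/2)^{k-1}$, times $k$.

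For claim~3, the truncated polynomial $P_k$ of degree $O(\log(1/\varepsilon))$ has the right approximation on $[-s,s]$ but is badly behaved (size $n^k$) on the rest of $[-1,1]$, so it cannot serve directly as $S_1$. The standard fix is to multiply $P_k$ by a smooth ``bump'' or ``rectangle'' polynomial that is $\approx 1$ on $[-s,s]$ and exponentially small on $[-1,1]\setminus[-2s,2s]$ (say), using Lemma~\ref{lem:rec-approx} with $t$ and $\delta'$ both of order $s$. Such a rectangle polynomial $P'$ has degree $O(\log(1/\varepsilon')/\delta') = O(\sqrt{n}\log(1/\varepsilon'))$ since $\delta' = \Theta(s) = \Theta(1/\sqrt n)$. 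Then I would set $S_1(z) = P_k(z)\,P'(z)$ (possibly after an extra normalization). On $[-s,s]$: $|S_1(z) - f(z)| \le |P_k(z)||P'(z)-1| + |P_k(z)-f(z)| + 0$, and since $|P_k| \le 1 + \varepsilon$ there, and $|P'(z)-1| \le \varepsilon'$, choosing $\varepsilon' = \Theta(\varepsilon/n^k)$ or so handles everything — actually we need to be careful: on $[-s,s]$ we want $|P'-1|$ small and $|P_k|$ is only $O(1)$ there (not $n^k$), so $\varepsilon' = \Theta(\varepsilon)$ suffices on the inner interval. On $[-1,1]\setminus[-2s,2s]$: $|S_1(z)| \le |P_k(z)|\cdot\varepsilon' \le O(n^k)\cdot\varepsilon'$, so we need $\varepsilon' = O(\varepsilon/n^k) = O(\varepsilon \cdot 2^{-O(\log(1/\varepsilon))\log n})$, i.e. $\log(1/\varepsilon') = O(\log(1/\varepsilon)\log n)$. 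On the transition region $[s,2s]$ (and its reflection) one uses $|P'| \le 1$ and $|P_k| = O(1)$ (still in the regime $|z| \lesssim 1/\sqrt n$ where $|z|^{2j}\lesssim n^{-j}$), so $|S_1| = O(1)$ there, and in fact $f$ itself is close to $P_k$ there too, giving a uniform bound; one may need to shrink constants or rescale $S_1$ by $1/(1+\varepsilon)$ to enforce $|S_1| \le 1$ strictly on all of $[-1,1]$. The total degree is $\deg P_k + \deg P' = O(\log(1/\varepsilon)) + O(\sqrt n \log(1/\varepsilon')) = O(\sqrt n \log(1/\varepsilon)\log n)$, matching the claim.

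The main obstacle I anticipate is the bookkeeping in claim~3 connecting the three intervals ($[-s,s]$ where we need $\varepsilon$-closeness to $f$, the transition region where we need boundedness, and the far region where $|P_k|$ blows up as $n^k$ but $|P'|$ must kill it): one must choose $\varepsilon'$ for the rectangle function small enough ($\log(1/\varepsilon') = \Theta(\log(1/\varepsilon)\log n)$) to overcome the $n^k$ growth, and verify this does not inflate $\deg P'$ beyond $O(\sqrt n \log(1/\varepsilon)\log n)$ — it does not, since $\deg P' = O(\log(1/\varepsilon')/\delta')$ and $\delta' = \Theta(1/\sqrt n)$. A secondary subtlety is that Lemma~\ref{lem:rec-approx} gives a rectangle that is $1-\varepsilon'$ to $1$ on the inner interval and $0$ to $\varepsilon'$ outside, so $P'$ is nonnegative and one should check the sign/normalization so that the product stays bounded by $1$; rescaling by a constant factor $1/(1+O(\varepsilon))$ at the end resolves this cleanly. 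I would also double-check the claim that on the transition region $|z|\in[s,2s]$ the truncated polynomial $P_k$ is still $O(1)$ — this holds because $2s < 2/\sqrt n$ so $|z|^{2j} \le (2s)^{2j} < 4^j n^{-j}$, and $|\binom{n/2}{j}| 4^j n^{-j} \le 4^j/(2^j j!) = 2^j/j!$, summable — so the argument goes through with uniform constants.
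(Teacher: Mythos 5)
Your proposal is correct and follows essentially the same route as the paper: the same tail estimate on $[-s,s]$ for claim 1, the same coefficient-sum bound giving $O(n^k)$ for claim 2, and for claim 3 the same construction of $S_1$ as the product of the truncated Taylor polynomial with the rectangle polynomial of Lemma~\ref{lem:rec-approx}, with $t,\delta'=\Theta(s)$ and $\log(1/\varepsilon')=\Theta(\log(1/\varepsilon)\log n)$ chosen to suppress the $O(n^k)$ growth, yielding degree $O(\sqrt{n}\log(1/\varepsilon)\log(n))$. The only cosmetic difference is that the paper performs the truncation over $[-2s,2s]$ so that boundedness on the transition region is immediate, whereas you truncate over $[-s,s]$ and then separately argue that $P_k$ stays close to $f\le 1$ on $[s,2s]$ before rescaling; both versions go through.
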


\begin{proof}
The function \( f(z) = (1 - z^2)^{n/2} \) can be expanded as:
\[
f(z) = \sum_{m=0}^{\infty} (-1)^m \binom{ n/2 }{ m } z^{2m}.
\]
The remainder (error) \( E(z) \) is:
\[
E(z) = f(z) - P_k(z) = \sum_{m=k}^{\infty} (-1)^m \binom{ n/2 }{ m } z^{2m}.
\]
For \( z \in [-s, s] \),  we have
\[
| E(z) | \leq \sum_{m=k}^{\infty} \left| \binom{ n/2 }{ m } \right| s^{2m}\le \sum_{m=k}^{\infty}\Big(\frac{ns^2}{2}\Big)^m\frac{1}{m!}\le\Big(\frac{ns^2}{2}\Big)^{k}\frac{1}{k!}e^{ns^2/2}.
\]
Taking $k=\Omega(\log(1/\varepsilon))$ suffices to bound the truncation error by $\varepsilon$.

Now suppose $k\le n/4$. The maximum value of $|P_k(z)|$ for $z\in[-1,1]$ can be bounded by 
\[
\sum_{m=0}^{k-1} \binom{n/2}{m}\le k\binom{n/2}{k-1}\le k\Big(\frac{en}{2k}\Big)^{k} = O(n^k).
\]

Finally, we can further reduce the upper bound in Claim 2 to 1 by composing $P_k(z)$ with a rectangle function. More precisely, let $\tilde{S}_1$ be the truncated Taylor expansion of $f(z)$ over the interval $[-2s,2s]$, and $R(z)$ be the polynomial in Lemma~\ref{lem:rec-approx} with 
$$(\varepsilon',t,\delta')=(O(n^{-\log(1/\varepsilon)}),3s/2,s/2),
$$ which means $\deg(R)=O(\log(1/\varepsilon)\log(n)/s)$. Since $s=O(1/\sqrt{n})$, we have $$
|\tilde{S}_1(z)|\approx|(1-z^2)^{n/2}|\le1,
$$ for all $z\in[-2s,2s]$. Taking $S_1(z)=\tilde{S}_1(z)R(z)$, we observe that $S_1(z)$ is still an $O(\varepsilon)$-approximation of $f(z)$ for $z\in[-s,s]$ and is bounded by $1$ for $z\in[-1,1]$. The degree of $S_1(z)$ is 
$$
\deg(R)+\deg(\tilde{S}_1)=O(\sqrt{n}\log(1/\varepsilon)\log(n)).
$$
\end{proof}

\begin{lem}
\label{lem:poly-approx-2}
Let \( f(z) = \dfrac{ z }{ \sqrt{ 1 - z^2 } } \), \( s \in (0, 1/2) \), and \( \varepsilon > 0 \). Define the truncated odd polynomial $S_2(z)$ by keeping the first \( k \) terms of the power series expansion of \( f(z) \). Then:
\begin{enumerate}
    \item To ensure the approximation error \( | f(z) - S_2(z) | \leq \varepsilon \) for all \( z \in [-s, s] \), the truncation order \( k \) must satisfy:
    \[
    k\ge  \frac{\log(1/\varepsilon)}{2\log(1/s)}.
    \]
    \item The maximum of \( | S_2(z) | \) on \( z \in [-1, 1] \) satisfies:
    \[
    M\le 2\sqrt{\frac{k}{\pi}}.
    \]
\end{enumerate}
\end{lem}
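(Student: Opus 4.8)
\textbf{Proof proposal for Lemma \ref{lem:poly-approx-2}.}

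The plan is to analyze the Taylor series of $f(z) = z/\sqrt{1-z^2}$ directly, just as in the proof of Lemma \ref{lem:poly-approx-1}. The generalized binomial expansion gives
\[
\frac{1}{\sqrt{1-z^2}} = \sum_{m=0}^{\infty} \binom{-1/2}{m} (-z^2)^m = \sum_{m=0}^\infty \binom{2m}{m} \frac{z^{2m}}{4^m},
\]
so that $f(z) = \sum_{m=0}^\infty \binom{2m}{m} 4^{-m} z^{2m+1}$, an odd power series with nonnegative coefficients $a_m := \binom{2m}{m} 4^{-m}$. Keeping the first $k$ terms (i.e., up through $z^{2k-1}$) defines $S_2(z)$, and the remainder on $[-s,s]$ is bounded by $\sum_{m \geq k} a_m s^{2m+1}$.

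For Claim 1, the key estimate is that $a_m = \binom{2m}{m}4^{-m} \leq 1$ for all $m$ (indeed $a_m = \Theta(1/\sqrt{m})$ by Stirling, but the crude bound $a_m \leq 1$ suffices). Hence for $z \in [-s,s]$ with $s < 1/2$,
\[
|f(z) - S_2(z)| \leq \sum_{m \geq k} a_m s^{2m+1} \leq s \sum_{m\geq k} s^{2m} = \frac{s^{2k+1}}{1-s^2} \leq 2 s^{2k+1} \leq s^{2k}.
\]
Requiring $s^{2k} \leq \varepsilon$ gives $2k \log(1/s) \geq \log(1/\varepsilon)$, i.e., $k \geq \log(1/\varepsilon)/(2\log(1/s))$, as claimed. (One should double-check whether the paper wants the bound with the leading $s$ absorbed or kept; the stated inequality $k \geq \log(1/\varepsilon)/(2\log(1/s))$ follows from the clean bound $|f - S_2| \leq s^{2k}$, which holds once $s < 1/2$ so that $2s^2 \leq 1$.)

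For Claim 2, I would bound $M = \max_{z\in[-1,1]} |S_2(z)|$ by evaluating at $z = \pm 1$, where by nonnegativity of the coefficients the maximum of the odd polynomial is attained: $M = S_2(1) = \sum_{m=0}^{k-1} \binom{2m}{m} 4^{-m}$. The remaining task is to show $\sum_{m=0}^{k-1} a_m \leq 2\sqrt{k/\pi}$. Using the standard bound $\binom{2m}{m} 4^{-m} \leq \frac{1}{\sqrt{\pi m}}$ for $m \geq 1$ (and $a_0 = 1$), one gets $\sum_{m=0}^{k-1} a_m \leq 1 + \frac{1}{\sqrt\pi}\sum_{m=1}^{k-1} m^{-1/2} \leq 1 + \frac{1}{\sqrt\pi}\cdot 2\sqrt{k-1} \leq 2\sqrt{k/\pi}$, where the last step uses $\sum_{m=1}^{k-1} m^{-1/2} \leq 2\sqrt{k-1} \leq 2\sqrt{k}$ (integral comparison) together with a check that $1 \leq 2\sqrt{k/\pi} - \frac{2}{\sqrt\pi}\sqrt{k-1}$ for all $k \geq 1$; this residual inequality holds since $\frac{2}{\sqrt\pi}(\sqrt k - \sqrt{k-1}) = \frac{2}{\sqrt\pi(\sqrt k + \sqrt{k-1})}$ is smallest, but one must verify the base case $k=1$ ($M = 1 \leq 2/\sqrt\pi \approx 1.128$, which holds) and monotonicity. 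I expect this final constant-chasing in Claim 2 to be the main obstacle: the clean target constant $2\sqrt{k/\pi}$ is tight enough that one needs the precise central binomial bound $\binom{2m}{m} \leq 4^m/\sqrt{\pi m}$ rather than a looser one, and the $m=0$ term must be handled separately and absorbed carefully. If the clean bound turns out to be slightly off, a safe fallback is to use $\binom{2m}{m}4^{-m} \leq 1/\sqrt{\pi m + 1/2}$ (a known sharpening valid for all $m \geq 0$), which makes the $m=0$ term $\leq \sqrt{2/\pi}$ and yields $\sum_{m=0}^{k-1} a_m \leq \frac{1}{\sqrt\pi}\sum_{m=0}^{k-1}(m+1/2)^{-1/2} \leq \frac{2}{\sqrt\pi}\sqrt{k} = 2\sqrt{k/\pi}$ directly by integral comparison over $[0,k]$.
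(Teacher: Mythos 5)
Your treatment of Claim 1 is fine and is essentially the paper's argument (the paper bounds the coefficients $a_m=\binom{2m}{m}4^{-m}$ by $1/\sqrt{\pi m}$ rather than by $1$, but the geometric-series tail estimate and the resulting condition on $k$ are the same). The genuine gap is in Claim 2. Your primary chain ends with the inequality $1+\tfrac{2}{\sqrt{\pi}}\sqrt{k-1}\le 2\sqrt{k/\pi}$, i.e. $1\le \tfrac{2}{\sqrt{\pi}}(\sqrt{k}-\sqrt{k-1})=\tfrac{2}{\sqrt{\pi}(\sqrt{k}+\sqrt{k-1})}$; this quantity is decreasing in $k$ and is already about $0.47$ at $k=2$, so verifying the base case $k=1$ cannot rescue it — the step fails for every $k\ge 2$. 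The fallback does not repair this: the bound actually used in your displayed chain is $a_m\le \tfrac{1}{\sqrt{\pi}}(m+1/2)^{-1/2}=1/\sqrt{\pi(m+1/2)}$, which is not the inequality $a_m\le (\pi m+1/2)^{-1/2}$ you quoted, and it is in fact the classical bound in the \emph{reverse} direction ($\binom{2m}{m}\ge 4^m/\sqrt{\pi(m+1/2)}$), false already at $m=0$ since $a_0=1>\sqrt{2/\pi}$. Worse, no coefficientwise bound of this shape can work: even the strongest valid one, $a_m\le 1/\sqrt{\pi(m+1/4)}$, gives partial-sum upper bounds $\approx 1.63$ and $\approx 2.01$ at $k=2,3$, exceeding $2\sqrt{2/\pi}\approx 1.60$ and $2\sqrt{3/\pi}\approx 1.95$, while the true sums are $1.5$ and $1.875$. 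So the claimed bound $M\le 2\sqrt{k/\pi}$ is true but too tight to be reached by summing termwise upper bounds.

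For what it is worth, the paper's own proof of Claim 2 takes exactly this termwise route ($a_m\le 1/\sqrt{\pi m}$ plus an integral comparison over $[0,k]$) and is itself cavalier about the $m=0$ term, so you were right to flag this step as the obstacle. A clean repair: keep your (correct) observation that positivity of the coefficients forces the maximum at $z=\pm1$, so $M=\sum_{m=0}^{k-1}a_m$; then prove by induction the exact identity $\sum_{m=0}^{k-1}\binom{2m}{m}4^{-m}=2k\binom{2k}{k}4^{-k}$ (the induction step reduces to $(2k+1)a_k=2(k+1)a_{k+1}$, immediate from the ratio of consecutive central binomial coefficients), and finally apply $\binom{2k}{k}\le 4^k/\sqrt{\pi k}$ once, giving $M=2k\,a_k\le 2\sqrt{k/\pi}$ as required.
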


\begin{proof}
The function \( f(z) = \dfrac{ z }{ \sqrt{ 1 - z^2 } } \) can be expanded as:
\[
f(z) = \sum_{ m = 0 }^{ \infty } \frac{ (2m)! }{ 4^m ( m! )^2 }z^{ 2m + 1 }.
\]
Note that \[
 \frac{ (2m)! }{ 4^m ( m! )^2 }=\frac{1}{4^m}\binom{2m}{m}\le \frac{1}{4^m}\frac{4^m}{\sqrt{\pi m}}=\frac{1}{\sqrt{\pi m}},
\]
so the truncation error for $z\in[-s,s]$ can be bounded by 
\[
|f(z)-S_2(z)|\le \sum_{m=k}^\infty \frac{s^{2m+1}}{\sqrt{\pi m}}\le \frac{s^{2k+1}}{\sqrt{\pi k}}\sum_{m=0}^{\infty} s^{2m}\le \frac{s^{2k+1}}{\sqrt{k}}\le s^{2k+1}. 
\] 
Therefore, taking $k\ge \frac{\log(1/\varepsilon)}{2\log(1/s)}$ suffices to make the truncation error bounded by $\varepsilon$.

The maximum value of $S_2(z)$ in $[-1,1]$ can be bounded by 
\[
\sum_{m=0}^{k-1}\frac{1}{\sqrt{\pi m}}\le \int_{0}^{k} \frac{1}{\sqrt{\pi m}} \,\d m=2\sqrt{\frac{k}{\pi}}.
\]
This completes the proof.
\end{proof}

The following proposition provides an approximate decomposition of a polynomial under normalization in a certain sense.

\begin{prop}
\label{prop_implement f(sx)}
Let $\varepsilon\in (0,1/2)$ and $P(x)$ be a degree-$d$ polynomial satisfying $|P(x)|\le 1/2$ for any $x\in[-1,1]$. 
Then there exists a degree-$m$ polynomial $\widetilde{P}(x)$ and $c,s, \alpha \in \mb R$ such that 
\be
\label{eq: app-some parameters}
m = \Theta\!\left(d \log^2\!\left(d/\varepsilon\right) \log^2\!\left(\log\left(d/\eps\right)\right)\right),  
\quad  
s = 1/\sqrt{m},  
\quad  
c = \sqrt{1-1/m},  
\quad  
\alpha = \sqrt{m-1}\,\log\!\big(d^2/\varepsilon\big),
\ee
    and 
    \be \label{eq: approximation}
        \left|\left(\sqrt{c^2+s^2 x^2}\right)^m\widetilde{P}\left(\dfrac{sx}{\sqrt{c^2+s^2 x^2}}\right)-P\left(x/\alpha\right)\right|\le \varepsilon
    \ee
    for all $x\in[-1,1]$. 
    Moreover, if $f(x)$ is even or odd, then so is $\tilde{f}(x)$.
\end{prop}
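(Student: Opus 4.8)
The plan is to reduce the statement to a composition of the two one-dimensional approximation lemmas in the appendix (Lemmas~\ref{lem:poly-approx-1} and~\ref{lem:poly-approx-2}). I first change variables: for $x\in[-1,1]$ put $y:=\dfrac{sx}{\sqrt{c^2+s^2x^2}}\in[-s,s]$, a monotone bijection with inverse $x=\dfrac{cy}{s\sqrt{1-y^2}}$. One computes $\sqrt{c^2+s^2x^2}=c/\sqrt{1-y^2}$, hence $\bigl(\sqrt{c^2+s^2x^2}\bigr)^m=c^m(1-y^2)^{-m/2}$, and, since $c/(s\alpha)=1/\log(d^2/\varepsilon)$ for the stated $s=1/\sqrt m$, $c=\sqrt{(m-1)/m}$, $\alpha=\sqrt{m-1}\,\log(d^2/\varepsilon)$, one also gets $x/\alpha=\dfrac1L\,g(y)$, where $L:=\log(d^2/\varepsilon)$ and $g(y):=y/\sqrt{1-y^2}$. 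Thus \eqref{eq: approximation} is equivalent to producing a polynomial $\widetilde P$ of degree $\le m$, inheriting the parity of $P$ when $P$ is even or odd, with
\[
\Bigl|\widetilde P(y)-\tfrac{(1-y^2)^{m/2}}{c^m}\,P\!\bigl(\tfrac1L g(y)\bigr)\Bigr|\le\varepsilon
\qquad\text{for all }y\in[-s,s].
\]
For $y\in[-s,s]$ the argument $g(y)/L$ lies in $[-1/\alpha,1/\alpha]\subseteq[-1,1]$ and is in fact $O(1/(L\sqrt m))$.

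I build $\widetilde P$ by approximating the two non-polynomial factors separately. By Lemma~\ref{lem:poly-approx-2} take an odd polynomial $S_2$ with $\|S_2-g\|_{[-s,s]}\le\delta_2$, of degree $k=O\bigl(\log(1/\delta_2)/\log(1/s)\bigr)=O\bigl(\log(1/\delta_2)/\log m\bigr)$ and $\|S_2\|_{[-1,1]}\le 2\sqrt{k/\pi}$; on $[-s,s]$ we then have $|S_2(y)|\le|g(y)|+\delta_2=O(1/\sqrt m)$, so $S_2(y)/L\in[-1,1]$ there, and $P\circ(S_2/L)$ is a polynomial of degree $\le dk$ with $|P\circ(S_2/L)|\le 1/2$ wherever $S_2/L\in[-1,1]$. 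By Lemma~\ref{lem:poly-approx-1}(3) take an even polynomial $S_1$ with $\|S_1-(1-y^2)^{m/2}\|_{[-s,s]}\le\delta_1$, $\|S_1\|_{[-1,1]}\le1$, of degree $O\bigl(\sqrt m\log(1/\delta_1)\log m\bigr)$. Then set
\[
\widetilde P(y):=\frac1{c^m}\,S_1(y)\,P\!\Bigl(\frac{S_2(y)}{L}\Bigr).
\]
Since $S_1$ is even and $S_2$ odd, $\widetilde P$ inherits the parity of $P$ whenever $P$ has one, and one checks $|\widetilde P(y)|\le1$ on $[-s,s]$ using $c^m=(1-1/m)^{m/2}\ge1/2$ together with $\delta_1,\delta_2$ small and the absolute constant hidden in $m$ taken large enough.

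For the error, on $[-s,s]$ one has $(1-y^2)^{m/2}\ge(1-s^2)^{m/2}=c^m$, so writing
\[
\tfrac{(1-y^2)^{m/2}}{c^m}P\!\bigl(\tfrac gL\bigr)-\tfrac1{c^m}S_1 P\!\bigl(\tfrac{S_2}L\bigr)
=\tfrac{(1-y^2)^{m/2}-S_1}{c^m}P\!\bigl(\tfrac gL\bigr)
+\tfrac{S_1}{c^m}\Bigl(P\!\bigl(\tfrac gL\bigr)-P\!\bigl(\tfrac{S_2}L\bigr)\Bigr),
\]
the first term is $\le\delta_1\cdot\tfrac1{c^m}\cdot\tfrac12\le\delta_1$, and, by the mean value theorem and Markov's inequality $\|P'\|_{[-1,1]}\le d^2\|P\|_{[-1,1]}\le d^2/2$ (the segment joining $g/L$ and $S_2/L$ lies in $[-1,1]$), the second is $\le\tfrac1{c^m}\cdot\tfrac{d^2}{2}\cdot\tfrac{\delta_2}{L}\le\tfrac{d^2\delta_2}{2L}$. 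Taking $\delta_1=\varepsilon/2$ and $\delta_2=L\varepsilon/d^2$ makes the total $\le\varepsilon$, with $\log(1/\delta_1),\log(1/\delta_2)=O(\log(d/\varepsilon))$. Finally $\deg\widetilde P\le\deg S_1+dk=O\bigl(\sqrt m\log(1/\varepsilon)\log m\bigr)+O\bigl(d\log(d/\varepsilon)/\log m\bigr)$, and one verifies that the choice $m=\Theta\!\bigl(d\log^2(d/\varepsilon)\log^2(\log(d/\varepsilon))\bigr)$ (with a large enough absolute constant) makes both summands $\le m/2$, makes $s=1/\sqrt m$ admissible in Lemmas~\ref{lem:poly-approx-1}--\ref{lem:poly-approx-2}, and gives $c^m\ge1/2$; since $\deg\widetilde P\le m$ with the correct parity, this is exactly the required conclusion (one may pad $\widetilde P$ by $\mu\,T_m(y/s)$ with $\mu$ negligible if literal degree $m$ is wanted).

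The step I expect to be the real obstacle is controlling the degree of the composite $P\circ(S_2/L)$ without an exponential loss in $d$. The naive route---expand $P(u)=\sum_j a_j u^j$ and approximate each power of $g$---fails because the monomial coefficients of a polynomial bounded by $1/2$ on $[-1,1]$ can be as large as $(1+\sqrt2)^d$, which would force $\delta_2$, and hence $k$ and $\deg\widetilde P$, to be exponential in $d$. The fix is to estimate the substitution error $g\mapsto S_2$ \emph{inside} $P$ via Markov's inequality for $\|P'\|_{[-1,1]}\le d^2/2$, which is only polynomial in $d$; this is what keeps $\deg\widetilde P=\widetilde O(d)$ and makes the stated form of $m$ attainable. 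The remaining fiddly point is the simultaneous bookkeeping: a single $m$ must dominate $\deg S_1+dk$, keep $s=1/\sqrt m$ valid for both approximation lemmas, and yield $c^m\ge1/2$ --- which is precisely where the polylogarithmic factors in $m$ originate (in particular the $\log^2\log^2$ term is needed to absorb $\deg S_1=O(\sqrt m\cdot\mathrm{polylog})$).
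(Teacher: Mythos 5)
Your proposal is correct and follows essentially the same route as the paper's proof: the same change of variables $y=sx/\sqrt{c^2+s^2x^2}$, the same two building blocks $S_1\approx(1-y^2)^{m/2}$ and $S_2\approx y/\sqrt{1-y^2}$ from Lemmas~\ref{lem:poly-approx-1} and~\ref{lem:poly-approx-2}, the same product construction $\widetilde P=\tfrac{1}{c^m}S_1\cdot P(S_2/\log(d^2/\varepsilon))$, and the same Markov-brothers bound $\|P'\|_{[-1,1]}\le d^2/2$ to propagate the substitution error, with matching degree bookkeeping. The only differences are cosmetic (you rescale $S_2$ by $\log(d^2/\varepsilon)$ directly and verify boundedness only on $[-s,s]$, versus the paper's normalization by the sup-norm bound $M_2$ on $[-1,1]$, plus a harmless factor-of-two slack in the error constants).
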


\begin{proof}
From the choice of $c$, we see that $c^m = (1-1/n)^{m/2} \geq 1/2$ when $m\geq 2$. So we have $|P(x)| \leq c^m$ in $[-1,1]$. 
Substituting 
$$
z=\dfrac{sx}{\sqrt{c^2+s^2x^2}},
$$ 
into Eq.~\eqref{eq: approximation}, we obtain 
\bes
    \left|\widetilde{P}(z) - \dfrac{(\sqrt{1-z^2})^{m}}{c^m} \cdot P\left(\frac{ c}{\alpha s}\frac{z}{\sqrt{1-z^2}}\right) \right|
    \leq \dfrac{(\sqrt{1-z^2})^{m}}{c^m} \cdot \varepsilon
\ees
for all $z\in[-s,s]$. To ensure the above holds, it suffices to find $\widetilde{P}(z)$ such that for all $z\in[-s,s]$
\bes
    \left|\widetilde{P}(z) - \dfrac{(\sqrt{1-z^2})^{m}}{c^m} \cdot P\left(\dfrac{ c}{\alpha s}\dfrac{z}{\sqrt{1-z^2}}\right) \right|
    \leq \varepsilon.
\ees

Let $S_1(z), S_2(z)$ be the two polynomials specified in Lemmas \ref{lem:poly-approx-1} and \ref{lem:poly-approx-2}, that $\varepsilon$-approximates $(\sqrt{1-z^2})^{m}$ and $z/\sqrt{1-z^2}$ for $z\in[-s,s]$, and is  bounded by $M_1=1, M_2=\log(1/\varepsilon)$ for $z\in[-1,1]$, respectively. 
Take 
\bes
\widetilde{P}(z) :=  \dfrac{S_1(z)}{c^m}\cdot P\left(\dfrac{S_2(z)}{M_2}\right)
\ees 
and $\alpha = cM_2/s = 
 \sqrt{m-1}\log(1/\varepsilon)$. Then, we have 
\bes
    |\widetilde{P}(z)|\le M_1 = 1
\ees  
for all $z\in [-1,1]$ since $|S_2(z)/M_2|\le1$ and $f(z)\le c^m$ for $|z|\le 1$. 
By the Markov brothers' inequality, the derivative of $P(x)$ is bounded by $d^2|P(x)|\le d^2/2$ on $[-1,1]$. 
Moreover, for $z\in [-s,s]$ we have 
\beas
&& \left|\widetilde{P}(z)-\dfrac{(\sqrt{1-z^2})^{m}}{c^m}\cdot P\left(\dfrac{c}{\alpha s}\dfrac{z}{\sqrt{1-z^2}}\right)\right|\\
&=& \left| \dfrac{S_1(z)}{c^m}\cdot P\left(\frac{S_2(z)}{M_2}\right) -\dfrac{(\sqrt{1-z^2})^{m}}{c^m} \cdot P\left(\frac{z}{M_2\sqrt{1-z^2}}\right)\right|\\
&\leq& \left| \dfrac{S_1(z)}{c^m}-\dfrac{(\sqrt{1-z^2})^{m}}{c^m} \right|\cdot \left| P\left(\dfrac{S_2(z)}{M_2}\right) \right|
+ \left|\dfrac{(\sqrt{1-z^2})^{m}}{c^m} \right| \cdot \left| P\left(\dfrac{S_2(z)}{M_2}\right) - P\left(\dfrac{z}{M_2\sqrt{1-z^2}}\right) \right|
\\
&\leq& \varepsilon + \dfrac{d^2 \varepsilon }{2 M_2 \cdot c^m} \\
&\leq& \varepsilon \left(1+ \dfrac{d^2}{\log(1/\varepsilon)}\right)
= O\left(\varepsilon d^2\right),
\eeas
where the second inequality follows from the facts that 
$$\|P'(x)\|_{[-1,1]}\le d^2/2,
$$ 
and $S_2(z)$ is an $\varepsilon$-approximation of $z/\sqrt{1-z^2}$ on $[-s,s]$. To ensure the desired accuracy, we need to update $\varepsilon$ with $\varepsilon/d^2$.

The degree of $\widetilde{P}(x)$ is $m=d \cdot \deg(S_2)+\deg(S_1)$, which is
\[
m  \gtrsim \frac{d\log(d^2/\varepsilon)}{2\log(\sqrt{m})}+\sqrt{m} \log(m) \log(d^2/\varepsilon).
\]
This shows that $m = \Theta(d\log^2(d/\varepsilon)\log^2(\log(d/\varepsilon)))$.
\end{proof}

\section{Properties of the Transverse-Field Ising Model Hamiltonian}
\label{sec-app:tfim}
In this section, we derive two results about the Hamiltonians of the transverse-field Ising chains we considered in Sec.~\ref{subsec:spin-hamiltonians}. 

\subsection{Spectral gap of the Transverse-field Ising Hamiltonian with long-range interactions}

Consider the one-dimensional long-range transverse-field Ising model on a chain of $n$ spins with Hamiltonian (defined in Eq.~\eqref{eq:Ham-TFIM-long-range})
$$
H =  \underbrace{- h \sum_{i=1}^n X_i}_{H_0} -\underbrace{J \sum_{1 \le i < j \le n} \frac{1}{|i-j|^\alpha}Z_i Z_j}_{V}
$$
where $J, h > 0$ and $\alpha > 1$. We argue using first-order perturbation theory, based on the decomposition $H = H_0 + V$, that the spectral gap is bounded below by 
$$
\Delta\geq 2(h - J\zeta(\alpha)),
$$ 
when $h/J > \zeta(\alpha)$. While this result is not a rigorous proof for the full Hamiltonian $H$, as contributions from higher-order perturbation terms have not been bounded, it provides strong evidence for the existence of a large gap, independent of the system size. Standard perturbation theory suggests that for a sufficiently large ratio $h/J$, where $V$ is demonstrably a small perturbation to $H_0$, the true spectral gap is expected to remain open and close to this first-order estimate.

\begin{figure}[htbp]
    \centering
    
    \includegraphics[width=0.55\linewidth]{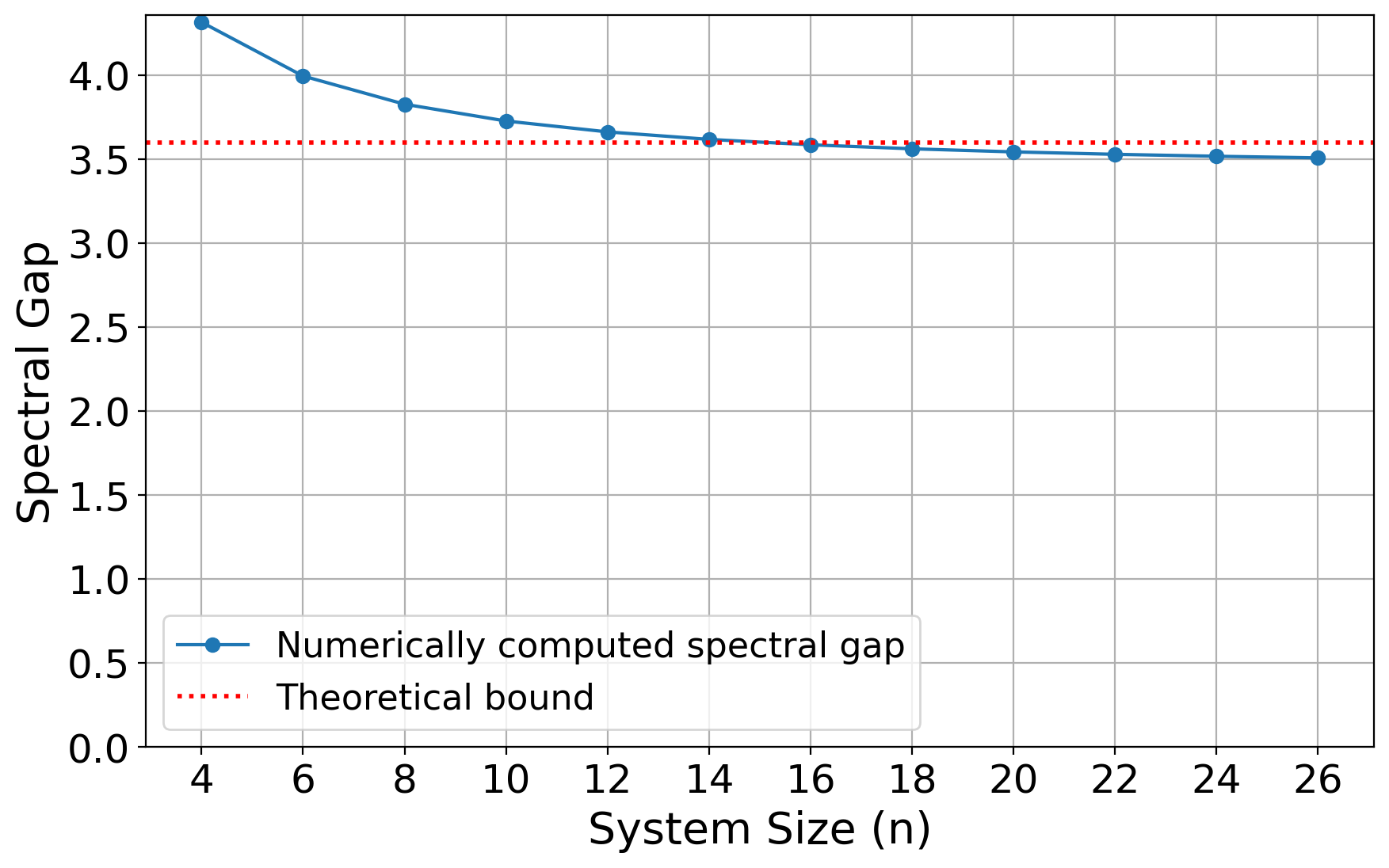}
    \caption{Spectral gap of the one-dimensional long-range transverse-field Ising model with $J=1$, $h=3$, and $\alpha=3$, as a function of the number spins $n$.}
    \label{fig:spectral-gap}
\end{figure}

The spectrum of $H_0$ is determined by the eigenstates of the Pauli $X$ operator. The ground state $|\psi_0^{(0)}\rangle = \bigotimes_{i=1}^n |+\rangle_i$ has all spins polarized in the $+x$ direction with energy $E_0^{(0)} = -nh$. For any integer $m \in \{1, \dots, n\}$, the $m$-th excited subspace $\mathcal{H}_m$ is spanned by the $\binom{n}{m}$ states formed by flipping exactly $m$ spins from $|+\rangle$ to $|-\rangle$. These states are degenerate with energy $E_m^{(0)} = -(n-2m)h$.

We apply first-order degenerate perturbation theory to a general $m$-excited subspace $\mathcal{H}_m$ for any $m \ge 1$. The first-order energy corrections are the eigenvalues $\lambda^{(m)}$ of the $\binom{n}{m} \times \binom{n}{m}$ matrix $M^{(m)}$ with elements $M^{(m)}_{kj} = \langle \psi_k^{(m)} | V | \psi_j^{(m)} \rangle$. The diagonal elements 
$$
M^{(m)}_{kk} = \langle \psi_k^{(m)} | V | \psi_k^{(m)} \rangle = 0,
$$ 
since $\langle\pm|Z|\pm\rangle=0$. The eigenvalues $\lambda^{(m)}$ are bounded by the Gershgorin Circle Theorem, which implies $|\lambda^{(m)}| \le \max_k \sum_{j \neq k} |M^{(m)}_{kj}|$. An off-diagonal element $M^{(m)}_{kj}$ is non-zero if $|\psi_j^{(m)}\rangle$ is reachable from $|\psi_k^{(m)}\rangle$ by a $Z_a Z_b$ operator, which requires swapping a $|+\rangle$ spin at site $a$ with a $|-\rangle$ spin at site $b$. For any state $|\psi_k^{(m)}\rangle$ with $m$ flipped spins, the row sum is bounded by:
\begin{align*}
\max_k \sum_{j \neq k} |M^{(m)}_{kj}| &= \max_k \sum_{\substack{a \text{ unflipped in } k \\ b \text{ flipped in } k}} \frac{J}{|a-b|^\alpha} \\
&\le \sum_{b \text{ flipped in } k} \left( \sum_{a \neq b} \frac{J}{|a-b|^\alpha} \right)\\
&< \sum_{b \text{ flipped in } k} 2J\zeta(\alpha) = 2mJ\zeta(\alpha).
\end{align*}
This uniform bound $|\lambda^{(m)}| < 2mJ\zeta(\alpha)$ holds for all $m \ge 1$. Thus, the first-order corrected energies $E_m'$ for the $m$-th excited subspace lie in the band $[E_m^{(0)} - 2mJ\zeta(\alpha), E_m^{(0)} + 2mJ\zeta(\alpha)]$. The lowest energy in this band is bounded below by 
$$
E_{m,\min}' \ge -(n-2m)h - 2mJ\zeta(\alpha).
$$

To establish a lower bound on the true spectral gap, we use the variational principle for the ground state energy, $E_0 \le \langle \psi_0^{(0)} | H | \psi_0^{(0)} \rangle = -nh$. The gap between the bottom of the $m$-th excited band and the upper bound on the ground state energy is therefore
\begin{align*}
E_{m,\min}' - E_0 &\ge \left( -(n-2m)h - 2mJ\zeta(\alpha) \right) - (-nh) \\
&= 2m(h - J\zeta(\alpha)).
\end{align*}
This holds for all $m \ge 1$. The overall first-order spectral gap $\Delta'$ is bounded by the minimum of these gaps over all $m$:
\[
\Delta' \ge \min_{m \ge 1} \{2m(h - J\zeta(\alpha))\}.
\]
Provided $h/J > \zeta(\alpha)$, the term in the parenthesis is positive, and the minimum is attained at $m=1$. Thus, the spectral gap is governed by the first excited subspace, with a lower bound of $\Delta' \ge 2(h - J\zeta(\alpha))$. This ensures that no excited energy crosses the ground state energy at first order.

Our numerical computation of the spectral gap for systems up to $n=26$ is presented in Fig.~\ref{fig:spectral-gap}. The data shows that the gap converges to an asymptotic value that is slightly below the first-order theoretical prediction of $2(h-J\zeta(\alpha))$. This suggests that while first-order theory provides a very good estimate, the net effect of higher-order perturbative corrections in this regime is to lower the spectral gap slightly.

\subsection{Pre-factor scaling of higher order Trotterization for Ising Hamiltonians with hybrid interactions}
In this section, we derive $\lambda_{\rm comm}$ and $\lambda_{\rm comm}'$ for the Hamiltonians of the spin models considered in this work. We explicitly derive a lower and an upper bound for $H$ defined in Eq.~\eqref{eq:Ham-TFIM-hybrid}. The derivation is similar for the standard long-range Ising Hamiltonian in Eq.~\eqref{eq:Ham-TFIM-long-range}. Formally, we state the following lemma
\begin{lem} 
\label{lem:commutator-prefactor-scaling}
Fix $h,J>0$, $g\ge 0$, and $\alpha>1$. For $n\ge 3$ consider the $n$-qubit Hamiltonian
\[
H = -h\sum_{i=1}^{n} Z_i - J\sum_{i=1}^{n} X_iX_{i+1} -\frac{g}{n}\sum_{1\le i<j\le n}\dfrac{1}{\left|i-j\right|^{\alpha}}\,Z_iZ_j,
\]
with periodic boundary conditions and let $\mathcal S$ be the fine-grained set of its Pauli-string summands,
\[
\mathcal S = \{hZ_i:1\leq i\leq n\}\ \cup\ \{J~X_iX_{i+1}:1\le i\le n\}\ \cup\ \left\{\tfrac{g}{n}|i{-}j|^{-\alpha}~Z_iZ_j:1\le i<j\le n\right\}.
\]
For a fixed integer $k\ge 1$, define the (universal) $(2k{+}1)$-fold commutator prefactor
$$
\alpha_{\rm comm}^{(2k+1)}
=\sum_{(S_1,\dots,S_{2k+1})\in\mathcal S^{\,2k+1}}
\left\|\operatorname{ad}_{S_1}\circ\cdots\circ \operatorname{ad}_{S_{2k}}(S_{2k+1})\right\|,
\qquad \operatorname{ad}_A(B):=[A,B].
$$
Then there exist constants $C_1, C_2$, independent of $n$, such that
\begin{equation}
\label{eq:ub-lb-alpha-comm}
    n C_1  \leq \alpha_{\rm comm}^{(2k+1)}\leq n C_2,
\end{equation}
for all $n\geq 3$. So,
$$
\lambda_{\rm comm}=\Theta(n^{1/(2k+1)}), \quad \lambda_{\rm comm}' =\left(\alpha^{(2k+1)}_{\rm comm}\right)^{1/(2k+1)}= \Theta(n^{1/(2k+1)}).
$$
\end{lem}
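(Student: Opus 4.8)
The strategy is to show that the universal commutator prefactor $\alpha_{\rm comm}^{(2k+1)}$ is both upper and lower bounded by absolute constants times $n$, which immediately gives $\lambda_{\rm comm}=(\alpha_{\rm comm}^{(2k+1)})^{1/(2k+1)}=\Theta(n^{1/(2k+1)})$. The key structural facts about $\mathcal{S}$ are: (i) each summand acts on at most two sites, with its support contained in a window of diameter depending on whether it is a field term, a nearest-neighbor $XX$ term, or a long-range $ZZ$ term; (ii) taking a commutator $[\operatorname{ad}_A(\cdot)]$ with a $2$-local Pauli $A$ either kills a string or leaves its support enlarged by at most the support of $A$; and (iii) the coefficient weights are summable in the long-range index $\alpha>1$ via $\zeta(\alpha)<\infty$.

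\textbf{Upper bound.} First I would observe that a nonzero nested commutator $\operatorname{ad}_{S_1}\circ\cdots\circ\operatorname{ad}_{S_{2k}}(S_{2k+1})$ forces the supports of the $S_j$ to form a connected cluster: each successive operator must overlap the running support, otherwise the commutator vanishes. I fix the innermost string $S_{2k+1}$ (which pins down an "anchor" site or bond, contributing a factor $O(n)$ since translation invariance means there are $O(n)$ choices and the norm of each nested commutator depends only on the relative geometry). Then I sum over the remaining $2k$ operators: at each step the new operator must intersect the current connected support. The number of long-range $ZZ$ summands intersecting a fixed bounded window, weighted by their coefficients $\tfrac{g}{n}|i-j|^{-\alpha}$, is bounded by a constant: if one endpoint is pinned to a window of size $O(1)$, the sum $\sum_{r\ge 1} r^{-\alpha}=\zeta(\alpha)$ converges, and the extra $1/n$ only helps. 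The field terms $hZ_i$ and bond terms $JX_iX_{i+1}$ intersecting a fixed window number only $O(1)$ with $O(1)$ weight. Each commutator at worst doubles the operator norm (since $\|[A,B]\|\le 2\|A\|\|B\|$) and enlarges the support by a bounded amount — except long-range $ZZ$ terms can stretch the support arbitrarily far, but then the next operator must reach that far endpoint, and the weighted count of such reaches is again controlled by $\zeta(\alpha)$. Carefully, this is a sum over rooted connected "commutator trees" of depth $2k$ with bounded branching weight, giving $\alpha_{\rm comm}^{(2k+1)}\le n\cdot C_2(h,J,g,\alpha,k)$. I expect the main obstacle to be exactly this bookkeeping: tracking how the long-range $ZZ$ terms can repeatedly stretch the support while ensuring the $\zeta(\alpha)$-summability is applied at every level and the constants don't blow up with $n$; the cleanest way is probably induction on $k$, bounding the weighted "degree" of the interaction graph (maximum over a site of the total coefficient weight of summands touching it) by an $n$-independent constant $W:=h+2J+2g\zeta(\alpha)$, and showing each level of nesting multiplies by at most $2W$ times a combinatorial factor depending only on $k$.

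\textbf{Lower bound.} For the matching lower bound it suffices to exhibit one nonzero term of constant norm for each anchor site, giving $\Omega(n)$ total. I would take all $2k+1$ operators to come from the nearest-neighbor transverse-field Ising part $H_0=-h\sum Z_i-J\sum X_iX_{i+1}$, for which the relevant $(2k+1)$-fold commutators are known to be nonzero and of $\Theta(1)$ norm (this is standard for the $1$D TFIM and is exactly the prefactor that appears in Trotter error analyses of $H_0$). Concretely, a chain like $[Z_i,[X_iX_{i+1},[Z_{i+1},\dots]]]$ built around a fixed bond stays within an $O(k)$-size window, is manifestly nonzero (alternating $Z$/$XX$ commutators on the standard TFIM generate a nontrivial Pauli string), and has norm bounded below by $(\text{const})^{2k}$ depending only on $h,J,k$. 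Since there are $n$ translates of this configuration and they contribute non-negatively to the sum defining $\alpha_{\rm comm}^{(2k+1)}$ (it is a sum of norms, hence every term is $\ge 0$), we get $\alpha_{\rm comm}^{(2k+1)}\ge n C_1$. This establishes Eq.~\eqref{eq:ub-lb-alpha-comm}, and taking $(2k+1)$-th roots gives $\lambda_{\rm comm}=\Theta(n^{1/(2k+1)})$ and likewise $\lambda_{\rm comm}'=\Theta(n^{1/(2k+1)})$ since $\lambda_{\rm comm}'$ is defined as the same root of the same quantity. The derivation for the purely long-range Hamiltonian in Eq.~\eqref{eq:Ham-TFIM-long-range} is identical, using $H_0=-h\sum X_i$ and the nearest-neighbor piece of the $ZZ$ interaction ($r=1$ term, coefficient $J$) to build the constant-norm commutator witnesses for the lower bound, and the same weighted-degree bound $W=h+2J\zeta(\alpha)$ for the upper bound.
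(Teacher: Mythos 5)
Your proposal is correct in substance, and its lower bound is essentially the paper's: for each of the $n$ bonds you exhibit witnesses built solely from the nearest-neighbor TFIM terms anchored at that bond and use the fact that the defining sum is a sum of norms, hence termwise nonnegative. The paper makes this explicit with the strictly alternating words $(hZ_i, JX_iX_{i+1}, hZ_i,\dots)$ and an induction showing the nested commutator norm is exactly $2^{2k}h^{k+1}J^{k}$ (resp.\ $2^{2k}h^{k}J^{k+1}$); your "manifestly nonzero, $\Theta(1)$ norm" claim is true but does need such a verification, since not every alternating word of $Z$'s and $XX$'s survives, so pick the word and do the two-line induction. For the upper bound you take a genuinely different route: you anchor on the innermost summand $S_{2k+1}$, use that a nonzero nested commutator forces each successive $S_j$ to intersect the running support (which stays of size at most $2+2k$ sites), and bound each level by $2$ times the weighted degree $W=h+2J+2g\zeta(\alpha)$ of the interaction graph, combined with the total weight $\sum_{S\in\mathcal S}\|S\|=O(n)$, giving $\alpha^{(2k+1)}_{\rm comm}\le \lambda\,\bigl(2(2k+2)W\bigr)^{2k}=O(n)$. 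The paper instead exploits the model's algebra: since all $Z$-type summands commute, every nonzero word must contain some $JX_iX_{i+1}$, which serves as a unique anchor bond; a support-propagation induction confines the word to that bond plus far endpoints of long-range terms touching it, and the per-bond contribution is summed explicitly (purely local words give $3^{2k+1}2^{2k}\max\{h,J\}^{2k+1}$, words with $m$ long-range letters pick up factors $(2g\zeta(\alpha)/n)^{m}$). Your weighted-degree argument is more generic, applying to any Hamiltonian with $O(1)$ per-site coefficient weight and $O(n)$ total weight (and it extends verbatim to the purely long-range model, as you note), while the paper's anchoring yields more explicit, model-specific constants; both are valid, and your sketched bookkeeping does close once you account for the support growing to $O(k)$ sites, as you anticipate. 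One small caveat: you identify $\lambda_{\rm comm}$ with $(\alpha^{(2k+1)}_{\rm comm})^{1/(2k+1)}$, which is exactly the definition of $\lambda_{\rm comm}'$; relating $\lambda_{\rm comm}$ itself to this root is not definitional and requires the external fact the paper cites (Remark~3.2 of the Trotterized-QSVT reference), after which your conclusion $\lambda_{\rm comm}=\Theta(n^{1/(2k+1)})$ goes through unchanged.
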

\begin{proof}  
By definition of $\lambda_{\rm comm}'$, we have \begin{align*}
    \lambda_{\rm comm}' = \left(\alpha^{(2k+1)}_{\rm comm}\right)^{1/(2k+1)}.
\end{align*}
The fact that
$$
\lambda_{\rm comm}=O(n^{1/(2k+1)}),
$$
follows from Eq.~\eqref{eq:ub-lb-alpha-comm} and \cite[Remark~3.2]{chakraborty2025quantum}. Ref.~\cite[Remark~3.2]{chakraborty2025quantum} also shows that \begin{align*}
    \lambda_{\rm comm} \ge \left(\alpha^{(2k+1)}_{\rm comm}\right)^{1/(2k+1)}.
\end{align*} 
Thus, all we need to prove is Eq.~\eqref{eq:ub-lb-alpha-comm}. This is what we will show in the subsequent steps of this proof.

For simplicity, let us define
\[
A_i:=hZ_i,\qquad B_i:=JX_iX_{i+1},\qquad C_{ab}:=\dfrac{g}{n}\cdot\frac{Z_aZ_b}{|a-b|^\alpha}\quad (1\leq a<b\leq n),
\]
so that we have
\[
\mathcal S=\{A_i:1\le i\le n\}\ \cup\ \{B_i:1\le i\le n\}\ \cup\ \{C_{ab}:1\le a<b\le n\}.
\]
So, using elementary Pauli algebra, we immediately obtain,
\begin{equation}
\label{eq:basic-comm-rel}
[A_i,A_j]=0,\quad [B_i,B_j]=0\ \text{ for all }i,j,\quad [A_\ell,C_{ab}]=0\ \text{ for all }\ell,a,b,
\end{equation}
and
\begin{equation}
\label{eq:overlap-rule}
[C_{ab},B_i]\neq 0\quad\Longleftrightarrow\quad |\{a,b\}\cap\{i,i+1\}|=1,
\end{equation}
while $[C_{ab},C_{cd}]=0$ always. Moreover,
\begin{equation}
\label{eq:AZZXX}
[A_i,B_i]=2ihJ\,Y_iX_{i+1},\qquad [A_{i+1},B_i]=2ihJ\,X_iY_{i+1},
\end{equation}
and $[A_j,B_i]=0$ for $j\notin\{i,i+1\}$. By definition, we have
\[
\alpha_{\rm comm}^{(2k+1)}
=\sum_{(S_1,\dots,S_{2k+1})\in\mathcal S^{\,2k+1}}
\left\|~\operatorname{ad}_{S_1}\circ \cdots\circ \operatorname{ad}_{S_{2k}}(S_{2k+1})~\right\|,
\qquad \operatorname{ad}_A(B)=[A,B],
\]
is a sum of norms of $(2k{+}1)$-fold nested commutators over \emph{all} ordered $(2k{+}1)$-tuples from $\mathcal S$. We prove a lower bound of order $n$ by exhibiting $n$ explicit ordered tuples whose contributions are strictly positive and identical up to the constant $2^{2k}\min\{h^{k+1}J^{k},h^{k}J^{k+1}\}$, and then we prove an upper bound of order $n$ by showing that every nonzero nested commutator is \emph{anchored} on a unique bond and contributes at most a bond-dependent constant independent of $n$.

\medskip\noindent\emph{Lower bound.}
Fix a bond index $i\in\{1,\dots,n\}$. Consider the two alternating words of length $2k{+}1$
\begin{equation}
\label{eq:ladder-words}
\mathbf w^{(A)}_{i,k}=(A_i,B_i,A_i,B_i,\dots,A_i,B_i,A_i),
\qquad
\mathbf w^{(B)}_{i,k}=(B_i,A_i,B_i,A_i,\dots,B_i,A_i,B_i),
\end{equation}
in which $A_i$ and $B_i$ alternate and the first/last entries are as indicated. Define the associated nested commutators
\[
\mathcal L^{(A)}_{i,k}:=\operatorname{ad}_{A_i}\circ\operatorname{ad}_{B_i}\circ\cdots\circ\operatorname{ad}_{A_i}(B_i),
\qquad
\mathcal L^{(B)}_{i,k}:=\operatorname{ad}_{B_i}\circ\operatorname{ad}_{A_i}\circ\cdots\circ\operatorname{ad}_{B_i}(A_i),
\]
where the number of $\operatorname{ad}$'s is $2k$ in each case. We claim that, for every $k\ge 1$,
\begin{equation}
\label{eq:ladder-norms}
\big\|\mathcal L^{(A)}_{i,k}\left\|=2^{2k}h^{k+1}J^{k},\qquad
\right\|\mathcal L^{(B)}_{i,k}\big\|=2^{2k}h^{k}J^{k+1}.
\end{equation}
We prove Eq.~\eqref{eq:ladder-norms} by induction on $k$. For $k=1$, using Eq.~\eqref{eq:AZZXX} and $[Z_i,Y_i]=-2iX_i$ we compute
\[
[A_i,B_i]=2ihJ\,Y_iX_{i+1},\qquad
[A_i,[A_i,B_i]]=2ihJ\,[Z_i,Y_i]X_{i+1}=2ihJ\cdot(-2i)X_iX_{i+1}=4h^2J\,X_iX_{i+1},
\]
whence $\|\mathcal L^{(A)}_{i,1}\|=4h^2J=2^{2}\,h^{2}J$. Similarly, using $[X_i,Y_i]=2iZ_i$,
\[
[B_i,[B_i,A_i]]=[JX_iX_{i+1},-2ihJ\,Y_iX_{i+1}]=-2ihJ^2\,[X_iX_{i+1},Y_iX_{i+1}]
=-2ihJ^2\,[X_i,Y_i]=4hJ^2\,Z_i,
\]
so $\|\mathcal L^{(B)}_{i,1}\|=4hJ^2=2^{2}\,hJ^{2}$. This establishes Eq.~\eqref{eq:ladder-norms} for $k=1$.

Assume Eq.~\eqref{eq:ladder-norms} holds for some $k\ge 1$. Consider $\mathcal L^{(A)}_{i,k+1}$. By definition,
\[
\mathcal L^{(A)}_{i,k+1}
=\operatorname{ad}_{A_i}\circ\operatorname{ad}_{B_i}\big(\mathcal L^{(A)}_{i,k}\big).
\]
By the induction hypothesis, $\mathcal L^{(A)}_{i,k}$ is, up to a phase, a single Pauli string supported on $\{i,i+1\}$ with norm $2^{2k}h^{k+1}J^{k}$. The commutator with $B_i=JX_iX_{i+1}$ acts nontrivially only on site $i$ (the site $i{+}1$ Pauli in $\mathcal L^{(A)}_{i,k}$ is either $X_{i+1}$ or $Z_{i+1}$, but $[X_{i+1},X_{i+1}]=0$ and $[Z_{i+1},X_{i+1}]=2iY_{i+1}$ gives again a single-site Pauli), and in any case $\|[B_i,\cdot]\|=2J\|\cdot\|$ because the commutator of $X_i$ with the site-$i$ Pauli flips $X_i\leftrightarrow Y_i$ (or $X_i\leftrightarrow Z_i$) and contributes a factor $2$, while the $X_{i+1}$ factor either commutes or also flips, contributing a factor of $1$ in norm. Thus,
\[
\left\|\operatorname{ad}_{B_i}\left(\mathcal L^{(A)}_{i,k}\right)\right\|
=2J~\left\|\mathcal L^{(A)}_{i,k}\right\|
=2^{2k+1}h^{k+1}J^{k+1}.
\]
Applying next $\operatorname{ad}_{A_i}$ multiplies the norm by a factor $2h$ for the same reason, hence
\[
\left\|\mathcal L^{(A)}_{i,k+1}\right\|
=\left\|\operatorname{ad}_{A_i}\circ\operatorname{ad}_{B_i}\left(\mathcal L^{(A)}_{i,k}\right)\right\|
=2h\cdot 2^{2k+1}h^{k+1}J^{k+1}
=2^{2(k+1)}h^{(k+1)+1}J^{k+1}.
\]
This proves the first identity in Eq.~\eqref{eq:ladder-norms} for $k+1$. The proof for $\mathcal L^{(B)}_{i,k+1}=\operatorname{ad}_{B_i}\circ\operatorname{ad}_{A_i}(\mathcal L^{(B)}_{i,k})$ is identical upon exchanging the roles of $A_i$ and $B_i$; one obtains
\[
\|\mathcal L^{(B)}_{i,k+1}\|=2J\cdot(2h)\,\|\mathcal L^{(B)}_{i,k}\|
=2^{2(k+1)}h^{k+1}J^{(k+1)+1}.
\]
This completes the induction and proves Eq.~\eqref{eq:ladder-norms} for all $k\ge 1$.

Since $\alpha_{\rm comm}^{(2k+1)}$ is a sum over \emph{all} ordered $(2k{+}1)$-tuples from $\mathcal S$, the two words $\mathbf w^{(A)}_{i,k}$ and $\mathbf w^{(B)}_{i,k}$ appear among the summands for each bond $i$. Therefore,
\begin{equation}
\label{eq:LB-final}
\alpha_{\rm comm}^{(2k+1)}
\ \ge\ \sum_{i=1}^n \min\{\|\mathcal L^{(A)}_{i,k}\|,\ \|\mathcal L^{(B)}_{i,k}\|\}
\ =\ 2^{2k}\min\{h^{k+1}J^{k},h^{k}J^{k+1}\}\, n.
\end{equation}
This yields the desired lower bound with $C_1=2^{2k}\min\{h^{k+1}J^{k},h^{k}J^{k+1}\}$.

\medskip\noindent\emph{Upper bound.}
We show that every nonzero $(2k{+}1)$-fold nested commutator in the sum is \emph{anchored} on a unique nearest-neighbor bond and contributes at most a bond-dependent constant. Let
\[
T:=\operatorname{ad}_{S_1}\circ\cdots\circ\operatorname{ad}_{S_{2k}}(S_{2k+1}),
\qquad S_\ell\in\mathcal S,
\]
be such that $T\neq 0$. If none of the $S_\ell$ is a bond operator $B_i$, then all $S_\ell$ are $Z$-type strings ($A_i$ or $C_{ab}$), which commute by Eq.~\eqref{eq:basic-comm-rel}, and hence $T=0$, a contradiction. Thus some $B_i$ must appear. Choose one such $B_i$ and call $i$ the \emph{anchor index}. We claim the following support property: for each depth $d=0,1,\dots,2k$, if we define recursively
\[
T_0:=S_{2k+1},\qquad T_{d+1}:=[S_{d+1},T_d],
\]
then $T_d$ is either $0$ or a single Pauli string supported in the set $\{i,i+1\}\cup \mathcal J_d$, where $\mathcal J_d$ is a subset of $\{1,\dots,n\}\setminus\{i,i+1\}$ consisting of those indices that appear as the \emph{far} endpoint in $C_{ab}$'s that touch exactly one of the anchor sites at the steps $1,\dots,d$. Moreover, if $S_{d+1}$ is a local operator among $\{A_i,A_{i+1},B_i\}$, then $S_{d+1}$ must act nontrivially on at least one of the anchor sites and anticommute there with the site-Pauli of $T_d$ for the commutator not to vanish; if $S_{d+1}$ is long-range of the form $C_{ab}$, then necessarily $|\{a,b\}\cap\{i,i+1\}|=1$ (cf.~Eq.~\eqref{eq:overlap-rule}), and its far endpoint $j\in \{a,b\}\setminus\{i,i+1\}$ is adjoined to $\mathcal J_{d}$ to produce $\mathcal J_{d+1}$.

The support claim follows by induction on $d$. For $d=0$ it is obvious. Assume it holds at depth $d$, so $T_d$ is a Pauli string supported on $\{i,i+1\}\cup \mathcal J_d$. If $S_{d+1}\in\{A_s:1\le s\le n\}$, then $[A_s,T_d]\neq 0$ only if $s\in\{i,i+1\}$ (since $A_s$ commutes with any $Z$ on sites in $\mathcal J_d$ and with any $X$ or $Y$ off site $s$), and moreover the commutator is nonzero only if the site-$s$ Pauli in $T_d$ anticommutes with $Z_s$; in that case $[A_s,T_d]$ is again a single Pauli string supported on $\{i,i+1\}\cup\mathcal J_d$. If $S_{d+1}=B_i$, then $[B_i,T_d]\neq 0$ only if at least one of the site-$i$ or site-$(i+1)$ Paulis in $T_d$ anticommutes with $X$; in that case $[B_i,T_d]$ is again a Pauli string supported on $\{i,i+1\}\cup\mathcal J_d$ (the $X_{i+1}$ factor commutes through any $Z_j$, $j\in\mathcal J_d$). If $S_{d+1}=C_{ab}$, then $[C_{ab},T_d]\neq 0$ only if $|\{a,b\}\cap\{i,i+1\}|=1$ by Eq.~\eqref{eq:overlap-rule}, and then the far endpoint $j\in\{a,b\}\setminus\{i,i+1\}$ is appended to the set of remote indices: $[C_{ab},T_d]$ is again a single Pauli string supported on $\{i,i+1\}\cup(\mathcal J_d\cup\{j\})$. This proves the induction step and hence the support property.

As a consequence, every nonzero nested commutator is \emph{localized} at a unique anchor bond $i$ and can be generated by an ordered word of length $2k{+}1$ in which each letter is either \emph{local} ($A_i$, $A_{i+1}$, or $B_i$) or \emph{long-range} of the form $C_{i,j}$ or $C_{i+1,j}$ with $j\notin\{i,i+1\}$. Let us upper bound the total contribution of all such words anchored at bond $i$.

For a fixed anchor $i$, consider first purely local words (no long-range letters). For any such word $\mathbf u=(U_1,\dots,U_{2k+1})$ with $U_\ell\in\{A_i,A_{i+1},B_i\}$, repeated application of the Pauli commutators shows that the nested commutator $T(\mathbf u):=\operatorname{ad}_{U_1}\circ\cdots\circ\operatorname{ad}_{U_{2k}}(U_{2k+1})$ is either zero or (when nonzero) a single Pauli string on $\{i,i+1\}$ multiplied by a scalar of the form $2^{2k} h^{p}J^{2k+1-p}$ for some $p\in\{0,1,\dots,2k+1\}$ (each bracket by $A$ multiplies by $2h$ and each bracket by $B$ multiplies by $2J$ in norm). Therefore
\begin{equation}
\label{eq:local-per-bond}
\sum_{\mathbf u\in \{A_i,A_{i+1},B_i\}^{2k+1}}\big\|T(\mathbf u)\big\|
\ \le\ 3^{2k+1}\,2^{2k}\,\max\{h,J\}^{2k+1}
=:C_k^{\mathrm{loc}}(h,J),
\end{equation}
a constant depending only on $k,h,J$.

Next consider words that contain exactly $m\ge 1$ long-range letters. Any such letter must be either $C_{i,j}$ or $C_{i+1,j}$ with $j\notin\{i,i+1\}$ by the support property. The coefficient of $C_{i,j}$ is $(g/n)|i-j|^{-\alpha}$; summing this over all admissible $j$ yields
\[
\sum_{j\neq i,i+1} \dfrac{g}{n}\frac{1}{|i-j|^\alpha}
\ \le\ \frac{g}{n}\left(\sum_{r\ge 1}\frac{1}{r^\alpha}+\sum_{r\ge 1}\frac{1}{r^\alpha}\right)
 \leq\ \dfrac{2g}{n}\zeta(\alpha).
\]
Thus, after summing over all allowed choices of the far endpoints for the $m$ long-range letters in positions where they yield a nonzero commutator, the total weight contributed by the long-range coefficients is bounded by
\[
\left(\frac{2g}{n}\zeta(\alpha)\right)^{m}.
\]
The remaining $2k{+}1-m$ letters are local and contribute a factor bounded by $2^{2k}\max\{h,J\}^{2k+1-m}$ in norm, exactly as in the local case. The number of operator-type patterns with exactly $m$ long-range letters is at most ${2k+1\choose m}\cdot 3^{2k+1-m}$ (choose the $m$ positions and fill the rest with three local choices). Hence the total contribution of all words with exactly $m$ long-range letters anchored at bond $i$ is bounded by
\[
{2k+1\choose m}\, 3^{2k+1-m}\, 2^{2k}\,\max\{h,J\}^{2k+1-m}\,\left(\frac{2g}{n}\zeta(\alpha)\right)^{m}.
\]
Summing over $m\ge 1$ and using the binomial theorem we obtain the bound
\begin{equation}
\label{eq:lr-per-bond}
\sum_{\substack{\text{words anchored at }i\\ \text{with }m\ge 1\text{ long-range letters}}}\!\!\!\!\!\!\!\!\!\!\!\!\!\!\!\|T(\cdot)\|
\ \le\ 2^{2k}\,\left(3\max\{h,J\}+\frac{2g}{n}\zeta(\alpha)\right)^{2k+1}
- C_k^{\mathrm{loc}}(h,J)
=:C_k^{\mathrm{lr}}(h,J,g,\alpha;n),
\end{equation}
where $C_k^{\mathrm{lr}}(h,J,g,\alpha;n)$ is bounded as the term involving $(2g/n)\zeta(\alpha)$ vanishes as $n\to\infty$ and the difference subtracts the purely local part already counted in Eq.~\eqref{eq:local-per-bond}. 

In particular, there exists a constant $C_k^{\mathrm{lr}}(h,J,g,\alpha)$ such that $C_k^{\mathrm{lr}}(h,J,g,\alpha;n)\le C_k^{\mathrm{lr}}(h,J,g,\alpha)$ for all $n$. Then, combining Eq.~\eqref{eq:local-per-bond} and Eq.~\eqref{eq:lr-per-bond}, the total contribution of all words anchored at a fixed bond $i$ is at most
\[
C_k^{\mathrm{loc}}(h,J)+C_k^{\mathrm{lr}}(h,J,g,\alpha),
\]
a constant independent of $n$. Finally, summing over the $n$ anchor bonds yields
\begin{equation}
\label{eq:UB-final}
\alpha_{\rm comm}^{(2k+1)}
\ \le\ n\,C_k^{\mathrm{loc}}(h,J)+ n~\sup_{n}C_k^{\mathrm{lr}}(h,J,g,\alpha;n)
\ \leq C_2 n.
\end{equation}

The lower bound \eqref{eq:LB-final} and the upper bound \eqref{eq:UB-final} together imply that $\alpha_{\rm comm}^{(2k+1)}=\Theta(n)$. This completes the proof.
\end{proof}

\section{Randomized QSVT for density operators}

We prove a result analogous to Theorem \ref{thm: randomised QSVT}, but for density operators. Formally, we state the following result:

\begin{thm}
\label{thm: randomised QSVT in the general case}
Consider the same assumptions as in Theorem \ref{thm: randomised QSVT}. Let $O$ be an observable and $\rho$ be a density operator. Then there is a procedure using $T=O(\varepsilon^{-2} \log(1/\delta))$ repetitions of the quantum circuit in Proposition \ref{thm_SVT of non-singular A}, outputs $\mu$ such that
    \[
    \left| \mu- 
     {\rm Tr}\left[
    O \,\,
    {P(H')} \,
    \rho \,
    {P(H')}^\dagger  
    \right] 
     \right| \leq \varepsilon \|O\|
    \]
    with probability at least $1-\delta$. Each repetition has circuit depth $\widetilde{O}(d)$ and uses two ancilla qubits. 
\end{thm}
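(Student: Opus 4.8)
Theorem \ref{thm: randomised QSVT in the general case} is the density-operator analogue of Theorem \ref{thm: randomised QSVT}, so the proof should closely mirror that one, with the only genuine new ingredient being the handling of a mixed input $\rho$ in place of a pure state $\ket{\psi_0}$. First I would recall the structural fact, already established in Theorem \ref{prop_non-unitary SVT circuit of A} and Proposition \ref{thm_SVT of non-singular A}, that the randomized alternating-phase circuit $U_\Phi^{(R)}$ built from i.i.d.\ samples $U_{j_1},\dots,U_{j_m}$ of $\mc D$ satisfies $\E[U_\Phi^{(R)}] = U_\Phi$, where $U_\Phi$ is the (generally non-unitary) operator with $\Pi U_\Phi \Pi \approx_\eps \ket{1}\bra{1}\otimes P(H')$ for even $m$ (and the odd-$m$ variant with $\widetilde\Pi$). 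This is the crux: linearity of expectation over the independent samples, which holds regardless of whether the state register carries a pure or mixed state.

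**Key steps.** (i) Fix $\rho$ on the system register and consider the input $\ket{1}\bra{1}\otimes\rho$ to the circuit $U_\Phi^{(R)}$ (plus possibly one more ancilla for the real-polynomial case — hence the claimed two ancilla qubits, matching the $1\text{ or }3$ count of Theorem \ref{thm: randomised QSVT}, though here stated uniformly as two). After running the circuit and post-selecting/measuring the ancilla in $\ket{1}$ (or $\ket{0}$ for odd $m$), the relevant unnormalized output block is, in expectation over the randomness, $P(H')\,\rho\,P(H')^\dagger$ up to an $\eps$-perturbation in operator norm inherited from Theorem \ref{prop_non-unitary SVT circuit of A}. Concretely, write $\E\big[(\ket{1}\bra{1}\otimes I)\,U_\Phi^{(R)}(\ket{1}\bra{1}\otimes\rho)\,(U_\Phi^{(R)})^\dagger\,(\ket{1}\bra{1}\otimes I)\big]$; since $\rho=\sum_i p_i\ket{\chi_i}\bra{\chi_i}$ is a convex combination of pure states and each pure-state term reduces to the Theorem \ref{thm: randomised QSVT} analysis, the bound $\big\|\E[\cdots] - \ket{1}\bra{1}\otimes P(H')\rho P(H')^\dagger\big\| \le \eps\cdot\mathrm{something}$ follows; more cleanly, one just uses $\E[U_\Phi^{(R)}]=U_\Phi$ directly together with the spectral approximation $\Pi U_\Phi\Pi\approx_\eps \ket{1}\bra{1}\otimes P(H')$ and the fact that conjugation by an $\eps$-close operator perturbs the traced quantity by $O(\eps\|O\|)$. (ii) Observe that $\mathrm{Tr}[O\,P(H')\rho P(H')^\dagger]$ equals the expectation of the bounded random variable $\mu_j$ obtained by measuring $\ket{1}\bra{1}\otimes O$ in $U_\Phi^{(R_j)}(\ket{1}\bra{1}\otimes\rho)(U_\Phi^{(R_j)})^\dagger$ on the $j$-th run; each $\mu_j\in[-\|O\|,\|O\|]$. (iii) Apply Hoeffding's inequality exactly as in the proof of Theorem \ref{thm: randomised QSVT}: $T=O(\eps^{-2}\log(1/\delta))$ repetitions give $|\mu - \mathrm{Tr}[O P(H')\rho P(H')^\dagger]|\le\eps\|O\|$ with probability $\ge 1-\delta$. (iv) Circuit depth and ancilla count carry over verbatim from Theorem \ref{prop_non-unitary SVT circuit of A}: $m=\widetilde O(d)$ layers, so depth $\widetilde O(d)$, and at most two ancilla qubits (one for the $U_\Phi$ construction, one potential extra for real-polynomial embedding).

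**Main obstacle.** Honestly, there is no deep obstacle — the whole point is that the pure-state argument is linear in the input and so extends to density operators for free. The one place warranting care is bookkeeping the approximation error under conjugation by a non-unitary operator: since $U$ (and hence $U_\Phi$) is not unitary when $H$ is not unitary, one cannot invoke "conjugation preserves norms," and the $\eps$-closeness of $\widetilde\Pi U_\Phi\Pi$ to $\ket{0}\bra{1}\otimes P(H')$ must be propagated to the quadratic form $\mathrm{Tr}[O\cdot(\text{block})\,\rho\,(\text{block})^\dagger]$ via a triangle-inequality argument, using that $P(H')$ itself is bounded (by $1/2$ on the relevant spectrum) so the cross terms are $O(\eps\|O\|\|\rho\|_1)=O(\eps\|O\|)$. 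The expectation identity $\E[U_\Phi^{(R)}]=U_\Phi$ then does all the remaining work. I would state this error-propagation step carefully and otherwise refer to the proof of Theorem \ref{thm: randomised QSVT} for the Hoeffding and resource-counting parts.
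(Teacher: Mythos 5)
Your proposal misses the central idea of the paper's proof and, as written, has a genuine gap at step (ii). The target quantity $\Tr\left[O\,P(H')\,\rho\,P(H')^{\dagger}\right]$ is \emph{quadratic} in $U_\Phi$, whereas the identity $\E[U_\Phi^{(R)}]=U_\Phi$ is only a statement about the first moment. If, as you propose, each repetition runs a single random circuit and measures $\ket{1}\bra{1}\otimes O$ on $U_\Phi^{(R)}\,(\ket{1}\bra{1}\otimes\rho)\,(U_\Phi^{(R)})^{\dagger}$, the expected outcome is $\Tr\bigl[(\ket{1}\bra{1}\otimes O)\,\E\bigl[U_\Phi^{(R)}\rho'(U_\Phi^{(R)})^{\dagger}\bigr]\bigr]$ with $\rho'=\ket{1}\bra{1}\otimes\rho$; because the \emph{same} samples appear on both sides of the conjugation, $\E\bigl[U_\Phi^{(R)}\rho'(U_\Phi^{(R)})^{\dagger}\bigr]$ is a composition of mixed-unitary channels (slot-wise $\sigma\mapsto\sum_k p_k A_k\sigma A_k^{\dagger}$), which is not $U_\Phi\rho' U_\Phi^{\dagger}=\bigl(\E[U_\Phi^{(R)}]\bigr)\rho'\bigl(\E[U_\Phi^{(R)}]\bigr)^{\dagger}$. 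Your estimator is therefore biased, and neither ``linearity in $\rho$'' nor decomposing $\rho$ into pure states repairs this, since the failure already occurs for pure inputs. This is also why your reading of the ancilla count is off: the second ancilla has nothing to do with real-polynomial embedding.

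The paper's proof addresses exactly this point. It draws \emph{two independent} random circuits $U_\Phi^{(R_0)},U_\Phi^{(R_1)}$, adjoins one control qubit prepared in $\ket{+}$, applies $\tilde U_\Phi^{(R)}=\ket{0}\bra{0}\otimes U_\Phi^{(R_0)}+\ket{1}\bra{1}\otimes U_\Phi^{(R_1)}$ to $\ket{+}\bra{+}\otimes\rho'$, and measures $X\otimes O'$ with $O'=\ket{1}\bra{1}\otimes O$. The $X$ measurement on the control isolates the cross terms $\frac{1}{2}\Tr\bigl[O'\,U_\Phi^{(R_0)}\rho'(U_\Phi^{(R_1)})^{\dagger}\bigr]+\frac{1}{2}\Tr\bigl[O'\,U_\Phi^{(R_1)}\rho'(U_\Phi^{(R_0)})^{\dagger}\bigr]$, whose expectation factorizes by independence into $\Tr\bigl[O'\,U_\Phi\rho' U_\Phi^{\dagger}\bigr]$, which is then $O(\eps\|O\|)$-close to $\Tr\bigl[O\,P(H')\rho P(H')^{\dagger}\bigr]$ by Theorem \ref{prop_non-unitary SVT circuit of A}; Hoeffding's inequality applies as before since the outcomes are bounded by $\|O\|$. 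This two-independent-copy, Hadamard-test-style construction is the missing ingredient in your proposal and is precisely what accounts for the two ancilla qubits in the statement (the paper also treats the odd-parity case by multiplying the circuit by $X\otimes I$ rather than by measuring $\ket{0}\bra{0}\otimes O$, but that is a minor point).
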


\begin{proof}
We consider the case when $P(x)$ is an even polynomial.
Denote $\rho'=\ket{1}\bra{1}\otimes \rho$.\footnote{When $f(x)$ is odd, we cannot directly assume $\rho'=\ket{0}\bra{1}\otimes \rho$ as it is not a density operator. However, we can multiply $U_\Phi$ by $X\otimes I$ to change the position of $P(H')$ so that we can now use $\rho'=\ket{0}\bra{0}\otimes \rho$.} Also denote
$\tilde{\rho}=\ket{+}\bra{+}\otimes \rho'$. 
Let
$$\tilde{U}^{(R)}_\Phi
=\ket{0}\bra{0}\otimes U^{(R_0)}_\Phi
+ \ket{1}\bra{1}\otimes U^{(R_1)}_\Phi,
$$ 
be a random quantum circuit with two independent applications of $U^{(R)}_\Phi$. Then it is easy to check that
\beas
\tilde{U}^{(R)}_\Phi \, \tilde{\rho} \, (\tilde{U}^{(R)}_\Phi)^\dagger 
&=& 
\frac{1}{2}\ket{0}\bra{0}\otimes U^{(R_0)}_\Phi \rho' (U^{(R_0)}_\Phi)^\dagger
+
\frac{1}{2}\ket{0}\bra{1}\otimes U^{(R_0)}_\Phi \rho' (U^{(R_1)}_\Phi)^\dagger \\
&& 
+\,
\frac{1}{2}\ket{1}\bra{0}\otimes U^{(R_1)}_\Phi \rho' (U^{(R_0)}_\Phi)^\dagger
+
\frac{1}{2}\ket{1}\bra{1}\otimes U^{(R_1)}_\Phi \rho' (U^{(R_1)}_\Phi)^\dagger.
\eeas
Denote $O'=\ket{1}\bra{1}\otimes O$, then we have
\beas
{\rm Tr}\left[ (X \otimes O') \,\, \tilde{U}^{(R)}_\Phi \, \tilde{\rho} \, (\tilde{U}^{(R)}_\Phi)^\dagger  \right]
= \frac{1}{2}
{\rm Tr}\left[ O' \,\,  U^{(R_0)}_\Phi \rho' (U^{(R_1)}_\Phi)^\dagger  \right]
+
\frac{1}{2}
{\rm Tr}\left[ O' \,\, U^{(R_1)}_\Phi \rho' (U^{(R_0)}_\Phi)^\dagger  \right] .
\eeas
This means
\bes
\begin{aligned}
\E\left[ {\rm Tr}\left[ (X \otimes O') \,\, \tilde{U}^{(R)}_\Phi \, \tilde{\rho} \, (\tilde{U}^{(R)}_\Phi)^\dagger  \right] \right]
&=
\E\left[ {\rm Tr}\left[ O' \,\,  U^{(R_0)}_\Phi \rho' (U^{(R_1)}_\Phi)^\dagger  \right] \right] \\
&=
 {\rm Tr}\left[ O' \,\,  U_\Phi \rho' U_\Phi  \right]  \\
 &\approx_\eps {\rm Tr}\left[ O'\, \begin{bmatrix}
     * & * \\ * & {P(H')} \,
    \rho \,
    {P(H')}^\dagger
 \end{bmatrix}\right]
\\ &\approx_\eps
{\rm Tr}\left[
    O \,\,
    {P(H')} \,
    \rho \,
    {P(H')}^\dagger  
    \right].
\end{aligned}
\ees
Now we can apply Hoeffding's inequality to estimate the above mean value to obtain the claimed result by a similar argument to the proof of Theorem \ref{thm: randomised QSVT}. The trace can be computed by measuring $(X\otimes O')$ on the state $\tilde{U}^{(R)}_\Phi \, \tilde{\rho} \, (\tilde{U}^{(R)}_\Phi)^\dagger$, so there is no need to introduce a new ancilla. In total, 2 ancilla qubits are used.
\end{proof}

\end{appendices}

\bibliographystyle{unsrt}
\bibliography{qsvt.bib}

\end{document}